\documentclass[10pt,letterpaper]{report} 



\usepackage{fullpage}
\usepackage[mathscr]{eucal}
\usepackage[cmex10]{amsmath}
\usepackage{epsfig,epsf,psfrag}
\usepackage{amssymb,amsmath,amsthm,amsfonts,latexsym}
\usepackage{amsmath,graphicx,bm,xcolor,url,overpic}
\usepackage[caption=false]{subfig} 
\usepackage{fixltx2e}
\usepackage{array}
\usepackage{verbatim}
\usepackage{bm}
\usepackage{algorithmic}
\usepackage{algorithm}
\usepackage{verbatim}
\usepackage{textcomp}
\usepackage{mathrsfs}
\usepackage{epstopdf}

\newcommand{\openone}{\leavevmode\hbox{\small1\normalsize\kern-.33em1}}

\catcode`~=11 \def\UrlSpecials{\do\~{\kern -.15em\lower .7ex\hbox{~}\kern .04em}} \catcode`~=13 

\allowdisplaybreaks[4]
 
\newcommand{\snr}{\mathsf{snr}}
\newcommand{\inr}{\mathsf{inr}}

\newcommand{\nn}{\nonumber}

\newcommand{\calA}{\mathcal{A}}
\newcommand{\calB}{\mathcal{B}}
\newcommand{\calC}{\mathcal{C}}
\newcommand{\calD}{\mathcal{D}}
\newcommand{\calE}{\mathcal{E}}
\newcommand{\calF}{\mathcal{F}}
\newcommand{\calG}{\mathcal{G}}

\newcommand{\calI}{\mathcal{I}}

\newcommand{\calK}{\mathcal{K}}
\newcommand{\calL}{\mathcal{L}}
\newcommand{\calM}{\mathcal{M}}
\newcommand{\calN}{\mathcal{N}}

\newcommand{\calP}{\mathcal{P}}

\newcommand{\calR}{\mathcal{R}}
\newcommand{\calS}{\mathcal{S}}
\newcommand{\calT}{\mathcal{T}}
\newcommand{\calU}{\mathcal{U}}

\newcommand{\calX}{\mathcal{X}}
\newcommand{\calY}{\mathcal{Y}}
\newcommand{\calZ}{\mathcal{Z}}

\newcommand{\ba}{\mathbf{a}}
\newcommand{\bA}{\mathbf{A}}

\newcommand{\bB}{\mathbf{B}}

\newcommand{\bD}{\mathbf{D}}

\newcommand{\bG}{\mathbf{G}}
\newcommand{\bh}{\mathbf{h}}
\newcommand{\bH}{\mathbf{H}}

\newcommand{\bI}{\mathbf{I}}
\newcommand{\bj}{\mathbf{j}}
\newcommand{\bJ}{\mathbf{J}}
\newcommand{\bk}{\mathbf{k}}

\newcommand{\bR}{\mathbf{R}}
\newcommand{\bs}{\mathbf{s}}

\newcommand{\bu}{\mathbf{u}}
\newcommand{\bU}{\mathbf{U}}
\newcommand{\bv}{\mathbf{v}}
\newcommand{\bV}{\mathbf{V}}
\newcommand{\bw}{\mathbf{w}}

\newcommand{\bx}{\mathbf{x}}

\newcommand{\by}{\mathbf{y}}

\newcommand{\bz}{\mathbf{z}}


\newcommand{\rmc}{\mathrm{c}}

\newcommand{\rmd}{\mathrm{d}}

\newcommand{\rme}{\mathrm{e}}

\newcommand{\rmh}{\mathrm{h}}

\newcommand{\rmp}{\mathrm{p}}
\newcommand{\rmP}{\mathrm{P}}

\newcommand{\rms}{\mathrm{s}}

\newcommand{\rmu}{\mathrm{u}}


\newcommand{\bbE}{\mathsf{E}}

\newcommand{\bbI}{\openone}

\newcommand{\bbN}{\mathbb{N}}

\newcommand{\bbR}{\mathbb{R}}



\newcommand{\scP}{\mathscr{P}}

\newcommand{\scR}{\mathscr{R}}

\newcommand{\scV}{\mathscr{V}}

\DeclareMathAlphabet{\mathbsf}{OT1}{cmss}{bx}{n}
\DeclareMathAlphabet{\mathssf}{OT1}{cmss}{m}{sl}

\newcommand{\rvC}{\mathsf{C}}

\newcommand{\rvH}{\mathsf{H}}

\newcommand{\rvV}{\mathsf{V}}

\DeclareSymbolFont{bsfletters}{OT1}{cmss}{bx}{n}  
\DeclareSymbolFont{ssfletters}{OT1}{cmss}{m}{n}
\DeclareMathSymbol{\bsfGamma}{0}{bsfletters}{'000}
\DeclareMathSymbol{\ssfGamma}{0}{ssfletters}{'000}
\DeclareMathSymbol{\bsfDelta}{0}{bsfletters}{'001}
\DeclareMathSymbol{\ssfDelta}{0}{ssfletters}{'001}
\DeclareMathSymbol{\bsfTheta}{0}{bsfletters}{'002}
\DeclareMathSymbol{\ssfTheta}{0}{ssfletters}{'002}
\DeclareMathSymbol{\bsfLambda}{0}{bsfletters}{'003}
\DeclareMathSymbol{\ssfLambda}{0}{ssfletters}{'003}
\DeclareMathSymbol{\bsfXi}{0}{bsfletters}{'004}
\DeclareMathSymbol{\ssfXi}{0}{ssfletters}{'004}
\DeclareMathSymbol{\bsfPi}{0}{bsfletters}{'005}
\DeclareMathSymbol{\ssfPi}{0}{ssfletters}{'005}
\DeclareMathSymbol{\bsfSigma}{0}{bsfletters}{'006}
\DeclareMathSymbol{\ssfSigma}{0}{ssfletters}{'006}
\DeclareMathSymbol{\bsfUpsilon}{0}{bsfletters}{'007}
\DeclareMathSymbol{\ssfUpsilon}{0}{ssfletters}{'007}
\DeclareMathSymbol{\bsfPhi}{0}{bsfletters}{'010}
\DeclareMathSymbol{\ssfPhi}{0}{ssfletters}{'010}
\DeclareMathSymbol{\bsfPsi}{0}{bsfletters}{'011}
\DeclareMathSymbol{\ssfPsi}{0}{ssfletters}{'011}
\DeclareMathSymbol{\bsfOmega}{0}{bsfletters}{'012}
\DeclareMathSymbol{\ssfOmega}{0}{ssfletters}{'012}


\newcommand{\hatH}{\hat{H}}

\newcommand{\hatI}{\hat{I}}

\newcommand{\hatl}{\hat{l}}

\newcommand{\till}{\tilde{l}}

\newcommand{\hatm}{\hat{m}}
\newcommand{\hatM}{\hat{M}}
\newcommand{\tilm}{\tilde{m}}

\newcommand{\tilP}{\tilde{P}}

\newcommand{\tilQ}{\tilde{Q}}

\newcommand{\hats}{\hat{s}}

\newcommand{\tilT}{\tilde{T}}

\newcommand{\tilW}{\tilde{W}}

\newcommand{\hatx}{\hat{x}}
\newcommand{\hatX}{\hat{X}}

\newcommand{\barx}{\bar{x}}

\newcommand{\barU}{\bar{U}}

\newcommand{\barX}{\bar{X}}


\newcommand{\bgamma}{\bm{\gamma}}

\newcommand{\bmu}{\bm{\mu}}

\newcommand{\bSigma	}{\bm{\Sigma}}

\newcommand{\hrho}{\hat{\rho}}




\newcommand{\iid}{i.i.d.\ }



\DeclareMathOperator*{\argmin}{arg\,min}

\DeclareMathOperator{\var}{\mathsf{Var}}

\DeclareMathOperator{\cov}{\mathsf{Cov}}


\newcommand{\bzero}{\mathbf{0}}
\newcommand{\bone}{\mathbf{1}}

\newtheorem{theorem}{Theorem}[chapter] 
\newtheorem{lemma}{Lemma}[chapter]

\newtheorem{proposition}{Proposition}[chapter]
\newtheorem{corollary}{Corollary}[chapter]
 
\newtheorem{example}{Example}[chapter] 
 
\newtheorem{remark}{Remark}[chapter]





\title{Asymptotic Estimates in Information Theory with Non-Vanishing Error Probabilities}
\author{
Vincent~Y.~F. Tan \\
Department of Electrical and Computer Engineering   \\
Department of Mathematics\\
National University of Singapore\\
Singapore 119077\\
Email: vtan@nus.edu.sg}

\begin{document}

%
%
%

\maketitle

\tableofcontents


\chapter*{Abstract}
\addcontentsline{toc}{chapter}{Abstract}
This  monograph  presents a unified treatment of    single- and multi-user  problems in Shannon's information theory where we depart from the requirement that the error probability decays asymptotically in the blocklength.  Instead, the error probabilities for various problems are bounded above by a  non-vanishing  constant and   the spotlight is shone on achievable coding rates  as functions of the growing  blocklengths.   
This represents the study of   {\em asymptotic estimates with   non-vanishing error probabilities}.

In Part I, after reviewing the fundamentals of information theory, we discuss Strassen's seminal result for binary hypothesis testing where the type-I error probability is non-vanishing and the rate of  decay of the  type-II error probability  with growing number of independent observations is characterized. In Part II, we use this basic hypothesis testing result to develop second- and sometimes, even third-order asymptotic expansions for point-to-point communication. Finally in Part III, we consider network information theory problems for which  the second-order asymptotics are known. These problems include some classes of   channels with random state, the multiple-encoder distributed lossless source coding (Slepian-Wolf) problem and special cases of the Gaussian interference and multiple-access channels. Finally, we discuss avenues for further research.

\part{Fundamentals}

\newcommand{\eps}{\varepsilon}
\chapter{Introduction}

Claude E.\ Shannon's epochal {\em ``A Mathematical Theory of Communication''}~\cite{Shannon48} marks the  dawn of the digital age. In his seminal paper, Shannon laid the theoretical and mathematical foundations for the  basis of all communication systems today. It is not an exaggeration to say that his work has   had a tremendous impact in communications  engineering and beyond, in fields as diverse as  statistics, economics, biology and cryptography, just to name a few.

It has been more than 65 years since Shannon's landmark work was published. Along with impressive research advances in the field of {\em information theory}, numerous excellent books on various aspects of the subject have been written. The author's favorites include  Cover and Thomas~\cite{Cov06}, Gallager~\cite{gallagerIT}, Csisz\'ar and K\"orner~\cite{Csi97}, Han~\cite{Han10}, Yeung~\cite{Yeung} and El Gamal and Kim~\cite{elgamal}. Is there sufficient motivation to consolidate and present another aspect of information theory systematically? It is the author's hope that the answer is in the affirmative.  

To motivate why this is so, let us recapitulate  two of Shannon's major contributions in his 1948 paper. First, Shannon showed that to {\em reliably} compress a discrete memoryless source (DMS) $X^n = (X_1, \ldots, X_n)$ where each  $X_i$ has the same distribution as a common random variable $X$, it is sufficient to use $H(X)$ bits per source symbol in the limit of large blocklengths $n$, where $H(X)$ is the Shannon entropy of the source.  By {\em reliable}, it is meant that  the   probability of incorrect  decoding of the  source sequence   tends to zero as the blocklength $n$ grows.  Second, Shannon showed that it is possible to   {\em reliably} transmit a message $M \in \{1,\ldots, 2^{nR}\}$ over a discrete memoryless channel  (DMC) $W$ as long as the message rate $R$ is smaller than the capacity of the channel $C(W)$. Similarly to the source compression scenario, by {\em reliable},  one means that the probability of  incorrectly decoding $M$   tends to zero as $n$ grows. 

There is, however, substantial motivation to revisit the criterion  of  having   error probabilities vanish asymptotically.   To state Shannon's source compression result more formally, let us define $M^*(P^n,\eps)$ to be the minimum code size for which the length-$n$ DMS $P^n$ is compressible to within an error probability $\eps \in (0,1)$. 
Then,  Theorem~3 of  Shannon's paper~\cite{Shannon48}, together with the strong converse for lossless source coding~\cite[Ex.~3.15]{elgamal}, states that 
\begin{equation}
\lim_{n\to\infty}\frac{1}{n}\log M^*(P^n,\eps) = H(X),\quad \mbox{bits per source symbol}. \label{eqn:intro_src}
\end{equation}
Similarly, denoting $M^*_{\mathrm{ave}}(W^n,\eps)$ as the maximum code size for which it is possible to communicate over a DMC  $W^n$ such that the average error probability is no larger than $\eps$, Theorem~11 of  Shannon's paper~\cite{Shannon48}, together with the strong converse for channel coding~\cite[Thm.~2]{Wolfowitz57},  states that 
\begin{equation}
\lim_{n\to\infty}\frac{1}{n}\log M^*_{\mathrm{ave}}(W^n,\eps) = C(W),\quad \mbox{bits per channel use}. \label{eqn:intro_ch}
\end{equation}
In many practical communication settings, one does not have the luxury of being able to design an arbitrarily long code, so one must settle for a non-vanishing,  and hence finite, error probability $\eps$.   In this {\em finite blocklength}  and {\em non-vanishing error probability} setting, how close can one hope  to get to  the asymptotic limits   $H(X)$ and $C(W)$?   This is, in general a difficult question because   exact evaluations of $\log M^*(P^n,\eps) $ and $\log M^*_{\mathrm{ave}}(W^n,\eps)$ are intractable, apart from a few special sources and channels.  

In the early years of information theory, Dobrushin~\cite{Dobrushin}, Kemperman~\cite{Kemperman} and, most prominently,   Strassen~\cite{Strassen}  studied approximations to   $\log M^*(P^n,\eps) $ and $\log M^*_{\mathrm{ave}}(W^n,\eps)$. These beautiful works were largely forgotten until recently, when interest in   so-called {\em Gaussian approximations} were revived by Hayashi~\cite{Hayashi08,Hayashi09} and Polyanskiy-Poor-Verd\'u~\cite{PPV08a,PPV10}.\footnote{Some of the results in~\cite{PPV08a,PPV10} were already announced by S.~Verd\'u in his Shannon lecture at the  2007 International Symposium on Information Theory (ISIT) in Nice, France.}  Strassen showed that the limiting statement in~\eqref{eqn:intro_src} may be refined to yield the   {\em asymptotic expansion}
\begin{equation}
\log M^*(P^n,\eps) = n H(X) -\sqrt{nV(X)} \Phi^{-1}({\eps} ) - \frac{1}{2}\log n + O(1), \label{eqn:int_src_expand}
\end{equation}
where $V(X)$ is  known as the {\em source dispersion} or the {\em varentropy},  terms introduced by Kostina-Verd\'u~\cite{kost12} and Kontoyiannis-Verd\'u~\cite{verdu14}. In~\eqref{eqn:int_src_expand}, $\Phi^{-1}$ is the inverse of the Gaussian cumulative distribution function. Observe that the first-order term in the asymptotic expansion above, namely $H(X)$, coincides with the (first-order) fundamental limit shown by Shannon.  From this expansion, one sees that  if the error probability is fixed to $\eps<\frac12$, the     extra rate above the entropy we have to pay for operating   at   finite blocklength $n$ with admissible error probability $\eps$ is approximately $\sqrt{ V(X)/n}\,  \Phi^{-1}(1-\eps)$. Thus, the quantity $V(X)$, which is a function of $P$ just like the entropy  $H(X)$, quantifies how fast the rates of optimal source codes converge to $H(X)$.  Similarly, for well-behaved DMCs, under mild conditions, Strassen showed that the limiting statement in \eqref{eqn:intro_ch} may be refined to 
\begin{equation}
\log M^*_{\mathrm{ave}}(W^n,\eps) = nC(W)  +\sqrt{nV_\eps(W)} \Phi^{-1}({\eps} ) + O(\log n)\label{eqn:int_ch_expand}
\end{equation}
and $V_\eps(W)$ is a  channel parameter known as the {\em $\eps$-channel dispersion}, a term introduced by Polyanskiy-Poor-Verd\'u~\cite{PPV10}. Thus the backoff from capacity at finite blocklengths $n$ and    average error probability $\eps$ is approximately $\sqrt{V_\eps(W)/n} \, \Phi^{-1}(1-\eps)$. 
\section{Motivation for this Monograph}
It turns out that   Gaussian approximations  (first two terms of \eqref{eqn:int_src_expand} and~\eqref{eqn:int_ch_expand}) are {\em good proxies} to the true non-asymptotic fundamental limits ($\log M^*(P^n,\eps)$ and $\log M^*_{\mathrm{ave}}(W^n,\eps)$) at moderate  blocklengths and moderate error probabilities for some channels and sources  as shown   by Polyanskiy-Poor-Verd\'u~\cite{PPV10} and Kostina-Verd\'u~\cite{kost12}. For error probabilities that are not too small (e.g., $\eps\in [10^{-6},10^{-3}]$), the Gaussian approximation is often better than that provided by traditional {\em error exponent}  or {\em reliability function} analysis~\cite{Csi97, gallagerIT}, where the   code rate is   fixed (below the first-order fundamental limit) and the exponential decay of the error probability is analyzed. Recent refinements to error exponent analysis using exact asymptotics~\cite{altug_refinement1,altug_refinement2, Sca13}  or saddlepoint approximations~\cite{Scarlett14a} are alternative proxies to the non-asymptotic fundamental limits.  The accuracy of the Gaussian approximation in {\em practical} regimes of errors and  finite blocklengths gives us motivation to  study refinements to the first-order fundamental limits of other single- and multi-user problems in Shannon theory. 

The study of {\em asymptotic estimates   with non-vanishing error probabilities}---or  more succinctly, {\em fixed error asymptotics}---also uncovers several interesting phenomena that are not observable from  studies of first-order fundamental limits in single- and multi-user information theory~\cite{Cov06,elgamal}. This analysis may give engineers deeper insight into the design of practical communication systems. A non-exhaustive list includes:
\begin{enumerate}
\item Shannon showed that {\em separating} the tasks of source    and channel coding  is optimal rate-wise \cite{Shannon48}. As we see in Section~\ref{sec:sep} (and similarly to the case of error exponents \cite{Csi80}), this is not the case when the probability of excess distortion of the source is allowed to be  non-vanishing. 
\item Shannon showed that feedback does not increase the capacity of a DMC \cite{Sha56}. It is known, however, that variable-length feedback~\cite{PPV11b} and full output feedback \cite{AW14} improve on the fixed error asymptotics of   DMCs.

\item It is known that the entropy can be achieved {\em universally}  for fixed-to-variable length  almost  lossless source coding of a DMS \cite{LZ78}, i.e., the source statistics do not have to be known.  The redundancy has also been studied  for prefix-free codes~\cite{Clarke}. In the fixed error setting (a setting complementary to \cite{Clarke}), it was shown by Kosut and Sankar~\cite{KosutSankar,KosutSankar14} that universality imposes a penalty in 
 the {\em third-order} term of the asymptotic expansion in \eqref{eqn:int_src_expand}.
\item Han showed that the output from any source encoder at the optimal coding rate with asymptotically vanishing  error appears almost completely random  \cite{Han_folklore}. This is the so-called {\em folklore theorem}. Hayashi \cite{Hayashi08} showed that the analogue of the folklore theorem does not hold when we consider the second-order terms in asymptotic expansions (i.e., the  second-order asymptotics).
\item Slepian and Wolf showed that separate encoding of two correlated sources incurs no loss rate-wise compared to the situation where side information is also available at all encoders~\cite{sw73}. As we shall see in Chapter \ref{ch:sw}, the   fixed error asymptotics in the vicinity of a corner point of the polygonal Slepian-Wolf region suggests that side-information at the encoders may be beneficial.
\end{enumerate}
None of the  aforementioned books \cite{Cov06,Csi97, elgamal,gallagerIT, Han10, Yeung}  focus exclusively on the situation where the error probabilities of various Shannon-theoretic problems are upper bounded by $\eps\in (0,1)$ and asymptotic expansions or second-order terms  are sought. This is what this monograph attempts to do.

\section{Preview of this Monograph  }
This monograph   is organized as follows: In the remaining parts of this chapter, we   recap some  quantities in information theory and   results in the {\em method of types}~\cite{Csis00,Csi97, Har08}, a particularly useful tool for the study of discrete memoryless systems. We also mention some probability bounds that will be used throughout the monograph. Most of these bounds are based on refinements  of the   central limit theorem, and are collectively known as {\em Berry-Esseen theorems}~\cite{Berry41, Esseen42}. In Chapter~\ref{ch:ht}, our study of asymptotic expansions of the form \eqref{eqn:int_src_expand} and \eqref{eqn:int_ch_expand} begins in earnest by revisiting Strassen's work \cite{Strassen} on    binary hypothesis testing where the probability of false alarm is constrained to not exceed a positive constant.  We find it useful to revisit the fundamentals of hypothesis testing as many information-theoretic problems such as   source  and channel coding are intimately related to hypothesis testing.

Part II of this monograph begins our study of information-theoretic   problems starting with lossless and lossy compression in Chapter~\ref{ch:src}. We emphasize, in the first part of this chapter, that (fixed-to-fixed length) lossless source coding and binary hypothesis testing are, in fact, the same problem, and so the asymptotic expansions developed in Chapter~\ref{ch:ht} may be directly employed for the purpose of lossless source coding. Lossy source coding, however, is more involved. We review the recent works in~\cite{ingber11} and~\cite{kost12}, where the authors  independently derived asymptotic expansions for the logarithm of the minimum size of a  source code that reproduces symbols up to a certain distortion, with some admissible probability of excess distortion.  Channel coding is discussed in Chapter~\ref{ch:cc}. In particular, we  study the approximation in \eqref{eqn:int_ch_expand} for both discrete memoryless and Gaussian channels. We make it a point here to be precise about the third-order $O(\log n)$ term. We  state conditions on the channel  under which the coefficient of the $O(\log n)$ term can be determined exactly. This leads to some new insights concerning optimum codes for the channel coding problem. Finally, we marry source and channel coding in the study of source-channel transmission where the probability of excess distortion in reproducing the source is non-vanishing.

\begin{center}
\begin{figure}
\centering
\begin{tabular}{cc}
 \parbox[c]{.65\columnwidth}{ 
\includegraphics[width=.65\columnwidth]{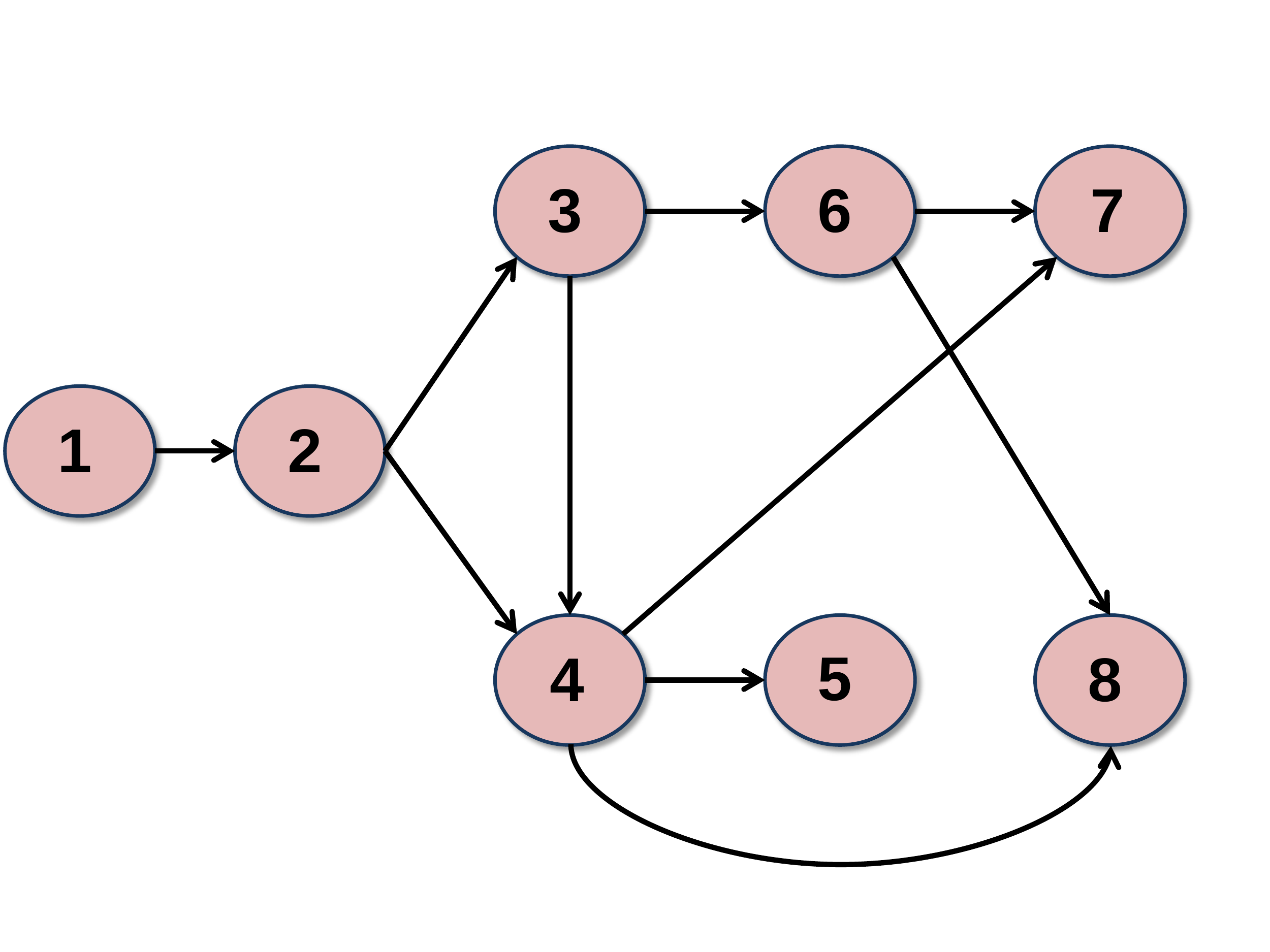} }   &    
 \parbox[c]{.35\columnwidth}{ {\footnotesize
1.\ Introduction \\
2.\ Hypothesis Testing\\
3.\ Source Coding\\
4.\  Channel Coding\\
5.\  Channels with State\\
6.\  Slepian-Wolf \\
7.\  Gaussian IC  \\
8.\  Gaussian A-MAC 
} }
\end{tabular}
\caption{Dependence graph of the chapters in this monograph. An arrow from node $s$ to $t$ means that results and techniques in Chapter $s$ are required to understand the material in  Chapter~$t$.}
\label{fig:depend}
\end{figure}
\end{center} 

Part III of this monograph contains a sparse sampling of fixed error asymptotic results in network information theory. The problems we discuss here have conclusive second-order asymptotic characterizations (analogous to the second terms in the asymptotic expansions in~\eqref{eqn:int_src_expand} and~\eqref{eqn:int_ch_expand}). They  include some channels with random state (Chapter~\ref{ch:state}), such as Costa's writing on dirty paper \cite{costa}, mixed DMCs~\cite[Sec.~3.3]{Han10}, and quasi-static single-input-multiple-output (SIMO) fading channels~\cite{Biglieri}. Under the  fixed error setup, we also consider the second-order asymptotics of the Slepian-Wolf~\cite{sw73} distributed lossless source coding problem (Chapter~\ref{ch:sw}), the Gaussian interference channel (IC) in the strictly very strong interference regime~\cite{Carleial75} (Chapter~\ref{ch:ic}), and the Gaussian multiple access channel  (MAC) with degraded message sets (Chapter~\ref{ch:mac}). The MAC with   degraded message sets   is also known as the  {\em cognitive}~\cite{dev06}   or  {\em asymmetric} \cite{Har75,vdM85,Pre84} MAC (A-MAC). Chapter~\ref{ch:con} concludes with  a  brief summary of other  results, together with open problems in this area of research. A dependence graph of the chapters in the monograph is shown in Fig.~\ref{fig:depend}.

This  area of information theory---{\em fixed error asymptotics}---is vast and, at the same time,   rapidly expanding. The results   described   herein are not  meant to be exhaustive and were somewhat dependent on the author's understanding of the subject and   his  preferences at the time of writing. However, the author has made it a point  to ensure that results herein are  {\em conclusive} in nature.   This means that the problem is {\em solved} in the information-theoretic sense in that an operational quantity is {\em equated} to an information quantity. In terms of asymptotic expansions such as \eqref{eqn:int_src_expand} and~\eqref{eqn:int_ch_expand}, by {\em solved}, we mean that either the second-order term is known  or, better still, both the second- and third-order terms are known. Having articulated this, the author confesses that there are   many relevant information-theoretic problems that can be considered  solved in the fixed error setting, but have not found their way into this monograph either due to space constraints or because  it was difficult to meld them seamlessly with the rest of the story.  
\section{Fundamentals of Information Theory} \label{sec:funda}
In this section, we   review some  basic information-theoretic quantities. 
As with every article published in the {\em Foundations and Trends in Communications and Information Theory}, the reader is expected  to have some background in information theory. Nevertheless, the only  prerequisite  required to appreciate this monograph is information theory at the level of Cover and Thomas \cite{Cov06}. We will also make extensive use of the method  of types, for which excellent expositions can be found in~\cite{Csis00, Csi97, Har08}.  The measure-theoretic foundations of probability will    not be needed  to keep the  exposition accessible to as  wide  an  audience as possible.

\subsection{Notation} \label{sec:nota}
The notation we use is  reasonably standard and  generally follows the books by Csisz\'ar-K\"orner~\cite{Csi97} and Han~\cite{Han10}. Random variables (e.g., $X$) and their realizations (e.g., $x$) are in upper and lower case respectively. Random variables that take on finitely many values have alphabets (support) that are denoted by calligraphic font  (e.g., $\calX$).  The cardinality of the finite set $\calX$  is denoted as $| \calX |$. Let the random vector $X^n$ be the vector of random variables $(X_1, \ldots ,X_n)$. We use bold face $\bx=(x_1, \ldots, x_n)$ to denote a realization of $X^n$.  The set of all distributions (probability mass  functions) supported on alphabet $\calX$ is denoted as $\scP(\calX)$. The set of all conditional distributions (i.e., channels) with the input alphabet $\calX$ and the output alphabet $\calY$ is denoted by $\scP(\calY|\calX )$.  The joint distribution induced by a marginal distribution $P  \in \scP(\calX)$ and a channel $V  \in  \scP(\calY|\calX )$ is denoted   as $P \times V$, i.e., 
\begin{equation}
(P\times V)( x,y) := P(x)V(y|x).
\end{equation}
The marginal output distribution induced by $P$ and $V$ is denoted as   $PV$, i.e., 
\begin{equation}
PV(y): = \sum_{x\in\calX} P(x) V(y|x).
\end{equation}
If $X$ has distribution $P$, we sometimes  write this as $X\sim P$. 

Vectors are indicated in lower case bold face (e.g., $\ba$) and matrices in upper case bold face (e.g., $\bA$). If we write $\ba\ge\mathbf{b}$ for two vectors $\ba$ and $\mathbf{b}$ of the same length, we mean that $a_j \ge b_j$ for every coordinate $j$. The transpose of $\bA$ is denoted as $\bA'$.  The vector of all zeros and the identity matrix are  denoted as $\bzero$ and $\bI$ respectively. We sometimes make the lengths and sizes explicit. The $\ell_q$-norm (for $q\ge 1$) of a vector $\bv=(v_1,\ldots, v_k)$ is denoted as $\|\bv\|_q :=( \sum_{i=1}^k |v_i|^q)^{1/q}$.

We use standard asymptotic notation \cite{Cor03}: $a_n\in O(b_n)$ if and only if  (iff) $\limsup_{n\to\infty} \big|a_n/b_n\big|<\infty$; $a_n \in \Omega(b_n)$ iff $b_n \in O(a_n)$; $a_n \in \Theta(b_n)$ iff $a_n \in O(b_n)\cap\Omega(b_n)$; $a_n \in o(b_n)$ iff $\limsup_{n\to\infty} \big|a_n/b_n\big|=0$; and  $a_n \in \omega(b_n)$ iff $\liminf_{n\to\infty} \big|a_n/b_n\big|=\infty$.  Finally, $a_n\sim b_n$ iff $\lim_{n\to\infty}  a_n/b_n =1$.
\subsection{Information-Theoretic Quantities}
Information-theoretic quantities are denoted in the usual way \cite{Csi97,elgamal}. All logarithms and exponential functions are to the base $2$. The {\em entropy} of a discrete random variable $X$ with probability  distribution $P\in\scP(\calX)$ is denoted as 
\begin{equation}
H(X) = H(P) := -\sum_{x\in\calX} P(x)\log P(x).
\end{equation}
For the sake of clarity, we will sometimes make the dependence on the distribution  $P$ explicit. Similarly given a pair of random variables $(X,Y)$ with joint distribution $P\times V \in \scP(\calX\times \calY)$, the {\em conditional entropy}  of $Y$ given $X$ is written as 
\begin{equation}
H(Y|X) = H(V|P) := -\sum_{x\in\calX} P(x) \sum_{y\in\calY} V(y|x) \log V(y|x).
\end{equation}
The {\em joint entropy} is denoted as 
\begin{align}
H(X,Y)  &:= H(X)+H(Y|X) ,\quad\mbox{or} \\*
H(P\times V)  &:= H(P)+H(V|P).
\end{align}
The {\em mutual information} is  a measure of the correlation or dependence between random variables $X$ and $Y$. It is interchangeably denoted as 
\begin{align}
I(X;Y)  &:= H(Y)-H(Y|X)  ,\quad\mbox{or} \\*
I(P,V)  &:= H(PV)-H(V|P)  .
\end{align}
Given three random variables $(X,Y,Z)$ with joint distribution $P\times V\times W$ where $V\in\scP(\calY|\calX)$ and $W\in\scP(\calZ|\calX\times \calY)$, the {\em conditional mutual information} is 
\begin{align}
I(Y;Z|X)  &:=  H(Z|X)-H(Z|XY) ,\quad\mbox{or} \\*
I(V,W|P) &:= \sum_{x\in\calX} P(x) I\big(V(\cdot|x)  , W(\cdot |x,\cdot)\big).
\end{align}

A particularly important quantity is the {\em relative entropy} (or {\em Kullback-Leibler divergence}~\cite{Kul51}) between $P$ and $Q$ which are distributions on the same finite  support set $\calX$. It is defined as the expectation with respect to $P$ of the log-likelihood ratio $\log\frac{P(x)}{Q(x)}$, i.e., 
\begin{equation}
D(P \| Q) := \sum_{x\in\calX} P(x)\log\frac{P(x)}{Q(x)}.
\end{equation}
Note that if  there exists an $x\in\calX$ for which $Q(x) = 0$ while $P(x)>0$, then the relative entropy $D(P \| Q) = \infty$. If for every $x\in\calX$, if $Q(x)=0$ then $P(x)=0$, we say that $P$ is absolutely continuous with respect to $Q$ and denote this relation by $P\ll Q$. In this case, the relative entropy is finite. It is well known that $D(P \| Q) \ge 0$ and equality holds if and only if $P = Q$.  Additionally, the {\em conditional relative entropy} between $V,W\in\scP(\calY|\calX)$ given $P\in\scP(\calX)$ is defined as
\begin{equation}
D(V\|W|P) := \sum_{x\in\calX}P(x) D\big(V(\cdot|x)\|W(\cdot|x)\big).
\end{equation}

The mutual information is a special case of the relative entropy. In particular, we have 
\begin{equation}
I(P,V)=D(P\times V \| P\times PV)= D(V\|PV|P) .
\end{equation}
Furthermore, if $U_\calX$ is the uniform distribution on $\calX$, i.e.,  $U_\calX(x) = 1/|\calX|$ for all $x\in\calX$,  we have 
\begin{equation}
D(P\| U_\calX) = -H(P)+ \log|\calX|. \label{eqn:div_entropy}
\end{equation}

The definition of relative entropy $D(P\| Q)$ can   be extended to the case where $Q$ is not necessarily a probability measure. In this case non-negativity does not hold in general. An important property we exploit is the  following: If $\mu$ denotes the counting measure  (i.e., $\mu(\calA)=|\calA|$ for $\calA\subset\calX$), then similarly to \eqref{eqn:div_entropy}
\begin{equation}
D(P\|\mu)=-H(P). \label{eqn:div_ent}
\end{equation}

\section{The Method of Types}
For finite alphabets, a particularly convenient tool in information theory is the {\em method of types} \cite{Csis00,Csi97,  Har08}. For a sequence $\bx = (x_1,\ldots, x_n)\in\calX^n$ in which $|\calX|$ is finite, its {\em type} or {\em empirical distribution} is the probability mass  function
\begin{equation}
P_{\bx}(x) = \frac{1}{n}\sum_{i=1}^n \bbI\{ x_i = x\} ,\qquad\forall\, x\in\calX.
\end{equation}
Throughout, we use the notation $\bbI\{\mathrm{clause}\}$ to mean the {\em indicator function}, i.e., this function equals $1$ if ``$\mathrm{clause}$'' is true and $0$ otherwise. The set of types formed from $n$-length sequences in $\calX$  is denoted as $\scP_n(\calX)$. This is clearly a subset of $\scP(\calX)$. The {\em type class} of $P$, denoted as $\calT_P$, is the set of all sequences  of length $n$  for which their type is $P$, i.e., 
\begin{equation}
\calT_P :=\left\{ \bx \in\calX^n: P_{\bx} = P \right\}.
\end{equation}
It is customary to indicate the dependence of $\calT_P$ on the {\em blocklength} $n$ but we suppress this dependence for the sake of conciseness throughout.  For a sequence $\bx\in\calT_P$, the set of all sequences $\by\in\calY^n$ such that $(\bx,\by)$ has joint type $P\times V$ is the {\em $V$-shell}, denoted as $\calT_V(\bx)$. In other words,
\begin{equation}
\calT_V(\bx):=\left\{\by\in\calY^n: P_{\bx,\by}=P\times V\right\}.
\end{equation}
The conditional distribution $V$ is also known as the {\em conditional type} of $\by$ given $\bx$. Let $\scV_n(\calY;P)$ be the set of all   $V\in\scP(\calY|\calX)$ for which the $V$-shell of a sequence of type $P$ is non-empty. 

We will often times find it useful to consider information-theoretic quantities of empirical distributions. All such quantities are denoted using hats. So for example, the {\em empirical entropy} of a sequence $\bx\in\calX^n$ is denoted as 
\begin{equation}
\hatH(\bx) := H(P_{\bx}) .
\end{equation}
The {\em empirical conditional entropy} of $\by \in\calY^n$ given  $\bx\in\calX^n$ where $\by\in\calT_V(\bx)$ is denoted as 
\begin{equation}
 \hatH(\by|\bx) := H( V|P_{\bx}).
 \end{equation} 
The {\em empirical mutual information} of a pair of sequences $(\bx,\by) \in\calX^n\times\calY^n$ with joint type $P_{\bx,\by}=P_{\bx} \times V$ is denoted as 
\begin{equation}
\hatI(\bx\wedge\by) := I(P_{\bx}, V). 
\end{equation}
 
The following lemmas  form the basis of   the method of types. The proofs can be found in~\cite{Csis00,Csi97}.

\begin{lemma}[Type Counting] \label{lem:type_count}
The sets $\scP_n(\calX)$ and $\scV_n(\calY;P)$ for $P\in\scP_n(\calX)$ satisfy 
\begin{align}
|\scP_n(\calX)|  \le (n+1)^{|\calX|} ,\quad\mbox{and}    \quad |\scV_n(\calY;P)|   \le (n+1)^{|\calX| |\calY|}. \label{eqn:type_counting1}
\end{align}
\end{lemma}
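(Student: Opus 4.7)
The plan is to prove each bound by a direct counting argument that identifies a type (or conditional type) with an integer vector of bounded entries, and then crudely bounds the number of such vectors.

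For the first inequality, I would observe that any $P \in \scP_n(\calX)$ is completely determined by the integer vector $\bigl(n P(x)\bigr)_{x \in \calX}$, since each coordinate is the count $\sum_{i=1}^n \bbI\{x_i = x\}$ for any representative sequence $\bx \in \calT_P$. Each such count is an integer in $\{0, 1, \ldots, n\}$, giving at most $n+1$ possibilities per coordinate. Multiplying over the $|\calX|$ coordinates yields $|\scP_n(\calX)| \le (n+1)^{|\calX|}$. (One could of course tighten the bound using the constraint $\sum_x nP(x) = n$, but the stated inequality is all that is needed.)

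For the second inequality, the key observation is that a conditional type $V \in \scV_n(\calY;P)$ for which $\calT_V(\bx)$ is non-empty (with $\bx \in \calT_P$) is determined by the joint empirical counts $\bigl(n P(x) V(y|x)\bigr)_{(x,y) \in \calX \times \calY}$. For each $x$ with $P(x) > 0$, these counts range over non-negative integers summing to $nP(x)$, so there are at most $(nP(x)+1)^{|\calY|}$ choices for the row $V(\cdot | x)$. Taking the product over $x \in \calX$ and bounding $nP(x)+1 \le n+1$ yields $|\scV_n(\calY;P)| \le (n+1)^{|\calX||\calY|}$.

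There is no real obstacle here; both inequalities are essentially combinatorial, and the only small care needed is to handle the case $P(x)=0$ (where $V(\cdot|x)$ is irrelevant and can be fixed arbitrarily, so it does not affect the count) and to note that the bound $\sum_y nP(x)V(y|x) = nP(x)$ is what ensures each row contributes only $(nP(x)+1)^{|\calY|}$ rather than $(n+1)^{|\calY|}$; the latter cruder bound is what gives the clean product form $(n+1)^{|\calX||\calY|}$ in the statement. Both parts can be written in a few lines.
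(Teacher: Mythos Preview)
Your proposal is correct and is precisely the standard counting argument; the paper does not give its own proof here but simply refers to \cite{Csis00,Csi97}, where exactly this argument appears. There is nothing to add.
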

In fact, it is easy to check that $|\scP_n(\calX) | = \binom{ n+|\calX|-1}{ |\calX|-1}$ but  \eqref{eqn:type_counting1} or its slightly stronger version 
\begin{equation}
|\scP_n(\calX)|   \le (n+1)^{|\calX|-1} \label{eqn:type_coutn2}
\end{equation}
usually suffices for our purposes in this monograph. This key property says that the number of types is  polynomial in the blocklength $n$. 

\begin{lemma}[Size of Type Class] \label{lem:size_type_class}
For a type $P\in\scP_n(\calX)$, the type class  $\calT_P \subset\calX^n$ satisfies
\begin{align}
|\scP_n(\calX)|^{-1} \exp\big(n H(P) \big) \le|\calT_P|   \le \exp\big(n H(P) \big) .
\end{align} 
For a conditional type $V\in\scV_n(\calY;P)$ and a sequence $\bx\in\calT_P$, the $V$-shell $\calT_V(\bx)\subset\calY^n$   satisfies
\begin{align}
|\scV_n(\calY;P)|^{-1} \exp\big(n H(V|P) \big) \le|\calT_V(\bx)|  \le \exp\big(n H(V|P) \big)   .
\end{align}
\end{lemma}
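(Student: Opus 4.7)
The plan is to prove all four bounds by the classical probabilistic argument: compute the product-measure probability of a type (or conditional type) class exactly in terms of the entropy, and then sandwich that probability between $0$ and $1$ using the total probability bound and a ``most-likely type'' estimate.

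For the unconditional upper bound, I would first observe that every sequence $\bx \in \calT_P$ has $P^n(\bx) = \prod_{x\in\calX} P(x)^{n P(x)} = \exp(-n H(P))$, so $P^n(\calT_P) = |\calT_P|\exp(-n H(P))$. Since $P^n(\calT_P) \le 1$, this immediately yields $|\calT_P| \le \exp(n H(P))$. For the lower bound, the key step is to show that $P$ is the \emph{most likely} type under the product measure $P^n$, i.e., $P^n(\calT_P) \ge P^n(\calT_Q)$ for every $Q \in \scP_n(\calX)$. This reduces, after canceling the entropy-like factors, to the combinatorial inequality
\begin{equation}
\frac{|\calT_P|}{|\calT_Q|} \prod_{x\in\calX} P(x)^{n(P(x) - Q(x))} \ge 1,
\end{equation}
which in turn follows from the elementary identity $\frac{m!}{\ell!} \ge \ell^{\,m-\ell}$ applied coordinate-wise to the multinomial coefficients. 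Once $P$ is known to maximize $P^n(\calT_Q)$ over $Q$, Lemma~\ref{lem:type_count} gives $1 = \sum_{Q \in \scP_n(\calX)} P^n(\calT_Q) \le |\scP_n(\calX)| \, P^n(\calT_P)$, whence $|\calT_P| \ge |\scP_n(\calX)|^{-1}\exp(n H(P))$.

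For the $V$-shell, I would follow the same template with $P^n$ replaced by the memoryless channel $V^n(\cdot | \bx)$ and $\calT_P$ replaced by $\calT_V(\bx)$. Again one checks that for every $\by \in \calT_V(\bx)$,
\begin{equation}
V^n(\by | \bx) = \prod_{x,y} V(y|x)^{n P_{\bx}(x) V(y|x)} = \exp(-n H(V|P)),
\end{equation}
so $V^n(\calT_V(\bx) | \bx) = |\calT_V(\bx)|\exp(-n H(V|P))$ and the upper bound $|\calT_V(\bx)| \le \exp(n H(V|P))$ drops out of $V^n(\calT_V(\bx) | \bx) \le 1$. The matching lower bound requires the conditional analogue of the ``most likely type'' fact: among $\tilde V \in \scV_n(\calY;P)$ compatible with $\bx$, the measure $V^n(\cdot|\bx)$ puts its largest mass on $\calT_V(\bx)$. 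Proving this is again the same multinomial/factorial inequality, now applied row-by-row for each $x \in \calX$ with non-zero empirical mass, and then combined across rows. Summing over $\tilde V$ and using $|\scV_n(\calY;P)|$ from Lemma~\ref{lem:type_count} yields $V^n(\calT_V(\bx)|\bx) \ge |\scV_n(\calY;P)|^{-1}$, which gives the desired bound.

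The main technical obstacle is the ``$P$ maximizes $P^n(\calT_Q)$'' step, because it is the only place where a genuine combinatorial inequality (rather than a counting estimate) is needed; the rest is bookkeeping. Once that lemma is in hand, both the unconditional and conditional statements follow in parallel, with the conditional case obtained simply by applying the unconditional argument separately inside each $x$-column of the conditional type table and then taking the product over $x \in \supp(P)$.
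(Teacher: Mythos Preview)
Your proposal is correct and is precisely the classical Csisz\'ar--K\"orner argument that the paper itself defers to (the paper does not give its own proof but simply cites \cite{Csis00,Csi97}). The probabilistic sandwich via $P^n(\calT_P)=|\calT_P|\exp(-nH(P))$, together with the ``$P$ is the most likely type under $P^n$'' lemma proved through the factorial inequality $m!/\ell!\ge \ell^{\,m-\ell}$, is exactly the standard route, and your row-by-row extension to the $V$-shell is the canonical way to handle the conditional case.
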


This lemma says that,  on the exponential scale, 
\begin{equation}
|\calT_P| \cong\exp\big(nH(P) \big),\quad\mbox{and}\quad |\calT_V(\bx)| \cong   \exp\big(nH(V|P) \big),
\end{equation}
where we used the notation $a_n\cong  b_n$ to mean equality up to a polynomial, i.e., there exists polynomials $p_n$ and $q_n$ such that $a_n/ p_n \le b_n \le q_n  a_n$. We now consider probabilities of sequences. Throughout, for a   distribution $Q\in\scP(\calX)$, we let $Q^n(\bx)$ be the product distribution, i.e., 
\begin{equation}
Q^n(\bx) = \prod_{i=1}^n Q(x_i),\qquad\forall\,\bx\in\calX^n.
\end{equation}

\begin{lemma}[Probability of Sequences] \label{lem:prob_seq}
If $\bx \in \calT_P$ and $\by\in\calT_V(\bx)$,
\begin{align}
Q^n(\bx) & = \exp\big( -n D( P\| Q) - nH(P) \big) \quad\mbox{and} \\
W^n(\by|\bx) & = \exp\big( -n D( V\| W| P) - nH(V|P) \big) .
\end{align}
\end{lemma}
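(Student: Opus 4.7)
The plan is to reduce both identities to direct algebraic computation by exploiting the defining property of type classes: if $\bx \in \calT_P$, then for every $x \in \calX$ the symbol $x$ appears in $\bx$ exactly $nP(x)$ times. This lets me collapse the product form of $Q^n(\bx) = \prod_{i=1}^n Q(x_i)$ into $\prod_{x \in \calX} Q(x)^{nP(x)}$, after which taking a logarithm reduces everything to manipulating $\sum_x P(x)\log Q(x)$.

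First I would take $\log$ of both sides of the collapsed product to obtain $\log Q^n(\bx) = n\sum_{x\in\calX} P(x) \log Q(x)$. The key algebraic step is the standard splitting $\log Q(x) = \log P(x) - \log\frac{P(x)}{Q(x)}$ applied inside the sum, which yields
\begin{equation}
\sum_{x\in\calX} P(x)\log Q(x) = -H(P) - D(P\|Q) .
\end{equation}
Multiplying by $n$ and exponentiating gives the first identity. The only support subtlety is that if $P(x) > 0$ but $Q(x) = 0$ for some $x$, then $Q^n(\bx) = 0$ and $D(P\|Q) = \infty$, which is consistent with $\exp(-\infty) = 0$.

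For the conditional statement I would run the same argument one level deeper. The assumption $\by \in \calT_V(\bx)$ means the joint empirical count of each pair $(x,y)$ in $(\bx,\by)$ equals $nP(x)V(y|x)$, so
\begin{equation}
W^n(\by|\bx) = \prod_{i=1}^n W(y_i|x_i) = \prod_{x\in\calX}\prod_{y\in\calY} W(y|x)^{nP(x)V(y|x)} .
\end{equation}
Taking logarithms, applying the same add-subtract trick with $V(y|x)$ in place of $P(x)$ inside the inner sum over $y$, and then averaging over $x$ against $P$, reproduces $-nH(V|P) - nD(V\|W|P)$ directly from the definitions of the conditional entropy and the conditional relative entropy given in Section~\ref{sec:funda}.

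I do not expect any real obstacle here; the entire argument is bookkeeping based on the fact that a type (resp.\ conditional type) completely determines the multiset of symbols (resp.\ pairs) occurring in the sequence, so the product $Q^n(\bx)$ and $W^n(\by|\bx)$ depend on $\bx$ and $(\bx,\by)$ only through $P$ and $V$. The mild care points are purely notational: maintaining the convention $0\log 0 = 0$ in the entropy sums, and handling the support condition $V \ll W$ (given $P$) implicit in keeping $D(V\|W|P)$ finite.
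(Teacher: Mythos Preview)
Your proof is correct and is precisely the standard argument; the paper itself does not prove this lemma but defers to \cite{Csis00,Csi97}, where the proof given is exactly the direct computation you describe (collapse the product by symbol counts, take logs, and split $\log Q(x)$ as $\log P(x) - \log\tfrac{P(x)}{Q(x)}$). There is nothing to add.
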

This, together with Lemma~\ref{lem:size_type_class}, leads immediately to the final   lemma in this section.

\begin{lemma}[Probability of Type Classes]
For a type $P\in\scP_n(\calX)$,  
\begin{equation}
|\scP_n(\calX)|^{-1}\exp\big( -nD(P\| Q) \big) \le Q^n(\calT_P)\le \exp\big( -nD(P\| Q) \big).
\end{equation}
For a conditional type $V\in\scV_n(\calY;P)$ and a sequence $\bx\in\calT_P$, we have 
\begin{align}
|\scV_n(\calY;P)|^{-1}\exp\big( -nD( V\|W|P) \big) & \le W^n(\calT_V(\bx) |\bx) \nn\\
& \le \exp\big( -nD( V\|W|P) \big). \label{eqn:prob_v_shell}
\end{align}
\end{lemma}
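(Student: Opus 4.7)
The plan is to combine the two immediately preceding lemmas --- \emph{Size of Type Class} (Lemma~\ref{lem:size_type_class}) and \emph{Probability of Sequences} (Lemma~\ref{lem:prob_seq}) --- with the observation that every sequence lying in a type class has exactly the same $Q^n$-probability, which lets us convert the type-class probability into a product of a cardinality and a single-sequence probability.

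For the unconditional bound, I would start by writing
\[
Q^n(\calT_P) = \sum_{\bx \in \calT_P} Q^n(\bx).
\]
By Lemma~\ref{lem:prob_seq}, every $\bx \in \calT_P$ satisfies $Q^n(\bx) = \exp(-nD(P\|Q) - nH(P))$, a quantity depending only on $P$ and $Q$, not on $\bx$. Hence
\[
Q^n(\calT_P) = |\calT_P| \cdot \exp\bigl(-nD(P\|Q) - nH(P)\bigr).
\]
Then I would substitute the two-sided estimate from Lemma~\ref{lem:size_type_class},
\[
|\scP_n(\calX)|^{-1}\exp(nH(P)) \le |\calT_P| \le \exp(nH(P)),
\]
so that the $\exp(nH(P))$ factors cancel and the claimed inequalities on $Q^n(\calT_P)$ fall out immediately.

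For the conditional bound the argument is structurally identical: fix $\bx \in \calT_P$, write
\[
W^n(\calT_V(\bx)\mid \bx) = \sum_{\by \in \calT_V(\bx)} W^n(\by\mid\bx),
\]
use the conditional part of Lemma~\ref{lem:prob_seq} to replace every summand by $\exp(-nD(V\|W|P) - nH(V|P))$, factor it out, and then apply the two-sided bound on $|\calT_V(\bx)|$ from Lemma~\ref{lem:size_type_class}. Again, the $\exp(nH(V|P))$ factors cancel.

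There is really no obstacle here: the only subtlety worth flagging is the uniformity of $Q^n(\bx)$ (resp.\ $W^n(\by\mid\bx)$) across the type (resp.\ conditional-type) class, which is exactly what Lemma~\ref{lem:prob_seq} provides; once that is in hand, the lemma is a direct multiplicative consequence of the size bounds already established.
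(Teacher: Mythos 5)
Your proof is correct and matches the paper's intended argument exactly: the paper states that the lemma follows ``immediately'' from the Size of Type Class and Probability of Sequences lemmas, which is precisely the decomposition into cardinality times constant per-sequence probability that you carry out. No gaps.
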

The interpretation of this lemma is that   the probability that a random \iid (independently and identically distributed) sequence $X^n$ generated from $Q^n$ belongs to the type class $\calT_P$ is exponentially small with exponent $D( P \| Q)$, i.e., 
\begin{equation}
Q^n(\calT_P)\cong  \exp\big( -nD(P\| Q) \big).
\end{equation}
The bounds in \eqref{eqn:prob_v_shell} can be interpreted similarly. 
\section{Probability Bounds} \label{sec:prob}
In this section, we summarize some bounds on probabilities that we use extensively in the sequel.  For a random variable $X$, we let $\bbE[X]$ and $\var(X)$ be its expectation and variance respectively. To emphasize that the expectation is taken with respect to a random variable $X$ with distribution  $P$, we sometimes make this explicit by using a subscript, i.e., $\bbE_X$ or $\bbE_P$. 

\subsection{Basic Bounds}
We start with the familiar Markov and Chebyshev inequalities. 
\begin{proposition}[Markov's inequality]
Let $X$ be a  real-valued  non-negative random variable. Then for any $a>0$, we have 
\begin{equation}
\Pr( X\ge a)\le\frac{\bbE[X]}{a}. 
\end{equation}
\end{proposition}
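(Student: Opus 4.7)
The plan is to use the standard indicator-function argument, which reduces the claim to monotonicity of expectation. First I would introduce the indicator random variable $\bbI\{X\ge a\}$ and establish the pointwise bound $X \ge a\,\bbI\{X\ge a\}$, valid on the entire sample space: on the event $\{X\ge a\}$ the right-hand side equals $a$ and $X\ge a$ by definition, while on the complement $\{X<a\}$ the right-hand side is $0$ and $X\ge 0$ by the non-negativity hypothesis. This is the only place where non-negativity of $X$ is used.

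Next I would take expectations of both sides. By monotonicity and linearity of expectation, together with the identity $\bbE[\bbI\{\mathrm{event}\}]=\Pr(\mathrm{event})$, we obtain
\begin{equation}
\bbE[X]\;\ge\; a\,\bbE\bigl[\bbI\{X\ge a\}\bigr]\;=\; a\,\Pr(X\ge a).
\end{equation}
Dividing through by $a>0$ yields the desired inequality $\Pr(X\ge a)\le \bbE[X]/a$.

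There is no genuine obstacle here — the argument is two lines — so I would merely note in closing that (i) the bound is trivially true whenever $\bbE[X]=\infty$, and (ii) equality can be attained (e.g.\ when $X$ is supported on $\{0,a\}$), which will matter when Markov's inequality is sharpened to Chebyshev's inequality and then to the Berry–Esseen-type refinements used throughout the monograph.
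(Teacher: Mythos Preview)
Your proof is correct and is the standard indicator-function argument. The paper itself states Markov's inequality without proof (it simply moves on to note that Chebyshev's inequality follows by applying Markov to $(X-\bbE[X])^2$), so there is no proof in the paper to compare against; your argument would serve perfectly well as the omitted justification.
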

If we let $X$ above be the non-negative  random variable $(X-\bbE[X])^2$, we obtain Chebyshev's inequality.
\begin{proposition}[Chebyshev's inequality]
Let $X$ be a  real-valued  random variable with mean $\mu$ and variance $\sigma^2$. Then for any $b>0$, we have 
\begin{equation}
\Pr\big( |X-\mu|\ge b\sigma\big) \le \frac{1}{b^2}.
\end{equation}
\end{proposition}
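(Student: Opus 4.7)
The plan is to derive Chebyshev's inequality as a direct corollary of Markov's inequality, which has just been established. The key observation is that the event $\{|X-\mu| \ge b\sigma\}$ can be rewritten in terms of a non-negative random variable, making it amenable to Markov's bound.

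First, I would define the auxiliary random variable $Y := (X-\mu)^2$, which is non-negative by construction. The event of interest $\{|X-\mu| \ge b\sigma\}$ is, by squaring both sides (valid since both are non-negative), exactly the event $\{Y \ge b^2 \sigma^2\}$. Applying Markov's inequality to $Y$ with the threshold $a = b^2\sigma^2$ yields
\begin{equation}
\Pr\bigl(|X-\mu|\ge b\sigma\bigr) = \Pr\bigl(Y \ge b^2\sigma^2\bigr) \le \frac{\bbE[Y]}{b^2\sigma^2}.
\end{equation}
Then I would finish by recognizing $\bbE[Y] = \bbE[(X-\mu)^2] = \sigma^2$ by the definition of variance, so the right-hand side simplifies to $1/b^2$, as required.

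There is no real obstacle here: the proof is essentially a one-line substitution once Markov's inequality is in hand. The only minor care point is ensuring that squaring preserves the inequality, which is immediate because $b\sigma \ge 0$ (both $b>0$ and $\sigma \ge 0$ are assumed), so $|X-\mu| \ge b\sigma$ if and only if $(X-\mu)^2 \ge b^2\sigma^2$. If $\sigma=0$, the statement is trivially true since $X = \mu$ almost surely and the left-hand side is $0$.
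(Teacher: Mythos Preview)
Your proof is correct and follows exactly the approach indicated in the paper: apply Markov's inequality to the non-negative random variable $(X-\mu)^2$. The paper states this in a single sentence preceding the proposition, and your argument is simply a careful elaboration of that remark.
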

We now consider a collection of real-valued  random variables that are i.i.d. In particular, let $X^n=(X_1,\ldots, X_n)$ be a collection of independent random variables where each $X_i$ has distribution $P$ with zero mean and finite variance $\sigma^2$.  
\begin{proposition}[Weak Law of Large Numbers] \label{prop:wlln}
For every $\epsilon>0$, we have
\begin{equation}
\lim_{n\to\infty}\Pr\left( \bigg| \frac{1}{n}\sum_{i=1}^n  X_i\bigg|   > \epsilon\right)= 0.  \label{eqn:lln}
\end{equation}
Consequently,  the average $\frac{1}{n}\sum_{i=1}^n  X_i$ converges  to $0$ in probability. 
\end{proposition}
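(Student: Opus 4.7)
The plan is to apply Chebyshev's inequality, which was just stated as the preceding proposition, directly to the sample mean $\bar{X}_n := \frac{1}{n}\sum_{i=1}^n X_i$. First I would compute the mean and variance of $\bar{X}_n$: by linearity of expectation, $\bbE[\bar{X}_n] = \frac{1}{n}\sum_{i=1}^n \bbE[X_i] = 0$, and because the $X_i$ are independent (pairwise uncorrelatedness suffices), $\var(\bar{X}_n) = \frac{1}{n^2}\sum_{i=1}^n \var(X_i) = \sigma^2/n$.

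Next I would invoke Chebyshev's inequality with the random variable $\bar{X}_n$, mean $\mu = 0$, and standard deviation $\sigma/\sqrt{n}$. Choosing $b = \epsilon\sqrt{n}/\sigma$ in the Chebyshev bound gives $b \cdot (\sigma/\sqrt{n}) = \epsilon$, so that
\begin{equation}
\Pr\left( \bigg| \frac{1}{n}\sum_{i=1}^n X_i \bigg| > \epsilon \right) \le \Pr\left( \bigg| \frac{1}{n}\sum_{i=1}^n X_i \bigg| \ge \epsilon \right) \le \frac{\sigma^2}{n \epsilon^2}.
\end{equation}
Taking $n \to \infty$ on the right-hand side with $\epsilon > 0$ and $\sigma^2 < \infty$ fixed, the bound tends to $0$, which establishes \eqref{eqn:lln}. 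The final sentence of the proposition is just a restatement in the language of convergence in probability, so no additional work is needed.

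There is no real obstacle here; the only thing to be slightly careful about is that Chebyshev as stated uses a non-strict inequality ``$\ge b\sigma$,'' whereas the proposition asks about a strict inequality ``$>\epsilon$.'' This is handled trivially by the monotonicity $\{|\bar{X}_n| > \epsilon\} \subseteq \{|\bar{X}_n| \ge \epsilon\}$ used above. The hypothesis of finite variance is essential precisely because it makes the Chebyshev bound meaningful; the i.i.d.\ assumption is used only through pairwise independence to get the $1/n$ decay of $\var(\bar{X}_n)$.
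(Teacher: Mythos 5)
Your proof is correct and matches the paper's approach exactly: the paper notes immediately after the proposition that it ``follows by applying Chebyshev's inequality to the random variable $\frac{1}{n}\sum_{i=1}^n X_i$,'' which is precisely what you do. Your additional care about the strict vs.\ non-strict inequality is a nice touch but not essential.
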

This follows by applying Chebyshev's inequality to the random variable $\frac{1}{n}\sum_{i=1}^n  X_i$.  In fact, under mild conditions, the convergence to zero in \eqref{eqn:lln} occurs exponentially fast. See, for example, Cramer's theorem in \cite[Thm.~2.2.3]{Dembo}.
 
\subsection{Central Limit-Type Bounds}
In preparation for the next result, we denote the {\em probability density function} (pdf) of a  univariate Gaussian  as 
\begin{equation}
\calN(x; \mu,\sigma^2) = \frac{1}{\sqrt{2\pi\sigma^2}}\rme^{-(x-\mu)^2 / (2\sigma^2 )}.
\end{equation}
We will also  denote this as $\calN(\mu,\sigma^2)$ if the argument $x$ is unnecessary.
A {\em standard Gaussian distribution} is one in which  the mean $\mu=0$ and the standard deviation $\sigma=1$. In the multivariate case, the pdf is
\begin{equation}
\calN(\bx; \bmu,\bSigma) =\frac{1}{ \sqrt{ (2\pi )^k |\bSigma| } }\rme^{- \frac{1}{2}(\bx-\bmu)'\bSigma^{-1} (\bx-\bmu)} \label{eqn:pdf_multi_gauss}
\end{equation}
where $\bx\in\bbR^k$. 
A {\em standard multivariate Gaussian distribution} is  one in which the mean is $\bzero_{k}$ and the covariance   is the identity matrix $\bI_{k\times k}$. 

For the univariate case, the {\em cumulative distribution function} (cdf) of the standard  Gaussian is denoted as 
\begin{equation}
\Phi(y) := \int_{-\infty}^y \calN(x;0,1) \,  \rmd x. 
\end{equation}
We also find it convenient to introduce the inverse of $\Phi$ as 
\begin{equation}
\Phi^{-1}(\eps):=\sup \big\{ y\in\bbR : \Phi(y)\le\eps \big\}
\end{equation}
which evaluates to the usual inverse for $\eps\in (0,1)$ and extends continuously to take values $\pm\infty$ for $\eps$ outside $(0,1)$. These  monotonically increasing functions are shown in Fig.~\ref{fig:phis}.

\begin{figure}
\centering
\includegraphics[width = 1\columnwidth]{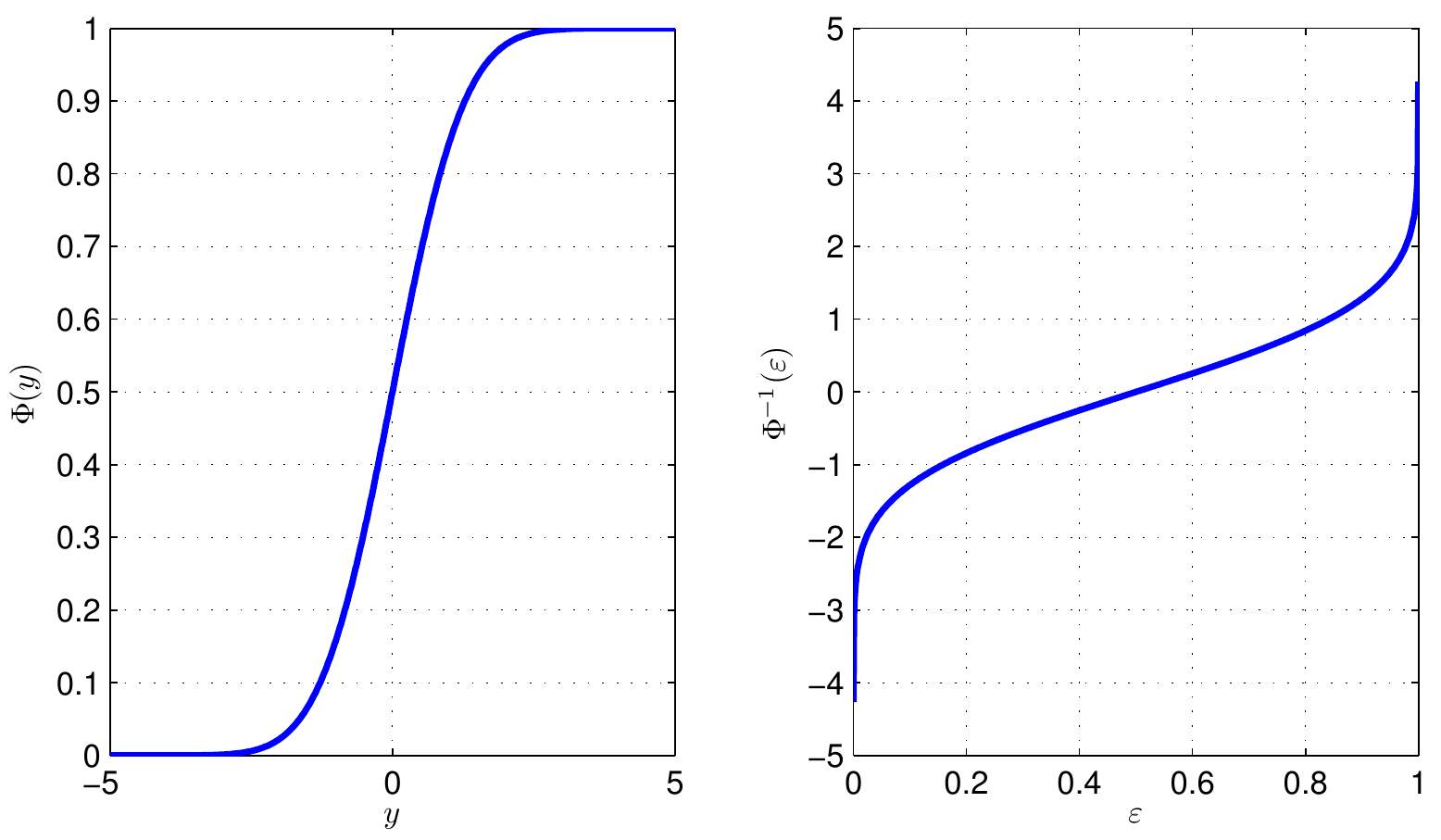}
\caption{Plots of $\Phi(y)$ and $\Phi^{-1}(\eps)$}
\label{fig:phis}
\end{figure}

If the scaling in front of the sum in the statement of the law of large numbers in~\eqref{eqn:lln} is $\frac{1}{\sqrt{n}}$ instead of $\frac{1}{n}$, the resultant random variable $\frac{1}{\sqrt{n}}\sum_{i=1}^n  X_i$ converges in distribution to a Gaussian random variable.  As in Proposition~\ref{prop:wlln},  let $X^n$ be a collection of \iid random variables where each $X_i$ has  zero mean and finite variance $\sigma^2$.

\begin{proposition}[Central Limit Theorem]
For any $a \in \bbR$, we have 
\begin{equation}
\lim_{n\to\infty} \Pr\left(  \frac{1}{\sigma\sqrt{n }}\sum_{i=1}^n  X_i  < a \right)  = \Phi (a). \label{eqn:clt}
\end{equation}
In other words, 
\begin{equation}
\frac{1}{\sigma\sqrt{n}}\sum_{i=1}^n  X_i \stackrel{\mathrm{d}}{\longrightarrow} Z
\end{equation}
where $\stackrel{\mathrm{d}}{\longrightarrow}$ means convergence in distribution and $Z$ is the standard Gaussian random variable. 
\end{proposition}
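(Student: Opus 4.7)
The plan is to use the method of characteristic functions together with L\'evy's continuity theorem. Set $Y_i := X_i/\sigma$, so that each $Y_i$ is zero-mean with unit variance, and let $S_n := \frac{1}{\sqrt{n}}\sum_{i=1}^n Y_i$. By independence, the characteristic function of $S_n$ factorizes as
\begin{equation}
\varphi_{S_n}(t) = \bbE\big[\rme^{\rmi t S_n}\big] = \big[\varphi_Y(t/\sqrt{n})\big]^n,
\end{equation}
where $\varphi_Y(t) := \bbE[\rme^{\rmi t Y_1}]$. The strategy is to show $\varphi_{S_n}(t) \to \rme^{-t^2/2}$ pointwise, identify this as the characteristic function of the standard Gaussian $Z$, and then convert this to convergence in distribution.

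First I would establish the Taylor-type expansion $\varphi_Y(t) = 1 - t^2/2 + o(t^2)$ as $t\to 0$. The key is the pointwise bound $\big|\rme^{\rmi t y} - 1 - \rmi t y + \tfrac{t^2 y^2}{2}\big| \le \min\{t^2 y^2, |t|^3|y|^3/6\}$. Taking expectations, using $\bbE[Y_1]=0$ and $\bbE[Y_1^2]=1$, the residual is bounded by $\bbE[\min\{t^2 Y_1^2, |t|^3|Y_1|^3/6\}]$; dividing by $t^2$ and invoking dominated convergence (the integrand is bounded by $Y_1^2$, which is integrable) forces the residual to be $o(t^2)$. Substituting $t/\sqrt{n}$, I obtain
\begin{equation}
\varphi_{S_n}(t) = \left[1 - \frac{t^2}{2n} + o\!\left(\frac{1}{n}\right)\right]^n \longrightarrow \rme^{-t^2/2}
\end{equation}
for every fixed $t \in \bbR$, using $\log(1+z) = z + O(z^2)$ for small $z$.

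By L\'evy's continuity theorem, pointwise convergence of $\varphi_{S_n}$ to a function continuous at $0$ implies $S_n \convd Z$, where $Z$ has the standard Gaussian law. Since the limiting cdf $\Phi$ is continuous everywhere, convergence in distribution gives $\Pr(S_n < a) \to \Phi(a)$ for every $a \in \bbR$, which is precisely \eqref{eqn:clt}. The main technical obstacle is justifying the expansion of $\varphi_Y$ under only the assumption of a finite second moment; if one were willing to assume finite third absolute moments, the Berry--Esseen theorem (introduced later in this monograph) would yield the result with an explicit rate $O(n^{-1/2})$, but under the minimal hypothesis stated here the characteristic-function route with a careful dominated-convergence argument on the remainder is the cleanest path.
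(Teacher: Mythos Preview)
Your proof is correct and is the standard characteristic-function argument for the Lindeberg--L\'evy CLT. However, the paper does not actually prove this proposition: it is stated in Section~\ref{sec:prob} as a background fact from probability theory, alongside Markov's inequality, Chebyshev's inequality, and the weak law of large numbers, none of which are proved either. The monograph immediately moves on to the Berry--Esseen theorems (Theorems~\ref{thm:berry_iid} and~\ref{thm:berry_gen}), which are the tools actually used in the sequel and which subsume the CLT under a third-moment hypothesis. So there is no ``paper's own proof'' to compare against; your characteristic-function route is exactly what one would expect in a probability textbook treatment, and your handling of the $o(t^2)$ remainder via dominated convergence under only a second-moment assumption is the right level of care.
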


Throughout the monograph, in the evaluation of the non-asymptotic bounds, we will use a more quantitative version of the central limit theorem known as the Berry-Esseen theorem~\cite{Berry41, Esseen42}. See Feller  \cite[Sec.~XVI.5]{feller} for a proof.

\begin{theorem}[Berry-Esseen Theorem (\iid Version)] \label{thm:berry_iid}
Assume that the third absolute moment  is finite, i.e., $T:=\bbE\big[ |X_1|^3\big] <\infty$. For every $n\in\bbN$, we have
\begin{equation}
\sup_{ a\in\bbR} \left|\Pr\left(  \frac{1}{\sigma\sqrt{n}}\sum_{i=1}^n  X_i  < a \right) - \Phi (a)\right|\le \frac{  T}{ \sigma^3\sqrt{n}}.
\end{equation}
\end{theorem}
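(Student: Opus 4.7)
The plan is to follow the classical Fourier-analytic proof via characteristic functions combined with Esseen's smoothing inequality. First I would reduce to the standardized case by setting $Y_i := X_i/\sigma$, so that $\bbE[Y_i]=0$, $\var(Y_i)=1$, and $\bbE[|Y_i|^3] = T/\sigma^3 =: \rho$. Let $S_n := \frac{1}{\sqrt{n}}\sum_{i=1}^n Y_i$ with cdf $F_n$ and characteristic function $\phi_n(t) := \bbE[\rme^{\rmi t S_n}]$; the goal becomes $\sup_a |F_n(a)-\Phi(a)| \le \rho/\sqrt{n}$.

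Second, I would invoke Esseen's smoothing lemma: for any $T>0$,
\begin{equation}
\sup_{a\in\bbR} |F_n(a) - \Phi(a)| \le \frac{1}{\pi} \int_{-T}^{T} \left| \frac{\phi_n(t) - \rme^{-t^2/2}}{t} \right| \rmd t + \frac{C}{T},
\end{equation}
where $C$ depends only on $\sup_x |\Phi'(x)| = 1/\sqrt{2\pi}$. The task then reduces to controlling the integrand uniformly on a suitable window $|t|\le T$. Writing $\psi(t) := \bbE[\rme^{\rmi t Y_1}]$, one has $\phi_n(t) = \psi(t/\sqrt{n})^n$. Using $\bbE[Y_1]=0$, $\bbE[Y_1^2]=1$, and the elementary inequality $|\rme^{\rmi x} - 1 - \rmi x + x^2/2| \le |x|^3/6$, a Taylor expansion gives
\begin{equation}
\psi(u) = 1 - \frac{u^2}{2} + r(u), \qquad |r(u)| \le \frac{\rho |u|^3}{6}.
\end{equation}

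Third, on the window $|t| \le T_n := \sqrt{n}/(2\rho)$, I would show that $|\psi(t/\sqrt{n})| \le \rme^{-t^2/(3n)}$ (say) so that the product $\phi_n$ stays close to $\rme^{-t^2/2}$; then applying the inequality $|\rme^z - \rme^w| \le |z-w|\max(\rme^{\mathrm{Re}\, z},\rme^{\mathrm{Re}\, w})$ after writing $\phi_n(t) = \exp(n\log \psi(t/\sqrt{n}))$ yields a pointwise bound of the form
\begin{equation}
\left|\phi_n(t) - \rme^{-t^2/2}\right| \le \frac{\rho |t|^3}{\sqrt{n}}\, \rme^{-t^2/4}
\end{equation}
on $|t|\le T_n$. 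Dividing by $|t|$ and integrating against the Gaussian-decaying factor gives an $O(\rho/\sqrt{n})$ bound on the first term of Esseen's inequality; choosing $T = T_n$ makes the residual $C/T$ also $O(\rho/\sqrt{n})$. Combining and tracking constants carefully (and noting $\rho \ge 1$ by Jensen's inequality, so the bound is non-trivial) yields the stated estimate.

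The main obstacle is the uniform control of $|\phi_n(t) - \rme^{-t^2/2}|$ on the full window $|t|\le T_n$: the Taylor expansion of $\psi$ is only effective when $|t|/\sqrt{n}$ is small, so one must argue carefully that the logarithm $\log\psi(t/\sqrt{n})$ is well-defined and admits a Taylor expansion with remainder of order $|t|^3/n^{3/2}$ throughout this range, and then exponentiate to get the product over $n$ copies. Tracking the absolute constants to land on the clean bound $T/(\sigma^3\sqrt{n})$ stated in the theorem is the delicate bookkeeping step; for the purposes of this monograph any absolute constant of order one suffices, since we ultimately use Berry-Esseen only to extract $O(1/\sqrt{n})$ corrections to the CLT.
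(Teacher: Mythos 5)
The paper does not give its own proof of this theorem: it is stated with a citation to Feller~\cite[Sec.~XVI.5]{feller}. The route you sketch---standardize to unit variance, write $\phi_n(t) = \psi(t/\sqrt{n})^n$ for the characteristic function, control $\phi_n - \rme^{-t^2/2}$ on the window $|t| \le \Theta(\sqrt{n}/\rho)$ via the third-order Taylor estimate $|r(u)| \le \rho|u|^3/6$, and close with Esseen's smoothing inequality---is exactly the classical argument that Feller gives. The outline is correct, and the one technical worry you flag (that $\log\psi(t/\sqrt{n})$ must be well-defined and the remainder of order $|t|^3/n^{3/2}$ across the full window) is resolvable by the same estimate that gives $|\psi(t/\sqrt{n})| \ge 1 - t^2/(2n) - \rho|t|^3/(6n^{3/2}) > 0$ there.

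There is one genuine gap, and you half-concede it at the end: the argument as sketched does not yield the constant $1$ stated in the theorem. Tracking the constants in the smoothing approach, the integral term contributes roughly $\tfrac{\rho}{\pi\sqrt{n}}\int t^2\rme^{-t^2/4}\,\rmd t = \tfrac{4\rho}{\sqrt{\pi n}}$, and the tail term $\tfrac{24\,m}{\pi T_n}$ with $m = 1/\sqrt{2\pi}$ and $T_n = \sqrt{n}/(2\rho)$ contributes another $\approx 6.1\,\rho/\sqrt{n}$, so this route lands around $8$--$10$ times $\rho/\sqrt{n}$; with Feller's more careful bookkeeping one reaches a universal constant of $3$, still short of $1$. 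The statement with constant $1$ is nevertheless true---the sharp universal constant is known to be below $0.48$---but reaching even $1$ requires the substantially more intricate arguments of van Beek and successors, not merely careful bookkeeping in the smoothing argument. Since the monograph uses Berry-Esseen only to extract $O(1/\sqrt{n})$ corrections and never invokes the exact value of the constant, this gap is inconsequential for the surrounding material, but strictly speaking your proof sketch establishes the theorem with a constant on the order of $3$--$10$, not the constant $1$ as written.
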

Remarkably, the Berry-Esseen theorem says that the convergence in    the central limit theorem in \eqref{eqn:clt} is uniform in $a\in\bbR$.  Furthermore, the convergence of the distribution function of $\frac{1}{\sqrt{n}}\sum_{i=1}^n  X_i$ to the Gaussian cdf occurs at a rate of $O(\frac{1}{\sqrt{n}})$.  The constant of proportionality in the $O(\cdot)$-notation depends {\em only} on the variance and the third absolute moment and not on any other statistics of the random variables.

There are many generalizations of the Berry-Esseen theorem. One   which we will   need is the relaxation of the assumption that the random variables are identically distributed. Let $X^n = (X_1,\ldots, X_n)$ be a collection of independent random variables where each random variable has zero mean,  variance $\sigma_i^2 := \bbE[X_i^2] >0$ and third absolute moment $T_i := \bbE\big[ |X_i|^3\big] < \infty$. We respectively define  the average variance and average third absolute moment  as
\begin{equation}
  \sigma^2 := \frac{1}{n}\sum_{i=1}^n \sigma_i^2,\quad\mbox{and}\quad T:=\frac{1}{n}\sum_{i=1}^n T_i.
\end{equation}

\begin{theorem}[Berry-Esseen Theorem (General Version)] \label{thm:berry_gen}
For every $n\in\bbN$, we have
\begin{equation}
\sup_{ a\in\bbR} \left|\Pr\left(   \frac{1}{\sigma\sqrt{n}}\sum_{i=1}^n  X_i  < a \right) - \Phi (a)\right|\le \frac{6\, T}{\sigma^3\sqrt{n} }.
\end{equation}
\end{theorem}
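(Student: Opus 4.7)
The plan is to follow the classical Fourier-analytic proof via Esseen's smoothing inequality. Let $S_n := \sum_{i=1}^n X_i$ and let $F_n$ denote the cdf of $S_n/(\sigma\sqrt{n})$. Set $\phi_i(u) := \bbE[e^{iuX_i}]$ and $\psi_n(t) := \prod_{i=1}^n \phi_i\!\bigl(t/(\sigma\sqrt{n})\bigr)$, the characteristic function of $S_n/(\sigma\sqrt{n})$. Esseen's smoothing lemma reduces the Kolmogorov distance between $F_n$ and $\Phi$ to a characteristic-function integral: for every $T>0$,
$$\sup_{a\in\bbR}\bigl|F_n(a)-\Phi(a)\bigr|\;\le\;\frac{1}{\pi}\int_{-T}^{T}\left|\frac{\psi_n(t)-e^{-t^2/2}}{t}\right|\rmd t\;+\;\frac{c_0}{T},$$
for a universal constant $c_0$ depending only on the smoothing kernel and $\sup_x \Phi'(x)$. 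The remainder of the argument is a pointwise estimate of $\psi_n(t)-e^{-t^2/2}$ combined with an optimal choice of the cutoff $T$.

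For the pointwise estimate, I would Taylor-expand each $\phi_i$ to third order. Using $\bbE X_i=0$, $\bbE X_i^2=\sigma_i^2$, $\bbE|X_i|^3=T_i$ together with the elementary inequality $|e^{iux}-1-iux+\tfrac{1}{2}u^2 x^2|\le\tfrac{1}{6}|ux|^3$,
$$\phi_i(u)=1-\tfrac{1}{2}\sigma_i^2 u^2+R_i(u),\qquad |R_i(u)|\le\tfrac{1}{6}T_i|u|^3.$$
Introduce the Gaussian surrogates $w_i:=\exp\bigl(-\sigma_i^2 t^2/(2\sigma^2 n)\bigr)$; by definition of $\sigma^2$, the product $\prod_{i=1}^n w_i$ equals $e^{-t^2/2}$ exactly. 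The product inequality $|\prod_i z_i-\prod_i w_i|\le\sum_i|z_i-w_i|$, valid whenever $|z_i|,|w_i|\le 1$, then yields
$$\bigl|\psi_n(t)-e^{-t^2/2}\bigr|\;\le\;\sum_{i=1}^n\left|\phi_i\!\left(\frac{t}{\sigma\sqrt{n}}\right)-w_i\right|,$$
and each summand is controlled by a constant multiple of $T_i|t|^3/(\sigma^3 n^{3/2})$. Summing over $i$ and invoking $\frac{1}{n}\sum_i T_i=T$ produces a pointwise bound of order $T|t|^3/(\sigma^3\sqrt{n})$, accompanied by a Gaussian-type damping factor inherited from the surviving $w_i$'s that renders the $\rmd t/t$ integral in Esseen's bound convergent.

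Choosing the cutoff $T$ proportional to $\sigma^3\sqrt{n}/T$ balances the two terms in Esseen's inequality and yields $\sup_a|F_n(a)-\Phi(a)|\le CT/(\sigma^3\sqrt{n})$; careful tracking of the universal constants throughout produces the prefactor $6$. The main obstacle is precisely this constant tracking, which forces a case split: the cubic Taylor bound on $R_i$ is only useful when $T_i|u|/\sigma_i^3$ is small, and in the complementary regime one must fall back on $|\phi_i(u)-1+\tfrac{1}{2}\sigma_i^2 u^2|\le\sigma_i^2 u^2$ combined with a quantitative bound of the form $|\phi_i(u)|\le 1-\tfrac{1}{4}\sigma_i^2 u^2$ to keep the factors in the product uniformly bounded by $1$. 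Managing these two regimes while preserving the explicit constant $6$ is the delicate step; the full calculation is carried out in Feller~\cite[Sec.~XVI.5]{feller}.
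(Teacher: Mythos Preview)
The paper does not supply its own proof of this theorem; it merely states the result and refers the reader to Feller~\cite[Sec.~XVI.5]{feller}. Your proposal sketches precisely the classical Fourier-analytic argument via Esseen's smoothing lemma that appears in that reference, and you even cite Feller explicitly for the constant-tracking step, so your approach is fully aligned with the paper's treatment.
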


Observe that as with the \iid version of the   Berry-Esseen theorem, the remainder term   scales as $O(\frac{1}{\sqrt{n}})$.

The proof of the following theorem uses  the   Berry-Esseen theorem (among other techniques). This theorem is proved in Polyanskiy-Poor-Verd\'u \cite[Lem.~47]{PPV10}. Together with its variants, this theorem is useful for obtaining third-order asymptotics for binary hypothesis testing and other coding problems with  non-vanishing error probabilities. 

\begin{theorem}\label{thm:str_ld}
Assume the same setup as in Theorem~\ref{thm:berry_gen}. For any $\gamma \ge 0$, we have
\begin{equation}
\bbE\left[ \exp\bigg( -\sum_{i=1}^n X_i \bigg)\bbI\bigg\{ \sum_{i=1}^n X_i>\gamma\bigg\}\right]\le 2\left( \frac{\log 2}{\sqrt{2\pi}} + \frac{12T}{\sigma^2}\right) \frac{\exp(-\gamma)}{\sigma\sqrt{n}}. \label{eqn:ppv_expect}
\end{equation}
\end{theorem}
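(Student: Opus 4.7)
The plan is to decompose the event $\{S_n > \gamma\}$, where $S_n := \sum_{i=1}^n X_i$, into a countable union of ``slabs'' on which $\exp(-S_n)$ is well-controlled, and then estimate the probability of each slab using the Berry-Esseen theorem (Theorem~\ref{thm:berry_gen}). The key observation is that $\exp(-S_n)$ drops by a factor of $2$ whenever $S_n$ increases by $\log 2$, which suggests partitioning using increments of size $\log 2$ so that a geometric series $\sum_{k\ge 0} 2^{-k}=2$ appears naturally; this is where the leading constant $2$ in the bound~\eqref{eqn:ppv_expect} comes from.

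Concretely, I would first write
\begin{equation*}
\bbE\big[\exp(-S_n)\bbI\{S_n>\gamma\}\big] = \sum_{k=0}^{\infty} \bbE\big[\exp(-S_n)\bbI\{A_k\}\big],
\end{equation*}
where $A_k:=\{\gamma+k\log 2<S_n\le\gamma+(k+1)\log 2\}$. On $A_k$, the deterministic bound $\exp(-S_n)\le\exp(-\gamma)\cdot 2^{-k}$ holds, so
\begin{equation*}
\bbE\big[\exp(-S_n)\bbI\{S_n>\gamma\}\big] \le \exp(-\gamma)\sum_{k=0}^{\infty} 2^{-k}\Pr(A_k).
\end{equation*}

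Next I would bound $\Pr(A_k)$ uniformly in $k$. Writing $\Pr(A_k)$ as the difference of two values of the cdf of $S_n/(\sigma\sqrt{n})$ and applying Theorem~\ref{thm:berry_gen} at both endpoints, each Berry-Esseen error contributes at most $6T/(\sigma^3\sqrt{n})$, so
\begin{equation*}
\Pr(A_k) \le \Phi\!\left(\tfrac{\gamma+(k+1)\log 2}{\sigma\sqrt{n}}\right) - \Phi\!\left(\tfrac{\gamma+k\log 2}{\sigma\sqrt{n}}\right) + \frac{12\,T}{\sigma^3\sqrt{n}}.
\end{equation*}
The Gaussian increment is bounded via the uniform density bound $\calN(x;0,1)\le 1/\sqrt{2\pi}$, giving
\begin{equation*}
\Phi\!\left(\tfrac{\gamma+(k+1)\log 2}{\sigma\sqrt{n}}\right)-\Phi\!\left(\tfrac{\gamma+k\log 2}{\sigma\sqrt{n}}\right)\le\frac{\log 2}{\sigma\sqrt{2\pi n}}.
\end{equation*}
Crucially this bound is independent of $k$, which is what lets the geometric factor $2^{-k}$ carry the sum.

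Finally, I would substitute these estimates, pull the $k$-independent bound on $\Pr(A_k)$ outside the sum, use $\sum_{k=0}^{\infty}2^{-k}=2$, and collect terms to obtain exactly the right-hand side of \eqref{eqn:ppv_expect}. The main obstacle, such as it is, is choosing the right slab width: a width $\Delta$ produces a factor $(1-e^{-\Delta})^{-1}\cdot[\Delta/(\sigma\sqrt{2\pi n}) + 12T/(\sigma^3\sqrt{n})]$, and $\Delta=\log 2$ is what makes the prefactor equal exactly $2$ while keeping the Gaussian increment as small as the Berry-Esseen term permits; any larger $\Delta$ wastes on the first summand, any smaller $\Delta$ wastes on the geometric prefactor.
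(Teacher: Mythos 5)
Your proof is correct and takes the standard approach: the monograph does not prove Theorem~\ref{thm:str_ld} itself but cites \cite[Lem.~47]{PPV10}, whose proof is exactly the slab decomposition you describe (partition $\{S_n>\gamma\}$ into width-$\log 2$ slabs, bound $\exp(-S_n)$ deterministically on each, bound each slab probability uniformly via two applications of Berry--Esseen plus the $1/\sqrt{2\pi}$ density bound, and sum the geometric series). Indeed the monograph later invokes the same slicing technique in Step~5 of the achievability proof of Theorem~\ref{thm:awgn_asy}, remarking that it is ``modeled on the proof of Theorem~\ref{thm:str_ld}.''
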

It is trivial to see that the expectation in \eqref{eqn:ppv_expect} is upper bounded by $\exp(-\gamma)$. The additional factor of $(\sigma\sqrt{n})^{-1}$ is crucial in proving coding theorems with better third-order terms.  Readers familiar with strong large deviation theorems or exact asymptotics (see, e.g.,~\cite[Thms.~3.3 and 3.5]{CS93} or~\cite[Thm.~3.7.4]{Dembo}) will notice that \eqref{eqn:ppv_expect} is in the same spirit as the theorem by Bahadur and Ranga-Rao~\cite{Bah60}. There are two advantages of \eqref{eqn:ppv_expect} compared to strong large  deviation theorems. First,   the bound is purely in terms of $\sigma^2$ and $T$, and  second, one does not have to differentiate between lattice and non-lattice random variables. The disadvantage of \eqref{eqn:ppv_expect} is that the constant is worse but this will not concern us as we focus on asymptotic results in this monograph, hence constants do not affect the main results. 

For multi-terminal problems that we encounter in the latter parts of this monograph, we will require vector (or multidimensional) versions of the Berry-Esseen theorem.   The following is due to G\"{o}tze~\cite{Got91}.
\begin{theorem}[Vector Berry-Esseen Theorem I] \label{theorem:multidimensional-berry-esseen}
Let $X_1^k,\ldots,X_n^k$ be independent $\bbR^k$-valued random vectors  with zero mean. Let 
\begin{equation}
S_n^k = \frac{1}{\sqrt{n}}\sum_{i=1}^n X_i^k . \label{eqn:sum_rvs}
\end{equation}
Assume that $S_n^k$ has the following statistics
\begin{align}
\cov(S_n^k)   =\bbE\big[ S_n^k(S_n^k)'\big]= \bI_{k\times k}, \quad\mbox{and}  \quad \xi  := \frac{1}{n} \sum_{i=1}^n \bbE\big[ \| X_i^k \|^3_2\big]. \label{eqn:xi}
\end{align}
Let $Z^k$ be a  standard Gaussian random vector, i.e., its distribution is  $\calN(0^k,\bI_{k\times k})$.
Then, for all $n \in \bbN$, we have
\begin{equation}
\sup_{\mathscr{C} \in \mathfrak{C}_k} \left| \Pr\big( S_n^k \in \mathscr{C} \big) - \Pr\big( Z^k \in \mathscr{C} \big) \right|  \le \frac{c_k \, \xi}{ \sqrt{n}},
\end{equation}
where $\mathfrak{C}_k$ is the family of all convex subsets of $\bbR^k$, and where $c_k$ is a constant
that depends only on the dimension $k$.
\end{theorem}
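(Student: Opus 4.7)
The plan is to combine a Lindeberg-type swap argument (to control smooth test functions) with a Gaussian convolution smoothing of indicators of convex sets, and then to exploit an isoperimetric-style bound on the standard Gaussian measure of $\epsilon$-shells around boundaries of arbitrary convex sets in order to tune the smoothing parameter optimally.

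\textbf{Step 1 (smooth test functions via Lindeberg).} First, I would introduce independent Gaussian vectors $Y_1^k,\ldots,Y_n^k$ with $Y_i^k \sim \calN(\bzero,\cov(X_i^k))$ and independent of $X_1^k,\ldots,X_n^k$, so that $T_n^k := \tfrac{1}{\sqrt n}\sum_i Y_i^k$ is standard Gaussian. For any $f \in C^3(\bbR^k)$ with bounded third derivatives, I use the telescoping identity
\begin{equation}
\bbE[f(S_n^k)] - \bbE[f(T_n^k)] = \sum_{i=1}^n\Bigl(\bbE[f(W_i + X_i^k/\sqrt n)] - \bbE[f(W_i + Y_i^k/\sqrt n)]\Bigr),
\end{equation}
where $W_i = \tfrac{1}{\sqrt n}\bigl(\sum_{j<i} Y_j^k + \sum_{j>i} X_j^k\bigr)$ is independent of $(X_i^k,Y_i^k)$. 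A Taylor expansion of each summand to third order around $W_i$ kills the zeroth, first, and second order contributions because $X_i^k$ and $Y_i^k$ share their first two moments. The remainder is controlled by $\|\nabla^3 f\|_\infty \bbE[\|X_i^k\|_2^3]$ (and an analogue for $Y_i^k$, bounded by a dimensional constant times $\bbE[\|X_i^k\|_2^3]$ using Lyapunov-type moment comparisons on $\cov(X_i^k)$). Summing yields $\bigl|\bbE[f(S_n^k)] - \bbE[f(Z^k)]\bigr| \le C_k\, \xi\, \|\nabla^3 f\|_\infty/\sqrt n$.

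\textbf{Step 2 (smooth the indicator of a convex set).} For $\mathscr{C} \in \mathfrak{C}_k$ and $\epsilon>0$, define the Gaussian-mollified indicator
\begin{equation}
f_\epsilon(x) := \bbE\bigl[\bbI\{x+\epsilon Z^k \in \mathscr{C}\}\bigr],
\end{equation}
i.e.\ the convolution of $\bbI\{\,\cdot\,\in\mathscr{C}\}$ with $\calN(\bzero,\epsilon^2\bI)$. Standard derivative-of-Gaussian estimates give $\|\nabla^3 f_\epsilon\|_\infty \le C/\epsilon^3$. The approximation error $|f_\epsilon(x)-\bbI\{x\in\mathscr{C}\}|$ is significant only in a tube of thickness $\asymp \epsilon$ around $\partial\mathscr{C}$; outside that tube it decays sub-Gaussianly in $1/\epsilon$.

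\textbf{Step 3 (isoperimetric shell bound and optimization).} The essential geometric input is the uniform shell estimate
\begin{equation}
\Pr\bigl(d(Z^k,\partial\mathscr{C}) \le t\bigr) \le c_k'\, t, \qquad \forall\, t>0,\ \mathscr{C} \in \mathfrak{C}_k,
\end{equation}
a result due (in various forms) to Ball, Bentkus, and Nazarov. Combining Steps 1 and 2, applied both to $f_\epsilon$ and to a smooth proxy for the indicator of the $\epsilon$-shell (to transfer the shell bound from $Z^k$ to $S_n^k$ in a self-bounding way), and then choosing $\epsilon \asymp n^{-1/2}$, balances the smoothing error $O(\epsilon)$ against the Lindeberg error $O(\xi/(\epsilon^3 \sqrt n))$--wait, after the $\epsilon^{-3}$ from $\|\nabla^3 f_\epsilon\|_\infty$ the right balance is $\epsilon \asymp (\xi/\sqrt n)^{1/4}$ in the naive combination, so in fact one iterates the smoothing/swap cycle (Götze's induction-on-$n$ device) to bootstrap the rate back up to $\xi/\sqrt n$.

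\textbf{Main obstacle.} The principal difficulty is the shell bound in Step 3: proving that the standard Gaussian measure of an $\epsilon$-tube around $\partial\mathscr{C}$ is $O(\epsilon)$ \emph{uniformly} over all convex $\mathscr{C}\subset\bbR^k$, with a constant depending only on $k$. For half-spaces or balls this is immediate, but for an arbitrary convex set it is a genuinely delicate geometric fact and is what singles out $\mathfrak{C}_k$ as the right class for which the Berry-Esseen rate $1/\sqrt n$ is preserved. Without it, the smoothing compromise in Step 2 would degrade to a dimension-dependent polynomial rate worse than $n^{-1/2}$.
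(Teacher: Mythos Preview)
The paper does not actually prove this theorem: it is stated as a classical result due to G\"otze~\cite{Got91} (with the constant later sharpened by Bentkus~\cite{Ben03}) and is simply quoted for subsequent use. So there is no ``paper's own proof'' to compare against.

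That said, your outline is a faithful high-level sketch of how G\"otze's argument goes. The three ingredients you name --- Lindeberg replacement for smooth test functions, Gaussian mollification of the indicator $\bbI\{\cdot\in\mathscr{C}\}$, and the uniform Gaussian shell bound $\Pr(d(Z^k,\partial\mathscr{C})\le t)\le c_k' t$ for convex $\mathscr{C}$ --- are exactly the right ones, and you correctly identify the shell bound as the crux that distinguishes $\mathfrak{C}_k$ from more general Borel sets. You are also right that the na\"ive one-shot combination of Steps~1--3 only yields the rate $(\xi/\sqrt{n})^{1/4}$; the improvement to $\xi/\sqrt{n}$ genuinely requires the inductive bootstrapping you allude to. One point you leave somewhat vague is how that induction is structured: G\"otze's device is not literally an induction on $n$ but rather a recursion in which the quantity $\Delta_n := \sup_{\mathscr{C}}|\Pr(S_n^k\in\mathscr{C})-\Pr(Z^k\in\mathscr{C})|$ is bounded in terms of itself (via a self-referential use of the shell estimate, applied to $S_n^k$ rather than $Z^k$, which costs an extra $\Delta_n$), leading to an inequality of the form $\Delta_n \le a\,\Delta_n + b\,\xi/\sqrt{n}$ with $a<1$. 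Making that step precise is where most of the real work lies, and your proposal does not quite spell out the mechanism; but as a plan it is sound and aligned with the literature.
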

Theorem \ref{theorem:multidimensional-berry-esseen} can be applied for random vectors that are independent but not necessarily
identically distributed.  The constant $c_k$ can be upper bounded by $400\, k^{1/4}$ if the random vectors are i.i.d., a result by Bentkus~\cite{Ben03}. However, its precise value will not be of concern to us in this monograph. Observe that the scalar versions of the Berry-Esseen theorems  (in Theorems~\ref{thm:berry_iid} and \ref{thm:berry_gen}) are special cases (apart from the constant) of the vector version in which the family of convex subsets is restricted to   the family of semi-infinite intervals $(-\infty,a)$.

We will frequently encounter random vectors with non-identity covariance matrices.  The following  modification of Theorem~\ref{theorem:multidimensional-berry-esseen} is  due to Watanabe-Kuzuoka-Tan~\cite[Cor.~29]{WKT13}.
\begin{corollary}[Vector Berry-Esseen Theorem II] \label{corollary:multidimensional-berry-esseen}
Assume the same setup as in Theorem~\ref{theorem:multidimensional-berry-esseen}, except that $\cov(S_n^k) = \bV$, a positive definite matrix.  Then, for all $n \in \bbN$, we have
\begin{equation}
\sup_{\mathscr{C} \in \mathfrak{C}_k} \left| \Pr\big( S_n^k \in \mathscr{C} \big) - \Pr\big( Z^k \in \mathscr{C} \big) \right|  \le \frac{c_k \, \xi}{ \lambda_{\min}(\bV)^{3/2}\sqrt{n}},
\end{equation}
where   $\lambda_{\min}(\bV) >0$ is the smallest eigenvalue of $\bV$.
\end{corollary}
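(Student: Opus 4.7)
The natural strategy is to reduce the general case to the identity-covariance case treated by Theorem~\ref{theorem:multidimensional-berry-esseen} via a whitening (linear change of variables), and then control the resulting third absolute moment using an operator-norm inequality. Throughout, I interpret the comparison Gaussian $Z^k$ in the corollary as $\calN(\bzero_k, \bV)$, which is the only choice consistent with $\cov(S_n^k)=\bV$; this is the implicit (and only non-trivial) adjustment to the hypotheses of Theorem~\ref{theorem:multidimensional-berry-esseen}.

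First, since $\bV$ is positive definite, define the symmetric positive definite whitening matrix $\bW := \bV^{-1/2}$ and the whitened summands $Y_i^k := \bW X_i^k$. Their normalized sum $\tilde S_n^k := \bW S_n^k = \frac{1}{\sqrt{n}} \sum_{i=1}^n Y_i^k$ satisfies $\cov(\tilde S_n^k) = \bW \bV \bW = \bI_{k\times k}$, and the $Y_i^k$ inherit independence and zero mean from the $X_i^k$. Thus the setup of Theorem~\ref{theorem:multidimensional-berry-esseen} is satisfied for $\tilde S_n^k$ relative to the standard Gaussian $\tilde Z^k := \bW Z^k$.

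Next, because $\bW$ is an invertible linear map, the assignment $\mathscr{C}\mapsto \bW\mathscr{C}$ is a bijection of $\mathfrak{C}_k$ onto itself. Using $\{S_n^k\in\mathscr{C}\}=\{\tilde S_n^k\in\bW\mathscr{C}\}$ and similarly for $Z^k$,
\begin{equation}
\sup_{\mathscr{C}\in\mathfrak{C}_k}\big|\Pr(S_n^k\in\mathscr{C})-\Pr(Z^k\in\mathscr{C})\big|
=\sup_{\mathscr{C}'\in\mathfrak{C}_k}\big|\Pr(\tilde S_n^k\in\mathscr{C}')-\Pr(\tilde Z^k\in\mathscr{C}')\big|
\le\frac{c_k\,\tilde\xi}{\sqrt{n}},
\end{equation}
where $\tilde\xi:=\frac{1}{n}\sum_{i=1}^n\bbE\big[\|\bW X_i^k\|_2^3\big]$ by Theorem~\ref{theorem:multidimensional-berry-esseen}.

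Finally, it remains to relate $\tilde\xi$ to $\xi$. By submultiplicativity of the spectral norm, $\|\bW \bx\|_2\le \|\bW\|_{\mathrm{op}}\|\bx\|_2$, and since $\bW=\bV^{-1/2}$ is symmetric positive definite, $\|\bW\|_{\mathrm{op}}=\lambda_{\max}(\bV^{-1/2})=\lambda_{\min}(\bV)^{-1/2}$. Cubing and taking expectations yields
\begin{equation}
\tilde\xi \le \lambda_{\min}(\bV)^{-3/2}\,\xi,
\end{equation}
which, combined with the previous display, delivers the claimed inequality. The only conceptual subtlety is the implicit reinterpretation of $Z^k$; beyond that, the argument is a whitening reduction plus one operator-norm estimate, neither of which is hard. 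The positive definiteness of $\bV$ is essential here (so that $\lambda_{\min}(\bV)>0$); singular $\bV$ would require a different argument restricting to the range of the covariance.
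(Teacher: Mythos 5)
Your proof is correct, and it is the standard (and, as far as I can tell, the same) argument used in the cited reference~\cite{WKT13}: whiten with $\bV^{-1/2}$, observe that invertible linear maps permute $\mathfrak{C}_k$, apply Theorem~\ref{theorem:multidimensional-berry-esseen}, and absorb the whitening into the third-moment term via the operator-norm bound $\|\bV^{-1/2}\|_{\mathrm{op}}=\lambda_{\min}(\bV)^{-1/2}$. You are also right to flag that the corollary's $Z^k$ must be reinterpreted as $\calN(\bzero_k,\bV)$ rather than the standard Gaussian of the theorem; the phrase ``same setup'' is slightly too terse on this point, but your reading is the only one that makes the statement true.
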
 

The  final probability bound   is a quantitative version  of the so-called {\em multivariate delta method}~\cite[Thm.~5.15]{wasserman}.   Numerous  similar statements of varying generalities have appeared in the statistics literature (e.g., \cite{chenshao, wasserman14}). The simple version we present was shown by MolavianJazi and Laneman~\cite{Mol13} who extended    ideas in Hoeffding and Robbins' paper \cite[Thm.~4]{HR48} to provide rates of convergence to Gaussianity under appropriate technical conditions.  This result  essentially says that  a differentiable function  of a  normalized sum  of independent random vectors also satisfies a Berry-Esseen-type result. 
\begin{theorem}[Berry-Esseen Theorem for Functions of \iid Random Vectors] \label{thm:func_clt}
Assume that $X_1^k,\ldots, X_n^k$ are   $\bbR^k$-valued, zero-mean, \iid random vectors with positive definite covariance $\cov(X_1^k)$ and finite third absolute moment $\xi:=\bbE[ \|X_1^k\|_2^3]$. 
Let $\mathbf{f} (\bx)$ be a vector-valued function from $\bbR^k$ to $\bbR^l$ that is also twice continuously differentiable in a neighborhood of $\bx=\bzero$.  Let $\bJ \in\bbR^{l\times k}$ be the Jacobian matrix of $\mathbf{f}(\bx)$ evaluated at $\bx=\bzero$, i.e.,  its elements are 
\begin{equation}
J_{ij} = \frac{\partial f_i (\bx) }{\partial x_j } \bigg|_{\bx=\bzero},
\end{equation}
where $ i = 1,\ldots, l$ and $j = 1,\ldots, k$. Then, for every $n\in\bbN$, we have
\begin{equation}
\sup_{\mathscr{C} \in \mathfrak{C}_l } \left| \Pr\Bigg( \mathbf{f} \bigg( \frac{1}{n} \sum_{i=1}^n X_i^k  \bigg)  \in \mathscr{C} \Bigg) - \Pr\big( Z^l\in \mathscr{C} \big) \right|  \le \frac{c  }{ \sqrt{n}} \label{eqn:func_be} 
\end{equation}
where 	$c>0$ is a finite constant, and $Z^l$ is a     Gaussian random vector in $\bbR^l$ with mean vector and covariance matrix respectively given as 
\begin{equation}
\bbE[Z^l]= \mathbf{f}(\bzero),\quad\mbox{and}\quad \cov(Z^l)=\frac{\bJ \cov(X^k_1) \bJ'}{n} .
\end{equation}
\end{theorem}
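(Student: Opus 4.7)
My plan is a quantitative multivariate delta method: Taylor-expand $\mathbf{f}$ around the origin using its $C^2$-smoothness to reduce the nonlinear statistic to a sum of i.i.d.\ vectors plus a quadratic remainder, then apply the vector Berry-Esseen theorem already in hand (Corollary \ref{corollary:multidimensional-berry-esseen}) to the linearized part, and finally absorb the quadratic remainder via Gaussian anti-concentration on convex tubes.

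Concretely, let $\bar X_n := \frac{1}{n}\sum_{i=1}^n X_i^k$. Taylor's theorem with remainder furnishes constants $C,\delta>0$ such that
\[
\mathbf{f}(\bar X_n) = \mathbf{f}(\bzero) + \bJ\, \bar X_n + R(\bar X_n), \qquad \|R(\bar X_n)\|_2 \le C\|\bar X_n\|_2^2 \quad\text{on}\quad A_n := \{\|\bar X_n\|_2\le\delta\},
\]
and Chebyshev gives $\Pr(A_n^c) \le \tr(\cov(X_1^k))/(n\delta^2) = O(1/n)$, which is absorbed into the target rate. The linearized statistic $\mathbf{L}_n := \mathbf{f}(\bzero)+\bJ\bar X_n$ is a constant shift of the empirical mean of the i.i.d.\ zero-mean vectors $\bJ X_i^k$, whose covariance is exactly $\bJ\cov(X_1^k)\bJ'$---so $\cov(\mathbf{L}_n)$ matches $\cov(Z^l)$---and whose third absolute moment is bounded by a multiple of $\xi$. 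Corollary \ref{corollary:multidimensional-berry-esseen} thus gives
\[
\sup_{\mathscr{C}\in\mathfrak{C}_l}\bigl|\Pr(\mathbf{L}_n\in\mathscr{C})-\Pr(Z^l\in\mathscr{C})\bigr|\le c_1/\sqrt n.
\]

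To pass from $\mathbf{L}_n$ to $\mathbf{f}(\bar X_n)=\mathbf{L}_n+R(\bar X_n)$, I would employ a convex sandwich. For any $\eta_n>0$ the Minkowski inflation and erosion $\mathscr{C}_{\pm\eta_n}$ are still convex, hence
\[
\Pr(\mathbf{L}_n\in\mathscr{C}_{-\eta_n}) - \Pr(\|R\|_2>\eta_n) \le \Pr(\mathbf{f}(\bar X_n)\in\mathscr{C}) \le \Pr(\mathbf{L}_n\in\mathscr{C}_{+\eta_n}) + \Pr(\|R\|_2>\eta_n).
\]
Combining this with Corollary \ref{corollary:multidimensional-berry-esseen} applied to $\mathscr{C}_{\pm\eta_n}$ and with the Gaussian anti-concentration estimate $\Pr(Z^l\in\mathscr{C}_{+\eta_n}\setminus\mathscr{C}_{-\eta_n}) \le c_2\eta_n\sqrt n$ (the extra $\sqrt n$ arising from $\lambda_{\min}(\cov(Z^l))=\Theta(1/n)$) yields the three-term bound
\[
\sup_{\mathscr{C}\in\mathfrak{C}_l}\bigl|\Pr(\mathbf{f}(\bar X_n)\in\mathscr{C})-\Pr(Z^l\in\mathscr{C})\bigr| \le \frac{c_3}{\sqrt n} + c_2\eta_n\sqrt n + 2\Pr\bigl(\|R(\bar X_n)\|_2>\eta_n\bigr).
\]

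The main obstacle is the balancing of the last two terms. The collar alone forces $\eta_n=O(1/n)$, but at that scale $\|R\|_2\le C\|\bar X_n\|_2^2$ only controls the remainder tail through $\Pr(\|\bar X_n\|_2>c/\sqrt n)$, which by the CLT is $\Theta(1)$ rather than $o(1)$, so a naive Chebyshev bound would yield at best an $O(1/n^{1/3})$ rate. I would resolve this, following the approach of \cite{Mol13}, by a second application of Corollary \ref{corollary:multidimensional-berry-esseen} to the convex ball $\{\bx:\|\bx\|_2\le\sqrt{\eta_n/C}\}$: this replaces the remainder tail by a Gaussian tail of $\|Z^k\|_2$ at the corresponding radius plus an additive $O(1/\sqrt n)$ error. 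In fixed dimension $k$ that Gaussian tail is sub-exponential in $n\eta_n$, so a choice $\eta_n$ of order $(\log n)/n$ already drives both the Gaussian tail and the second Berry-Esseen error below $1/\sqrt n$; the residual logarithmic factor in $\eta_n\sqrt n$ is removed by a standard mollification of the indicator of $\mathscr{C}$ (or, equivalently, by a Sakhanenko strong coupling of $\sum_i X_i^k$ to a Gaussian random walk with matching mean and covariance), delivering the claimed uniform $O(1/\sqrt n)$ rate.
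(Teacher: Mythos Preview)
The paper does not prove this theorem; it is stated as a preliminary result attributed to MolavianJazi--Laneman~\cite{Mol13} (extending Hoeffding--Robbins~\cite{HR48}), so there is no in-paper proof to compare against. Your overall architecture---Taylor expansion, vector Berry--Esseen on the linearized statistic, convex sandwich for the remainder---is indeed the quantitative delta method underlying those references, and you have correctly isolated the genuine obstruction: balancing the Gaussian collar $c_2\eta_n\sqrt{n}$ against the remainder tail $\Pr(\|R\|_2>\eta_n)$.

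However, your resolution of that obstruction is where the proposal breaks down. Choosing $\eta_n\asymp(\log n)/n$ and invoking a second Berry--Esseen application does make the remainder tail $O(1/\sqrt{n})$, but it leaves the collar term at order $(\log n)/\sqrt{n}$, and your one-line appeal to ``standard mollification of the indicator of $\mathscr{C}$'' or a ``Sakhanenko strong coupling'' does not remove that logarithm. Mollification is how one \emph{proves} Gaussian boundary anti-concentration in the first place; it does not sharpen the $c_2\eta_n\sqrt{n}$ estimate beyond what you already used. A KMT/Sakhanenko coupling yields $\|S_n-G_n\|_2=O(\log n)$ almost surely, which feeds back into exactly the same $(\log n)/n$ scale for $\|\bar X_n - G_n/n\|_2$ and hence the same $(\log n)/\sqrt{n}$ collar after anti-concentration---you have merely shifted the logarithm, not eliminated it. Getting the clean $O(1/\sqrt{n})$ rate over all convex sets requires a more delicate argument (for instance handling the quadratic remainder and the linear part jointly rather than via a deterministic sandwich, or exploiting additional moment structure), and this is precisely the non-trivial content of the cited work that you have not reproduced. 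A secondary gap: you invoke Corollary~\ref{corollary:multidimensional-berry-esseen} for the linearized vector $\bJ X_i^k$, but that corollary requires a positive-definite covariance, whereas $\bJ\,\cov(X_1^k)\,\bJ'$ is singular whenever $\bJ$ fails to have full row rank (in particular whenever $l>k$); you would need to project onto the range of $\bJ$ first.
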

In particular, the inequality in \eqref{eqn:func_be} implies that 
\begin{equation}
\sqrt{n}\left(\mathbf{f} \bigg(\frac{1}{n}\sum_{i=1}^n X_i^k \bigg)  - \mathbf{f}(\bzero) \right)\stackrel{\mathrm{d}}{\longrightarrow}   \calN \left(  \bzero,\bJ \cov(X^k_1) \bJ' \right),
\end{equation}
which is   a canonical statement in the study of the multivariate delta method~\cite[Thm.~5.15]{wasserman}.   

Finally, we remark that Ingber-Wang-Kochman \cite{ingberWK} used a result similar to that of Theorem \ref{thm:func_clt} to derive second-order asymptotic results for various Shannon-theoretic problems. However,   they analyzed the behavior of  functions of {\em distributions} instead of functions of {\em random vectors}   as in Theorem \ref{thm:func_clt}. 

\chapter{Binary Hypothesis Testing} \label{ch:ht}
In this chapter, we review asymptotic expansions in  simple (non-composite) binary hypothesis testing when one of the two error probabilities is non-vanishing. We find this useful, as many coding theorems we encounter in subsequent chapters can be stated in terms of quantities   related to binary hypothesis testing. For example, as pointed out in Csisz\'ar and K\"orner~\cite[Ch.~1]{Csi97}, fixed-to-fixed length lossless source coding and binary hypothesis testing are intimately connected through the relation   between relative entropy and entropy in~\eqref{eqn:div_entropy}. Another example is in point-to-point channel coding, where a powerful non-asymptotic converse theorem \cite[Eq.~(4.29)]{Strassen}  \cite[Sec.~III-E]{PPV10} \cite[Prop.~6]{TomTan12} can be stated in terms of the so-called  {\em $\eps$-hypothesis testing divergence} and the {\em $\eps$-information spectrum divergence} (cf.~Proposition~\ref{prop:converse}). The properties of these two quantities, as well as the relation between them are discussed. Using various probabilistic limit theorems, we also evaluate these quantities in the asymptotic setting for product distributions.  A corollary of the results  presented is the familiar Chernoff-Stein lemma~\cite[Thm.~1.2]{Csi97}, which asserts that the exponent  with growing number of observations  of the type-II error for a non-vanishing type-I error in a binary hypothesis test of $P$ against $Q$  is  the relative entropy  $D(P \| Q)$. 

The material in this chapter is based  largely on the  seminal work by Strassen \cite[Thm.~3.1]{Strassen}. The exposition is based on the more recent works by  Polyanskiy-Poor-Verd\'u~\cite[App.~C]{PPV10}, Tomamichel-Tan~\cite[Sec.~III]{TomTan12} and Tomamichel-Hayashi~\cite[Lem.~12]{Tom12}.
\section{Non-Asymptotic Quantities  and Their Properties}
Consider the simple (non-composite) binary hypothesis test:
\begin{align}
\rvH_0  &: Z\sim P \nn\\*
\rvH_1  &: Z\sim  Q \label{eqn:bin_test}
\end{align}
where $P$ and $Q$ are two probability distributions  on the same space $\calZ$. We assume that the space $\calZ$ is finite to keep the subsequent exposition simple.  The notation in \eqref{eqn:bin_test} means that under  the null hypothesis $\rvH_0$, the random variable $Z$ is distributed as $P \in \scP(\calZ)$ while under the alternative hypothesis $\rvH_1$, it is distributed according to a different distribution $Q \in \scP(\calZ)$. We would like to study the optimal performance of a hypothesis test in terms of the distributions $P$ and $Q$. 

There are several ways to measure the performance of   a hypothesis test  which, in precise terms, is a mapping $\delta$ from the observation space $\calZ$ to $[0,1]$. If the observation $z$ is such that   $\delta(z) \approx 0$, this means the test favors the null hypothesis $\rvH_0$. Conversely,  $\delta(z)\approx 1$ means that the test favors the alternative hypothesis $\rvH_1$ (or alternatively, rejects the null hypothesis $\rvH_0$). If $\delta(z)\in\{0,1\}$, the test is called {\em deterministic}, otherwise it is called {\em randomized}. Traditionally, there are three quantities that are of interest for a given test $\delta$. The first is the {\em probability of false alarm}
\begin{equation}
\rmP_{\mathrm{FA}} := \sum_{z\in\calZ}\delta(z) P(z)  = \bbE_{P}\big[\delta(Z) \big].
\end{equation}
The second is the {\em probability of missed detection}
\begin{equation}
\rmP_{\mathrm{MD}} := \sum_{z\in\calZ}  \big(1-\delta(z)\big) Q(z)  = \bbE_{Q}\big[1- \delta(Z)  \big].
\end{equation}
The third is the {\em probability of detection}, which is one minus the probability of  missed  detection, i.e., 
\begin{equation}
\rmP_{\mathrm{D}} := \sum_{z\in\calZ}\delta(z) Q(z) = \bbE_{Q}\big[\delta(Z) \big].
\end{equation}
The probability of false alarm and miss detection are traditionally called the {\em type-I} and {\em type-II errors} respectively in the statistics literature.  The probability of detection and the probability of  false alarm are also called the {\em power} and the {\em significance level} respectively. The ``holy grail'' is, of course, to design a test such that $\rmP_{\mathrm{FA}}=0$ while $\rmP_{\mathrm{D}}=1$ but this is clearly impossible unless $P$ and $Q$ are mutually singular measures. 

Since misses are usually more costly than false alarms, let us fix a number $\eps\in (0,1)$ that represents a tolerable  probability of false alarm (type-I error). Then define the smallest   type-II error in the binary hypothesis test \eqref{eqn:bin_test} with type-I error not exceeding $\eps$, i.e., 
\begin{equation}
\beta_{1-\eps}(P,Q) := \inf_{\delta : \calZ \to [0,1]}\Big\{\bbE_{Q}\big[ 1-\delta(Z)  \big]:\bbE_{P}\big[  \delta(Z)  \big]\le  \eps  \Big\}. \label{eqn:def_beta}
\end{equation}
Observe that $\bbE_{P}\big[\delta(Z) \big] \le\eps$ constrains the probability of false alarm to be no greater than $\eps$. Thus, we are searching over all tests $\delta$ satisfying  $\bbE_{P}\big[\delta(Z) \big] \le\eps$  such that the probability of missed detection  is minimized. Intuitively, $\beta_{1-\eps}(P,Q)$   quantifies, in a non-asymptotic fashion, the performance of an optimal hypothesis test between $P$ and $Q$.

A related quantity is the   {\em $\eps$-hypothesis testing divergence}
\begin{equation}
D_{\rmh}^{\eps} (P \| Q):=-\log \frac{\beta_{1-\eps}(P,Q)}{1-\eps} .\label{eqn:beta_D}
\end{equation}
This is a measure of the {\em distinguishability} of $P$ from $Q$.   As can be seen from \eqref{eqn:beta_D}, $\beta_{1-\eps}(P , Q)$ and $D_\rmh^\eps (P \| Q)$ are simple functions of each other.  We prefer to  express the results in this monograph mostly in terms of  $D_{\rmh}^{\eps} (P \| Q)$  because it shares similar properties with  the usual relative entropy $D(P\| Q)$, as is evidenced from the following lemma.

\begin{lemma}[Properties of $D_{\rmh}^{\eps}$] \label{lem:Dh1}
The $\eps$-hypothesis testing divergence satisfies the {\em positive definiteness} condition \cite[Prop.~3.2]{Dup12}, i.e., 
\begin{equation}
D_{\rmh}^{\eps} (P \| Q)\ge 0.
\end{equation}
Equality holds if and only if   $P=Q$. In addition, it also satisfies the {\em  data processing inequality} \cite[Lem.~1]{WangRenner}, i.e., for any channel $W$, 
\begin{equation}
D_{\rmh}^{\eps} (PW \| QW) \le  D_{\rmh}^{\eps} (P \| Q).
\end{equation}
\end{lemma}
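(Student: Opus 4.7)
The plan is to reduce both claims to statements about the non-asymptotic quantity $\beta_{1-\eps}(P,Q)$, since $D_\rmh^\eps(P\|Q) = -\log\frac{\beta_{1-\eps}(P,Q)}{1-\eps}$ is an order-reversing function of $\beta_{1-\eps}$. Thus positive definiteness is equivalent to the bound $\beta_{1-\eps}(P,Q) \le 1-\eps$, and the data processing inequality is equivalent to $\beta_{1-\eps}(PW,QW) \ge \beta_{1-\eps}(P,Q)$.

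For positive definiteness, I would exhibit the trivial constant (randomized) test $\delta(z) = \eps$ for all $z \in \calZ$. Then $\bbE_P[\delta(Z)] = \eps$, so $\delta$ is feasible in the infimum defining $\beta_{1-\eps}(P,Q)$ in \eqref{eqn:def_beta}, and $\bbE_Q[1-\delta(Z)] = 1-\eps$. This immediately gives $\beta_{1-\eps}(P,Q) \le 1-\eps$ and hence $D_\rmh^\eps(P\|Q) \ge 0$. For the ``if'' direction of equality, when $P=Q$ every test satisfies $\bbE_Q[\delta(Z)] = \bbE_P[\delta(Z)] \le \eps$, so $\bbE_Q[1-\delta(Z)] \ge 1-\eps$, showing $\beta_{1-\eps}(P,P) = 1-\eps$ and $D_\rmh^\eps(P\|P) = 0$. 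The ``only if'' direction is the subtler part and I would handle it via the Neyman-Pearson lemma: the optimal test is a (possibly randomized) likelihood-ratio test, and whenever $P \ne Q$ the trade-off curve $\alpha \mapsto \beta_\alpha(P,Q)$ lies strictly below the diagonal line $1-\alpha$ for $\alpha \in (0,1)$, yielding $\beta_{1-\eps}(P,Q) < 1-\eps$ strictly. This strict-inequality step is the main obstacle, requiring a careful Neyman-Pearson-type construction (e.g., tilting the constant test toward a sublevel set of the likelihood ratio $P/Q < 1$, which must exist if $P \ne Q$).

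For the data processing inequality, the idea is to lift any test defined downstream of $W$ into a test on the upstream alphabet without altering its two error probabilities. Concretely, given any feasible $\delta' : \calZ' \to [0,1]$ achieving $\bbE_{PW}[\delta'(Z')] \le \eps$, define the upstream test
\begin{equation}
\delta(z) := \sum_{z' \in \calZ'} W(z'|z)\, \delta'(z'), \qquad z \in \calZ,
\end{equation}
which takes values in $[0,1]$. A direct computation using Fubini gives $\bbE_P[\delta(Z)] = \bbE_{PW}[\delta'(Z')] \le \eps$, so $\delta$ is feasible for the test between $P$ and $Q$, and similarly $\bbE_Q[1-\delta(Z)] = \bbE_{QW}[1-\delta'(Z')]$. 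Taking the infimum over $\delta'$ yields $\beta_{1-\eps}(P,Q) \le \beta_{1-\eps}(PW,QW)$, which after applying the monotone transformation in \eqref{eqn:beta_D} delivers $D_\rmh^\eps(PW\|QW) \le D_\rmh^\eps(P\|Q)$. This part is essentially a one-line observation once the lifting construction is identified; the real work of the lemma is the strictness statement in positive definiteness.
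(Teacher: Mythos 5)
The paper does not give its own proof of this lemma; it simply cites \cite[Prop.~3.2]{Dup12} and \cite[Lem.~1]{WangRenner}. Your argument is a correct and complete self-contained proof, and it is in fact the standard route taken in those references: positive definiteness via the constant feasible test $\delta\equiv\eps$, and data processing via the lifting $\delta(z)=\sum_{z'}W(z'|z)\delta'(z')$, which preserves both error probabilities exactly and hence shows the achievable $(\alpha,\beta)$ region for $(P,Q)$ contains that for $(PW,QW)$. All three non-strict inequalities are airtight.

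The one place you flag as ``the subtler part'' --- strictness of $\beta_{1-\eps}(P,Q)<1-\eps$ when $P\ne Q$ --- is indeed where a little more detail is owed, and the Neyman--Pearson perturbation you gesture at does work. To make it fully explicit: since $P\ne Q$ and both sum to one, there exist $z_0,z_1$ with $P(z_0)<Q(z_0)$ and $P(z_1)>Q(z_1)$. Perturb the constant test to $\delta(z_0)=\eps+\mu$, $\delta(z_1)=\eps-\mu P(z_0)/P(z_1)$, $\delta(z)=\eps$ otherwise, with $\mu>0$ small enough that $\delta$ stays in $[0,1]$. Then $\bbE_P[\delta]=\eps$ exactly, and
\begin{equation}
\bbE_Q[\delta]-\eps=\mu\Bigl(Q(z_0)-\frac{P(z_0)}{P(z_1)}Q(z_1)\Bigr)>0,
\end{equation}
since $Q(z_0)>P(z_0)$ and $Q(z_1)/P(z_1)<1$. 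This gives $\bbE_Q[1-\delta]<1-\eps$ strictly, hence $\beta_{1-\eps}(P,Q)<1-\eps$ and $D_\rmh^\eps(P\|Q)>0$. (If $P(z_1)=0$ were the only option that would be a problem, but it cannot happen by construction; and if $P(z_0)=0$ one may simply set $\delta(z_0)=1$ with no compensation needed.) With that gap closed, your proof is complete and matches the approach of the cited sources.
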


\begin{figure}
\centering
\setlength{\unitlength}{.4mm}
\begin{picture}(240, 80)

\put(40, 10){\circle*{4}}
\put(170, 10){\circle*{4}}
\thicklines
\put(0, 10){\vector(1, 0){220}}

\qbezier(40, 10)(50, 130)(170, 10)



%
%
%

\put(127, 50){\mbox{``Density'' of $\log\frac{P(Z)}{Q(Z)}$ when $Z\!\sim\! P$}}

\thinlines
\put(60, 10){\line(0,1){55}}

\put(60, 00){\mbox{$R^*$}}

\put(50, 35){\mbox{$\eps$}}  
\put(90, 35){\mbox{$1-\eps$}}
  \end{picture}
 \caption{Illustration of the $\eps$-information spectrum divergence  $D_{\rms}^{\eps} (P \| Q)$ which is the largest point $R^*$ for which the probability mass to the left is no larger than $\eps$.}
 \label{fig:info_spec}
\end{figure}
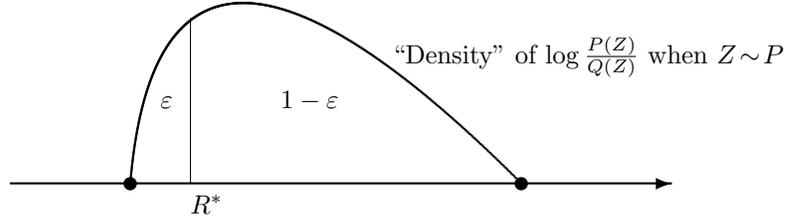

While  the $\eps$-hypothesis testing divergence occurs naturally and frequently in coding problems, it is  usually hard to analyze directly. Thus, we now introduce   an equally important  quantity.  Define the {\em $\eps$-information spectrum divergence} $D_{\rms}^{\eps} (P \| Q)$ as 
\begin{equation}
D_{\rms}^{\eps} (P \| Q) :=\sup\bigg\{ R\in\bbR :  P\bigg(\Big\{ z\in\calZ:\log\frac{P(z)}{Q(z)} \le R \Big\}\bigg)\le\eps\bigg\}. \label{eqn:def_Ds}
\end{equation}
Just as in information spectrum analysis~\cite{Han10}, this quantity places the {\em distribution} of the log-likelihood ratio $\log\frac{P(Z)}{Q(Z)}$ (where $Z\sim P$), and  not just its expectation, in the most prominent role. See Fig.~\ref{fig:info_spec} for an interpretation of the definition in \eqref{eqn:def_Ds}. 

As we will see, the  $\eps$-information spectrum divergence is intimately related to the  $\eps$-hypothesis testing divergence (cf.~Lemma~\ref{lem:relation}). The former is, however, easier to compute. Note that if $P$ and $Q$ are product measures, then by virtue of the fact that   $\log\frac{P(Z)}{Q(Z)}$  is a sum of independent random variables, one can estimate the probability in~\eqref{eqn:def_Ds} using various probability tail bounds. This we do in the following section. 

We now state two useful properties of $D_{\rms}^\eps (P\| Q)$. The proofs of these lemmas are straightforward and can be found in  \cite[Sec.~III.A]{TomTan12}.

\begin{lemma}[Sifting from a convex combination]\label{lem:Ds1}
Let $P\in \scP(\calZ)$ and $Q=\sum_k \alpha_k Q_k$ be an  at most countable  convex combination of distributions $Q_k \in\scP(\calZ)$ with non-negative weights $\alpha_k$ summing to one, i.e., $\sum_k \alpha_k=1$. Then,
\begin{equation}
 D_{\rms}^\eps (P \| Q)\le\inf_{k} \bigg\{ D_{\rms}^\eps ( P \| Q_k) +\log\frac{1}{\alpha_k}\bigg\}.
 \end{equation} \end{lemma}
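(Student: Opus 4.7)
The plan is to fix an arbitrary index $k$ (for which I may assume $\alpha_k>0$, since otherwise $\log(1/\alpha_k)=+\infty$ makes the right-hand side trivially infinite) and prove the single-$k$ bound $D_{\rms}^\eps(P\|Q)\le D_{\rms}^\eps(P\|Q_k)+\log(1/\alpha_k)$; taking the infimum over $k$ then gives the lemma. The engine of the proof is a one-line pointwise bound: since $Q(z)=\sum_j \alpha_j Q_j(z)\ge \alpha_k Q_k(z)$ for every $z\in\calZ$, we have
\[
\log\frac{P(z)}{Q(z)} \;\le\; \log\frac{P(z)}{Q_k(z)}+\log\frac{1}{\alpha_k}
\]
at every $z$ in the support of $P$ (points where $P(z)=0$ contribute nothing to the probabilities below).

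This pointwise bound immediately yields the event inclusion
\[
\bigg\{z\in\calZ:\log\tfrac{P(z)}{Q_k(z)}\le R\bigg\}\;\subseteq\;\bigg\{z\in\calZ:\log\tfrac{P(z)}{Q(z)}\le R+\log\tfrac{1}{\alpha_k}\bigg\}
\]
for every $R\in\bbR$, and hence by monotonicity of the measure $P$,
\[
P\!\bigg(\bigg\{z:\log\tfrac{P(z)}{Q_k(z)}\le R\bigg\}\bigg) \;\le\; P\!\bigg(\bigg\{z:\log\tfrac{P(z)}{Q(z)}\le R+\log\tfrac{1}{\alpha_k}\bigg\}\bigg).
\]

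The proof is finished by invoking the definition \eqref{eqn:def_Ds} of $D_{\rms}^\eps$ twice. Writing $R_k:=D_{\rms}^\eps(P\|Q_k)$, any $R>R_k$ must fail the admissibility condition in the supremum defining $R_k$, so the left-hand side of the preceding display exceeds $\eps$; by that display, the right-hand side also exceeds $\eps$, which means the threshold $R+\log(1/\alpha_k)$ is not admissible in the supremum defining $D_{\rms}^\eps(P\|Q)$, and therefore $D_{\rms}^\eps(P\|Q)\le R+\log(1/\alpha_k)$. Letting $R\downarrow R_k$ yields $D_{\rms}^\eps(P\|Q)\le R_k+\log(1/\alpha_k)$, and taking the infimum over $k$ with $\alpha_k>0$ completes the proof.

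There is essentially no obstacle here; the argument is a pointwise bound plus monotonicity. The only wrinkle worth flagging is the strict-versus-nonstrict inequality inherent in the supremum definition of $D_{\rms}^\eps$, which is handled cleanly by the limiting step $R\downarrow R_k$ (one could equivalently argue via the contrapositive, showing that every $R$ admissible in the supremum for $Q$ lies below $R_k+\log(1/\alpha_k)$).
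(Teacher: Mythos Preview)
Your proof is correct and follows exactly the natural pointwise-domination argument that the paper alludes to (the paper omits the proof as ``straightforward'' but the remark immediately following the lemma, equation~\eqref{eqn:pmf_dom}, makes clear that the intended argument is precisely the bound $Q(z)\ge\alpha_k Q_k(z)$ you use). Your handling of the strict-versus-nonstrict issue via $R\downarrow R_k$ is clean and correct, relying implicitly on the monotonicity of $R\mapsto P(\{z:\log\tfrac{P(z)}{Q(z)}\le R\})$.
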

 In particular, Lemma~\ref{lem:Ds1} tells us that  if  there exists some $\gamma>0$ such that  $\tilQ (z)\le\gamma Q(z)$ for all $z\in\calZ$ then,
\begin{equation}
 D_{\rms}^\eps (P \| \tilQ )\ge  D_{\rms}^\eps ( P \| Q )  - \log\gamma  . \label{eqn:pmf_dom}
\end{equation}
 
\begin{lemma}[``Symbol-wise'' relaxation of $D_\rms^\eps$]\label{lem:Ds2}
Let $W$ and $V$ be two channels from $\calX$ to $\calY$ and let $P\in\scP(\calX)$. Then,
\begin{equation}
D_{\rms}^\eps (P\times W\| P\times V)\le\sup_{x\in\calX} D_\rms^\eps(W(\cdot|x) \| V(\cdot|x)).
\end{equation}
\end{lemma}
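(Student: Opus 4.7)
The plan is to exploit the fact that the log-likelihood ratio of $P\times W$ against $P\times V$ evaluated at $(x,y)$ equals $\log\frac{W(y|x)}{V(y|x)}$, since the common $P(x)$ factor cancels. Thus the random variable whose sublevel-set probability defines $D_\rms^\eps(P\times W\| P\times V)$ via \eqref{eqn:def_Ds} is $\log\frac{W(Y|X)}{V(Y|X)}$ with $(X,Y)\sim P\times W$, and, conditionally on $X=x$, it has the same law as $\log\frac{W(Y|x)}{V(Y|x)}$ with $Y\sim W(\cdot|x)$, which is exactly the random variable featuring in $D_\rms^\eps(W(\cdot|x)\|V(\cdot|x))$. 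This reduces the claim to comparing one CDF to a $P$-mixture of conditional CDFs.

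To make this precise, I would introduce the shorthand $F_x(R):=\Pr_{Y\sim W(\cdot|x)}\!\left[\log\frac{W(Y|x)}{V(Y|x)}\le R\right]$ and $F(R):=\Pr_{(X,Y)\sim P\times W}\!\left[\log\frac{W(Y|X)}{V(Y|X)}\le R\right]$, so that the cancellation observed above, together with conditioning on $X$, yields the mixture identity $F(R)=\sum_{x\in\calX} P(x)\, F_x(R)$. Set $R^*:=\sup_{x\in\calX} D_\rms^\eps(W(\cdot|x)\|V(\cdot|x))$ and fix any $R>R^*$. Then $R$ strictly exceeds $D_\rms^\eps(W(\cdot|x)\|V(\cdot|x))$ for every $x$, so by the supremum in \eqref{eqn:def_Ds} we must have $F_x(R)>\eps$ for every $x$. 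Averaging this strict inequality against $P$ gives $F(R)=\sum_x P(x)F_x(R)>\eps$, so $R$ is excluded from the set defining $D_\rms^\eps(P\times W\| P\times V)$. Since this holds for every $R>R^*$, the supremum in \eqref{eqn:def_Ds} is at most $R^*$, which is the claim.

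The argument is conceptually immediate; the only points worth being careful about are (i) that symbols $x$ with $P(x)=0$ drop out of $F$ harmlessly while still potentially appearing in the $\sup_x$ (this can only loosen, never tighten, the bound we wish to prove), and (ii) the justification of the step ``$R>D_\rms^\eps(W(\cdot|x)\|V(\cdot|x))\Rightarrow F_x(R)>\eps$,'' which follows because any real number strictly exceeding the supremum of a set cannot itself belong to that set. I do not anticipate any substantive obstacle.
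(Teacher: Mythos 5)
Your proof is correct and uses exactly the natural argument that the paper (via its citation to Tomamichel--Tan) intends: cancel the common $P(x)$ factor in the log-likelihood ratio, condition on $X=x$ so the defining probability becomes the $P$-mixture $\sum_x P(x)F_x(R)$, and then note that any $R$ strictly above every $D_\rms^\eps(W(\cdot|x)\|V(\cdot|x))$ forces $F_x(R)>\eps$ for all $x$ and hence $F(R)>\eps$ after averaging. The edge cases you flag (symbols with $P(x)=0$, strictness of the supremum argument) are handled correctly and complete the proof.
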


One can readily toggle between the $\eps$-hypothesis testing divergence and the $\eps$-information spectrum divergence because they satisfy the   bounds in the following lemma. The proof of this lemma mimics that of~\cite[Lem.~12]{Tom12}. 
\begin{lemma}[Relation between divergences] \label{lem:relation}
For every $\eps\in (0,1)$ and  every $\eta\in (0,1-\eps)$, we have 
\begin{align}
D_{\rms}^{\eps}( P \| Q) - \log\frac{1}{1-\eps} &\le D_{\rmh}^{\eps}( P \| Q) \label{eqn:lower_bound_dsdh} \\*
&\le  D_{\rms}^{\eps+\eta}( P \| Q) +\log\frac{1-\eps}{\eta}. \label{eqn:upper_bound_dsdh}
\end{align}
\end{lemma}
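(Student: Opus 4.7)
The plan is to rewrite both inequalities purely in terms of $\beta_{1-\eps}(P,Q)$ and then prove each bound separately by constructing a suitable test (for achievability) or dualizing via change of measure on a likelihood-ratio sublevel set (for the converse). Unwinding the definition $D_\rmh^\eps(P\|Q) = -\log\beta_{1-\eps}(P,Q) + \log(1-\eps)$, the left inequality \eqref{eqn:lower_bound_dsdh} becomes the upper bound $\beta_{1-\eps}(P,Q) \le 2^{-D_\rms^\eps(P\|Q)}$ (the $\log(1-\eps)$ summands cancel), while the right inequality \eqref{eqn:upper_bound_dsdh} becomes the lower bound $\beta_{1-\eps}(P,Q) \ge \eta \cdot 2^{-D_\rms^{\eps+\eta}(P\|Q)}$. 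Writing $L(z) := \log\frac{P(z)}{Q(z)}$, the common mechanism is the pointwise bound $Q(z) \le 2^{-R} P(z)$ on $\{L > R\}$ (and the reverse on $\{L \le R\}$).

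For the lower bound, I would exhibit the Neyman--Pearson-type test $\delta^\star(z) := \bbI\{L(z) \le R\}$ for some $R < D_\rms^\eps(P\|Q)$. By the very definition of $D_\rms^\eps$, the test satisfies $\bbE_P[\delta^\star(Z)] = P(L(Z) \le R) \le \eps$, so it is admissible in the infimum defining $\beta_{1-\eps}$. For its type-II error,
\begin{equation}
\bbE_Q[1-\delta^\star(Z)] = \sum_{z:\, L(z) > R} Q(z) \le 2^{-R}\sum_{z:\, L(z)>R} P(z) \le 2^{-R}.
\end{equation}
Hence $\beta_{1-\eps}(P,Q) \le 2^{-R}$, and letting $R \uparrow D_\rms^\eps(P\|Q)$ finishes the first half.

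For the upper bound, which is a converse statement, I would start with an arbitrary test $\delta$ satisfying $\bbE_P[\delta(Z)] \le \eps$ and pick any $R > D_\rms^{\eps+\eta}(P\|Q)$, so that $P(L(Z)\le R) > \eps + \eta$. Restricting the $Q$-expectation to $\{L\le R\}$ and then changing measure gives
\begin{equation}
\bbE_Q[1-\delta(Z)] \ge \sum_{z:\, L(z)\le R} Q(z)(1-\delta(z)) \ge 2^{-R}\sum_{z:\, L(z)\le R} P(z)(1-\delta(z)) \ge 2^{-R}\bigl[P(L\le R) - \eps\bigr] > \eta \cdot 2^{-R}.
\end{equation}
Infimizing over $\delta$ and letting $R \downarrow D_\rms^{\eps+\eta}(P\|Q)$ yields $\beta_{1-\eps}(P,Q) \ge \eta \cdot 2^{-D_\rms^{\eps+\eta}(P\|Q)}$, as required.

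The argument is almost entirely mechanical once one spots the reformulation in terms of $\beta_{1-\eps}$; the only subtle point is handling the sup in the definition of $D_\rms^\eps$, where the threshold may or may not be attained (the probability $R\mapsto P(L(Z)\le R)$ is only right-continuous). I expect this to be the main obstacle, but it is cleanly resolved by approaching $D_\rms^\eps$ from the correct side (below in the achievability step, above in the converse step) and passing to the limit, exactly as in \cite[Lem.~12]{Tom12}; the slack parameter $\eta$ in \eqref{eqn:upper_bound_dsdh} is precisely what absorbs the possible jump of the distribution function at the threshold.
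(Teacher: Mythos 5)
Your proof is correct and takes essentially the same route as the paper: both halves are proved by the same two moves — the likelihood-ratio threshold test $\bbI\{L\le R\}$ for the achievability direction and change of measure on the sublevel set $\{L\le R\}$ for the converse direction — and your reformulation in terms of $\beta_{1-\eps}$ merely rewrites the paper's argument without changing its substance. The observation about approaching the supremum in $D_\rms^\eps$ from the correct side corresponds exactly to the paper's $\xi\downarrow 0$ limits.
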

\begin{proof}
The following proof is based on that for~\cite[Lem.~12]{Tom12}. For the lower bound in \eqref{eqn:lower_bound_dsdh}, consider the likelihood ratio test 
\begin{equation}
\delta(z) := \bbI\Big\{ \log\frac{P(z)}{Q(z)}\le \gamma\Big\}
,\quad\mbox{where}
\quad 
\gamma:= D_{\rms}^{\eps}( P \| Q) -\xi
\end{equation}
for some  small $\xi>0$. This test clearly satisfies $\bbE_P\big[\delta(Z)\big]\le \eps$ by the definition of the $\eps$-information spectrum divergence. On the other hand,
\begin{align}
& \bbE_Q\big[1-\delta(Z)\big]  \nn\\
 &=\sum_{z\in\calZ} Q(z) \bbI\Big\{ \log\frac{P(z)}{Q(z)} > \gamma   \Big\} \\
&\le\sum_{z\in\calZ} P(z)\exp(-\gamma) \bbI\Big\{ \log\frac{P(z)}{Q(z)} > \gamma   \Big\} \\
 &\le\sum_{z\in\calZ} P(z)\exp(-\gamma)  \\*
&\le \exp(-\gamma)
\end{align}
As a result, by the definition of $D_{\rmh}^{\eps}( P \| Q)$, we have 
\begin{equation}
D_{\rmh}^{\eps}( P \| Q)\ge \gamma-\log\frac{1}{1-\eps} = D_{\rms}^{\eps}( P \| Q) -\xi-\log\frac{1}{1-\eps}.
\end{equation}
Finally, take $\xi\downarrow 0$ to complete the proof of \eqref{eqn:lower_bound_dsdh}.

For the upper bound in \eqref{eqn:upper_bound_dsdh},  we may assume $D_{\rmh}^{\eps}( P \| Q)$  is finite; otherwise there is nothing to prove as $P$ is not absolutely continuous with respect to $Q$ and so $D_{\rms}^{\eps+\eta}( P \| Q)$ is infinite.  According to the definition of $D_{\rmh}^{\eps}( P \| Q)$,  for any $\gamma\ge 0$, there exists a test $\delta$ satisfying $\bbE_P[\delta(Z)]\le\eps$ such that 
\begin{align}
& (1-\eps)\exp(-D_{\rmh}^{\eps}( P \| Q)) \nn\\
 &\ge \bbE_Q\big[ 1-\delta(Z)\big] \\
&\ge \sum_{z  : P(z) \le \gamma Q(z) } Q(z)\big( 1-\delta(z)\big) \\
&\ge  \frac{1}{\gamma} \sum_{z  : P(z) \le \gamma Q(z) }  P(z)\big( 1-\delta(z)\big) \\
&\ge  \frac{1}{\gamma}  \bigg[ \sum_{z  }  P(z)\big( 1-\delta(z)\big) - \sum_{z: P(z) > \gamma Q(z)}P(z)  \bigg] \label{eqn:relation_end0}\\
&\ge  \frac{1}{\gamma}  \bigg[ 1-\eps- P\bigg(  \Big\{ z:\frac{P(z)}{Q(z)} > \gamma \Big\}\bigg) \bigg] \label{eqn:relation_end}
\end{align}
where \eqref{eqn:relation_end} follows because  $\bbE_P \big[ \delta(Z)\big]\le  \eps$. 
Now fix a  small $\xi>0$ and choose 
\begin{equation}
\gamma = \exp \big(D_{\rms}^{\eps+\eta}( P \| Q)+\xi \big).
\end{equation}
Consequently, from \eqref{eqn:relation_end}, we have 
\begin{align}
D_{\rmh}^{\eps}( P \| Q)   &\le   D_{\rms}^{\eps+\eta}( P \| Q)    +\xi \nn\\
&\quad - \log \bigg( 1- \frac{P\big( \big\{z:\log \frac{P(z)}{Q(z)}  >  D_{\rms}^{\eps+\eta}( P \| Q) + \xi \big\} \big) }{1-\eps}\bigg)
\end{align}
By the definition of $D_{\rms}^{\eps+\eta}( P \| Q)$, the probability within the logarithm is upper bounded  by $1-\eps-\eta$. Taking $\xi\downarrow 0$  completes the proof of~\eqref{eqn:upper_bound_dsdh} and hence, the lemma.  
\end{proof}
\section{Asymptotic Expansions}
In this section, we consider the asymptotic expansions of $D_{\rmh}^{\eps}( P^{(n)} \| Q^{(n)})$ and $D_{\rms}^{\eps}( P^{(n)} \| Q^{(n)})$ when $P^{(n)}$ and $Q^{(n)}$ are product distributions, i.e., 
\begin{equation} \label{eqn:prod_dist}
P^{(n)}(\bz) := \prod_{i=1}^n P_i(z_i),\quad\mbox{and}\quad Q^{(n)}(\bz) = \prod_{i=1}^n Q_i(z_i),
\end{equation}
for all $\bz=(z_1,\ldots, z_n)\in\calZ^n$.  The  {\em component distributions}  $\{(P_i,Q_i)\}_{i =1}^n$  are not necessarily the same for each $i$.  However, we do assume for the sake of simplicity that for each $i$, $P_i\ll Q_i$ so $D(P_i \| Q_i)<\infty$. Let $V(P\|Q)$ be the variance of the log-likelihood ratio between $P$ and $Q$, i.e.,
\begin{equation}
V(P\|Q):=\sum_{z\in\calZ} P(z) \bigg[ \log\frac{P(z)}{Q(z)}-D(P\|Q)\bigg]^2. \label{eqn:VPQ}
\end{equation}
This is also known as the {\em relative entropy variance}. 
Let the third absolute moment of the log-likelihood ratio between $P$ and $Q$ be 
\begin{equation}
T(P\|Q):= \sum_{z\in\calZ} P(z) \bigg| \log\frac{P(z)}{Q(z)}-D(P\|Q)\bigg|^3. \label{eqn:TPQ}
\end{equation}
Also define the following quantities:
\begin{align}
D_n  &:=\frac{1}{n}\sum_{i=1}^n D(P_i \| Q_i), \\ 
 V_n  & :=\frac{1}{n}\sum_{i=1}^n V(P_i \| Q_i), \quad\mbox{and}\\
T_n  &:=\frac{1}{n}\sum_{i=1}^n T(P_i\|Q_i).
\end{align}
The first result in this section is the following:
\begin{proposition}[Berry-Esseen bounds for  $D_{\rms}^\eps$] \label{prop:ds_be}
Assume  there exists a constant $V_- >0$ such that $V_n\ge V_-$. We have 
\begin{align}
 \Phi^{-1} \bigg( \eps - \frac{6\, T_n}{\sqrt{nV_-^3}} \bigg)&  \le \frac{ D_{\rms}^\eps( P^{(n)} \|Q^{(n)})  - nD_n}{\sqrt{nV_n}} \le  \Phi^{-1}\bigg( \eps+ \frac{6\, T_n}{\sqrt{nV_-^3}} \bigg). \label{eqn:ds_be}
\end{align}
\end{proposition}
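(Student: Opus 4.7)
The plan is to exploit the fact that under $P^{(n)}$, the log-likelihood ratio between the product distributions $P^{(n)}$ and $Q^{(n)}$ decomposes as a sum of independent (though not identically distributed) random variables,
\begin{equation}
\log \frac{P^{(n)}(Z^n)}{Q^{(n)}(Z^n)} = \sum_{i=1}^n \log \frac{P_i(Z_i)}{Q_i(Z_i)},
\end{equation}
with $Z_i \sim P_i$ independent. Each summand has mean $D(P_i\|Q_i)$, variance $V(P_i\|Q_i)$, and third absolute moment $T(P_i\|Q_i)$. Hence the centered and normalized sum has average variance $V_n$ and average third absolute moment $T_n$, and the assumption $V_n \ge V_- > 0$ keeps the denominator in the Berry--Esseen bound away from zero.

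First, I would apply Theorem~\ref{thm:berry_gen} (the non-i.i.d.\ Berry--Esseen theorem) to the centered sum $X_i := \log\frac{P_i(Z_i)}{Q_i(Z_i)} - D(P_i\|Q_i)$, yielding
\begin{equation}
\sup_{a \in \bbR} \left| \Pr_{P^{(n)}}\!\left( \frac{1}{\sqrt{n V_n}} \sum_{i=1}^n X_i < a \right) - \Phi(a) \right| \;\le\; \frac{6\,T_n}{V_n^{3/2}\sqrt{n}} \;\le\; \frac{6\,T_n}{\sqrt{n V_-^3}},
\end{equation}
where the last step uses $V_n \ge V_-$. Call this uniform error $\Delta_n := 6 T_n/\sqrt{n V_-^3}$.

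Next, I would translate this into bounds on $D_\rms^\eps$ by a change of variables $R = nD_n + \sqrt{nV_n}\, a$ in the definition \eqref{eqn:def_Ds}. The event $\{\log P^{(n)}(Z^n)/Q^{(n)}(Z^n) \le R\}$ becomes $\{(nV_n)^{-1/2}\sum_i X_i \le a\}$, whose $P^{(n)}$-probability lies in $[\Phi(a) - \Delta_n, \Phi(a) + \Delta_n]$. For the upper bound in \eqref{eqn:ds_be}: if $\Phi(a) > \eps + \Delta_n$, then the probability exceeds $\eps$, so such $R$ cannot belong to the set over which the supremum in \eqref{eqn:def_Ds} is taken; this forces $a \le \Phi^{-1}(\eps + \Delta_n)$. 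For the lower bound: if $\Phi(a) \le \eps - \Delta_n$, then the probability is at most $\eps$, so the corresponding $R$ is in the feasible set, which gives $a \ge \Phi^{-1}(\eps - \Delta_n)$. Rearranging in both cases yields precisely \eqref{eqn:ds_be}.

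The only real subtlety I expect is handling the strict versus non-strict inequalities and the sup in the definition of $D_\rms^\eps$ cleanly, together with the degenerate cases where $\eps \pm \Delta_n$ falls outside $(0,1)$, in which case $\Phi^{-1}$ takes the value $\pm \infty$ by the convention established after \eqref{eqn:def_Ds} and the bound is vacuous on that side. Everything else is a bookkeeping exercise: the Berry--Esseen input is fixed, the uniform-in-$a$ nature of the bound lets us invert without losing any additional terms, and the assumption $V_n \ge V_- > 0$ is used only to produce the clean constant $6T_n/\sqrt{nV_-^3}$ on both sides.
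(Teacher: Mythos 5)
Your proposal is correct and follows essentially the same route as the paper's proof: decompose the log-likelihood ratio via the product structure, apply the non-i.i.d.\ Berry--Esseen theorem (Theorem~\ref{thm:berry_gen}), and then invert through the definition of $D_\rms^\eps$ using $V_n \ge V_-$. The paper condenses the inversion step into ``the result immediately follows,'' whereas you spell it out; both are the same argument.
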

\begin{proof}
Let $Z^n$ be distributed according to $P^{(n)}$. By using the product structure of $P^{(n)}$ and $Q^{(n)}$ in \eqref{eqn:prod_dist},
\begin{align}
\Pr\bigg( \log \frac{P^{(n)}(Z^n)}{Q^{(n)}(Z^n)}\le R\bigg)   =\Pr\bigg( \sum_{i=1}^n\log \frac{P_i(Z_i)}{Q_i(Z_i)}\le R\bigg)  .
\end{align}
By the Berry-Esseen theorem in Theorem \ref{thm:berry_gen}, we have 
\begin{align}
\bigg|\Pr\bigg( \sum_{i=1}^n\log \frac{P_i(Z_i)}{Q_i(Z_i)}\le R\bigg)-\Phi\bigg( \frac{ R - nD_n} {\sqrt{nV_n}}\bigg)    \bigg|\le \frac{6\, T_n}{\sqrt{nV_n^3}} .
\end{align}
The result  immediately follows by using the bound $V_n\ge V_-$. 
\end{proof}
A special case of the bound above occurs  when $P_i = P$ and $Q_i = Q$ for all $i= 1,\ldots, n$. In this case, we write $P^n$ for $P^{(n)}$ and similarly, $Q^n$ for $Q^{(n)}$. One has:
\begin{corollary}[Asymptotics of $D_\rms^\eps$] \label{cor:ds_same}
If $V(P\| Q)>0$, then 
\begin{equation}
 D_{\rms}^\eps ( P^n \|Q^n )  = nD(P\| Q)  + \sqrt{n V(P\| Q) } \Phi^{-1}(\eps ) + O(1).
\end{equation}
\end{corollary}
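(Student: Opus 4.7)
The claim is the i.i.d.\ specialization of Proposition~\ref{prop:ds_be}, so the plan is essentially to plug in and Taylor-expand.

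First I would instantiate Proposition~\ref{prop:ds_be} with $P_i = P$ and $Q_i = Q$ for every $i$, so that
\begin{equation}
D_n = D(P\|Q), \quad V_n = V(P\|Q), \quad T_n = T(P\|Q).
\end{equation}
Since $\calZ$ is finite and $P \ll Q$ (otherwise $D(P\|Q) = \infty$ and the claim is vacuous), $T(P\|Q) < \infty$. The hypothesis $V(P\|Q) > 0$ lets me take $V_- := V(P\|Q)$ in Proposition~\ref{prop:ds_be}, yielding, for all $n$ large enough that $\eps \pm 6T(P\|Q)/\sqrt{n V(P\|Q)^3} \in (0,1)$,
\begin{equation}
\Phi^{-1}\!\bigg(\eps - \frac{c}{\sqrt{n}}\bigg)
\;\le\; \frac{D_\rms^\eps(P^n \| Q^n) - nD(P\|Q)}{\sqrt{n V(P\|Q)}}
\;\le\; \Phi^{-1}\!\bigg(\eps + \frac{c}{\sqrt{n}}\bigg),
\end{equation}
with $c := 6 T(P\|Q)/\sqrt{V(P\|Q)^3}$, a finite constant independent of $n$.

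Next I would invoke smoothness of $\Phi^{-1}$ at $\eps \in (0,1)$. Because $\Phi$ is $C^\infty$ with strictly positive density, $\Phi^{-1}$ is continuously differentiable on $(0,1)$, and $(\Phi^{-1})'$ is bounded on any closed subinterval of $(0,1)$. Consequently, for all $n$ sufficiently large,
\begin{equation}
\Phi^{-1}\!\bigg(\eps \pm \frac{c}{\sqrt{n}}\bigg) = \Phi^{-1}(\eps) \pm O\!\bigg(\frac{1}{\sqrt{n}}\bigg),
\end{equation}
by a one-term Taylor expansion (with the implicit constant depending on $\eps$ and $c$, but not on $n$).

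Finally I would multiply through by $\sqrt{n V(P\|Q)}$. The sandwich then reads
\begin{equation}
D_\rms^\eps(P^n \| Q^n) - nD(P\|Q) = \sqrt{n V(P\|Q)}\,\Phi^{-1}(\eps) + O(1),
\end{equation}
which is exactly the stated expansion. There is no real obstacle here; the only thing to watch is that $\eps \pm c/\sqrt{n}$ can lie outside $(0,1)$ for very small $n$, but this affects only finitely many terms and is absorbed into the $O(1)$. Thus the proof is a two-line combination of Proposition~\ref{prop:ds_be} with a local Taylor expansion of $\Phi^{-1}$.
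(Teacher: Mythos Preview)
Your proposal is correct and follows essentially the same approach as the paper's own proof: specialize Proposition~\ref{prop:ds_be} to the i.i.d.\ case so that the Berry--Esseen correction becomes $c/\sqrt{n}$ for a finite constant $c$, then Taylor-expand $\Phi^{-1}(\eps \pm c/\sqrt{n})$ around $\eps$. Your version is simply a more carefully spelled-out rendition of the same two-line argument.
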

\begin{proof}
Since $V(P\| Q)>0$ and $T(P\|Q)<\infty$ (because $P\ll Q$), the term $6\, T_n /\sqrt{nV_-^3}$ in \eqref{eqn:ds_be} is equal to $\frac{c}{\sqrt{n}}$ for some finite $c>0$. By Taylor expansions, 
\begin{equation}
\Phi^{-1}\bigg( \eps \pm\frac{c}{\sqrt{n}}\bigg)=\Phi^{-1}(\eps) + O\bigg( \frac{1}{\sqrt{n}}\bigg),
\end{equation}
which completes the proof.
\end{proof}
In some applications, it is not possible to guarantee that $V_n$ is uniformly bounded away from zero (per Proposition \ref{prop:ds_be}). In this case,  to obtain an upper bound on $D_\rms^\eps$, we employ Chebyshev's inequality instead of the Berry-Esseen theorem.  In the following proposition, which is usually good enough to establish {\em strong converses}, we do not assume that the component distributions are the same. 
\begin{proposition}[Chebyshev bound  for  $D_{\rms}^\eps$] \label{prop:ds_ch}
We have 
\begin{align}
 D_{\rms}^\eps( P^{(n)} \|Q^{(n)})\le nD_n + \sqrt{\frac{nV_n}{1-\eps} }. \label{eqn:D_chey} 
\end{align}
\end{proposition}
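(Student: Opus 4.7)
The plan is to exploit the product structure of $P^{(n)}$ and $Q^{(n)}$ and bound the distribution of the log-likelihood ratio via Chebyshev's inequality, which only requires knowledge of the mean and variance (unlike the Berry-Esseen theorem, no third moment assumption nor lower bound $V_-$ is needed). Specifically, let $Z^n \sim P^{(n)}$ and define the log-likelihood ratio
\begin{equation}
L_n := \log\frac{P^{(n)}(Z^n)}{Q^{(n)}(Z^n)} = \sum_{i=1}^n \log\frac{P_i(Z_i)}{Q_i(Z_i)}.
\end{equation}
By independence of the $Z_i$ under $P^{(n)}$, the summands are independent with $\bbE[\log\frac{P_i(Z_i)}{Q_i(Z_i)}] = D(P_i\|Q_i)$ and $\var(\log\frac{P_i(Z_i)}{Q_i(Z_i)}) = V(P_i\|Q_i)$. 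Hence $\bbE[L_n]=nD_n$ and $\var(L_n)=nV_n$.

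Next, for any $\lambda>0$, Chebyshev's inequality yields
\begin{equation}
\Pr\!\left(L_n \ge nD_n + \lambda\sqrt{nV_n}\right) \le \Pr\!\left(|L_n-nD_n| \ge \lambda\sqrt{nV_n}\right) \le \frac{1}{\lambda^2}.
\end{equation}
I will pick $\lambda$ just above $1/\sqrt{1-\eps}$. Concretely, for any $\lambda > 1/\sqrt{1-\eps}$ we have $1/\lambda^2 < 1-\eps$, and setting $R_\lambda := nD_n + \lambda\sqrt{nV_n}$ we get
\begin{equation}
\Pr(L_n \le R_\lambda) \ge \Pr(L_n < R_\lambda) \ge 1 - \frac{1}{\lambda^2} > \eps.
\end{equation}

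The key step is to convert this into an upper bound on $D_\rms^\eps$. By the definition \eqref{eqn:def_Ds}, any $R$ with $\Pr(L_n \le R) > \eps$ lies outside the set over which the supremum is taken; moreover, since $R\mapsto \Pr(L_n\le R)$ is non-decreasing, every $R' \ge R_\lambda$ also violates the constraint. Therefore $D_\rms^\eps(P^{(n)}\|Q^{(n)}) \le R_\lambda$ for every $\lambda > 1/\sqrt{1-\eps}$. Letting $\lambda \downarrow 1/\sqrt{1-\eps}$ gives the desired
\begin{equation}
D_\rms^\eps(P^{(n)}\|Q^{(n)}) \le nD_n + \sqrt{\frac{nV_n}{1-\eps}}.
\end{equation}

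The only mild subtlety, and the most likely place to slip, is the handling of the supremum: one must take $\lambda$ strictly greater than $1/\sqrt{1-\eps}$ (to get a strict inequality $\Pr(L_n\le R_\lambda)>\eps$, which is what excludes $R_\lambda$ from the defining set), and then pass to the limit. Everything else (existence of means/variances, additivity, the Chebyshev step) is routine given the assumption $P_i \ll Q_i$ which ensures $D(P_i\|Q_i)$ and $V(P_i\|Q_i)$ are finite. No assumption of identical component distributions or of $V_n$ being bounded away from zero is needed, which is precisely why this bound is useful in strong converse arguments where the Berry-Esseen route of Proposition~\ref{prop:ds_be} may fail.
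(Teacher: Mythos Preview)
Your proof is correct and follows essentially the same approach as the paper: both apply Chebyshev's inequality to the sum $L_n=\sum_i \log\frac{P_i(Z_i)}{Q_i(Z_i)}$ to control the tail and then read off the bound on the supremum defining $D_\rms^\eps$. The only cosmetic difference is that the paper splits the supremum into the ranges $R\le nD_n$ and $R> nD_n$ and computes the sup of the Chebyshev-relaxed set directly, whereas you argue via a strict-inequality parameter $\lambda>1/\sqrt{1-\eps}$ and pass to the limit; both routes are equivalent.
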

\begin{proof}
By the definition of the $\eps$-information spectrum divergence, we have 
\begin{align}
 D_{\rms}^\eps( P^{(n)} \|Q^{(n)})=\max \big\{ D^-, D^+ \big\}
\end{align}
where $D^-$ and $D^+$ are defined as 
\begin{align}
 D^- &:= \sup\bigg\{ R \le nD_n :  P\bigg(\Big\{ z\in\calZ:\log\frac{P(z)}{Q(z)} \le R \Big\}\bigg)\le\eps\bigg\}, \\
 D^+ &:= \sup\bigg\{ R  > nD_n :  P\bigg(\Big\{ z\in\calZ:\log\frac{P(z)}{Q(z)} \le R \Big\}\bigg)\le\eps\bigg\} .
\end{align}
Clearly, $D^-\le nD_n$ so it remains to upper bound $ D^+$. 
 Let $R> nD_n$ be fixed. By  Chebyshev's inequality,
\begin{equation}
 \Pr\bigg( \sum_{i=1}^n\log \frac{P_i(Z_i)}{Q_i(Z_i)}\le R\bigg) \ge 1-\frac{nV_n}{(R-nD_n)^2}.
\end{equation}
Hence, we have 
\begin{align}
 D^+ &\le \sup\left\{ R> nD_n :  1-\frac{nV_n}{(R-nD_n)^2}\le\eps\right\}  \\
&= nD_n+ \sqrt{\frac{nV_n}{1-\eps} }.\label{eqn:Dge}
\end{align}
Thus, we   see that the bound on $D^+$  dominates. This yields \eqref{eqn:D_chey} as desired. 
\end{proof}

Now we would like an expansion for $D_\rmh^\eps$ similar to that for $D_\rms^\eps$ in  Corollary~\ref{cor:ds_same}.  The following was shown by Strassen~\cite[Thm.~3.1]{Strassen}.  
\begin{proposition}[Asymptotics of $D_\rmh^\eps$] \label{prop:asymp_dh}
Assume the conditions in Corollary~\ref{cor:ds_same}. The following holds:
\begin{equation}
D_\rmh^\eps (P^n\| Q^n) = nD(P \| Q) + \sqrt{nV( P \|Q) }\Phi^{-1}(\eps) + \frac{1}{2} \log n + O(1). \label{eqn:asymp_dh}
\end{equation}
\end{proposition}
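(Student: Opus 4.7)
The expansion consists of two matching bounds, and the first-order ($nD$) and second-order ($\sqrt{nV}\Phi^{-1}(\eps)$) terms should come almost for free from the corresponding expansion of $D_\rms^\eps$ in Corollary~\ref{cor:ds_same} combined with Lemma~\ref{lem:relation}. The delicate point is pinning down the third-order $\tfrac{1}{2}\log n$ term.

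\textbf{Upper bound.} Here I would apply the upper relation in Lemma~\ref{lem:relation} with the ``slack'' parameter $\eta$ chosen to scale like $1/\sqrt{n}$. Specifically, taking $\eta = 1/\sqrt{n}$,
\begin{align}
D_\rmh^\eps(P^n\|Q^n) &\le D_\rms^{\eps+1/\sqrt{n}}(P^n\|Q^n) + \log\!\big((1-\eps)\sqrt{n}\big).
\end{align}
Corollary~\ref{cor:ds_same} evaluates $D_\rms^{\eps+1/\sqrt{n}}(P^n\|Q^n)$, and since $\Phi^{-1}$ is continuously differentiable on $(0,1)$ we have $\Phi^{-1}(\eps+1/\sqrt{n}) = \Phi^{-1}(\eps) + O(1/\sqrt{n})$, which, after multiplication by $\sqrt{nV(P\|Q)}$, contributes only $O(1)$. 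The extra $\log\sqrt{n}$ from the relation yields exactly the $\tfrac{1}{2}\log n$ term.

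\textbf{Lower bound.} The naive route via \eqref{eqn:lower_bound_dsdh} only delivers an $O(1)$ third-order term, so a sharper argument is needed. I would analyze the Neyman--Pearson likelihood-ratio test directly. Let $L_i = \log\frac{P(Z_i)}{Q(Z_i)}$ and $\tilde L_i = L_i - D(P\|Q)$, with $S_n = \sum_i L_i$ and $\tilde S_n = \sum_i \tilde L_i$; choose the threshold
\begin{equation}
\gamma = nD(P\|Q) + \sqrt{nV(P\|Q)}\,\Phi^{-1}\!\big(\eps - \tfrac{c}{\sqrt n}\big),
\end{equation}
with $c$ chosen so that Theorem~\ref{thm:berry_gen} guarantees $\Pr_{P^n}(S_n\le\gamma)\le\eps$. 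A Taylor expansion of $\Phi^{-1}$ shows $\gamma = nD(P\|Q)+\sqrt{nV(P\|Q)}\,\Phi^{-1}(\eps)+O(1)$. The type-II error of this test, after the standard change-of-measure identity, equals
\begin{equation}
\beta = \exp\!\big(-nD(P\|Q)\big)\,\bbE\!\left[\exp(-\tilde S_n)\,\bbI\{\tilde S_n > \gamma - nD(P\|Q)\}\right].
\end{equation}
Applying Theorem~\ref{thm:str_ld} to the centered summands $\tilde L_i$ (whose variance $V(P\|Q)$ is positive and whose third absolute moment $T(P\|Q)$ is finite since $P\ll Q$ on a finite alphabet) bounds the expectation by $\frac{C}{\sqrt n}\exp\!\big(-(\gamma-nD(P\|Q))\big)$ for a constant $C$ depending only on $V(P\|Q)$ and $T(P\|Q)$. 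Therefore $\beta \le \frac{C}{\sqrt n}\,e^{-\gamma}$, which gives
\begin{equation}
D_\rmh^\eps(P^n\|Q^n) = -\log\tfrac{\beta_{1-\eps}(P^n,Q^n)}{1-\eps} \ge \gamma + \tfrac{1}{2}\log n - O(1),
\end{equation}
and substituting the expression for $\gamma$ yields the matching lower bound.

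\textbf{Main obstacle.} The only non-routine step is the lower bound: a direct appeal to \eqref{eqn:lower_bound_dsdh} loses the $\tfrac{1}{2}\log n$, so one must exhibit a near-optimal test whose type-II error enjoys the $1/\sqrt n$ ``exact-asymptotics'' prefactor on top of the exponential decay. This is precisely what Theorem~\ref{thm:str_ld} is designed for, and plugging it in is essentially the only place where the third-order term is generated. The remaining bookkeeping (smoothness of $\Phi^{-1}$, ensuring the Berry--Esseen slack $c/\sqrt n$ is absorbed into $O(1)$) is mechanical.
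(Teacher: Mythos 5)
Your proof is correct and takes essentially the same approach as the paper: the upper bound combines Lemma~\ref{lem:relation} with $\eta = 1/\sqrt{n}$ and Corollary~\ref{cor:ds_same}, and the lower bound analyzes the likelihood-ratio test, calibrating the threshold via the Berry--Esseen theorem and extracting the $1/\sqrt{n}$ prefactor on the type-II error via Theorem~\ref{thm:str_ld}. The only difference is cosmetic: you center the log-likelihood summands before applying Theorem~\ref{thm:str_ld}, so your threshold $\gamma - nD(P\|Q)$ is negative when $\eps < \tfrac{1}{2}$ and falls slightly outside the literal $\gamma \ge 0$ hypothesis, whereas the paper applies the bound directly to the uncentered $U_i$ with $\gamma \ge 0$; both rest on the same slicing argument, which remains valid for any real threshold, so there is no gap.
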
 
As a result, in the asymptotic setting  for identical product distributions, $D_\rmh^\eps (P^n\| Q^n) $   exceeds $D_\rms^\eps (P^n\| Q^n) $ by $\frac{1}{2}\log n$ ignoring constant terms, i.e., 
\begin{equation}
D_\rmh^\eps (P^n\| Q^n) =D_\rms^\eps (P^n\| Q^n)  +\frac{1}{2}\log n +O(1).
\end{equation}
\begin{proof}
Let us first verify the upper bound. Let $\eta$ in the upper bound of Lemma~\ref{lem:relation}  be chosen to be $\frac{1}{\sqrt{n}}$. Now, for $n$ large enough (so $\frac{1}{\sqrt{n}}< 1-\eps$), combine this  upper bound  with  Corollary~\ref{cor:ds_same} to obtain that 
\begin{align}
& D_\rmh^\eps  (P^n\| Q^n) \le D_\rms^{\eps+\frac{1}{\sqrt{n}}}( P^n\|Q^n) + \frac{1}{2}\log n + \log(1-\eps)\\
&= nD(P \| Q) + \sqrt{nV( P \|Q) }\Phi^{-1}\bigg(\eps +\frac{1}{\sqrt{n}}\bigg)+ \frac{1}{2}\log n + O(1)
\end{align}
Applying a Taylor expansion to the last step and noting that $V(P \| Q)<\infty$ because $P\ll Q$ yields  the upper bound in \eqref{eqn:asymp_dh}.

The proof of the lower bound in \eqref{eqn:asymp_dh} is slightly  more involved. Observe that if we na\"{i}vely employed \eqref{eqn:lower_bound_dsdh}  to lower bound $D_\rmh^\eps (P^n\|Q^n)$ with $D_\rms^\eps(P^n\|Q^n)-\log\frac{1}{1-\eps}$, the third-order term would be $O(1)$ instead of the better $\frac{1}{2}\log n + O(1)$.  The idea is to  propose an appropriate  test for $D_\rmh^\eps$ and to use Theorem~\ref{thm:str_ld}.  Consider the   likelihood ratio test 
\begin{equation}
\delta(\bz):=\bbI\bigg\{ \log\frac{P^n(\bz)}{Q^n(\bz)}\le\gamma\bigg\} \label{eqn:lrt}
\end{equation}
Define $\sigma^2 :=   V(P\|Q)$ and $T :=   T(P\| Q)$. Also define the \iid random variables $U_i:=\log P(Z_i) -\log Q(Z_i)$, $1\le i \le n$, each having variance $\sigma^2$ and third absolute moment $T$. Consider, the expectation of $1-\delta(Z^n)$ under the distribution $Q^n$:
\begin{align}
&\bbE_{Q^n }\big[ 1-\delta(Z^n) \big]  \nn\\*
&= \sum_{\bz} Q^n(\bz) \bbI\bigg\{ \log\frac{P^n(\bz)}{Q^n(\bz)} > \gamma\bigg\}\\
&= \sum_{\bz} P^n(\bz)\exp \bigg(-\log\frac{P^n(\bz)}{Q^n(\bz)}  \bigg) \bbI\bigg\{ \log\frac{P^n(\bz)}{Q^n(\bz)} > \gamma\bigg\}\\
&=\bbE_{P^n}\bigg[ \exp\bigg( -\sum_{i=1}^n U_i \bigg)\bbI\bigg\{ \sum_{i=1}^n U_i>\gamma\bigg\}\bigg] \\*
&\le 2\bigg( \frac{\log 2}{\sqrt{2\pi}} +\frac{12\, T}{\sigma^2}\bigg)\frac{\exp(-\gamma) }{\sigma\sqrt{n}}\label{eqn:apply_str_ld}
\end{align}
where \eqref{eqn:apply_str_ld} is an application of Theorem~\ref{thm:str_ld}.   Now put 
\begin{equation}
\gamma :=nD(P\|Q) + \sqrt{nV(P\| Q)}\Phi^{-1}\bigg(\eps-\frac{6\, T(P\|Q) }{\sqrt{nV(P\|Q )^3}} \bigg) . \label{eqn:choice_gamma}
\end{equation}
An application of the Berry-Esseen theorem yields
\begin{equation}
 \bbE_P\big[ \delta(Z^n)\big]=P\bigg(\Big\{ \bz: \sum_{i=1}^n \log\frac{P (z_i)}{Q (z_i)}\le\gamma\Big\}\bigg)\le\eps. \label{eqn:fa}
 \end{equation} 
 From  \eqref{eqn:apply_str_ld}, \eqref{eqn:fa} and the definition of $D_\rmh^\eps$, we have 
\begin{align}
D_\rmh^\eps(P^n\|Q^n) & \ge \gamma+\log\big(\sigma\sqrt{n}\big) + O(1) = \gamma+\frac{1}{2}\log n + O(1) \label{eqn:lower_bd_dh} .
\end{align}
     The proof is concluded by plugging  \eqref{eqn:choice_gamma} into \eqref{eqn:lower_bd_dh} and  Taylor expanding $\Phi^{-1}(\cdot)$ around $\eps$.
\end{proof}
We remark that the lower bound in Proposition~\ref{prop:asymp_dh} can be achieved using {\em deterministic} tests, i.e., $\delta$ can be chosen to be an indicator function as in~\eqref{eqn:lrt}. Randomization is thus unnecessary. Also, one can relax the assumption that   $Q^n$ is a  product probability measure; it can be an arbitrary product measure.  These realizations are important to make the connection between   hypothesis testing and   lossless source coding which we discuss in the next chapter.

A  corollary of Proposition~\ref{prop:asymp_dh} is the   Chernoff-Stein lemma  \cite{Chernoff52}    quantifying the  error exponent of the   probability of missed detection  keeping the  probability of false alarm  bounded above by~$\eps$.
\begin{corollary}[Chernoff-Stein lemma]
Assume the conditions in Corollary~\ref{cor:ds_same} and recall the definition of $\beta_{1-\eps}$   in \eqref{eqn:def_beta}. For every $\eps\in (0,1)$,  
\begin{equation}
\lim_{n\to\infty}\frac{1}{n}\log \frac{1}{\beta_{1-\eps}(P^n , Q^n)}=\lim_{n\to\infty}\frac{D_\rmh^\eps(P^n\|Q^n)}{n}= D(P\| Q).
\end{equation}
\end{corollary}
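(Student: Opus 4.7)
The statement is essentially a direct corollary of Proposition~\ref{prop:asymp_dh}, and the plan is to derive both equalities by bookkeeping.

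For the first equality, I would simply invoke the defining relation~\eqref{eqn:beta_D}, which gives
\begin{equation}
D_\rmh^\eps(P^n\|Q^n) \;=\; -\log \beta_{1-\eps}(P^n,Q^n) \;+\; \log(1-\eps).
\end{equation}
Dividing by $n$ and noting that $\frac{1}{n}\log(1-\eps) \to 0$ as $n\to\infty$ (since $\eps \in (0,1)$ is fixed), the two limits on the left-hand sides coincide whenever either one exists.

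For the second equality, the natural approach is to plug in Proposition~\ref{prop:asymp_dh}. Assuming $V(P\|Q)>0$, that proposition gives
\begin{equation}
D_\rmh^\eps (P^n\| Q^n) \;=\; nD(P \| Q) + \sqrt{nV( P \|Q) }\,\Phi^{-1}(\eps) + \tfrac{1}{2} \log n + O(1).
\end{equation}
Dividing through by $n$, the leading term gives $D(P\|Q)$, while the remaining terms are $O(1/\sqrt{n})$, $O(\log n / n)$ and $O(1/n)$ respectively, all of which vanish. Thus the limit equals $D(P\|Q)$.

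The only loose end is the degenerate case $V(P\|Q)=0$, which is outside the hypothesis of Proposition~\ref{prop:asymp_dh} (and Corollary~\ref{cor:ds_same}). I do not expect this to be a genuine obstacle: $V(P\|Q)=0$ together with $P\ll Q$ forces $\log\frac{P(z)}{Q(z)} = D(P\|Q)$ for every $z$ in the support of $P$, so $\log\frac{P^n(Z^n)}{Q^n(Z^n)} = nD(P\|Q)$ almost surely under $P^n$. The deterministic likelihood-ratio test at threshold $nD(P\|Q)$ then achieves type-I error $0$ and type-II error exactly $\exp(-nD(P\|Q))$, and any test with smaller type-II error would violate the data processing inequality (Lemma~\ref{lem:Dh1}) applied to the trivial channel separating the supports. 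Hence $\tfrac{1}{n}\log\tfrac{1}{\beta_{1-\eps}(P^n,Q^n)} = D(P\|Q)$ exactly for all $n$, and the limit is trivially $D(P\|Q)$. Combining the two cases yields the corollary.
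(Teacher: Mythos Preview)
Your proposal is correct and follows exactly the route the paper intends: the corollary is stated immediately after Proposition~\ref{prop:asymp_dh} with no separate proof, so the implicit argument is precisely to divide the asymptotic expansion by $n$ and use the defining relation~\eqref{eqn:beta_D}. Your treatment of the case $V(P\|Q)=0$ is superfluous, since the corollary explicitly assumes the conditions of Corollary~\ref{cor:ds_same} (which include $V(P\|Q)>0$), but it does no harm.
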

 


\part{Point-To-Point Communication}

\chapter{Source Coding} \label{ch:src}

In this chapter, we revisit the fundamental problem of fixed-to-fixed length lossless and lossy source compression. Shannon, in his original paper~\cite{Shannon48} that launched the field of information theory,  showed that the fundamental limit of compression of a discrete memoryless source (DMS) $P$ is the entropy $H(P)$. For the case of continuous sources, lossless compression is not possible and some distortion must be allowed. Shannon showed in~\cite{Shannon59b} that the corresponding fundamental limit of compression of memoryless source $P$ up to distortion $\Delta\ge 0$, assuming a separable distortion measure $d$, is the {\em rate-distortion function}
\begin{equation}
R(P,\Delta):=\min_{W \in\scP(\hat{\calX}|\calX):\bbE_{P\times W}[d(X,\hatX)]\le \Delta}I(P,W).  \label{eqn:rd_func}
\end{equation}
These first-order fundamental limits are attained as the  number of realizations of the source (i.e., the blocklength of the source) $P$ tends to infinity.  The strong converse for rate-distortion is also known and shown, for example, in \cite[Ch.~7]{Csi97}. In the following, we present known non-asymptotic bounds for lossless and lossy source coding. We then fix the permissible error probability in the lossless case and the excess distortion probability in the lossy case at some non-vanishing $\eps\in (0,1)$. The non-asymptotic  bounds    are evaluated as $n$ becomes large so as to obtain asymptotic expansions of the logarithm of the smallest achievable code size.  These refined results provide an approximation of the extra code rate (beyond $H(P)$ or $R(P,\Delta)$)  one must incur when operating in the finite blocklength regime.  Finally, for both the lossless and lossy compression problems, we provide alternative proof techniques based on the method of types that are partially universal. 

The material in this chapter concerning lossless source coding is based on the seminal work by Strassen~\cite[Thm.~1.1]{Strassen}. The material on lossy source coding is based on more recent work by Ingber-Kochman~\cite{ingber11} and Kostina-Verd\'u~\cite{kost12}.
\section{Lossless Source Coding: Non-Asymptotic Bounds}
We now set up the  almost lossless source coding problem formally. As mentioned, we only consider {\em fixed-to-fixed length} source coding  in this monograph.  A {\em source} is simply a probability mass function $P$ on some finite alphabet $\calX$ or the associated random variable $X$ with distribution $P$. See Fig.~\ref{fig:source} for an illustration of the setup.

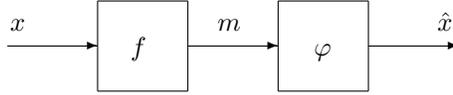
\begin{figure}[t]
\centering
\setlength{\unitlength}{.4mm}
\begin{picture}(150, 35)
\put(0, 15){\vector(1, 0){30}}
\put(60, 15){\vector(1,0){30}}
\put(120, 15){\vector(1,0){30}}
\put(30, 0){\line(1, 0){30}}
\put(30, 0){\line(0,1){30}}
\put(60, 0){\line(0,1){30}}
\put(30, 30){\line(1,0){30}}

\put(90, 0){\line(1, 0){30}}
\put(90, 0){\line(0,1){30}}
\put(120, 0){\line(0,1){30}}
\put(90, 30){\line(1,0){30}}

\put(-2, 20){  $x$}
\put(67, 20){  $m$}
\put(140, 20){  $\hatx$} 
\put(41, 12){$f$ } 
\put(102, 12){$\varphi$} 
%
  \end{picture}
  \caption{Illustration of the fixed-to-fixed length   source coding problem.   }
  \label{fig:source}
\end{figure}

An {\em $(M,\eps)$-code} for the source $P\in\scP(\calX)$ consists of a pair of maps that includes an {\em encoder} $f:\calX\to\{1,\ldots, M\}$ and a {\em decoder} $\varphi: \{1,\ldots, M\}\to\calX$ such that the {\em error probability} 
\begin{equation}
P\big( \{x\in\calX: \varphi(f(x))\ne x \} \big)\le\eps.
\end{equation}
The number $M$ is called the {\em size} of the code $(f,\varphi)$. 

Given a source $P$, we define the {\em almost lossless source coding non-asymptotic fundamental limit} as 
\begin{equation}
M^*(P,\eps):= \min\big\{ M\in\bbN \,:\, \exists \mbox{ an }  (M,\eps)\mbox{-code for } P\big\}.
\end{equation}
Obviously for an arbitrary source, the exact evaluation of the minimum  code size $M^*(P,\eps)$ is   challenging. In the following, we assume that  it is a discrete memoryless source (DMS), i.e., the distribution $P^n$ consists of $n$ copies of an underlying distribution $P$.  With this assumption, we can find  an asymptotic expansion  of $\log M^*(P^n,\eps)$. 

The agenda for this and subsequent  chapters will largely be standard. We first establish ``good''  bounds on non-asymptotic quantities like $M^*(P,\eps)$. Subsequently, we replace the source or channel with $n$ independent copies of it.  Finally, we use an appropriate limit theorem (e.g., those in Section \ref{sec:prob}) to evaluate the non-asymptotic bounds in the large $n$ limit.

\subsection{An Achievability Bound}
One of the   take-home messages that we would like to convey in this section is that fixed-to-fixed length lossless source coding is nothing but binary hypothesis testing where the measures $P$ and $Q$ are chosen appropriately. In fact, a reasonable  coding scheme for the lossless source coding would simply be to encode a ``typical'' set of source symbols $\calT\subset\calX$, ignore the rest, and declare an error if the realized symbol from the source is not in $\calT$. In this way, one sees that 
\begin{equation}
M^*(P ,\eps) \le \min_{\calT\subset\calX:  P(\calX\setminus\calT ) \le\eps}    |\calT|  \label{eqn:MT}
\end{equation}
This bound can be stated in terms of $\beta_{1-\eps}(P,Q)$ or, equivalently, the $\eps$-hypothesis testing divergence $D_\rmh^\eps(P\| Q)$    if we 
restrict the tests  that define these quantities  to be deterministic, and also allow    $Q$ to be an arbitrary  measure  (not necessarily  a probability measure). Let $\mu$ be the counting measure, i.e., 
\begin{equation}
\mu(\calA) :=  |\calA|,\quad\forall\,\calA\subset\calX. 
\end{equation}
\begin{proposition}[Source coding as hypothesis testing: Achievability] \label{thm:src_ac}
Let $\eps\in (0,1)$ and $P$ be any source with countable alphabet $\calX$. We have
\begin{equation}
 M^*(P,\eps)\le \beta_{1-\eps}(P,\mu),
\end{equation}
or in terms of the $\eps$-hypothesis testing divergence (cf.~\eqref{eqn:beta_D}),
\begin{equation}
\log M^*(P,\eps) \le  - D_\rmh^\eps(P \| \mu) - \log\frac{1}{1-\eps}.
\end{equation}
\end{proposition}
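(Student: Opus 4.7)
The plan is to exhibit a one-to-one correspondence between lossless source codes for $P$ and binary hypothesis tests of $P$ against the counting measure $\mu$, and then to rewrite the non-asymptotic bound \eqref{eqn:MT} in the language of $\beta_{1-\eps}$.

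First I would reduce codes to subsets. Given any $(M,\eps)$-code $(f,\varphi)$, the ``correctly decoded'' set $\calT := \{x \in \calX : \varphi(f(x)) = x\}$ satisfies $|\calT| \le M$ (since $f|_\calT$ is injective into an $M$-element index set) and $P(\calT) \ge 1-\eps$ (by the error constraint). Conversely, any $\calT\subseteq\calX$ with $P(\calT)\ge 1-\eps$ yields a $(|\calT|,\eps)$-code by assigning distinct indices in $\{1,\ldots,|\calT|\}$ to its elements and choosing the decoder arbitrarily off the image. This yields the bound in \eqref{eqn:MT} (in fact with equality).

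Next I would translate subsets into tests. For each $\calT\subseteq\calX$, define the deterministic binary test $\delta_\calT(x) := \bbI\{x\notin\calT\}$. Under this identification, the feasibility condition $P(\calT)\ge 1-\eps$ is exactly $\bbE_P[\delta_\calT(Z)]\le \eps$, while the code size becomes $|\calT| = \mu(\calT) = \bbE_\mu[1-\delta_\calT(Z)]$. Therefore the minimum in \eqref{eqn:MT} equals the infimum in \eqref{eqn:def_beta} restricted to deterministic tests, and hence bounds $M^*(P,\eps)$ from above by $\beta_{1-\eps}(P,\mu)$ (interpreted in the natural deterministic-test sense). The logarithmic form then follows by substituting the definition $D_\rmh^\eps(P\|\mu) = -\log\bigl(\beta_{1-\eps}(P,\mu)/(1-\eps)\bigr)$ from \eqref{eqn:beta_D}.

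The main conceptual step is recognizing that the target ``measure'' in the associated hypothesis test should be the counting measure $\mu$, not a probability measure---this is why the definitions in Chapter~\ref{ch:ht} were stated for a general $Q$ and why the extension in \eqref{eqn:div_ent} was flagged earlier. The one technical subtlety, which I expect to be the real obstacle in a fully rigorous write-up, is the randomized-versus-deterministic gap: a Neyman--Pearson optimal randomized test of $P$ against $\mu$ can take a fractional value at a single boundary likelihood $P(z)/\mu(z)=\tau$, so $\beta_{1-\eps}(P,\mu)$ need not be an integer. Since $M^*$ is by definition integer, the displayed bound $M^*\le\beta_{1-\eps}$ must be read up to a unit rounding; this contributes at most an additive $\log 2$ in the logarithmic form, which is negligible for the asymptotic expansions developed in the rest of the chapter.
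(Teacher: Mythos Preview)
Your proposal is correct and mirrors the paper's approach: the paper also derives the proposition directly from \eqref{eqn:MT} by identifying subsets $\calT$ with deterministic tests against the counting measure $\mu$. Your concern about the randomized-versus-deterministic gap is handled in the paper by convention---immediately before the proposition it stipulates that $\beta_{1-\eps}(P,\mu)$ and $D_\rmh^\eps(P\|\mu)$ are to be read with the tests restricted to be deterministic, so no rounding issue arises.
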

\subsection{A Converse Bound}
The converse bound we evaluate is also intimately connected to a divergence we introduced in the previous chapter, namely the $\eps$-information spectrum divergence where the distribution in the alternate hypothesis $Q$ is chosen to be the counting measure. 
\begin{proposition}[Source coding as hypothesis testing: Converse] \label{thm:src_conv}
Let $\eps\in (0,1)$ and $P$ be any source with countable alphabet $\calX$. For any $\eta\in (0,1-\eps)$,   we have 
\begin{equation}
\log M^*(P,\eps)\ge - D_\rms^{\eps+\eta}(P \| \mu)   -\log \frac{1}{\eta}.
\end{equation}
\end{proposition}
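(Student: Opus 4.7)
The plan is to relate the converse to an information-spectrum-style thresholding argument, in much the same spirit as the classical information-spectrum converse in Han~\cite{Han10}. Any $(M,\eps)$-code comes with a ``correctly-decoded'' set $\calS:=\{x\in\calX:\varphi(f(x))=x\}$, which satisfies the two structural properties $|\calS|\le M$ (distinct $x$'s must be decoded to distinct messages) and $P(\calS)\ge 1-\eps$. The proof will extract a contradiction from these two properties combined with the defining inequality of $D_\rms^{\eps+\eta}(P\|\mu)$, by stripping off a small-probability ``low-likelihood'' subset.

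Concretely, I would proceed as follows. Take an arbitrary $(M,\eps)$-code and its set $\calS$ as above. Define the threshold set
\begin{equation}
\calH:=\Big\{x\in\calX:\log P(x)\le -\log(M/\eta)\Big\}=\Big\{x\in\calX:P(x)\le \eta/M\Big\}.
\end{equation}
The idea is to discard $\calS\cap\calH$, whose total probability is small because each element of it carries mass at most $\eta/M$:
\begin{equation}
P(\calS\cap\calH)\le |\calS\cap\calH|\cdot\frac{\eta}{M}\le |\calS|\cdot\frac{\eta}{M}\le \eta.
\end{equation}
Hence $P(\calS\setminus\calH)\ge (1-\eps)-\eta=1-(\eps+\eta)$, which in turn implies
\begin{equation}
P\!\left(\Big\{x:\log\tfrac{P(x)}{\mu(x)}\le -\log(M/\eta)\Big\}\right)\le \eps+\eta,
\end{equation}
since $\mu(x)=1$ for every $x\in\calX$ and $\calS\setminus\calH\subseteq\{x:\log P(x)>-\log(M/\eta)\}$.

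By the very definition of $D_\rms^{\eps+\eta}(P\|\mu)$ in~\eqref{eqn:def_Ds} as a supremum over admissible thresholds, the displayed inequality forces $-\log(M/\eta)\le D_\rms^{\eps+\eta}(P\|\mu)$, which rearranges to
\begin{equation}
\log M\ge -D_\rms^{\eps+\eta}(P\|\mu)-\log\tfrac{1}{\eta}.
\end{equation}
Taking the minimum over all valid codes yields the claimed lower bound on $\log M^*(P,\eps)$. I do not anticipate a real obstacle here—the only subtle point is the ``$\eta$-slackening'' that pays a penalty of $\log(1/\eta)$ in exchange for being allowed to look at tail probability $\eps+\eta$ rather than $\eps$; this is precisely the standard trade-off that also appears in Lemma~\ref{lem:relation} relating $D_\rmh^\eps$ and $D_\rms^{\eps+\eta}$, and the proof above is essentially a direct specialization of that mechanism to the counting-measure case.
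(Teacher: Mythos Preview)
Your proof is correct and essentially identical to the paper's: both define the correctly-decoded set $\calS$ with $|\calS|\le M$, the threshold set $\{x:P(x)\le\eta/M\}$ (your $\calH$, the paper's $\calT$), bound $P(\calS\cap\calH)\le\eta$ via the cardinality constraint, and conclude $P(\calH)\le\eps+\eta$ to invoke the definition of $D_\rms^{\eps+\eta}$.
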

This statement is exactly  Lemma~1.3.2 in Han's book~\cite{Han10}. Since the proof is short, we provide it for completeness. 
\begin{proof}
By the definition of the   $\eps$-information spectrum divergence, it is enough to  establish that every $(M,\eps)$-code for $P$ must satisfy
\begin{equation}
\eps +\eta\ge P\bigg(\Big\{x: P(x) \le \frac{\eta}{M}\Big\} \bigg) ,  \label{eqn:sc_conv}
\end{equation}
for any $\eta\in (0,1-\eps)$.   Let $\calT:= \{x: P(x) \le \frac{\eta}{M}\}$ and let $\calS:=\{x:\varphi(f(x)) =  x\}$. Clearly, $|\calS|\le M$. We have
\begin{equation}
P(\calT)\le P(\calX\setminus\calS) + P(\calT\cap\calS)\le\eps + P(\calT\cap\calS). \label{eqn:PT}
\end{equation}
Furthermore,
\begin{equation}
P(\calT\cap\calS)=\sum_{x\in\calT\cap\calS} P(x) \le \sum_{x\in\calT\cap\calS} \frac{\eta}{M}\le |\calS|  \frac{\eta}{M}\le\eta. \label{eqn:PTS}
\end{equation}
Uniting \eqref{eqn:PT} and \eqref{eqn:PTS} gives \eqref{eqn:sc_conv} as desired. 
\end{proof}
Observe the similarity of this proof to proof of the upper bound of $D_\rmh^\eps$ in terms of $D_\rms^\eps$ in Lemma~\ref{lem:relation}.
\section{Lossless Source Coding: Asymptotic Expansions} \label{sec:asymp_lossless}
Now we assume that the source $P^n$ is stationary and memoryless, i.e., a DMS.  More precisely,
\begin{equation}
P^n(\bx) =\prod_{i=1}^n P(x_i) ,\quad\, \forall\, \bx\in\calX^n. 
\end{equation}
We assume throughout that $P(x)>0$ for all $x\in\calX$.  Shannon \cite{Shannon48} showed that the minimum rate   to achieve almost lossless compression of a DMS $P$ is the entropy $H(P)$. In this  section as well as the next one, we derive finer evaluations of the fundamental compression limit by considering the asymptotic expansion of $\log M^*(P^n,\eps)$.   To do so, we need to define another important quantity related to the source $P$.

Let the {\em source dispersion}  of $P$ be the variance of the {\em self-information} random variable $-\log {P(X)}$, i.e., 
\begin{equation}
V(P):=\var\bigg[ \log\frac{1}{P(X)}\bigg] = \sum_{x\in\calX}P(x)\bigg[ \log\frac{1}{P(x)}-H(P) \bigg]^2.
\end{equation}
Note that the expectation of the  self-information is the entropy $H(P)$.  In Kontoyannis-Verd\'u~\cite{verdu14},  $V(P)$ is called the {\em varentropy}.  If $V(P)=0$ this means that the source is either deterministic or uniform.


The two non-asymptotic theorems in the preceding section combine to give the following asymptotic expansion of the  minimum code size $M^*(P^n,\eps)$. 
\begin{theorem}\label{thm:asymp_src}
If the source $P\in\scP(\calX)$ satisfies $V(P)>0$, then
\begin{equation}
\log M^*(P^n,\eps)= nH(P) - \sqrt{nV(P)}\Phi^{-1}(\eps) -\frac{1}{2}\log n + O(1). \label{eqn:nonzero_VP}
\end{equation}
Otherwise, we have 
\begin{equation}
\log M^*(P^n,\eps)= nH(P)  + O(1). \label{eqn:zero_VP}
\end{equation}
\end{theorem}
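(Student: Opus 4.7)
The plan is to combine the non-asymptotic bounds in Propositions~3.1 and~3.2 with the hypothesis-testing asymptotics of Chapter~\ref{ch:ht}, specialized to the reference measure $Q = \mu$, the counting measure on $\calX$. Under this choice, the log-likelihood ratio $\log P(x)/\mu(x) = \log P(x)$ has mean $-H(P)$ and variance $V(P)$; accordingly $D(P\|\mu) = -H(P)$ and $V(P\|\mu) = V(P)$, and in the $n$-fold product setting we deal with a sum of $n$ i.i.d.\ copies of $\log P(X)$.

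For the \textbf{achievability} direction (assuming $V(P) > 0$), I would plug $Q^n = \mu^n$ into Proposition~\ref{prop:asymp_dh}, invoking the remark after its proof which explicitly permits $Q^n$ to be an arbitrary product measure, not only a probability measure. Combined with Proposition~\ref{thm:src_ac} this gives
\begin{align}
\log M^*(P^n,\eps) &\le -D_\rmh^\eps(P^n\|\mu^n) + O(1) \nonumber \\*
&= nH(P) - \sqrt{nV(P)}\,\Phi^{-1}(\eps) - \tfrac{1}{2}\log n + O(1),
\end{align}
where the $\tfrac{1}{2}\log n$ is already baked into Proposition~\ref{prop:asymp_dh}.

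For the \textbf{converse}, I would apply Proposition~\ref{thm:src_conv} with the choice $\eta = 1/\sqrt{n}$, so that $\log(1/\eta) = \tfrac{1}{2}\log n$. The surviving $\eps$-information spectrum term is evaluated via Corollary~\ref{cor:ds_same} as
\begin{equation}
D_\rms^{\eps + 1/\sqrt n}(P^n\|\mu^n) = -nH(P) + \sqrt{nV(P)}\,\Phi^{-1}\bigl(\eps + 1/\sqrt n\bigr) + O(1).
\end{equation}
A first-order Taylor expansion of $\Phi^{-1}$ around $\eps \in (0,1)$, where its derivative is finite, gives $\Phi^{-1}(\eps + 1/\sqrt n) = \Phi^{-1}(\eps) + O(1/\sqrt n)$; multiplied by $\sqrt{nV(P)}$ this contributes only $O(1)$. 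Assembling the pieces,
\begin{equation}
\log M^*(P^n,\eps) \ge nH(P) - \sqrt{nV(P)}\,\Phi^{-1}(\eps) - \tfrac{1}{2}\log n + O(1),
\end{equation}
matching the achievability bound and establishing \eqref{eqn:nonzero_VP}.

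The zero-variance case is handled separately and directly: $V(P) = 0$ forces $-\log P(X)$ to be almost surely constant, so $P$ is uniform on its support $\calA \subseteq \calX$ with common mass $|\calA|^{-n}$ on every sequence of $\calA^n$. Any code of size $M$ covers probability at most $M/|\calA|^n$, so $\eps$-reliability forces $M \ge \lceil(1-\eps)|\calA|^n\rceil$, and equality is achievable; hence $\log M^*(P^n,\eps) = n\log|\calA| + O(1) = nH(P) + O(1)$, which also absorbs the deterministic sub-case $|\calA|=1$. The main obstacle is aligning the $-\tfrac{1}{2}\log n$ third-order terms on the two sides: on the achievability side it is automatic from Proposition~\ref{prop:asymp_dh}, but on the converse side it emerges only if $\eta$ in Proposition~\ref{thm:src_conv} is chosen as $\Theta(1/\sqrt n)$ so that the explicit $\log(1/\eta)$ loss balances the $O(1/\sqrt n)$-shift absorbed by the smooth Taylor expansion of $\Phi^{-1}$ near $\eps$; any other scaling of $\eta$ (say, $1/\log n$ or a constant) would yield a strictly weaker third-order term.
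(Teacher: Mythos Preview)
Your proposal is correct and follows essentially the same route as the paper: Proposition~\ref{thm:src_ac} plus Proposition~\ref{prop:asymp_dh} for the direct part, Proposition~\ref{thm:src_conv} with $\eta=1/\sqrt{n}$ plus the Berry--Esseen expansion of $D_\rms^{\eps+\eta}$ for the converse, and the identifications $D(P\|\mu)=-H(P)$, $V(P\|\mu)=V(P)$ throughout. The only difference is the $V(P)=0$ case, where the paper stays within the hypothesis-testing framework (observing that the information spectrum degenerates to an indicator) while you give a direct counting argument using uniformity on the support; both are valid and yours is arguably more transparent.
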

\begin{proof}
For the direct part of \eqref{eqn:nonzero_VP} (upper bound), note that the term $-\log\frac{1}{1-\eps}$  in Proposition~\ref{thm:src_ac} is a constant, so we simply have to evaluate $D_{\rmh}^\eps(P^n\|\mu^n)$.\footnote{Just to be pedantic,  for any   $\calA\subset\calX^n$, the measure  $\mu^n(\calA)$ is defined as $\sum_{\bx\in\calA}\mu^n(\bx)=|\calA|$  and $\mu^n(\bx)=1$ for each $\bx\in\calX^n$. Hence, $\mu^n$ has the required product structure for the application of Corollary~\ref{cor:ds_same}, for  which the second argument of $D_\rmh^\eps(P^n \| Q^n)$ is not restricted to product  probability measures.  }  From Corollary~\ref{cor:ds_same} and its remark that the lower bound on $D_\rmh^\eps(P^n\|Q^n)$ can be achieved using deterministic tests, we have 
\begin{equation}
D_{\rmh}^\eps(P^n\|\mu^n)=nD(P\| \mu) + \sqrt{nV(P\| \mu) }\Phi^{-1}(\eps) +\frac{1}{2}\log n+O(1).
\end{equation}
It can easily be verified (cf.~\eqref{eqn:div_ent}) that 
\begin{equation}
D(P\|\mu) = -H(P) ,\quad\mbox{and}\quad V(P\|\mu) = V(P).  \label{eqn:mu_simple}
\end{equation}
This concludes the proof of the direct part in light of Proposition~\ref{thm:src_ac}. 

For the converse part of \eqref{eqn:nonzero_VP} (lower bound), choose $\eta=\frac{1}{\sqrt{n}}$ so the term $-\log\frac{1}{\eta}$  in Proposition~\ref{thm:src_conv} gives us the $-\frac{1}{2}\log n$ term. Furthermore,  by Proposition~\ref{prop:ds_be} and the simplifications in \eqref{eqn:mu_simple},
\begin{equation}
D_\rms^{\eps+\frac{1}{\sqrt{n}}}(P^n \| \mu^n) = -nH(P) + \sqrt{nV(P)}\Phi^{-1}\bigg(\eps+\frac{1}{\sqrt{n}}\bigg) + O(1). 
\end{equation}
A Taylor expansion of $\Phi^{-1}(\cdot)$ completes the proof of \eqref{eqn:nonzero_VP}.

For \eqref{eqn:zero_VP}, note that the self-information  $-\log P(X)$ takes on the value $H(P)$ with probability one. In other words, 
\begin{equation}
P^n\big( \log P^n(X^n) \le R\big)=\bbI\big\{ R\ge -nH(P)\big\}.
\end{equation}
The bounds on  $\log M^*(P^n,\eps)$ in Propositions~\ref{thm:src_ac} and~\ref{thm:src_conv} and the relaxation to the $\eps$-information spectrum divergence (Lemma~\ref{lem:relation}) yields~\eqref{eqn:zero_VP}.
\end{proof}

The expansion in~\eqref{eqn:nonzero_VP} in Theorem~\ref{thm:asymp_src}  appeared in early works by Yushkevich~\cite{Yushkevich} (with  $o(\sqrt{n})$ in place of $-\frac12\log n$ but for Markov chains) and Strassen~\cite[Thm.~1.1]{Strassen} (in the form stated). It has since appeared in various other forms and levels of generality in Kontoyannis~\cite{Kot97}, Hayashi~\cite{Hayashi08}, Kostina-Verd\'u~\cite{kost12}, Nomura-Han \cite{Nom13b} and Kontoyannis-Verd\'u~\cite{verdu14} among others.

As can be seen from the non-asymptotic bounds and the asymptotic evaluation, fixed-to-fixed length lossless source coding and binary hypothesis testing are virtually the same problem. Asymptotic expansions for $D_\rmh^\eps$ and $D_\rms^\eps$ can be used directly to estimate the minimum code size $M^*(P^n,\eps)$ for an $\eps$-reliable lossless source code. 
\section{Second-Order Asymptotics of  Lossless Source Coding via the Method of  Types}   \label{sec:universal_lossless}
Clearly, the coding scheme described  in \eqref{eqn:MT} is {\em non-universal}, i.e., the   code depends on knowledge of the source distribution. In many applications, the  exact source distribution is unknown, and hence has to be estimated {\em a priori}, or one has to design a source code that works well for any source distribution. It is a well-known application of the method of types that universal source codes achieve the lossless source coding error exponent~\cite[Thm.~2.15]{Csi97}. One then wonders whether there is any degradation in the asymptotic expansion of $\log M^*(P^n,\eps)$ if the encoder and decoder know less about the source. It turns out that the source dispersion term can be achieved only with the knowledge of $H(P)$  and $V(P)$.  However,   one has to work much harder to determine the third-order term.   For conclusive results on the third-order term for fixed-to-variable length source coding, the reader is referred to the elegant work by Kosut and Sankar~\cite{KosutSankar,KosutSankar14}. The technique outlined in this section involves the method of types.


Let $M_\rmu^*(P,\eps)$ be the  almost lossless source coding  non-asymptotic fundamental limit  where the  source code $(f,\varphi)$ is   ignorant  of the probability distribution of the source $P$, except for the entropy  $H(P)$ and the varentropy $V(P)$. 
\begin{theorem}\label{thm:src_univ}
If the source $P\in\scP(\calX)$ satisfies $V(P)>0$, then 
\begin{equation}
\log M_\rmu^*(P^n,\eps)\le  nH(P) - \sqrt{nV(P)}\Phi^{-1}(\eps) +  (|\calX|-1)\log n  + O(1).
\end{equation}
\end{theorem}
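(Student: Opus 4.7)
The plan is to construct an explicit universal code based on the \emph{method of types}, encoding all source sequences whose empirical entropy does not exceed a suitably chosen threshold $\gamma$ that depends only on $H(P)$ and $V(P)$.

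Specifically, I would fix $\gamma := H(P) + \sqrt{V(P)/n}\,\Phi^{-1}(1-\eps) + c/n$ for a constant $c$ to be determined, and encode the set
\begin{equation}
\calT^* := \bigcup_{Q \in \scP_n(\calX) : H(Q)\le\gamma} \calT_Q = \{\bx \in \calX^n : H(P_{\bx}) \le \gamma\}.
\end{equation}
Sequences in $\calT^*$ are mapped injectively to indices in $\{1,\ldots,|\calT^*|\}$, and sequences outside $\calT^*$ are declared an error. Note that this scheme depends on $P$ only through $H(P)$ and $V(P)$, as required.

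The size of $\calT^*$ is bounded using Lemmas~\ref{lem:type_count} and~\ref{lem:size_type_class} (and the refinement \eqref{eqn:type_coutn2}):
\begin{equation}
|\calT^*| \le |\scP_n(\calX)| \cdot \exp(n\gamma) \le (n+1)^{|\calX|-1} \exp(n\gamma),
\end{equation}
so $\log|\calT^*| \le (|\calX|-1)\log(n+1) + n\gamma$. It remains to verify that the threshold $\gamma$ is large enough so that $P^n(\calX^n\setminus\calT^*) \le \eps$. For this I would use the identity
\begin{equation}
H(P_{\bx}) = -\frac{1}{n}\sum_{i=1}^n \log P(x_i) - D(P_{\bx} \| P),
\end{equation}
together with the non-negativity of relative entropy to obtain the one-sided inclusion
\begin{equation}
\left\{\bx : H(P_{\bx}) > \gamma\right\} \subseteq \left\{\bx : -\tfrac{1}{n}\sum_{i=1}^n \log P(x_i) > \gamma\right\}.
\end{equation}
Since the right-hand side is a probability concerning an i.i.d.\ sum with mean $H(P)$ and variance $V(P)$, I would apply the Berry-Esseen theorem (Theorem~\ref{thm:berry_iid}) to conclude
\begin{equation}
P^n\!\left(-\tfrac{1}{n}\sum_{i=1}^n \log P(X_i) > \gamma\right) \le 1 - \Phi\!\left(\frac{\gamma - H(P)}{\sqrt{V(P)/n}}\right) + \frac{C}{\sqrt{n}}
\end{equation}
for an explicit constant $C$ depending on the third absolute moment of $-\log P(X)$. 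A Taylor expansion of $\Phi^{-1}$ around $1-\eps$ then shows that choosing $c$ large enough (absorbing the $C/\sqrt{n}$ Berry-Esseen correction into an $O(1/n)$ perturbation of $\gamma$) ensures the total error probability is at most $\eps$, while preserving $n\gamma = nH(P) + \sqrt{nV(P)}\,\Phi^{-1}(1-\eps) + O(1)$.

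Combining the size bound with $\Phi^{-1}(1-\eps) = -\Phi^{-1}(\eps)$ yields the claimed inequality. The main obstacle is the interaction between the Berry-Esseen remainder and the inversion of $\Phi$: one must be careful that the $O(1/\sqrt{n})$ error in the tail probability only costs $O(1/n)$ in the threshold $\gamma$ (and hence $O(1)$ in $n\gamma$), rather than the naive $O(1/\sqrt{n})$, which would spoil the third-order term. Because $\Phi^{-1}$ is locally Lipschitz at any fixed $\eps\in(0,1)$, this subtlety is handled by a standard Taylor argument, but it is the step that requires the most care. Everything else---type counting, the chain of containments for the error event, and the one-sided Berry-Esseen estimate---is routine once this bookkeeping is set up.
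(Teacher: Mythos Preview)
Your proof is correct and uses the same encoding scheme as the paper (encode all sequences whose empirical entropy is at most a threshold $\gamma$), but the analysis of the error probability $\Pr(H(P_{X^n})>\gamma)$ follows a genuinely different and more elementary route. The paper treats $H(P_{X^n})$ as a smooth function of the empirical type vector and invokes Theorem~\ref{thm:func_clt} (Berry--Esseen for functions of \iid random vectors, i.e., the quantitative delta method) to obtain the Gaussian approximation with mean $H(P)$ and variance $V(P)/n$. You instead exploit the exact identity $H(P_{\bx}) = -\frac{1}{n}\sum_i \log P(x_i) - D(P_{\bx}\|P)$ together with $D(\cdot\|\cdot)\ge 0$ to reduce the error event to a one-sided tail of an honest \iid sum, after which the \emph{scalar} Berry--Esseen theorem suffices. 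Your observation that the $O(1/\sqrt{n})$ Berry--Esseen remainder costs only $O(1/n)$ in $\gamma$ (because of the $\sqrt{V(P)/n}$ scaling) is exactly right and is the same bookkeeping the paper does implicitly via its choice of $M$.

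Your argument is cleaner for this specific achievability bound and avoids the heavier machinery of Theorem~\ref{thm:func_clt}. The paper's route, on the other hand, is chosen deliberately to showcase the functional-CLT technique, which it reuses verbatim for lossy source coding (Section~\ref{sec:lossy_types}, via the central limit relation for $R(P_{X^n},\Delta)$), for channel coding with empirical mutual information (cf.~\eqref{eqn:channel_con}), and for Slepian--Wolf coding (Section~\ref{sec:sw_uni}). In those settings there is no convenient one-sided inequality analogous to $H(P_{\bx})\le -\frac{1}{n}\sum_i\log P(x_i)$, so the delta-method approach is essential; here, for lossless source coding, your shortcut works because the nuisance term $D(P_{\bx}\|P)$ has the right sign.
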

The proof   we present here results in a third-order term that is likely to be  far from optimal but we present this proof to demonstrate the similarity to the classical proof of the fixed-to-fixed length source coding error exponent using the method of types~\cite[Thm.~2.15]{Csi97}. 
\begin{proof}[Proof of Theorem~\ref{thm:src_univ}]
Let $\calX=\{a_1,\ldots, a_{d}\}$ without loss of generality.  Set $M$, the size of the code, to be the smallest integer exceeding 
\begin{align}
\exp\bigg[ n H(P) &- \sqrt{nV(P)}\Phi^{-1} \Big(\eps-\frac{c}{\sqrt{n }} \Big)  +(d-1)  \log (n+1)\bigg], \label{eqn:choiceM}
\end{align}
for some finite  constant $c>0$ (given in Theorem~\ref{thm:func_clt}). Let $\calK$ be the set of sequences in $\calX^n$ whose empirical entropy is no larger  than 
\begin{equation}
\gamma :=\frac{1}{n}\log M- \frac{ (d-1)\log (n+1)}{n}. \label{eqn:src_gamma}
\end{equation}
In other words,
\begin{equation}
\calK :=  \bigcup_{Q \in\scP_n(\calX): H(Q) \le\gamma } \calT_Q  .
\end{equation}
Encode all sequences  in $\calK$ in a one-to-one way and sequences not in $\calK$ arbitrarily.  By the type counting lemma in \eqref{eqn:type_coutn2} and Lemma~\ref{lem:size_type_class} (size of type class), we have 
\begin{align}
|\calK| \! \le\! \sum_{Q  \in\scP_n(\calX): H(Q) \le\gamma } \!\exp\big(nH(Q)\big)\le (n+1)^{d-1}\exp\big( n\gamma\big)\le  M   \label{eqn:size_src_cd}
\end{align}
so the number of sequences that can be reconstructed without error is at most $M$ as required. 
An error occurs if and only if the source sequence has empirical entropy exceeding $\gamma$, i.e.,  the error probability is 
\begin{equation}
\mathfrak{p}:=\Pr( H(P_{X^n})>\gamma )\label{eqn:prob_univ_lossless0}
\end{equation}
where $P_{X^n} \in\scP_n(\calX)$ is the random type of $X^n\in\calX^n$.   This probability can be written as 
\begin{align}
\mathfrak{p}=\Pr\big( f(   P_{X^n} - P ) > \gamma\big) , \label{eqn:prob_univ_lossless} 
\end{align}
where the function $f:\bbR^d\to\bbR$ is defined as 
\begin{equation}
f(\bv) = H\left( {\bv} + P \right),\label{eqn:func_ent}
\end{equation}
In \eqref{eqn:prob_univ_lossless} and \eqref{eqn:func_ent}, we regarded the type $P_{X^n}$ and the true distribution $P$ as vectors of length $d =|\calX|$, and $H(\bw)=-\sum_j w_j\log w_j$ is the entropy.   Note that the argument of $f(\cdot)$ in \eqref{eqn:prob_univ_lossless} can be written as 
\begin{equation}
  P_{X^n} - P  =\frac{1}{ {n}}\sum_{i=1}^n\begin{bmatrix}
 \bbI\{X_i = a_1\} - P(a_1) \\    \vdots \\  \bbI\{X_i = a_{d} \} - P(a_{d})
 \end{bmatrix}=:  \frac{1}{n }\sum_{i=1}^n U_i^{d} .
\end{equation}
Since $U_i^d := [ \bbI\{X_i=a_1\}-P(a_1),\ldots, \bbI\{X_i=a_d\}-P(a_d)]'$ for $i=1,\ldots, n$ are zero-mean, \iid random vectors,   we may appeal to the Berry-Esseen theorem for functions of \iid random vectors in Theorem~\ref{thm:func_clt}.  Indeed, we note that $f(\bzero)=H(P)$, the Jacobian of $f$ evaluated at $\bv=\bzero$ is 
\begin{align}
\bJ     =   \left[
  \log \bigg( \frac{1}{\rme\, P(a_1)}\bigg)   \,\,\, \ldots \,\,\,  \log \bigg( \frac{1}{\rme\, P(a_d) }\bigg)
\right],
\end{align}
and the $(s,t) \in \{1,\ldots, d\}^2$ element of the covariance matrix of $U_1^d$ is 
\begin{align}
\left[\cov\big(U_1^{d} \big)\right]_{st}  =\left\{ \begin{array}{cc}
P(a_s)\big(1-P(a_s)\big) &  s= t\\
-P(a_s)P(a_t)& s\ne t
\end{array} \right. .
\end{align}
As such, by a routine multiplication of matrices, 
\begin{equation}
\bJ \cov\big(U_1^{d})\bJ' = V(P),
\end{equation}
 the varentropy of the source. We deduce from Theorem~\ref{thm:func_clt} that 
\begin{equation}
\mathfrak{p}\le \Phi\bigg( \frac{ H(P) -\gamma}{\sqrt{V(P)/n}}\bigg)+\frac{c}{\sqrt{n}}
\end{equation}
where $c$ is a finite positive constant (depending on $P$).  By the choice of $M$ and  $\gamma$ in \eqref{eqn:choiceM}--\eqref{eqn:src_gamma}, we see that $\mathfrak{p}$ is no larger than $\eps$.
\end{proof}

This coding scheme is partially universal in the sense that $H(P)$ and $V(P)$  need to be known to be used to determine  and the threshold $\gamma$ in~\eqref{eqn:src_gamma}, but otherwise no other characteristic of the source $P$ is required to be known.  This achievability proof strategy is rather general and can   be applied to rate-distortion (cf.~Section~\ref{sec:lossy_types}), channel coding, joint source-channel coding~\cite{ingberWK,wang11},  as well as multi-terminal problems~\cite{TK14} (cf.~Section~\ref{sec:sw_uni}).

The point we would like to emphasize in this section is the following: In large deviations (error exponent) analysis of   almost lossless source coding, the probability of error in \eqref{eqn:prob_univ_lossless0} is evaluated using, for example, Sanov's theorem~\cite[Ex.~2.12]{Csi97}, or refined versions of it \cite[Ex.~2.7(c)]{Csi97}. In the above proof, the probability of error   is instead estimated using the Berry-Esseen theorem (Theorem \ref{thm:func_clt}) since the deviation of the code rate  from the first-order fundamental limit $H(P)$ is of the order $\Theta(\frac{1}{\sqrt{n}})$ instead of a constant. Essentially, the proof of Theorem~\ref{thm:src_univ} hinges on the fact that for a random vector $X^n$ with distribution $P^n$, the  entropy of the  type $P_{X^n}$, namely the empirical entropy $\hatH(X^n)$, satisfies the following central limit relation:
\begin{equation}
\sqrt{n}\big(    \hatH(  X^n  )  - H(P)   \big) \stackrel{\mathrm{d}}{\longrightarrow}\calN\big(0,V(P) \big). \label{eqn:clt_lossles}
\end{equation}
Finally, we note that the technique to bound the probability in \eqref{eqn:prob_univ_lossless} is similar to that suggested by Kosut and Sankar \cite[Lem.~1]{KosutSankar14}.

\section{Lossy Source Coding: Non-Asymptotic Bounds} \label{sec:lossy_src}
In the second half of this chapter, we consider the lossy source coding problem where the source $P \in\scP(\calX)$ does not have to be discrete. The setup is as in Fig.~\ref{fig:source} and the {\em reconstruction alphabet} (which need not be the same as $\calX$) is denoted as $\hat{\calX}$. For the lossy case, one considers a {\em distortion measure}  $d(x,\hatx)$ between the source $x \in \calX$ and its reconstruction $\hatx\in\hat{\calX}$.  This is simply a mapping from $\calX\times\hat{\calX}$ to the set of non-negative real numbers.

We make the following simplifying assumptions throughout. 
\begin{itemize}
\item[(i)] There exists a $\Delta$ such that $R(P,\Delta)$, defined in \eqref{eqn:rd_func}, is finite.
 \item[(ii)] The distortion measure is such that there exists a finite set $\calE\subset \hat{\calX}$ such that $\bbE[\min_{\hatx\in\calE}d(X,\hatx)]$ is  finite.
\item[(iii)] For every $x\in\calX$, there exists an $\hatx\in\hat{\calX}$ such that $d(x,\hatx)=0$.  
\item[(iv)] The source $P$ and the distortion $d$  are  such that the minimizing test channel $W$ in the rate-distortion function in~\eqref{eqn:rd_func} is unique and we denote it as $W^*$. 
\end{itemize}
These assumptions are not overly restrictive. Indeed, the most common distortion measures and sources, such as finite alphabet sources with the Hamming distortion $d(x,\hatx)=\bbI\{x\ne\hatx\}$ and Gaussian sources with quadratic distortion $d(x,\hatx)=(x-\hatx)^2$, satisfy these assumptions.

An {\em $(M,\Delta,d, \eps)$-code} for the source $P\in\scP(\calX)$ consists   of an {\em encoder} $f:\calX\to\{1,\ldots, M\}$ and a {\em decoder} $\varphi: \{1,\ldots, M\}\to\calX$ such that the {\em   probability of excess distortion} 
\begin{equation}
P\big( \{x\in\calX: d\big(x,\varphi(f(x)) \big)  > \Delta \} \big)\le\eps.
\end{equation}
The number $M$ is called the {\em size} of the code $(f,\varphi)$. 

Given a source $P$, define the {\em lossy source coding non-asymptotic fundamental limit} as 
\begin{equation}
M^*(P, \Delta, d,\eps):= \min\big\{ M\in\bbN \,:\, \exists \mbox{ an }  (M,\Delta,d, \eps)\mbox{-code for } P\big\}.
\end{equation}
In the following subsections, we present a non-asymptotic achievability bound and a corresponding converse bound, both of which we evaluate asymptotically in the next section.

\subsection{An Achievability Bound}
The non-asymptotic achievability bound is based on Shannon's  random coding argument, and is due to Kostina-Verd\'u~\cite[Thm.~9]{kost12}. The encoder is similar to the familiar joint typicality encoder~\cite[Ch.~2]{elgamal} with typicality  defined in terms of the distortion measure. To state the   bound  compactly, define the $\Delta$-distortion ball around $x$ as
\begin{equation}
\calB_\Delta(x) :=\big\{ \hatx\in\hat{\calX} : d(x,\hatx)\le\Delta\big\}.
\end{equation}
\begin{proposition}[Random Coding Bound] \label{prop:lossy_ach}
There exists an $(M,\Delta,d, \eps)$-code satisfying
\begin{equation}
\eps\le \inf_{P_{\hatX}}\bbE_X\Big[\rme^{-MP_{\hatX} (\calB_\Delta(X))}  \Big].
\end{equation}
\end{proposition}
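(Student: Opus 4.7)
The plan is to use a standard random coding argument due essentially to Shannon, in the spirit of the joint typicality encoding scheme but with typicality replaced by the $\Delta$-distortion ball condition. Fix an arbitrary distribution $P_{\hatX} \in \scP(\hat{\calX})$ and draw $M$ codewords $\hatX_1, \ldots, \hatX_M$ independently and identically from $P_{\hatX}$, forming a random codebook $\calC$. The encoder $f$ inspects the realization $x$ of the source, and outputs any index $j \in \{1,\ldots,M\}$ for which $d(x,\hatX_j) \le \Delta$, breaking ties arbitrarily; if no such $j$ exists, it outputs an arbitrary fixed index (say, $1$) and an excess-distortion event is declared. The decoder $\varphi$ simply returns $\hatX_{f(x)}$.

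Next I would compute the conditional probability of excess distortion given the source realization $X=x$, averaged over the random codebook. By the \iid construction, the events $\{d(x,\hatX_j) > \Delta\}$ are independent across $j$, each occurring with probability $1 - P_{\hatX}(\calB_\Delta(x))$. Hence the conditional probability that \emph{no} codeword lies in $\calB_\Delta(x)$ equals $(1 - P_{\hatX}(\calB_\Delta(x)))^M$. Applying the elementary inequality $(1-p)^M \le \exp(-Mp)$ for $p \in [0,1]$ yields
\begin{equation}
\Pr\big( d(x,\varphi(f(x))) > \Delta \,\big|\, X = x \big) \le \exp\big( -M \, P_{\hatX}(\calB_\Delta(x)) \big). \nn
\end{equation}

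Averaging over $X \sim P$ and then over the random codebook by Fubini, the expected probability of excess distortion over the ensemble is at most $\bbE_X[\exp(-M P_{\hatX}(\calB_\Delta(X)))]$. By the usual random coding argument, there must exist at least one deterministic codebook achieving a probability of excess distortion no larger than this ensemble average; the associated encoder/decoder pair constitutes an $(M,\Delta,d,\eps)$-code with $\eps$ bounded by the displayed expression. Taking the infimum over the free parameter $P_{\hatX}$ gives the claimed bound.

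The proof is routine and contains no real obstacle; the only mild subtleties are measurability of the encoder (which is ensured by the assumptions in Section~\ref{sec:lossy_src}, in particular finiteness of $\bbE[\min_{\hatx \in \calE} d(X,\hatx)]$ which guarantees that a reasonable default reconstruction exists and that $\calB_\Delta(x)$ has positive mass under suitable $P_{\hatX}$ for $\Delta$ above the rate-distortion threshold) and the interchange of expectation and random codebook average, both of which are standard. One should note that the infimum need not be attained; the statement only requires the bound for every $P_{\hatX}$, which is exactly what the above construction yields.
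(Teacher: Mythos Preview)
Your proof is correct and follows essentially the same approach as the paper: random codebook drawn \iid from $P_{\hatX}$, error event reduces to no codeword falling in $\calB_\Delta(x)$, independence gives $(1-P_{\hatX}(\calB_\Delta(x)))^M$, then apply $(1-p)^M\le\rme^{-Mp}$ and optimize over $P_{\hatX}$. The only cosmetic difference is that the paper phrases the encoder as $\argmin_m d(x,\hatx(m))$ rather than ``any index within distortion $\Delta$,'' but the resulting error event and analysis are identical.
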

\begin{proof}
We use a random coding argument. Fix $P_{\hatX}\in\scP(\hat{\calX})$. Generate $M$ codewords $\hatx(m), m=1,\ldots, M$ independently according to $P_{\hatX}$.  The encoder finds an arbitrary $\hatm$ satisfying
\begin{equation}
\hatm\in\argmin_m d \big(x,\hatx(m)\big).
\end{equation}
The excess distortion probability can then be bounded as 
\begin{align}
\Pr\big( d(X,\hatX)>\Delta\big) &=\bbE_X \big[ \Pr\big( d(X,\hatX)>\Delta \,\big|\, X \big) \big] \\
&=\bbE_X \bigg[ \prod_{m=1}^M \Pr\big( d(X,\hatX(m))>\Delta \,\big|\, X \big) \bigg] \\
&=\bbE_X \bigg[ \prod_{m=1}^M \Big( 1-P_{\hatX}\big(\calB_\Delta(X(m))\big) \Big)  \bigg] \\*
&=\bbE_X \bigg[\Big( 1-P_{\hatX}\big(\calB_\Delta(X )\big) \Big)^M  \bigg] 
\end{align}
Applying the inequality $(1-x)^k\le \rme^{-kx}$ and minimizing over all possible choices of $P_{\hatX}$ completes the proof. 
\end{proof}
\subsection{A  Converse Bound}
In order to state the converse bound, we need to introduce a quantity that is fundamental to rate-distortion theory. For discrete random variables with the Hamming distortion measure ($d(x,\hatx)=\bbI\{x\ne\hatx\}$), it coincides with the self-information random variable, which, as we have seen in  Section~\ref{sec:asymp_lossless}, plays a key role in the asymptotic expansion of $\log M^*(P^n,\eps)$. 

The {\em $\Delta$-tilted information of $x$} \cite{Kot00, kost12} for a given distortion measure $d$ (whose dependence is suppressed) is defined as 
\begin{equation}
\jmath(x;P,\Delta):= -\log \bbE_{\hatX^*} \Big[ \exp\big(\lambda^* \Delta-\lambda^*d(x,\hatX^*)\big) \Big]
\end{equation}
where $\hatX^*$ is distributed as  $PW^*$ and 
\begin{equation}
\lambda^* :=- \frac{\partial R(P,\Delta')}{\partial \Delta'}\bigg|_{\Delta'=\Delta}.
\end{equation}
The differentiability of the rate-distortion function with respect to $\Delta$ is guaranteed by the assumptions in Section~\ref{sec:lossy_src}.   The term $\Delta$-tilted information  was introduced by Kostina and Verd\'u \cite{kost12}. 

\begin{example} \label{ex:gaussian_source}
Consider the Gaussian source $X\sim P(x)=\calN(x;0,\sigma^2)$ with squared-error  distortion measure $d(x,\hatx)=(x-\hatx)^2$. For this problem, simple calculations reveal that 
\begin{equation}
\jmath(x;P,\Delta)=\frac{1}{2}\log\frac{\sigma^2}{\Delta}-\bigg( \frac{x^2}{\sigma^2}-1\bigg) \frac{\log\rme}{2}
\end{equation}
if $\Delta\le\sigma^2$, and $0$ otherwise.
\end{example}
One important property of the $\Delta$-tilted information of $x$ is that the expectation is exactly equal to the rate-distortion function, i.e.,
\begin{equation}
R(P,\Delta) = \bbE_X  \big[ \jmath(X;P,\Delta) \big].
\end{equation}
For the Gaussian source with quadratic distortion, the equality above is easy to verify from Example \ref{ex:gaussian_source}.

In view of the asymptotic expansion of lossless source coding in Theorem~\ref{thm:asymp_src}, we may expect that the variance of $\jmath(X;P,\Delta)$ characterizes the second-order asymptotics of rate-distortion. This is indeed so, as we will see in the following.   Other properties of the  $\Delta$-tilted information are summarized in \cite[Lem.~1.4]{Csis74} and \cite[Properties~1~\&~2]{kost12}.

Equipped with the definition of the $\Delta$-tilted information, we are now ready to state the non-asymptotic converse bound that will turn out to be amenable to asymptotic analyses.  This elegant  bound was proved by Kostina-Verd\'u~\cite[Thm.~7]{kost12}.
\begin{proposition}[Converse Bound for Lossy Compression] \label{prop:rd_conv}
Fix $\gamma>0$. Every $(M,\Delta,d, \eps)$-code must satisfy
\begin{equation}
\eps\ge \Pr\big( \jmath(X;P,\Delta)  \ge\log M+\gamma  \big) -\exp(-\gamma). \label{eqn:dtilted_bd}
\end{equation}
\end{proposition}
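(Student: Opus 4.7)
The plan is to reduce the converse to a careful application of a Lagrangian identity for the $\Delta$-tilted information. Fix any $(M,\Delta,d,\eps)$-code with encoder $f$ and decoder $\varphi$, and let $\calC:=\{\varphi(1),\ldots,\varphi(M)\}$ be the resulting codebook. Since the encoder/decoder are irrelevant beyond the codebook, the success event is contained in
\begin{equation}
A \, := \, \bigcup_{m=1}^M \{ x\in\calX : d(x,\varphi(m))\le\Delta\},
\end{equation}
and by the excess distortion guarantee, $P(A)\ge 1-\eps$. Let $E:=\{x:\jmath(x;P,\Delta)\ge \log M+\gamma\}$. Writing $P(A)\le P(E^{\mathrm{c}})+P(A\cap E)$, the target inequality \eqref{eqn:dtilted_bd} is equivalent to showing that $P(A\cap E)\le \exp(-\gamma)$, so the rest of the proof focuses on bounding this single quantity.

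Next I would pointwise-bound the indicator $\bbI\{x\in A\cap E\}$ by an auxiliary exponential that unlocks the key identity for $\jmath$. Two trivial bounds do the job. First, since $\lambda^*\ge 0$ (because $R(P,\cdot)$ is non-increasing) and for $x\in A$ at least one $m$ satisfies $\lambda^*(\Delta-d(x,\varphi(m)))\ge 0$, we have
\begin{equation}
\bbI\{x\in A\} \, \le \, \sum_{m=1}^M \exp\big(\lambda^*\Delta-\lambda^* d(x,\varphi(m))\big).
\end{equation}
Second, on $E$ the quantity $\exp(\jmath(x;P,\Delta)-\log M-\gamma)$ is at least one, so
\begin{equation}
\bbI\{x\in E\} \, \le \, \frac{1}{M\exp(\gamma)}\exp\big(\jmath(x;P,\Delta)\big).
\end{equation}
Multiplying these bounds and taking expectations with respect to $X\sim P$ gives
\begin{equation}
P(A\cap E)\,\le\, \frac{1}{M\exp(\gamma)}\sum_{m=1}^M \bbE\!\left[\exp\big(\jmath(X;P,\Delta)+\lambda^*\Delta-\lambda^* d(X,\varphi(m))\big)\right].
\end{equation}

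The crucial step, which is the part I expect to carry most of the conceptual weight, is to invoke the Lagrangian duality property of the $\Delta$-tilted information (the analogue of Csisz\'ar's optimality condition; Property 2 in Kostina-Verd\'u~\cite{kost12}): for \emph{every} $\hatx\in\hat{\calX}$,
\begin{equation}
\bbE\!\left[\exp\big(\jmath(X;P,\Delta)+\lambda^*\Delta-\lambda^* d(X,\hatx)\big)\right]\,\le\, 1,
\end{equation}
with equality for $\hatx$ in the support of $PW^*$. This identity follows from the KKT conditions for the convex program defining $R(P,\Delta)$ and is the only nontrivial input beyond the assumptions of Section~\ref{sec:lossy_src}. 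Applying it to each $\hatx=\varphi(m)$ in the preceding display collapses each summand to at most $1$, yielding $P(A\cap E)\le M/(M\exp(\gamma))=\exp(-\gamma)$, and hence the claim. The main obstacle is thus not the probabilistic estimation but correctly identifying and justifying the Lagrangian inequality for $\jmath$; once that is in hand, the remaining steps are the two elementary pointwise indicator bounds combined by linearity of expectation.
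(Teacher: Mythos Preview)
Your proof is correct and follows essentially the same argument as Kostina--Verd\'u~\cite[Thm.~7]{kost12}, which the paper cites but does not reproduce; the decomposition into $P(A)\le P(E^{\mathrm{c}})+P(A\cap E)$ and the two pointwise indicator bounds, followed by the Lagrangian identity $\bbE[\exp(\jmath(X;P,\Delta)+\lambda^*\Delta-\lambda^* d(X,\hatx))]\le 1$ for every $\hatx$, is precisely their proof. Since the paper only refers the reader to~\cite{kost12} for this proposition, there is nothing further to compare.
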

Observe that this is a generalization of Proposition~\ref{thm:src_conv}  for the lossless case. In particular, it generalizes  the bound in~\eqref{eqn:sc_conv}. It is also remarkably similar to the Verd\'u-Han  information spectrum converse bound~\cite[Lem.~4]{VH94} for channel coding (reviewed in \eqref{eqn:vh} in Section~\ref{sec:ch_conv}). This is unsurprising, as channel coding and rate-distortion are duals in many ways.  We refer the reader to~\cite[Thm.~7]{kost12} for the proof of Proposition~\ref{prop:rd_conv}. 

\section{Lossy Source Coding:  Asymptotic Expansions} 
As  mentioned in the introduction of this chapter, the first-order fundamental limit for lossy source coding of stationary and memoryless sources $P^n$ is the rate distortion function $R(P,\Delta)$. We are interested in finer approximations of the non-asymptotic fundamental limit $M^*(P^n, \Delta, d^{(n)} ,\eps)$ where $P^n$ is the distribution of a stationary, memoryless source $X$ and the distortion measure $d^{(n)}:\calX^n\to\hat{\calX}^n$ is {\em separable}, i.e., 
\begin{equation}
d^{(n)}(\bx,\hat{\bx})=\frac{1 }{n}\sum_{i=1}^n d(x_i, \hatx_i). 
\end{equation}
for any $(\bx,\hat{\bx})\in\calX^n\times\hat{\calX}^n$. 

Let the variance of the $\Delta$-tilted information of $X$ be termed the {\em rate-dispersion function}
\begin{equation}
V(P,\Delta):=\var\big( \jmath(X;P,\Delta) \big). \label{eqn:rate_disp}
\end{equation}

\begin{example}
Let us revisit the Gaussian source with quadratic distortion in Example~\ref{ex:gaussian_source}.  It is easy to verify that the variance of $\jmath(X;P,\Delta)$ is
\begin{equation}
V(P,\Delta)=\frac{\log^2\rme}{2} 
\end{equation}
if $\Delta\le\sigma^2$, and $0$ otherwise.
Hence, interestingly, the rate-dispersion function for the Gaussian source with quadratic distortion depends neither on the source variance $\sigma^2$ nor the distortion $\Delta$ if $\Delta\le\sigma^2$. This is peculiar to the Gaussian source with quadratic distortion.
\end{example}
\begin{theorem} \label{thm:disp_lossy}
If $P$ and $d$ satisfy the assumptions in Section~\ref{sec:lossy_src} and, in addition,  $V(P,\Delta)>0$ and  $\bbE_{P\times PW^*}[d(X,\hatX^*)^9]<\infty$, 
\begin{equation}
\log M^*(P^n, \Delta, d^{(n)} ,\eps)=nR(P,\Delta) - \sqrt{nV(P,\Delta)} \Phi^{-1}(\eps) + O(\log n).  \label{eqn:asymp_lossy}
\end{equation}
\end{theorem}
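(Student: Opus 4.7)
The plan is to evaluate the two non-asymptotic bounds in Propositions~\ref{prop:lossy_ach} and~\ref{prop:rd_conv} in the product setting $P^n$ with the separable distortion measure $d^{(n)}$, using the fact that for stationary memoryless sources the $\Delta$-tilted information is additive, i.e.,
\begin{equation}
\jmath^{(n)}(\bx; P^n, \Delta) = \sum_{i=1}^n \jmath(x_i; P, \Delta),
\end{equation}
so that a Berry-Esseen analysis (Theorem~\ref{thm:berry_iid}) controls its fluctuations around its mean $n R(P,\Delta)$ with variance $n V(P,\Delta)$.

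For the converse, I would apply Proposition~\ref{prop:rd_conv} to any $(M,\Delta,d^{(n)},\eps)$-code with the choice $\gamma = \tfrac{1}{2}\log n$ so that the $\exp(-\gamma)$ residual is $O(n^{-1/2})$ and can be absorbed. Rearranging \eqref{eqn:dtilted_bd} and using additivity gives
\begin{equation}
\Pr\bigg( \sum_{i=1}^n \jmath(X_i;P,\Delta) \ge \log M + \tfrac{1}{2}\log n\bigg) \le \eps + n^{-1/2}.
\end{equation}
Since $\jmath(X_i;P,\Delta)$ has mean $R(P,\Delta)$, variance $V(P,\Delta)>0$, and (by the ninth-moment hypothesis, which dominates the third absolute moment) finite third absolute moment, Theorem~\ref{thm:berry_iid} together with a Taylor expansion of $\Phi^{-1}$ around $\eps$ gives the lower bound $\log M \ge nR(P,\Delta) - \sqrt{nV(P,\Delta)}\Phi^{-1}(\eps) - O(\log n)$.

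For the achievability, I would apply Proposition~\ref{prop:lossy_ach} with the product test distribution $P_{\hat X^n} = (PW^*)^n$, where $W^*$ is the unique minimizer in \eqref{eqn:rd_func}. The core technical step is a distortion-ball lower bound of the form
\begin{equation}
-\log (PW^*)^n\bigl(\calB_\Delta(\bx)\bigr) \le \jmath^{(n)}(\bx;P^n,\Delta) + \tfrac{1}{2}\log n + O(1)
\end{equation}
valid outside a set of $P^n$-probability $O(n^{-1/2})$. Given this, choosing
\begin{equation}
\log M = nR(P,\Delta) - \sqrt{nV(P,\Delta)}\Phi^{-1}\bigl(\eps - n^{-1/2}\bigr) + c \log n
\end{equation}
for a sufficiently large $c$, splitting the outer expectation in Proposition~\ref{prop:lossy_ach} according to whether $\sum_i \jmath(X_i;P,\Delta)$ is above or below $\log M - \tfrac{1}{2}\log n - O(1)$, and applying the \iid{} Berry-Esseen theorem to the resulting tail probability, yields a total excess-distortion probability at most $\eps$, establishing the matching upper bound on $\log M^*(P^n,\Delta,d^{(n)},\eps)$ up to $O(\log n)$.

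The main obstacle is the ball-probability estimate. Unlike the lossless case, the ball $\calB_\Delta(\bx)$ is not a type class, and $-\log (PW^*)^n(\calB_\Delta(\bx))$ is not literally a sum of \iid{} quantities; one must apply a local central limit / Berry-Esseen argument to the conditional distortion $d^{(n)}(\bx,\hat X^n)$ under $\hat X^n\sim (PW^*)^n$ and then tilt by $\lambda^*$ to recover $\jmath^{(n)}$. This is precisely the Kostina-Verd\'u lemma underlying \cite{kost12}, and the ninth-moment hypothesis $\bbE[d(X,\hat X^*)^9]<\infty$ is needed there to control the remainder terms in this refined CLT uniformly enough in $\bx$ that the resulting estimate holds outside a $P^n$-exceptional set of probability $o(1)$. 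Once this ball bound is in hand, the rest of the argument reduces to the same Berry-Esseen calculus used in the converse.
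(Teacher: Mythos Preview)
Your proposal is correct and follows essentially the same approach as the paper: the converse via Proposition~\ref{prop:rd_conv} with $\gamma=\tfrac12\log n$ and a Berry-Esseen estimate on the additive $\Delta$-tilted information, and the direct part via Proposition~\ref{prop:lossy_ach} with $P_{\hat X^n}=(PW^*)^n$, the Kostina--Verd\'u ball-probability lemma (stated in the paper as Lemma~\ref{lem:tech}, with a generic constant $b$ rather than your $\tfrac12$), and a split of the expectation according to the size of $\log M-\sum_i\jmath(X_i;P,\Delta)$. The only cosmetic difference is that the paper splits at the threshold $\log\frac{\ln n}{2}$ for $G_n$ rather than at zero, which makes the ``good'' branch contribute $n^{-1/2}$ rather than a constant times $\eps$, but either choice yields the $O(\log n)$ remainder.
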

For the case of zero rate-dispersion function $V(P,\Delta)=0$, the reader is referred to \cite[Thm.~12]{kost12}. The condition $\bbE_{P\times PW^*}[d(X,\hatX^*)^9]<\infty$ is a technical one, made to ensure that the third absolute moment of $\jmath( X;P,\Delta)$ is finite for the applicability of the Berry-Esseen theorem.
\begin{proof}[Proof sketch]
For an \iid  source $X^n$, the $\Delta$-tilted information single-letterizes because the optimum test channel in the rate-distortion formula also has the required product structure. Hence,
\begin{equation}
\jmath(X^n;P^n,\Delta)=\sum_{i=1}^n \jmath(X_i;P,\Delta).
\end{equation}
Using the Berry-Esseen theorem, the probability in  \eqref{eqn:dtilted_bd} can be lower bounded as
\begin{equation} \label{eqn:be_lossy}
\Pr\big( \jmath(X^n;P^n,\Delta)  \ge\log M+\gamma  \big)\ge \Phi\bigg( \frac{ nR(P,\Delta) - \log M -\gamma}{\sqrt{nV(P,\Delta)}}\bigg) - \frac{\kappa}{\sqrt{n}}
\end{equation}
where $\kappa$ is a function of the third absolute moment of $\jmath(X;P,\Delta)$ which is finite by the assumption that $\bbE_{P\times PW^*}[d(X,\hatX^*)^9]<\infty$. Now set $\gamma = \frac{1}{2 }\log n$ and $M$ to the  smallest integer larger than 
\begin{equation}
\exp\bigg( nR(P,\Delta) - \sqrt{nV(P,\Delta)} \Phi^{-1}\Big(\eps' - \frac{\kappa+1/2}{\sqrt{n}} \Big) -\gamma\bigg). \label{eqn:rd_cws}
\end{equation}
By the non-asymptotic converse bound in Proposition~\ref{prop:rd_conv}, we find that $\eps\ge\eps'$. This implies that the number of codewords must not be smaller than that stated in \eqref{eqn:rd_cws}, concluding the converse proof.

For the direct part, we need a technical lemma~\cite[Lem.~2]{kost12} relating the $P_{\hatX^*}^n$-probability of a $\Delta$-distortion  ball  to the $\Delta$-tilted information. 
\begin{lemma} \label{lem:tech}
There exist constants $c,b, k>0$ such that for all sufficiently large $n$,
\begin{equation}
\Pr\left( \log\frac{1}{P_{  \hatX^{*}}^n(\calB_\Delta(X^n))}  > \sum_{i=1}^n \jmath(X_i;P,\Delta) +b\log n + c \right)\le\frac{k}{\sqrt{n}}.
\end{equation}
\end{lemma}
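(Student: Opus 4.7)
The plan is to realize $P_{\hatX^*}^n(\calB_\Delta(x^n))$ as an exponentially tilted expectation in which $\sum_{i=1}^n\jmath(x_i;P,\Delta)$ appears explicitly as the exponent, and then to lower-bound the residual tilted expectation polynomially in $n$ on a high-probability subset of source realizations. The tilting identity comes from the Gibbs form of the optimum test channel $W^*$: the KKT conditions for the rate-distortion program give
\begin{equation*}
W^*(\hatx\mid x) = P_{\hatX^*}(\hatx)\,\rme^{\,\lambda^*\Delta - \lambda^* d(x,\hatx) + \jmath(x;P,\Delta)},
\end{equation*}
because the normalizer in the Gibbs form equals $\rme^{-\lambda^*\Delta-\jmath(x;P,\Delta)}$ by the definition of $\jmath$. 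Inverting this, multiplying over the $n$ coordinates, and summing over $\hatx^n\in\calB_\Delta(x^n)$ gives the key identity
\begin{equation*}
P_{\hatX^*}^n\big(\calB_\Delta(x^n)\big) = \exp\!\bigg(\!-\!\sum_{i=1}^n\jmath(x_i;P,\Delta)\bigg)\,\bbE_{W^{*n}(\cdot|x^n)}\!\Big[\rme^{\lambda^* S_n}\bbI\{S_n\le 0\}\Big],
\end{equation*}
where $S_n := \sum_{i=1}^n(d(\hatX_i,x_i)-\Delta)$ with $\hatX_i\sim W^*(\cdot\mid x_i)$ independent. Hence the lemma reduces to showing that the residual tilted expectation is at least $c_0 n^{-b}$ for absolute constants $c_0,b>0$ on a $P^n$-event of probability at least $1-O(1/\sqrt{n})$.

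Under $W^{*n}(\cdot|x^n)$, $S_n$ is a sum of independent random variables with conditional mean $\mu_n(x^n):=\sum_i(D^*(x_i)-\Delta)$, where $D^*(x):=\bbE_{W^*(\cdot|x)}[d(\hatX,x)]$. The optimality of $W^*$ forces $\bbE_{X\sim P}[D^*(X)]=\Delta$, so under $P^n$ the quantity $\mu_n(X^n)$ is a centered sum of \iid random variables with finite variance $nv_P$ where $v_P:=\var_P(D^*(X))<\infty$ (guaranteed by the ninth-moment hypothesis). An application of Berry-Esseen (Theorem~\ref{thm:berry_gen}) then yields $P^n(\calE)\ge 1-O(1/\sqrt{n})$ for the event $\calE:=\{|\mu_n(X^n)|\le\sqrt{v_P\,n\log n}\}$. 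On $\calE$, I would use the layer-cake identity
\begin{equation*}
\bbE_{W^{*n}(\cdot|x^n)}\!\big[\rme^{\lambda^* S_n}\bbI\{S_n\le 0\}\big] = \lambda^*\!\int_0^\infty\!\rme^{-\lambda^* u}\,W^{*n}\!\big(-u<S_n\le 0\,\big|\,x^n\big)\,\rmd u,
\end{equation*}
and apply Berry-Esseen under the tilted measure to the centered summands $d(\hatX_i,x_i)-D^*(x_i)$ to lower-bound the integrand. For $u$ of order $\sqrt{n\log n}$ the interval $[-u,0]$ contains the mean $\mu_n(x^n)$ on $\calE$, so the tilted probability is bounded below by a universal constant up to the subdominant $O(1/\sqrt{n})$ Berry-Esseen error; integrating this against the weight $\rme^{-\lambda^* u}$ then produces a net lower bound of order $n^{-b}$ for some $b>0$ depending on $\lambda^*$ and $v_P$.

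The main obstacle is this localization step. A bare Berry-Esseen bound gives a uniform $O(n^{-1/2})$ error that swamps the probability of any $O(1)$-length interval near zero, so the argument must either widen the window to $\sqrt{n\log n}$ via the layer-cake identity as above, or invoke a Bahadur-Rao-type strong large-deviation theorem (a reverse-direction analogue of Theorem~\ref{thm:str_ld}) to control the $O(1)$ window directly. Either route gives the lemma with some finite constants $b,c,k>0$; the ninth-moment assumption enters through the boundedness of third absolute moments of the tilted summands, which is required for Berry-Esseen to apply uniformly in $x^n\in\calE$. Combining the high-probability bound on $\calE$, the identity from the first paragraph, and the lower bound on the tilted expectation yields the claim.
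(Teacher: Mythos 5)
Your tilting identity and the high-level plan are correct. The Gibbs form of $W^*$ (from the KKT conditions of the rate-distortion program) does yield
\begin{equation*}
P_{\hatX^*}^n\big(\calB_\Delta(x^n)\big) \;=\; \exp\!\bigg(\!-\!\sum_{i=1}^n\jmath(x_i;P,\Delta)\bigg)\,\bbE_{W^{*n}(\cdot|x^n)}\!\Big[\rme^{\lambda^* S_n}\bbI\{S_n\le 0\}\Big],
\end{equation*}
so the lemma is equivalent to a polynomial-in-$n$ lower bound on the tilted expectation uniformly over a $P^n$-event of probability $\ge 1-O(n^{-1/2})$, and your Berry--Esseen estimate of $P^n(\calE^c)$ with the $\sqrt{v_P\,n\log n}$ threshold is correct.

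The gap is in the localization step. The layer-cake integral $\lambda^*\int_0^\infty\rme^{-\lambda^* u}\,W^{*n}(-u\le S_n\le 0\,|\,x^n)\,\rmd u$ is governed by $u=O(1/\lambda^*)$: the contribution from $u\gtrsim\sqrt{n\log n}$ is at most $\exp(-\lambda^*\sqrt{n\log n})$, which is super-polynomially small, not $n^{-b}$, so ``widening the window'' to $\sqrt{n\log n}$ buys nothing. Moreover, when $\mu_n(x^n)>0$ the interval $[-u,0]$ never contains the conditional mean, and on $\calE$ the probability $W^{*n}(-u\le S_n\le 0)\le W^{*n}(S_n\le 0)$ can itself be polynomially small; it is not ``bounded below by a universal constant.'' In the $u=O(1)$ range that actually dominates, the interval probability is $\Theta(n^{-1/2})$ --- exactly the same order as the additive Berry--Esseen remainder --- so Berry--Esseen alone cannot certify a positive lower bound. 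What the argument needs, and what you flag only as an ``alternative route,'' is a lower-bound companion to Theorem~\ref{thm:str_ld}: a local limit theorem or Esseen-smoothing/Petrov-type estimate, uniform in $x^n\in\calE$, giving $\bbE_{W^{*n}(\cdot|x^n)}[\rme^{\lambda^* S_n}\bbI\{S_n\le 0\}]\gtrsim n^{-1/2}$, with separate treatment of the lattice case (e.g.\ a finite-alphabet source with Hamming distortion and rational $\Delta$, where $d(\hatX_i,x_i)-\Delta$ lives on a shifted lattice). That lower bound is the substantive content of the lemma; without proving or carefully citing it the argument does not close. The monograph itself does not re-derive the lemma but refers to \cite[Lem.~2]{kost12}, whose proof supplies precisely this missing strong-large-deviations lower bound.
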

This lemma says that we can control the $P_{  \hatX^{*}}^n$-probability of $\Delta$-distortion balls centered at a random source sequence $X^n$ in terms of the $\Delta$-tilted information. Now define the random variable
\begin{equation}
G_n := \log M - \sum_{i=1}^n \jmath(X_i;P,\Delta)-b\log n-c.
\end{equation}
Choose the distribution $P_{\hatX}$ in the non-asymptotic achievability bound in Proposition~\ref{prop:lossy_ach} to be the product distribution  $P_{  \hatX^{*}}^n$.  Applying Lemma~\ref{lem:tech}, we find that
\begin{align}
\eps' &\le \bbE\big[ \rme^{- M P_{  \hatX^{*}}^n (\calB_\Delta(X^n)) }\big] \\*
&\le \bbE\big[ \rme^{- \exp(G_n) }\big] +\frac{k}{\sqrt{n}}\\
&\le \Pr\left(G_n\le\log\frac{\ln n}{2}\right) + \frac{1}{\sqrt{n}} \Pr\left( G_n >\log\frac{\ln n}{2}\right)+\frac{k}{\sqrt{n}}.
\end{align}
where in the final step, we split the expectation into two parts depending on whether $G_n>\log \frac{\ln n}{2}$ or otherwise. Since $G_n$ is a sum of \iid random variables, the first probability can be evaluated using the Berry-Esseen theorem similarly to \eqref{eqn:be_lossy}, and the second bounded above by $1$.
\end{proof}
\section{Second-Order Asymptotics of Lossy Source Coding via  the Method of Types}   \label{sec:lossy_types}
In this final section of the chapter, we briefly comment on how Theorem~\ref{thm:disp_lossy} can be obtained by means of a technique that is based on the method of types. Of course, this technique only applies to discrete  (finite alphabet) sources so it is more restrictive than the Kostina-Verd\'u~\cite{kost12} method we presented. However, as with all results proved using the method of types, the analysis technique and the form of the result may be more insightful to some readers. The  exposition in this section  is due to Ingber and Kochman~\cite{ingber11}.

We make the simplifying  assumption that the rate-distortion function $R(P,\Delta)$ is differentiable with respect to $\Delta$ (guaranteed by the assumption (iv) in Section~\ref{sec:lossy_src}) and twice differentiable with respect to the probability mass function $P$. Ingber and Kochman~\cite{ingber11} considered the fundamental quantity
\begin{equation}
R'(x ; P,\Delta) := \frac{\partial R(Q,\Delta)}{\partial Q(x)} \bigg|_{Q=P}. \label{eqn:Rprime}
\end{equation}
It can be shown  \cite[Thm.~2.2]{kostina}  that $R'(x;P,\Delta)$ and the $\Delta$-tilted information are related as follows:  
\begin{equation}
R'(x;P,\Delta) =\jmath(x;P,\Delta)-\log\rme.
\end{equation}
Hence the expectation of $R'(X;P,\Delta)$ is the rate-distortion function $R(P,\Delta)$ up to a constant and its variance is exactly  the rate-dispersion function $V(P,\Delta)$ in \eqref{eqn:rate_disp}. 

A {\em codeword} $\hat{\bx}(m) \in \hat{\calX}^n$ is simply an output of the decoder $\varphi(m)$. The collection of all $M$ codewords forms the {\em codebook}. Given  a codebook $\calC=\{\hat{\bx}(1),\ldots,\hat{\bx}(M)\}$, we say that $\bx\in\calX^n$ is {\em $\Delta$-covered} by $\calC$ if there exists a  codeword $\hat{\bx}(m) \in\calC$ such that $d^{(n)}(\bx,\hat{\bx}(m))\le\Delta$. 

The analysis technique in \cite{ingber11} is based on the following lemma. 
\begin{lemma}[Type Covering] \label{lem:type_cov}
For every type $Q\in\scP_n(\calX)$, there exists a codebook $\calC:=\{\hat{\bx}(1),\ldots,\hat{\bx}(M)\} \subset\hat{\calX}^n$ of size $M$ and a function $g_1(|\calX|,|\hat{\calX}|)$ such that  every $\bx\in\calT_P$ is $\Delta$-covered by $\calC$, and 
\begin{equation}
\frac{1}{n}\log M\le R(Q,\Delta) + g_1(|\calX|,|\hat{\calX}|)\frac{\log n}{n}  \label{eqn:type_cov1}
\end{equation}
Furthermore, let the code size  $M$ and  a type $Q\in\scP_n(\calX)$ be such that  $\log M< nR(Q,\Delta)$. Then for every codebook $\calC \subset\hat{\calX}^n$  of size $M$ the fraction of $\calT_P$ that is $\Delta$-covered by $\calC$ is at most 
\begin{equation}
\exp\left(-n  R(Q,\Delta) + \log M  -g_2(|\calX|,|\hat{\calX}| )\log n\right)  \label{eqn:type_cov2}
\end{equation}
for some function $g_2(|\calX|,|\hat{\calX}|)$.
\end{lemma}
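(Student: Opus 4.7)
The proof proposal relies on random coding within a carefully chosen type class for the achievability part, and a counting argument based on $V$-shells for the converse. Throughout, the polynomial factors arising from the type counting lemma (Lemma \ref{lem:type_count}) and the size-of-type-class bounds (Lemma \ref{lem:size_type_class}) will produce the $\log n$ corrections that define $g_1$ and $g_2$.

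For the achievability bound \eqref{eqn:type_cov1}, my plan is to fix the type $Q\in\scP_n(\calX)$ and let $W^*\in\scP(\hat{\calX}|\calX)$ be the minimizer in \eqref{eqn:rd_func} at distortion $\Delta$. First I would approximate $W^*$ by a conditional type $\hat W\in\scV_n(\hat{\calX};Q)$ chosen so that $Q\hat W$ is an $n$-type and such that $I(Q,\hat W)\le R(Q,\Delta)+O(\log n / n)$; that such an approximation exists follows from the density of $n$-types in $\scP(\hat{\calX})$ combined with the continuity of mutual information. Let $\hat Q := Q\hat W$ and draw $M$ codewords $\hat{\mathbf{x}}(1),\ldots,\hat{\mathbf{x}}(M)$ independently and uniformly from $\calT_{\hat Q}$. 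For a fixed source sequence $\bx\in\calT_Q$ the probability that one codeword $\Delta$-covers $\bx$ is bounded below, by Lemma \ref{lem:size_type_class}, by the ratio $|\calT_{\hat W}(\bx)|/|\calT_{\hat Q}|\ge (n+1)^{-|\calX||\hat{\calX}|}\exp(-nI(Q,\hat W))$. Applying the standard union bound $(1-p)^M\le \exp(-Mp)$ and choosing $M$ to be the smallest integer exceeding $\exp(nR(Q,\Delta)+g_1(|\calX|,|\hat{\calX}|)\log n)$ with $g_1$ sufficiently large, the expected fraction of $\calT_Q$ that is not covered shrinks below $|\calT_Q|^{-1}$. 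By the probabilistic method, a deterministic codebook with the required property exists. The main bookkeeping obstacle here is tracking all the polynomial factors from the method of types to produce a clean $\log n /n$ remainder, rather than a larger third-order term.

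For the converse bound \eqref{eqn:type_cov2}, the plan is to count, for a single codeword $\hat{\mathbf{x}}$, the maximum number of source sequences in $\calT_Q$ that lie within its $\Delta$-distortion ball. Any such $\bx\in\calT_Q$ has a joint type with $\hat{\mathbf{x}}$ of the form $Q\times V$, where $V$ is a conditional type of $\by$ given $\bx$ satisfying $\bbE_{Q\times V}[d(X,\hat X)]\le \Delta$. Since $\hat{\mathbf{x}}$ has some type $P'\in\scP_n(\hat{\calX})$, the number of $\bx\in\calT_Q$ yielding a given joint type $Q\times V$ with $\hat{\mathbf{x}}$ is bounded by $|\calT_{\tilde V}(\hat{\mathbf{x}})|\le \exp(nH(\tilde V\mid P'))$, where $\tilde V$ is the reverse conditional type. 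Summing over the at most $(n+1)^{|\calX||\hat{\calX}|}$ admissible joint types (Lemma \ref{lem:type_count}) gives a bound of the form $(n+1)^{|\calX||\hat{\calX}|}\exp(n\max H(\tilde V\mid P'))$, and the key step is to identify the maximum with $n[H(Q)-R(Q,\Delta)]$ via the identity $H(\tilde V\mid P')=H(Q\times V)-H(P')$ together with the variational characterization of $R(Q,\Delta)$ in \eqref{eqn:rd_func}; this is valid because the minimization there is over all conditional distributions with expected distortion at most $\Delta$.

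Combining these ingredients, the total fraction of $\calT_Q$ covered by any codebook of size $M$ is at most $M\cdot\exp(n[H(Q)-R(Q,\Delta)]+O(\log n))/|\calT_Q|$, and invoking $|\calT_Q|\ge (n+1)^{-|\calX|}\exp(nH(Q))$ from Lemma \ref{lem:size_type_class} yields the claimed bound with a suitable $g_2(|\calX|,|\hat{\calX}|)$. The primary technical obstacle I anticipate is the joint-type manipulation in the converse: one must verify that the relevant maximum of $H(\tilde V\mid P')$ over admissible joint types (including the choice of $P'$, which is determined by the codeword but need not match $QW^*$) is indeed attained at a value no larger than $n[H(Q)-R(Q,\Delta)]$, which ultimately comes down to the convex-duality characterization of the rate-distortion function and the fact that approximation by $n$-types costs only polynomial factors.
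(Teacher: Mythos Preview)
The paper does not prove this lemma; it simply cites Berger for the achievability part~\eqref{eqn:type_cov1} and Zhang--Yang--Wei for the converse part~\eqref{eqn:type_cov2}. Your proposal supplies an actual argument, and it is essentially the classical one.

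For the achievability, your random-coding-within-a-type-class scheme is precisely the standard Berger/Csisz\'ar--K\"orner approach that the paper is pointing to, and the polynomial bookkeeping you mention is routine.

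For the converse, your ball-size counting argument is correct and in fact simpler than you fear. The ``primary technical obstacle'' you flag dissolves immediately: for any joint $n$-type with $X$-marginal $Q$, $\hat X$-marginal $P'$, and expected distortion at most $\Delta$, the identity $H(\tilde V\mid P') = H(X\mid\hat X) = H(Q) - I(Q,V)$ together with the \emph{definition} of $R(Q,\Delta)$ as an infimum of mutual information over all admissible $V$ gives $H(\tilde V\mid P') \le H(Q) - R(Q,\Delta)$ directly. No convex duality and no $n$-type approximation are needed here, since the joint type of $(\bx,\hat{\bx})$ is already an $n$-type and the rate-distortion infimum is over all test channels. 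Your argument thus delivers a bound of the form $\exp(-nR(Q,\Delta)+\log M + c\log n)$ with $c>0$ depending only on the alphabet sizes, i.e., it establishes~\eqref{eqn:type_cov2} with a \emph{negative} $g_2$. The lemma does not specify the sign of $g_2$, and the way the paper uses~\eqref{eqn:type_cov2} in the converse of Theorem~\ref{thm:disp_lossy} (setting $\psi_n=(-g_2+1)\frac{\log n}{n}$) works for either sign. Zhang--Yang--Wei's result gives a sharper prefactor, but your elementary counting already suffices for the second-order application.
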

The achievability part   of the lemma in \eqref{eqn:type_cov1} is a refined version of the type covering lemma by Berger~\cite[Sec.~6.2.1, Lem.~1]{Berger}. A slightly weaker version of the lemma is also presented in Csisz\'ar-K\"orner~\cite[Ch.~9]{Csi97} and was used by Marton~\cite{Marton74} to find the error exponent for lossy source coding. The refinement comes about in the $O(\frac{\log n}{n})$ remainder term which is required for analyzing the  setting in which the excess distortion probability   is non-vanishing. The converse part in~\eqref{eqn:type_cov2}  is a corollary of Zhang-Yang-Wei~\cite[Lem.~3]{Zhang97}.

We now provide an alternative proof of Theorem \ref{thm:disp_lossy}  using the type covering  lemma. The crux of the achievability argument is to use the type covering lemma to identify a set of sequences of size $M$ such that the sequences in $\calX^n$ that it manages to $\Delta$-cover  has   probability approximately $1-\eps$ so the excess distortion probability is roughly $\eps$.  The types of  sequences in this set is denoted as $\calK$ in the  proof below. The  $P^n$-probability of $\calK$ can be estimated using  the central limit relation similar to the analysis in the proof of Theorem~\ref{thm:src_univ}. The converse argument hinges on the fact that the codebook given the achievability part of the   type covering lemma is essentially optimal in terms of its size. 
\begin{proof}[Proof sketch of Theorem~\ref{thm:disp_lossy}]
Roughly speaking, the idea in the achievability proof  is to ``encode'' all sequences in $\calX^n$ whose empirical rate distortion function is no larger than some threshold. More specifically, encode (use codes prescribed by Lemma~\ref{lem:type_cov})  sequences  belonging to
\begin{equation}
\calK:=\bigcup_{Q \in\scP_n(\calX) : R(Q,\Delta)\le \gamma}\calT_Q,
\end{equation}
where 
\begin{equation}
\gamma:=R(P,\Delta)  -\sqrt{\frac{V(P,\Delta)}{n}} \Phi^{-1}(\eps)  .
\end{equation}
By \eqref{eqn:type_cov1} and the type counting lemma, the size of $\calK$ satisfies the requirement in Theorem~\ref{thm:disp_lossy}. The resultant probability of excess distortion is
$
\Pr \left(R(P_{X^n} , \Delta) > \gamma\right) 
$ where $P_{X^n} \in\scP_n(\calX)$ is the (random) type of $X^n\in\calX^n$.  Similarly to \eqref{eqn:clt_lossles} for the lossless case, the following central limit relation holds:
\begin{equation}
\sqrt{n}\big(R (P_{X^n}, \Delta)-R(P,\Delta) \big) \stackrel{\mathrm{d}}{\longrightarrow}\calN\big(0,V(P , \Delta) \big). \label{eqn:clt_lossy}
\end{equation}
  The above convergence can be verified by  using the Berry-Esseen theorem for functions of \iid random vectors (Theorem~\ref{thm:func_clt}) per  the proof of Theorem~\ref{thm:src_univ}.  Hence,  probability of excess distortion  is roughly $\eps$ and the achievability proof is complete.

The converse part follows from the fact that that we can  lower bound the probability of  the excess distortion event $\calE_\Delta :=\{d^{(n)}(X^n,\hatX^n)>\Delta\}$  as 
\begin{equation}
\Pr \big(\calE_\Delta\big) \ge  \Pr \big(\calE_\Delta \,\big|\, R(P_{X^n},\Delta)>R +\psi_n\big) \Pr\big(R(P_{X^n},\Delta)>R+\psi_n \big) ,
\end{equation}
where $R=\frac{1}{n}\log M$ is the code rate and $\psi_n$ is arbitrary.  Now, by \eqref{eqn:type_cov2}, if the realized type of the source is $Q\in\scP_n(\calX)$ where $R(Q,\Delta)>R+\psi_n$, then the fraction of the type class  $\calT_Q$ that is $\Delta$-covered is  at most 
\begin{equation}
\exp\left(-n  R(Q,\Delta) + nR - g_2 \log n\right)\le\exp\left( -n\psi_n-g_2\log n\right).
\end{equation}
Since all sequences in a type class are equally likely (Lemma~\ref{lem:prob_seq}), the probability of  {\em no excess distortion} conditioned on the event $\{R(P_{X^n},\Delta)>R+\psi_n\}$ is at most $\frac{1}{n}$ if $\psi_n := (-g_2 + 1) \frac{\log n}{n}$. Thus
\begin{equation}
\Pr \big(\calE_\Delta\big) \ge \bigg(1-\frac{1}{n}\bigg)\Pr\big(R(P_{X^n},\Delta)>R+\psi_n \big) . \label{eqn:final_src}
\end{equation}
For $\log M= nR $ chosen to be  as~in~\eqref{eqn:asymp_lossy} in Theorem~\ref{thm:disp_lossy}, the probability on the right  is at least  $\eps - O(\frac{1}{\sqrt{n}})$ by a quantitative version of the convergence in distribution in~\eqref{eqn:clt_lossy}. 
\end{proof}

\newcommand{\av}{\mathrm{ave}}
\chapter{Channel Coding} \label{ch:cc}
This chapter presents fixed error   asymptotic results  for point-to-point channel coding, which is perhaps {\em the} most fundamental problem in information theory. Shannon~\cite{Shannon48} showed that the maximum rate of transmission  over a memoryless channel is the {\em information capacity}
\begin{equation}
C(W) = \max_{ P\in\scP(\calX)}I(P,W).  \label{eqn:cap}
\end{equation}
This first-order fundamental limit is attained as the number of channel uses (or blocklength) tends to infinity.   Wolfowitz~\cite{Wolfowitz57} showed the strong converse for a large class of memoryless channels, which intuitively means that for codes with rates above $C(W)$,  the error probability necessarily tends to one.   The contrapositive of this statement  is that, even if we allow  the error probability to be close to one (a strange requirement in practice),  one cannot send more bits per channel use than what is prescribed by the information capacity in \eqref{eqn:cap}. 

In the rest of this chapter, we revisit the problem of channel coding from the   viewpoint of the error probability being non-vanishing. First, we define the channel coding problem as well as some important non-asymptotic fundamental limits. Next we derive bounds on these limits. Some of these bounds are intimately linked to ideas  in and quantities related to binary hypothesis testing. We then evaluate these bounds for large blocklengths while keeping the error probability (either maximum or average) bounded above by some constant $\eps\in (0,1)$.  We   only concern ourselves with two classes of channels, namely the discrete memoryless channel (DMC) and the additive white Gaussian noise (AWGN) channel. We   present  second- and even third-order asymptotic expansions for the logarithm of  the non-asymptotic fundamental limits. The chapter is concluded with a discussion of source-channel transmission and the cost of separation. 

The material in this chapter on point-to-point channel coding is based primarily on the  works by Strassen~\cite{Strassen}, Hayashi~\cite{Hayashi09}, Polyanskiy-Poor-Verd\'u~\cite{PPV10},   Altu\u{g}-Wagner~\cite{altug13}, Tomamichel-Tan~\cite{TomTan12} and  Tan-Tomamichel~\cite{TanTom13}.  The material on joint source-channel coding is based on the works by Kostina-Verd\'u~\cite{kost13} and Wang-Ingber-Kochman~\cite{wang11}. 

\section{Definitions and Non-Asymptotic Bounds}
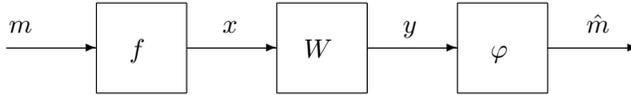
\begin{figure}[t]
\centering
\setlength{\unitlength}{.4mm}
\begin{picture}(200, 35)
\put(0, 15){\vector(1, 0){30}}
\put(60, 15){\vector(1,0){30}}
\put(120, 15){\vector(1,0){30}}
\put(180, 15){\vector(1,0){30}}
\put(30, 0){\line(1, 0){30}}
\put(30, 0){\line(0,1){30}}
\put(60, 0){\line(0,1){30}}
\put(30, 30){\line(1,0){30}}

\put(90, 0){\line(1, 0){30}}
\put(90, 0){\line(0,1){30}}
\put(120, 0){\line(0,1){30}}
\put(90, 30){\line(1,0){30}}

\put(-2, 20){  $m$}
\put(69, 20){  $x$}
\put(129, 20){  $y$} 
\put(41, 12){$f$ } 
\put(99, 12){$W$} 

\put(150, 0){\line(1, 0){30}}
\put(150, 0){\line(0,1){30}}
\put(180, 0){\line(0,1){30}}
\put(150, 30){\line(1,0){30}}
\put(161, 12){$\varphi$} 
\put(190, 20){  $\hatm$} 
  \end{picture}
  \caption{Illustration of the channel coding problem.   }
  \label{fig:cc}
\end{figure}

We now set up the channel coding problem formally. A channel is simply a stochastic map $W$ from an input alphabet $\calX$ to an output alphabet  $\calY$. For the majority of the chapter, we assume that there are no cost constraints on the codewords---the necessary changes required  for   channels with cost constraints  (such as the AWGN channel) will be pointed out.   See Fig.~\ref{fig:cc} for an illustration of the setup. 

An {\em $(M,\eps)_{\av}$-code} for the channel $W \in \scP(\calY|\calX)$ consists of a {\em message set} $\calM=\{1,\ldots,M\}$ and  pair of maps including an {\em encoder}  $f:\{1,\ldots, M\}\to\calX$ and a {\em decoder} $\varphi:\calY\to \{1,\ldots, M\}$ such that the {\em average error probability}
\begin{equation}
\frac{1}{M}\sum_{m\in\calM} W(\calY\setminus\varphi^{-1}(m) | f(m)) \le \eps. \label{eqn:av_err}
\end{equation}
An {\em $(M,\eps)_{\max}$-code}  is the same as  an   $(M,\eps)_{\av}$-code except that instead of the condition in \eqref{eqn:av_err}, the {\em maximum error probability}
\begin{equation}
\max_{m\in\calM} W(\calY\setminus\varphi^{-1}(m) | f(m)) \le \eps.
\end{equation}
The number $M$ is called the {\em size} of the code. 

We also define the following non-asymptotic fundamental limits
\begin{align}
M^*_{\av}(W,\eps)  &:=\max\big\{ M\in\bbN \,:  \exists \mbox{ an }  (M,\eps)_{\av}\mbox{-code for } W\big\} ,\,\,\mbox{and} \label{eqn:Mav}\\*
M^*_{\max}(W,\eps)  &:=\max\big\{ M\in\bbN \,: \exists \mbox{ an }  (M,\eps)_{\max}\mbox{-code for } W\big\} . \label{eqn:Mmax}
\end{align}

In the following, we will evaluate these limits when $W$ assumes some structure, for example memorylessness  and stationarity. Note that blocklength plays no role in the above definitions. In the sequel, we study the dependence of the fundamental limits on the blocklength $n$ by inserting a ``super-channel'' $W^n$ indexed by   $n$ in place of $W$ in \eqref{eqn:Mav} and \eqref{eqn:Mmax}.  Before we perform the evaluations, we   state some bounds on $M$ and $\eps$ for   arbitrary channels $W$.  
\subsection{Achievability Bounds} \label{sec:achiev}
In this section, we state  three achievability bounds. We evaluate  these bounds for memoryless channels in  the following sections. The first is Feinstein's bound \cite{Feinstein} stated in terms of the $\eps$-information spectrum divergence. 
\begin{proposition}[Feinstein's theorem] \label{prop:fein}
 Let $\eps \in (0, 1)$ and let $W$ be any channel from $\calX$ to $\calY$. Then for any $\eta\in (0,\eps)$, we have 
 \begin{align}
   \log M^*_{\max}(W,\eps)  \ge \sup_{P\in\scP(\calX)} D_\rms^{\eps-\eta}(P\times W \| P\times PW )-\log\frac{1}{\eta}. \label{eqn:fein}
 \end{align}
\end{proposition}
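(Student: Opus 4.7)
The plan is to run Feinstein's classical greedy maximal-coding argument using a threshold decoder tied to the information density $\log\frac{W(y|x)}{PW(y)}$, with the threshold chosen so that the definition of $D_\rms^{\eps-\eta}$ directly controls the ``atypical'' probability. First I would fix an arbitrary $P\in\scP(\calX)$ and a small $\xi>0$, and set $\gamma:=D_\rms^{\eps-\eta}(P\times W\|P\times PW)-\xi$. Writing $A(x):=\{y\in\calY:\log\frac{W(y|x)}{PW(y)}>\gamma\}$, the definition of the $\eps$-information spectrum divergence in~\eqref{eqn:def_Ds} gives
\begin{equation}
\bbE_P\bigl[W(A(X)\mid X)\bigr]=(P\times W)\Bigl(\Bigl\{(x,y):\log\tfrac{W(y|x)}{PW(y)}>\gamma\Bigr\}\Bigr)\ge 1-\eps+\eta.
\end{equation}

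The next step is the greedy construction. Starting from an empty codebook, at stage $j+1$ let $U_j:=\bigcup_{i\le j}D_i$ (with $U_0:=\emptyset$) and ask whether there is any $x\in\calX$ for which $W(A(x)\setminus U_j\mid x)\ge 1-\eps$. If yes, set $x_{j+1}:=x$ and $D_{j+1}:=A(x_{j+1})\setminus U_j$; otherwise stop and let $M$ denote the final number of codewords. Decoding each $y\in D_m$ as message $m$ (and any remaining $y$ arbitrarily) yields a deterministic encoder/decoder pair with maximum error probability $\max_m W(\calY\setminus D_m\mid x_m)\le\eps$, so $M\le M^*_{\max}(W,\eps)$.

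The core estimate is a two-sided bound on $M$. At termination, $W(A(x)\setminus U_M\mid x)<1-\eps$ for every $x\in\calX$. Averaging under $P$ and combining with the threshold lower bound yields
\begin{equation}
\eta\le \bbE_P\bigl[W(A(X)\mid X)\bigr]-(1-\eps)\le \bbE_P\bigl[W(A(X)\cap U_M\mid X)\bigr]\le \bbE_P\bigl[W(U_M\mid X)\bigr]=PW(U_M).
\end{equation}
Conversely, because $D_j\subseteq A(x_j)$, every $y\in D_j$ satisfies $PW(y)<\exp(-\gamma)W(y|x_j)$, and summing over $y\in D_j$ followed by disjointness of the $D_j$'s produces $PW(U_M)<M\exp(-\gamma)$. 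Chaining these inequalities gives $M>\eta\exp(\gamma)$, hence
\begin{equation}
\log M^*_{\max}(W,\eps)\ge \log M>\gamma-\log\tfrac{1}{\eta}.
\end{equation}
Letting $\xi\downarrow 0$ and then taking the supremum over $P\in\scP(\calX)$ recovers~\eqref{eqn:fein}.

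The only delicate point is ensuring that the greedy procedure is well-defined when $\calX$ is uncountable; this is handled by a routine Zorn's-lemma argument applied to the poset of admissible partial codebooks, and the counting step itself forces any such codebook to have at most $\lfloor\exp(\gamma)/\eta\rfloor$ elements, so no transfinite subtlety actually arises. Conceptually, the main ``obstacle''---which is really just a design choice---is seeing why the tolerance is split as $\eps=(\eps-\eta)+\eta$: the $\eps-\eta$ portion is spent on the information-spectrum threshold event, while the remaining $\eta$ is precisely the budget that the greedy counting step converts into the additive penalty $-\log(1/\eta)$ in the rate.
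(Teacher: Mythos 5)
Your proof is the classical Feinstein greedy/maximal-coding argument, exactly the approach taken in the paper's cited source (Han's book, Lem.~3.4.1), and the core derivation is correct. One small caveat on the final parenthetical: the counting step does \emph{not} force $M\le\lfloor\exp(\gamma)/\eta\rfloor$ (you establish $PW(U_M)<M\exp(-\gamma)$, which together with $PW(U_M)\le 1$ gives no upper bound on $M$), but this is harmless---if the greedy process never terminates it simply produces codebooks of arbitrary finite size with maximal error at most $\eps$, so $M^*_{\max}(W,\eps)=+\infty$ and the claimed lower bound holds vacuously.
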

The proof of this bound can be found in Han's book~\cite[Lem.~3.4.1]{Han10} and uses a greedy approach for selecting codewords. The average error probability version of this bound can be proved in a more straightforward manner using threshold decoding; cf.~\cite[Thm.~1]{Han98}. The following is a slight strengthening of Feinstein's theorem. 

\begin{proposition} \label{prop:strengthened_fein}
There exists an $(M,\eps)_{\max}$-code for $W$  such that for any $\gamma>0$ and any input distribution $P \in\scP(\calX)$, 
\begin{equation}
\eps\le\Pr\bigg( \log\frac{W(Y|X)}{P W(Y)} \le\gamma\bigg)+ M\sup_{x\in\calX}\Pr\bigg( \log\frac{W(Y|x)}{P W(Y)}  > \gamma\bigg) ,\label{eqn:strengthed-fein}
\end{equation}
where the distribution of $(X,Y)$ is $P\times W$ in the first probability and the distribution of $Y$ is $PW$ in the second.
\end{proposition}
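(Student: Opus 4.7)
My plan is to prove this via a greedy, one-codeword-at-a-time construction in the spirit of Feinstein, tailored to deliver the maximum-error bound directly so that no expurgation step is needed. The strategy has three ingredients: a threshold decoding rule keyed to the log-likelihood-ratio threshold $\gamma$; a greedy selection of codewords, each of which is individually guaranteed to achieve conditional error at most $\eps$; and a double-counting argument that accounts for how much of the output space has been ``used up'' after each step.

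Concretely, fix $\gamma > 0$ and $P \in \scP(\calX)$, and for each $x \in \calX$ define the candidate decoding set
$$\calE_x := \Big\{y \in \calY : \log\tfrac{W(y|x)}{PW(y)} > \gamma\Big\}.$$
I would build the code iteratively: given already selected codewords $x_1, \ldots, x_{k-1}$ with pairwise disjoint decoding regions $\calD_1, \ldots, \calD_{k-1}$, set $\calU_{k-1} := \bigcup_{j < k} \calD_j$ and search for a new codeword $x_k$ satisfying $W(\calE_{x_k} \setminus \calU_{k-1} \mid x_k) \ge 1-\eps$, in which case take $\calD_k := \calE_{x_k} \setminus \calU_{k-1}$. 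If the search succeeds for every $k \le M$, the resulting code has size $M$ and maximum error probability at most $\eps$ by construction.

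The main task is then to show that the greedy procedure cannot stop before step $M$, which I would handle by contrapositive. If step $k$ fails, then $W(\calE_x \cap \calU_{k-1}\mid x) > W(\calE_x \mid x) - (1-\eps)$ for every $x$; averaging over $X \sim P$ and using $\bbE_P[W(\calE_X\mid X)] = \Pr\big(\log\tfrac{W(Y|X)}{PW(Y)} > \gamma\big)$ (where $(X,Y)\sim P\times W$) yields
$$\bbE_P\big[W(\calE_X \cap \calU_{k-1}\mid X)\big] > \eps - \Pr\Big(\log\tfrac{W(Y|X)}{PW(Y)} \le \gamma\Big).$$
The key inequality is an upper bound on the left-hand side: swapping the order of summation,
$$\bbE_P\big[W(\calE_X \cap \calU_{k-1}\mid X)\big] = \sum_{y\in\calU_{k-1}}\sum_{x:y\in\calE_x} P(x)W(y|x) \le \sum_{y\in\calU_{k-1}}PW(y) = PW(\calU_{k-1}),$$
and since $\calU_{k-1}$ is a disjoint union of sets $\calD_j\subset\calE_{x_j}$, a union bound gives $PW(\calU_{k-1}) \le (k-1)\sup_{x}PW(\calE_x) = (k-1)\sup_{x}\Pr_{Y\sim PW}\big(\log\tfrac{W(Y|x)}{PW(Y)} > \gamma\big)$. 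Chaining these estimates contradicts the hypothesis \eqref{eqn:strengthed-fein} whenever $k \le M$, so the greedy construction succeeds for every $k\le M$ and produces the claimed $(M,\eps)_{\max}$-code.

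The main obstacle I anticipate is the double-counting step that converts the $W$-averaged mass on $\calE_X \cap \calU_{k-1}$ into the $PW$-mass on $\calU_{k-1}$; once this observation is in hand, the rest is bookkeeping. A pleasant feature of the argument is that the greedy selection buys us the tight coefficient $M$ on the second term in \eqref{eqn:strengthed-fein} together with a bona fide maximum-error guarantee, rather than the factor-of-two-larger coefficient and only-average-error guarantee that one would obtain by random coding plus expurgation.
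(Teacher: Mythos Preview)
Your argument is correct and matches the approach the paper defers to (\cite[Thm.~21]{PPV10}, described there as a ``sequential random coding technique where each codeword is chosen at random based on previous choices''). Your deterministic greedy variant is the same idea: rather than averaging over a random choice of the next codeword, you assert directly that some $x$ works. The key double-counting step
\[
\bbE_P\big[W(\calE_X\cap\calU_{k-1}\mid X)\big]\le PW(\calU_{k-1})\le (k-1)\sup_{x}PW(\calE_x)
\]
is exactly right.

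One phrasing issue worth fixing: your closing line says the chain of estimates ``contradicts the hypothesis \eqref{eqn:strengthed-fein}''. As written this is backwards---the inequality \eqref{eqn:strengthed-fein} reads $\eps\le\text{RHS}$, and your failure condition $\eps<\Pr(\le\gamma)+(k-1)\sup_x\Pr(>\gamma)$ is \emph{consistent} with that, not in contradiction. What you actually want is to \emph{set} $\eps$ equal to the right-hand side of \eqref{eqn:strengthed-fein} at the outset of the construction; then failure at step $k\le M$ yields $M\sup_x\Pr(>\gamma)<(k-1)\sup_x\Pr(>\gamma)$, the desired contradiction (the case $\sup_x\Pr(>\gamma)=0$ gives $0<0$ directly). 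With that small reframing the proof is complete.
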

The proof of this bound can be found in \cite[Thm.~21]{PPV10}. It uses a sequential random coding technique where each codeword is chosen   at random based on previous choices. Feinstein's bound can be derived as  a   corollary to Proposition \ref{prop:strengthened_fein} by upper bounding the final probability in \eqref{eqn:strengthed-fein} by $\exp(-\gamma)$ and using the identification $\gamma\equiv\log\frac{1}{\eta}$. 

The previous two bounds are essentially threshold decoding bounds, i.e., we compare the likelihood ratio between the channel and the output distribution to a   threshold $\gamma$. For the average probability of error setting, one can compare the likelihood ratios of codewords directly and use  maximum likelihood decoding to obtain the following bound.  

\begin{proposition}[Random Coding Union (RCU) Bound] \label{prop:rcu}
There exists an $(M,\eps)_{\av}$-code for $W$ such that for    any input distribution $P\in\scP(\calX)$, 
\begin{equation}
\eps\le\bbE \Bigg[ \min\bigg\{ 1,M\Pr \bigg( \log\frac{W(Y|\barX)}{P W(Y)}\ge \log\frac{W(Y|X)}{P W(Y)}     \,\bigg|\,   X,Y  \bigg)\bigg\}\Bigg] 
\end{equation}
where $(X,\barX,Y)$ is distributed as $P(x)P(\barx)W(y|x)$.
\end{proposition}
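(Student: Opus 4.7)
My plan is the standard random coding argument paired with maximum likelihood (ML) decoding and a union bound that is clipped at $1$. Draw $M$ codewords $X(1),\ldots,X(M)$ i.i.d.\ from $P$, independent of everything else. The encoder sends $f(m)=X(m)$, and the decoder chooses $\hat m \in \argmax_{m \in \calM} W(Y\,|\, X(m))$, resolving ties against the transmitted index (i.e., declaring an error whenever a competitor $m'\ne m$ attains a likelihood at least as large as that of the true message). Note that maximizing $W(Y\,|\, X(m))$ is equivalent to maximizing the information density $i(x;y):=\log\frac{W(y\,|\,x)}{PW(y)}$, because $PW(Y)$ does not depend on $m$; this is why the bound can be stated in terms of $i$ rather than $W$ alone.

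Conditioned on message $m$ being sent, and on the realization of the transmitted codeword $X=X(m)$ and channel output $Y$, the other codewords $X(m')$, $m'\ne m$, are i.i.d.\ $\sim P$ and independent of $(X,Y)$. Let $\barX \sim P$ denote a generic such competing codeword. Applying the union bound over the $M-1$ competitors and clipping at $1$, the conditional error probability is at most
\begin{equation}
\min\Big\{1,(M-1)\,\Pr\big(i(\barX;Y)\ge i(X;Y)\,\big|\, X,Y\big)\Big\}\le \min\Big\{1,M\,\Pr\big(i(\barX;Y)\ge i(X;Y)\,\big|\, X,Y\big)\Big\}.
\end{equation}
Taking expectation over $(X,Y)\sim P\times W$ yields the asserted bound on the expected error probability, where the expectation is over the random codebook and the channel noise, averaged uniformly over the message index. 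Since this bound holds symmetrically for every message $m$ (by the i.i.d.\ construction of the codebook), it equals the expected average error probability of the random code.

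Finally, because the expectation over the random codebook of the average error probability is upper bounded by the right-hand side, there must exist at least one deterministic codebook of size $M$ whose average error probability is no larger than this expectation. This deterministic code is the desired $(M,\eps)_{\av}$-code.

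The only subtle point---and the main thing to get right---is the tie-breaking convention in the ML decoder: using the non-strict inequality ``$\ge$'' inside the probability (rather than ``$>$'') correctly upper bounds the error event regardless of how ties are adjudicated, which is essential since $i(\barX;Y)$ and $i(X;Y)$ can coincide with positive probability (e.g.\ on discrete channels). Everything else---the passage from $(M-1)$ to $M$, the clipping at $1$, and the random-coding-to-deterministic-code extraction---is routine, and no regularity assumption on $W$, $\calX$, or $\calY$ is required beyond what is needed for the information density to be well defined.
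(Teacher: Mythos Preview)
Your proof is correct and is exactly the standard argument for the RCU bound: i.i.d.\ random coding, ML decoding, conditioning on $(X,Y)$ so that the competing codewords are i.i.d.\ $\sim P$ and independent of $(X,Y)$, then a union bound clipped at $1$ followed by deterministic-code extraction. The paper does not give its own proof of this proposition but simply cites \cite[Thm.~16]{PPV10}, whose argument is precisely the one you have outlined (including the tie-breaking convention with the non-strict inequality).
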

The proof of this bound can be found in \cite[Thm.~16]{PPV10}. Note that the outer expectation is over $X,Y$ while the inner probability is over $\barX$. Under certain conditions on a DMC and any AWGN channel, one can use the RCU bound to prove the achievability of $\frac{1}{2}\log n+O(1)$ for the  third-order term in the asymptotic expansion of $\log M^*(W^n,\eps)$. This is what we do in the subsequent sections.
\subsection{A Converse Bound} \label{sec:ch_conv}
The only converse bound we will evaluate asymptotically appeared in different forms in the works by Verd\'u-Han~\cite[Lem.~4]{VH94}, Hayashi-Nagaoka~\cite[Lem.~4]{Hayashi03}, Polyanskiy-Poor-Verd\'u~\cite[Sec.~III-E]{PPV10}  and Tomamichel-Tan~\cite[Prop.~6]{TomTan12}. This converse bound relates channel coding to binary hypothesis testing. This relation, and its application to  asymptotic converse theorems, can be traced back to early works by     Shannon-Gallager-Berlekamp~\cite{sgb} and Wolfowitz~\cite{Wolfowitz}.    The reader is referred to Dalai's work~\cite[App.~B]{Dalai} for an excellent modern exposition on this topic.
\begin{proposition}[Symbol-Wise Converse Bound] \label{prop:converse} 
 Let $\eps \in (0, 1)$ and let $W$ be any channel from $\calX$ to $\calY$. 
  Then, for any $\eta \in (0, 1-\eps)$, we have
  \begin{align}
    \log M^*_{\av}(W,\eps)  \leq \inf_{Q \in \scP(\calY)}\, \sup_{x \in \calX}\ D_{\rms}^{\eps+\eta} ( W(\cdot|x) \| Q ) + 
    \log \frac{1}{\eta}  . \label{eqn:symbol_wise}
  \end{align}
\end{proposition}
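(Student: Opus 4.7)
The plan is to derive a Verdú--Han-style meta-converse in its joint-distribution form first, and then convert it to the claimed symbol-wise form by invoking the ``symbol-wise'' relaxation of Lemma~\ref{lem:Ds2}.  Fix any $(M,\eps)_{\av}$-code with encoder $f$, decoder $\varphi$, codewords $x_m := f(m)$ and (disjoint) decoding regions $\calD_m := \varphi^{-1}(m)$.  The average-error constraint reads
\begin{equation}
\frac{1}{M}\sum_{m=1}^M W(\calD_m\,|\,x_m) \ge 1-\eps.
\end{equation}

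Fix an arbitrary auxiliary $Q \in \scP(\calY)$ and set $R^* := \log M - \log\frac{1}{\eta}$, so that $\exp(R^*)/M = \eta$.  For each $m$, split $\calD_m$ into its ``small-ratio'' part $B_m := \{y \in \calD_m : \log\frac{W(y|x_m)}{Q(y)} \le R^*\}$ and its complement within $\calD_m$.  On $B_m$ the pointwise bound $W(y|x_m) \le Q(y)\exp(R^*)$ combined with the disjointness of the $\calD_m$'s gives
\begin{equation}
\frac{1}{M}\sum_{m=1}^M W(B_m\,|\,x_m) \,\le\, \frac{\exp(R^*)}{M}\sum_{m=1}^M Q(\calD_m) \,\le\, \frac{\exp(R^*)}{M} \,=\, \eta.
\end{equation}
Let $P_X$ be the uniform distribution over the codewords and draw $(X,Y)\sim P_X\times W$.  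The contribution from the complementary parts $\calD_m\setminus B_m$ is trivially at most $\Pr\bigl(\log\frac{W(Y|X)}{Q(Y)} > R^*\bigr)$.  Subtracting both from $1$ produces the meta-converse
\begin{equation}
\Pr_{P_X\times W}\bigg(\log\frac{W(Y|X)}{Q(Y)} \le R^*\bigg) \le \eps + \eta. \label{eqn:vh_plan}
\end{equation}

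Since $\log\frac{W(y|x)}{Q(y)} = \log\frac{(P_X\times W)(x,y)}{(P_X\times Q)(x,y)}$, the inequality~\eqref{eqn:vh_plan} places $R^*$ inside the set whose supremum defines $D_{\rms}^{\eps+\eta}(P_X\times W\,\|\,P_X\times Q)$, and therefore
\begin{equation}
\log M - \log\frac{1}{\eta} \,=\, R^* \,\le\, D_{\rms}^{\eps+\eta}\bigl(P_X\times W \,\big\|\, P_X\times Q\bigr).
\end{equation}
Applying Lemma~\ref{lem:Ds2} with the input-independent ``channel'' $V(\cdot|x) \equiv Q(\cdot)$ bounds the right-hand side by $\sup_{x\in\calX} D_{\rms}^{\eps+\eta}(W(\cdot|x)\|Q)$.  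Rearranging and then taking the infimum over $Q \in \scP(\calY)$ delivers the claim.

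The genuine conceptual step in this plan is recognising the two-stage structure: the standard Verdú--Han change-of-measure argument naturally produces a bound in the joint-distribution form, and Lemma~\ref{lem:Ds2} converts it ``for free'' into the sup-of-per-symbol form, since the only object we may freely choose at the joint-distribution level is the output auxiliary $Q$.  Beyond that, there is no substantial analytic obstacle --- the remainder is a textbook change of measure that requires only the disjointness of the decoding regions and the trivial bound $\sum_m Q(\calD_m) \le 1$.
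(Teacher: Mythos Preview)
Your proof is correct, but it takes a different route from the paper's. The paper first passes through the $\eps$-hypothesis testing divergence: it uses the data-processing inequality for $D_\rmh^\eps$ (Lemma~\ref{lem:Dh1}) on the Markov chain $M\to X\to Y\to\hatM$, evaluates $D_\rmh^\eps(P_{M\hatM}\|P_M\times Q_{\hatM})$ via the test $\delta(m,\hatm)=\bbI\{m\ne\hatm\}$ to get $\log|\calM|+\log(1-\eps)$, and only then converts to $D_\rms^{\eps+\eta}$ using Lemma~\ref{lem:relation}. You instead prove the Hayashi--Nagaoka/Verd\'u--Han inequality \eqref{eqn:vh_plan} directly by the elementary change-of-measure on the decoding regions, bypassing $D_\rmh^\eps$ entirely. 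Both routes land at the same intermediate bound $\log M - \log\frac{1}{\eta} \le D_\rms^{\eps+\eta}(P_X\times W\|P_X\times Q)$ and both finish with Lemma~\ref{lem:Ds2}. Your argument is more self-contained and avoids importing the hypothesis-testing machinery; the paper's argument is more conceptual in that it exhibits the converse as an instance of data processing for a one-shot divergence, which is the ``meta-converse'' viewpoint emphasized in \cite{PPV10,WangRenner}.
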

If the codewords are constrained to belong to some set $\calA\subset\calX$ (due to cost contraints, say), the supremum above is  to be replaced by $\sup_{x\in\calA}$. 

The first part of the proof is analogous to the {\em meta-converse} in~\cite[Thm.~27]{PPV10}. See also Wang-Colbeck-Renner~\cite{Wang09} and Wang-Renner~\cite{WangRenner}, which inspired the conceptually simpler proof technique presented below. The bound in \eqref{eqn:symbol_wise} is a   ``symbol-wise''  relaxation of   the meta-converse~\cite[Thms.~28 and~31]{PPV10} and Hayashi-Nagaoka's converse~\cite[Lem.~4]{Hayashi03}. The maximization over symbols allows us to apply our converse bound on non-constant-composition codes  for DMCs directly. With an appropriate choice of $Q$, it allows us to prove a $\frac{1}{2}\log n+O(1)$ upper bound for the third-order asymptotics for positive $\eps$-dispersion DMCs (cf.~Theorem~\ref{thm:asymp_conv}). 

We remark that, in our notation, the  information spectrum converse bound in Verd\'u-Han~\cite[Lem.~4]{VH94} takes the form
\begin{equation}
 \log M^*_{\av}(W,\eps)  \leq   \sup_{P \in \scP(\calX)}\ D_{\rms}^{\eps+\eta} ( P\times W \| P\times PW ) +     \log \frac{1}{\eta}  \label{eqn:vh} 
\end{equation}
so it does not allow one to choose the output distribution $Q$. Observe the beautiful duality of the Verd\'u-Han converse with Feinstein's direct theorem.  The bound in Hayashi-Nagaoka~\cite[Lem.~4]{Hayashi03} (stated for classical-quantum channels in their context) affords this freedom and is stated as 
\begin{equation}
\log M^*_{\av}(W,\eps)  \leq   \inf_{Q\in\scP(\calY)}\sup_{P \in \scP(\calX)}\ D_{\rms}^{\eps+\eta} ( P\times W \| P\times Q ) +     \log \frac{1}{\eta}  . \label{eqn:hay_nag}
\end{equation}
Hence, we see that the bound in Proposition~\ref{prop:converse} is essentially a  ``symbol-wise''  relaxation of the Hayashi-Nagaoka converse bound~\cite[Lem.~4]{Hayashi03}  (applying Lemma~\ref{lem:Ds2}) as well as  the meta-converse theorems in~\cite[Thms.~28 and~31]{PPV10}.  


Since the proof of Proposition~\ref{prop:converse} is short, we provide  the details.
\begin{proof}[Proof of Proposition~\ref{prop:converse}]
Fix an $(|\calM|,\eps)_\av$-code for $W$ with message set $\calM$ and an arbitrary output distribution $Q\in\scP(\calY)$. Let $M$ and $\hatM$ be the sent message and estimated message respectively. Starting from a uniform distribution $P_M$ over $\calM$, the Markov chain 
  $M\xrightarrow{\ f\ } X \xrightarrow{\ W\ } Y \xrightarrow{\ \varphi\ } \hatM$ induces the 
  joint distribution $P_{MXY\hatM}$. 
  Due to the data-processing inequality for $D_{\rmh}^\eps$ (Lemma~\ref{lem:Dh1}),
\begin{equation}
   D_\rmh^{\eps}(P \times  W\|P \times Q) = D_\rmh^{\eps}(P_{XY} \| P_X \times Q_Y) \geq D_\rmh^{\eps}(P_{M\hatM} \| P_M \times Q_{\hatM})
  \end{equation}  
  where $P_X=P$ and $Q_{\hatM}$ is the distribution induced by $\varphi$ applied to $Q_Y = Q$.   Moreover, using the test $\delta(m,\hatm) := \bbI\{m\ne\hatm\}$, we   see that
\begin{align}
  \bbE_{P_{M\hatM}} \big[\delta(M,\hatM)\big] = \Pr(M \ne\hatM)\le\eps
\end{align}  
  where $(M,\hatM)\sim P_{M \hatM}$ above, and 
  \begin{align}
&  \bbE_{P_M \times Q_{\hatM}} \big[ \delta(M,\hatM) \big] \nn\\*
   &= \sum_{ (m,\hatm)  \in\calM \times \calM  } P_M(m) Q_{\hatM}(\hatm) \bbI\{ m \ne  \hatm\}  \\
&   = 1- \sum_{\hatm\in \calM }  Q_{\hatM} (\hatm)  \sum_{m\in\calM }  P_M(m) \bbI\{m = \hatm \}\\
&   = 1- \sum_{\hatm \in \calM }  Q_{\hatM} (\hatm)   \frac{1}{|\calM|}  \\
&   = 1-   \frac{1}{|\calM|}  .
  \end{align}
  Hence, $D_\rmh^{\eps}(P_{M\hatM} \| P_M \times Q_{\hatM}) \geq \log | \calM | + \log (1 -\eps)$ per the definition of the $\eps$-hypothesis testing divergence.
  Finally, applying Lemmas~\ref{lem:Ds1} and~\ref{lem:Ds2} yields
  \begin{align}
    \sup_{x \in \calX} D_\rms^{\eps+\eta}\big(W(\cdot|x) \big\| Q\big) &\geq 
    D_\rms^{\eps+\eta}\big(P \times W \big\| P \times Q\big) \\*
    &\geq D_{\rmh}^{\eps}\big(P \times W \big\| P \times Q\big) - \log \frac{1 - \eps}{\eta}\\*
&    \geq \log |\calM|  -
    \log \frac{1}{\eta} .
  \end{align}
 This yields the converse bound upon minimizing over $Q \in \scP(\calY)$. 
\end{proof}
\section{Asymptotic Expansions for Discrete Memoryless Channels}   
In this section, we consider asymptotic expansions for DMCs. Recall that a DMC (without feedback) for blocklength $n$ is a channel $W^n\in\scP(\calY^n|\calX^n)$ where the input and output alphabets are finite and the channel law satisfies 
\begin{equation}
W^n(\by|\bx) =\prod_{i=1}^n W(y_i|x_i),\qquad\forall\, (\bx,\by)\in\calX^n\times\calY^n.
\end{equation}
Thus, the channel behaves in a stationary and memoryless manner. 
Shannon~\cite{Shannon48} found the maximum  rate of reliable communication over a DMC and termed this rate the {\em capacity}  $C(W)$ given in \eqref{eqn:cap}.  In this section, we derive refinements of this fundamental limit  of communication  by characterizing the first three terms in the  asymptotic expansions of $\log M^*_\av(W^n,\eps)$ and $\log M^*_{\max}(W^n,\eps)$.   Before we do so, we recall some fundamental quantities and define a few new ones.

\subsection{Definitions for Discrete Memoryless Channels}   
Recall that  the {\em conditional relative entropy} for a fixed input and output distribution pair $(P,Q)\in \scP(\calX)\times\scP(\calY)$ is $D(W\|Q|P) :=\sum_xP(x)D(W(\cdot|x)  \| Q)$.  The {\em mutual information} is $I(P, W) := D(P\times W\| P\times PW)=D( W \| PW|P)$. Moreover, $C(W)$ is the information capacity defined in \eqref{eqn:cap} and  
\begin{align}
  \Pi(W)  := \{P \in \scP(\calX)  : I(P, W)=C(W) \}  
\end{align}
is the set of \emph{capacity-achieving input distributions} (CAIDs), respectively.\footnote{We often drop the dependence on $W$ if it is clear from context.} The set of CAIDs is convex and compact in $\scP(\calX)$. 
The unique~\cite[Cor.\ 2 to Thm.\ 4.5.2]{gallagerIT}  \emph{capacity-achieving output distribution} (CAOD) is denoted as $Q^*$ and $Q^* = P W$ for all
$P \in \Pi$. Furthermore, it satisfies $Q^*(y) > 0$ for all $y \in \calY$ \cite[Cor.\ 1 to Thm.\ 4.5.2]{gallagerIT}, where we assume that all outputs are accessible.

\subsubsection{Channel Dispersions}
Recall from \eqref{eqn:VPQ} that the variance of the log-likelihood ratio $\log\frac{P}{Q}$  under $P$ is known as the  {\em divergence variance}, i.e.,
\begin{equation}
V(P\|Q):=\sum_{x\in\calX} P(x) \bigg[ \log\frac{P(x)}{Q(x)}-D(P\|Q)\bigg]^2. \label{eqn:VPQ1}
\end{equation}
We also define the {\em conditional divergence variance} $V( W \| Q | P) := \sum_x P(x) V( W(\cdot|x) \| Q)$ and   the {\em conditional information variance}
$V(P, W) := V(W \| PW | P)$. Define the {\em unconditional  information variance} $U(P,W) := V(P\times W\| P\times PW)$. Note that 
\begin{equation}
V(P, W) =  U(P,W)\label{eqn:uequalsv}
\end{equation}
for all $P \in \Pi$ \cite[Lem.\ 62]{PPV10}. This is easy to verify because from \cite[Thm.\ 4.5.1]{gallagerIT}, we know that    all $P\in\Pi$ (i.e., CAIDs) satisfy
\begin{align}
\forall\, x: P(x) >0\qquad D\big( W(\cdot|x) \| PW \big) &=C \\*
\forall\, x: P(x) =0\qquad D\big( W(\cdot|x) \| PW \big)&\le C  .
\end{align}
  The $\eps$-\emph{channel dispersion}~\cite[Def.\ 2]{PPV10} for $\eps\in (0,1)\setminus\{\frac12\}$ is the following operational quantity.
\begin{equation}
V_\eps(W) :=\liminf_{n\to\infty} \frac{1}{n}\left( \frac{\log M_{\av}^*(W^n,\eps) -  nC(W)}{\Phi^{-1}(\eps)}\right)^2 . \label{eqn:eps_disp1}
\end{equation}
This operational quantity was shown~\cite[Eq.\ (223)]{PPV10} to be equal to\footnote{Notice that for $\eps=\frac{1}{2}$, we set $V_\eps=V_{\max}$. This is somewhat unconventional; cf.~\cite[Thm.\ 48]{PPV10}. However, doing so ensures that  subsequent theorems can be  stated compactly. Nonetheless, from the viewpoint of the normal approximation,  it is immaterial how we choose $V_{\frac{1}{2}}$  since $\Phi^{-1}   (\frac{1}{2})=0$ (cf. \cite[after Eq.\ (280)]{PPV10}). } 
\begin{align}
  V_{\eps}(W) := \begin{cases} V_{\min} (W) & \textrm{if } \eps < \frac{1}{2} \\ V_{\max} (W) & \textrm{if } \eps \geq \frac{1}{2} \end{cases},   \label{eqn:eps_disp2}
\end{align}
where $V_{\min} (W)\!:=\!\min_{P\in\Pi}V(P,W)$ and $V_{\max} (W)\!:=\!\max_{P\in\Pi} V(P,W)$.

\subsubsection{Singularity}
The asymptotic expansions  stated in Theorems \ref{thm:asymp_ach} and \ref{thm:asymp_conv} depend on the singularity of the channel. We say a DMC $W\in\scP(\calY|\calX)$ is {\em singular} if for all $(x,y,z) \in \calX\times\calY\times\calX$ with $W(y|x) W(y|z) >0$, one has $W(y|x) = W(y|z)$. A DMC that is not singular is called  {\em non-singular}. 

Note that if the DMC is singular, then 
\begin{equation}
\log\frac{W(y|x')}{W(y|x )} \in \{-\infty,0,\infty\}
\end{equation}
for all $(x,x', y) \in\calX\times\calX\times\calY$. Intuitively, if a DMC is singular, checking {\em feasibility} is, in fact, optimum decoding. That is, given a codebook $\calC:=\{\bx(1),\ldots, \bx(M)\}$, we decide that $m \in \{1,\ldots, M\}$ is sent if, given the channel output $\by$, it uniquely satisfies
\begin{equation}
W^n\big(\by|\bx(m)\big)=\prod_{i=1}^n W\big(y_i | x_{   i}(m) \big)>0.
\end{equation}
It is known \cite{telatar} that if $W$ is singular, the capacity of $W$ equals its zero-undetected error capacity. 

\begin{example}
Consider the {\em binary erasure channel}  $W$ with input alphabet $\calX=\{0,1\}$ and output alphabet $\calY=\{ 0,\rme,1\}$ where $\rme$ is the erasure symbol. The channel transition probabilities of $W$ are given by
\begin{align}
W(y| 0) = \left\{ \begin{array}{cc}
1-\delta_0 & y = 0 \\
 \delta_0 & y = \rme \\
 0 & y = 1
\end{array} \right. \,\,\mbox{and}\,\,\,
W(y| 1) = \left\{ \begin{array}{cc}
0& y = 0 \\
 \delta_1 & y = \rme \\
1-\delta_1 & y = 1
\end{array} \right.  
\end{align}
If $\delta_0=\delta_1 =\delta >0$, then $W(\rme|0)W(\rme|1)>0$ and  $W(\rme|0) = W(\rme|1)=\delta$, and so the channel is singular.  If $\delta_0\ne\delta_1$, the channel is non-singular. 
\end{example}

\subsubsection{Symmetry}
We say a DMC  is {\em symmetric} \cite[pp.~94]{gallagerIT} if the channel outputs can be partitioned
into subsets such that within each subset, the matrix of transition probabilities satisfies the following: every row (resp.\ column)
is a permutation of every other row (resp.\ column).  

\subsection{Achievability Bounds: Asymptotic Expansions}
In this section, we provide lower bounds to $\log M^*_{\av}(W^n,\eps)$ and $\log M^*_{\max}(W^n,\eps)$. We focus on the positive $\eps$-dispersion case. For other cases, the reader is referred to \cite[Thm.~47]{Pol10}.

\subsubsection{Independent and Identically Distributed (i.i.d.)  Codes}
The following bounds are achieved using \iid random codes.
\begin{theorem}\label{thm:asymp_ach}
If the DMC satisfies $V_\eps(W)>0$,  
\begin{equation}
\log M^*_{\max}(W^n,\eps) \ge nC + \sqrt{nV_\eps} \Phi^{-1}(\eps)+   O(1). \label{eqn:o1_direct}
\end{equation}
If in addition, the DMC is non-singular,   
\begin{equation}
\log M^*_{\av}(W^n,\eps) \ge nC + \sqrt{nV_\eps} \Phi^{-1}(\eps)+  \frac{1}{2}\log n +  O(1). \label{eqn:logn_direct}
\end{equation}
\end{theorem}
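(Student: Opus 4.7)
The plan is to prove both bounds by random coding with an i.i.d.\ ensemble whose single-letter distribution $P$ is a capacity-achieving input distribution chosen so that $V(P,W) = V_\eps(W)$. Such a $P$ exists by compactness of $\Pi$: take the $\argmin$ (respectively $\argmax$) of $V(\cdot,W)$ over $\Pi$ when $\eps < \tfrac{1}{2}$ (respectively $\eps \ge \tfrac{1}{2}$). By the identity~\eqref{eqn:uequalsv}, the information density $\imath(X^n;Y^n) := \sum_{i=1}^n \log\tfrac{W(Y_i|X_i)}{PW(Y_i)}$ under $P^n \times W^n$ is an i.i.d.\ sum with mean $nC$ and variance $nV_\eps$, so all subsequent asymptotics reduce to Berry--Esseen estimates on this sum.

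For~\eqref{eqn:o1_direct}, I would start from the strengthened Feinstein bound (Proposition~\ref{prop:strengthened_fein}) with threshold $\gamma = nC + \sqrt{nV_\eps}\,\Phi^{-1}(\eps - c_1/\sqrt{n})$, so that Theorem~\ref{thm:berry_iid} bounds the first term of~\eqref{eqn:strengthed-fein} by $\eps - c_2/\sqrt{n}$. The na\"\i ve Markov estimate on the second term yields only $Me^{-\gamma}$, which would force $\log M \le \gamma - \tfrac{1}{2}\log n + O(1)$. To recover $O(1)$, I would pass to a constant-composition ensemble (codewords drawn uniformly from $\calT_{P_n}$ for a type $P_n$ converging to $P$) and sharpen the tail estimate via the change-of-measure identity
\begin{equation*}
\Pr_{Y^n \sim (PW)^n}\bigl(L^n(x^n) > \gamma\bigr) = \bbE_{Y^n \sim W^n(\cdot|x^n)}\bigl[e^{-L^n(x^n)}\bbI\{L^n(x^n) > \gamma\}\bigr]
\end{equation*}
followed by Theorem~\ref{thm:str_ld} applied to the centered sum $L^n(x^n) - nC$, whose per-letter variance $V(P_n,W)$ tends to $V_\eps$. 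This replaces $e^{-\gamma}$ by $Ke^{-\gamma}/\sqrt{n}$, absorbing the $\tfrac{1}{2}\log n$ deficit.

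For~\eqref{eqn:logn_direct}, I would apply the RCU bound (Proposition~\ref{prop:rcu}) with the same $P$. The conditional pairwise probability $\Pr(\imath(\bar{X}^n;y^n) \ge \imath(x^n;y^n) \mid X^n\!=\!x^n, Y^n\!=\!y^n)$ admits a change-of-measure representation on $\bar{X}^n \sim P^n$: tilting each coordinate to the posterior $P(\cdot|y_i)$, centering the resulting sum, and applying Theorem~\ref{thm:str_ld} bounds this probability by $K'(y^n)e^{-\imath(x^n;y^n)}/\sqrt{n}$, where the constant $K'(y^n)$ depends only on the average conditional variance $\tfrac{1}{n}\sum_i\var_{\bar{X}\sim P(\cdot|y_i)}(\log\tfrac{W(y_i|\bar{X})}{PW(y_i)})$. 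The outer RCU expectation then becomes at most $\bbE[\min\{1, MK'(Y^n)e^{-\imath(X^n;Y^n)}/\sqrt{n}\}]$; splitting the $\min$ at $\imath = \log(MK'(Y^n)/\sqrt{n})$, using Theorem~\ref{thm:str_ld} for the upper piece and Theorem~\ref{thm:berry_iid} for the lower piece while controlling the fluctuations of $K'(Y^n)$ via a separate Berry--Esseen argument, gives the claimed $\tfrac{1}{2}\log n + O(1)$ third-order term.

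The main obstacle is ensuring the average conditional variance $\tfrac{1}{n}\sum_i\var_{\bar{X}\sim P(\cdot|y_i)}(\log\tfrac{W(y_i|\bar{X})}{PW(y_i)})$ is bounded below uniformly over a set of $y^n$ of probability $1 - O(1/\sqrt{n})$ under $P^n \times W^n$, for only then does Theorem~\ref{thm:str_ld} deliver the $1/\sqrt{n}$ factor that becomes the $\tfrac{1}{2}\log n$ bonus in part~2. This is precisely where non-singularity enters: it guarantees that $\log\tfrac{W(y|x)}{W(y|x')}$ takes more than one finite value with positive $P(\cdot|y)\times P(\cdot|y)$-probability for typical $y$, so the single-letter conditional variance is strictly positive and its $y$-average concentrates above zero by the law of large numbers. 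For singular $W$ this log-ratio is confined to $\{-\infty,0,+\infty\}$, the conditional variance collapses, and the bonus is lost---explaining the asymmetry between the two parts.
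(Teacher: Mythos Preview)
Your RCU treatment of~\eqref{eqn:logn_direct} is essentially correct and matches the approach the paper defers to~\cite[Sec.~3.4.5]{Pol10}: tilt $\bar X^n$ to the posterior, apply Theorem~\ref{thm:str_ld}, and identify non-singularity as the condition guaranteeing that the single-letter conditional variance $\var_{\bar X\sim P(\cdot|y)}\log W(y|\bar X)$ does not collapse. That part is fine.

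Your argument for~\eqref{eqn:o1_direct}, however, takes an unnecessary detour and has a gap. You say that to recover the missing $\tfrac{1}{2}\log n$ you must pass to a constant-composition ensemble so that the per-letter variance of $L^n(x^n)$ under $W^n(\cdot|x^n)$ is pinned to $V(P_n,W)\to V_\eps$. But the paper stays with the i.i.d.\ ensemble and observes instead that $V_-:=\min_{x\in\calX}V\big(W(\cdot|x)\big\|P_X^*W\big)>0$, because the CAOD $Q^*=P_X^*W$ is strictly positive on $\calY$~\cite[Cor.~1 to Thm.~4.5.2]{gallagerIT}. Hence for \emph{every} $\bx\in\calX^n$ the average variance $\tfrac{1}{n}\sum_i V_{x_i}\ge V_-$, and Theorem~\ref{thm:str_ld} delivers the factor $\exp(-\gamma)/\sqrt{n}$ uniformly in $\bx$. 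Setting $M=\lfloor\exp(\gamma)\rfloor$ then gives~\eqref{eqn:o1_direct} directly.

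Your constant-composition route is also internally inconsistent: if the input distribution in Proposition~\ref{prop:strengthened_fein} is $P_{X^n}=\mathrm{Unif}(\calT_{P_n})$, the output measure appearing in both probabilities is $P_{X^n}W^n$, \emph{not} the product $(P_nW)^n$. The change-of-measure identity you write uses $(PW)^n$, which does not match. Bridging the two via $P_{X^n}W^n(\by)\le|\scP_n(\calX)|\,(P_nW)^n(\by)$ (cf.\ \eqref{eqn:ineq_y}) costs an extra $\log|\scP_n(\calX)|=\Theta(\log n)$ in both terms, which is precisely the deficit you were trying to eliminate. The paper's positivity-of-CAOD observation sidesteps all of this.
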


\begin{table}
\centering
    \begin{tabular}{| c | c |  } 
    \hline
    Bound & Third-Order Term    \\ \hline
Feinstein  $+ $    Const.\ Compo.\ (Thm.~\ref{thm:const_comp}) & $-(|\calX| \!-\! \frac{1}{2})\log n  \!+\!  O(1)$ \\  \hline
Feinstein  $+ $ \iid (Rmk.\ \ref{rmk:fein})  & $-\frac{1}{2}\log n + O(1)$   \\   \hline  
Strengthened Feinstein $\!+\! $ \iid (Thm.~\ref{thm:asymp_ach}) & $ O(1)$     \\  \hline  
RCU $+ $ \iid (Thm.~\ref{thm:asymp_ach})  & $ \frac{1}{2}\log n + O(1)$    \\  \hline  
    \end{tabular}
    \caption{Comparison of the third-order terms  achievable by using various achievability bounds (in Section \ref{sec:achiev}) or requirements on the code (such as constant composition). The $\frac{1}{2}\log n+O(1)$ that is achieved by evaluating the RCU bound holds only for the class of non-singular DMCs.  }
    \label{tab:ach}
\end{table}

Theorem~\ref{thm:asymp_ach} says that asymptotically, $\log M^*_{\max}(W^n,\eps) $ is lower bounded by the Gaussian approximation $nC+\sqrt{nV_\eps} \Phi^{-1}(\eps)$ plus a constant term. In addition, under the non-singularity condition, one can say more, namely that $\log M^*_{\av}(W^n,\eps) $ is lower bounded by the Gaussian approximation plus  $\frac{1}{2}\log n+O(1)$, known as the {\em third-order term}. The  proof of the former statement in \eqref{eqn:o1_direct} uses    the strengthened version of Feinstein's theorem in Proposition~\ref{prop:strengthened_fein}, while the proof of the latter statement  in  \eqref{eqn:logn_direct} requires the use of the RCU bound in Proposition~\ref{prop:rcu}. For a comparison of the third-order terms achievable by various achievabilty bounds, the reader is referred to Table \ref{tab:ach}.  

We will only provide the proof of the former statement,   as the proof of latter is similar to the achievability proof for AWGN channels for which we show key steps  in Section~\ref{sec:awgn}. For the proof of the latter statement in~\eqref{eqn:logn_direct}, the reader is referred to \cite[Sec.~3.4.5]{Pol10}.

\begin{proof}[Proof of \eqref{eqn:o1_direct}]
We specialize the strengthened version of Feinstein's result in Proposition~\ref{prop:strengthened_fein}. Choose $P_{X^n}$ to be the $n$-fold product of a CAID $P_X^*$ that achieves $V_\eps$. The first probability in \eqref{eqn:strengthed-fein} can be bounded using the Berry-Esseen theorem as 
\begin{align}
 \Pr\bigg(\log\frac{W^n(Y^n|X^n) }{ (P_{X}^*W)^n (Y^n)} \le\gamma\bigg) &=  \Pr\bigg(\sum_{i=1}^n\log\frac{W(Y_i|X_i) }{P_{X}^*W (Y_i)} \le\gamma\bigg) \\ 
&  \le\Phi\bigg(  \frac{\gamma-nC}{\sqrt{nV_\eps}}\bigg)+ \frac{6\, \tilT}{\sqrt{n V_\eps^3}} \label{eqn:useUV}
 \end{align} 
 where $\tilT$ is the third absolute moment of  $\log W(Y|X)-\log P_{X}^*W (Y)$ and the variance is $U(P_{X}^*,W)$ which is equal to $V_\eps$  by~\eqref{eqn:uequalsv}. To bound the second probability in~\eqref{eqn:strengthed-fein}, we define 
\begin{align}
 V_x & :=V\big( W(\cdot|x) \| P_X^*W\big), \quad\mbox{and}\\*
 T_x & :=\bbE\left[ \bigg|\log\frac{W(Y|x)}{P_X^*W(Y)}- D\big( W(\cdot|x) \| P_X^* W\big) \bigg|^3\right],  
 \end{align} 
 Since the CAOD $P_X^*W$ is positive on $\calY$ \cite[Cor.\ 1 to Thm.\ 4.5.2]{gallagerIT}, $ V_{-} :=\min_{x\in\calX}V_x>0$. It can also be shown similarly to \cite[Lem.~46]{PPV10} that $T^+ :=\max_{x\in\calX}T_x<\infty$. Now,  for all $\bx \in\calX^n$, the second probability in~\eqref{eqn:strengthed-fein} can be bounded as 
 \begin{align}
&  \Pr\bigg(\log\frac{W^n(Y^n|\bx) }{(P_{X}^*W)^n (Y^n)} > \gamma\bigg) \nn\\
 &= \!  \bbE_{(P_{X }^*W)^n } \left[  \bbI\bigg\{ \log\frac{W^n(Y^n|\bx)}{ (P_{X}^*W)^n(Y^n) } > \gamma\bigg\}\right] \\
&   =\!   \bbE_{W^n(\cdot|\bx)} \left[  \exp\bigg( \!  \! -\log\frac{W^n(Y^n|\bx)}{ (P_{X}^*W)^n(Y^n) }\bigg)\bbI\bigg\{ \!  \log\frac{W^n(Y^n|\bx)}{ (P_{X}^*W)^n(Y^n) }\!>\!\gamma\bigg\}\right] \\
&\le\!   2\bigg(\frac{\log 2}{\sqrt{2\pi}} + \frac{12 \, T^+}{ V_{-}} \bigg)\frac{\exp(-\gamma)}{\sqrt{n  V_{-} }},
 \end{align}
where the final inequality is an application of Theorem~\ref{thm:str_ld}.   Now choose 
 \begin{align}
 \gamma&   :=n C+\sqrt{nV_\eps}\Phi^{-1}(\eps'), \quad\mbox{with} \\*
  \eps'  &:=\eps-\frac{1}{\sqrt{n}}  \left( \frac{2\big(\frac{\log 2}{\sqrt{2\pi}} + \frac{12 \, T^+}{ V_{-}} \big)}{\sqrt{V_{-}}} + \frac{6\, \tilT}{\sqrt{  V_\eps^3}}\right).
 \end{align}
 Also choose $M=\lfloor\exp(\gamma)\rfloor$. Substituting these choices into the above bounds  completes the proof of \eqref{eqn:o1_direct}.  
 \end{proof}

 \begin{remark} \label{rmk:fein}
 We remark that if we use Feinstein's theorem in Proposition \ref{prop:fein} (instead of its strengthened version in Proposition \ref{prop:strengthened_fein}), and the  codebook is generated in an \iid manner according to $(P_X^*)^n$, the third-order term would be $-\frac{1}{2}\log n+O(1)$.  Indeed, let $\eta$ in Feinstein's theorem   be $\frac{1}{\sqrt{n}}$. Then,  the $(\eps-\eta)$-information spectrum divergence can be expanded as 
\begin{align}
  D_\rms^{\eps-\eta}\big((P_X^*)^n \times  W^n\,\big\|\, (P_X^*)^n \times  (P_X^*  W)^n \big)    
   = n C + \sqrt{n V_\eps } \Phi^{-1}( \eps ) + O(1). \label{eqn:use_UV}
\end{align}
This follows the asymptotic expansion of $D_\rms^{\eps-\eta}$ (Corollary~\ref{cor:ds_same}) and the fact that $U(P_X^*,W)=V(P_X^*,W)=V_\eps(W)$  similarly to \eqref{eqn:useUV}. Coupled with the fact that $-\log\frac{1}{\eta}=-\frac{1}{2}\log n$, we see that the third-order term is (at least) $-\frac{1}{2}\log n + O(1)$.  
 \end{remark}
 
 \subsubsection{Constant Composition Codes and Cost Constraints}
In many applications, it may not be desirable to use \iid codes as we did in the above proof. For example for channels with additive costs, each codeword $\bx(m), m=1,\ldots, M$, must satisfy
\begin{equation}
\frac{1}{n}\sum_{i=1}^n b\big(x_i (m) \big) \le \Gamma \label{eqn:cost_con}
\end{equation}
for some cost function $b:\calX\to [ 0,\infty)$ and some cost constraint $\Gamma>0$. In this case, if the type $P \in\scP_n(\calX)$ of each codeword  $\bx(m)$ is the same for all $m$ and it satisfies
\begin{equation}
\bbE_P[b(X) ]\le\Gamma,\label{eqn:cost_con2}
\end{equation}
then the cost constraint in \eqref{eqn:cost_con} is satisfied.  This class of codes is called {\em constant composition codes} of type $P$.   The Gaussian approximation can be achieved using constant composition codes. Constant composition  coding was used by Hayashi for the DMC with additive cost constraints~\cite[Thm.~3]{Hayashi09}. He then used this result to prove the second-order asymptotics for the AWGN channel~\cite[Thm.~5]{Hayashi09} by discretizing the real line increasingly finely as the blocklength grows. It is more difficult to prove conclusive results on the third-order terms using a constant composition ensemble~\cite{kost14}, nonetheless it is instructive  to understand the technique to demonstrate the achievability of the Gaussian approximation.  Let $M^*_{\max,\mathrm{cc}}(W^n,\eps)$ denote the maximum number of codewords transmissible over $W^n$ with maximum error probability $\eps$ using constant composition codes.

\begin{theorem} \label{thm:const_comp}
If the DMC satisfies $V_\eps(W)>0$,  
\begin{equation}
\log M^*_{\max,\mathrm{cc}}(W^n,\eps) \ge nC + \sqrt{nV_\eps} \Phi^{-1}(\eps) - \bigg(|\calX|-\frac{1}{2}  \bigg)\log n+   O(1). \label{eqn:o1_direct_cc}
\end{equation}
\end{theorem}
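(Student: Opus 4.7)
The plan is to start from an \iid random coding bound and then convert the resulting code into a constant composition one by expurgating codewords that do not share a common empirical type. The loss from expurgation, which by pigeonhole is at most a factor of $|\scP_n(\calX)| \le (n+1)^{|\calX|-1}$, accounts for precisely the $-(|\calX|-1)\log n$ discrepancy between the bound claimed here and the pure \iid Feinstein bound described in Remark~\ref{rmk:fein}.

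First, let $P_X^* \in \Pi$ be a CAID achieving $V_\eps$. I apply Feinstein's bound (Proposition~\ref{prop:fein}) with the product input distribution $(P_X^*)^n$ and the choice $\eta = 1/\sqrt{n}$. Since $U(P_X^*,W) = V(P_X^*,W) = V_\eps$ by \eqref{eqn:uequalsv}, Corollary~\ref{cor:ds_same} applied to the pair $(P_X^*)^n \times W^n$ and $(P_X^*)^n \times (P_X^* W)^n$, together with a Taylor expansion of $\Phi^{-1}$ at $\eps$, yields
\begin{equation}
D_\rms^{\eps-1/\sqrt{n}}\big((P_X^*)^n \times W^n \,\big\|\, (P_X^*)^n \times (P_X^* W)^n\big) = nC + \sqrt{n V_\eps}\,\Phi^{-1}(\eps) + O(1).
\end{equation}
Combined with the $-\log(1/\eta) = -\tfrac{1}{2}\log n$ term in Feinstein's inequality, this produces an $(M,\eps)_{\max}$-code with $\log M \ge nC + \sqrt{n V_\eps}\,\Phi^{-1}(\eps) - \tfrac{1}{2}\log n + O(1)$.

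Next I partition the $M$ codewords of this code according to their empirical types. By the type counting bound~\eqref{eqn:type_coutn2}, $|\scP_n(\calX)| \le (n+1)^{|\calX|-1}$, so by pigeonhole there exists some type $P_n \in \scP_n(\calX)$ whose type class contains at least $M/(n+1)^{|\calX|-1}$ of the codewords. Retaining only these codewords together with the original decoder produces a sub-code whose codebook is contained in $\calT_{P_n}$; since expurgation can only decrease the maximum error probability, this sub-code is a constant composition $(M',\eps)_{\max}$-code with $\log M' \ge \log M - (|\calX|-1)\log(n+1)$. Substituting the lower bound on $\log M$ from the previous step gives exactly the claimed lower bound on $\log M^*_{\max,\mathrm{cc}}(W^n,\eps)$.

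The only nontrivial analytic step is the asymptotic expansion of the $\eps$-information spectrum divergence, which follows directly from the \iid Berry-Esseen theorem (Theorem~\ref{thm:berry_iid}) and a Taylor expansion of $\Phi^{-1}$; all of the coding-theoretic ingredients (Feinstein's bound, the pigeonhole over types, and monotonicity of the maximum error probability under expurgation) are off the shelf. In this sense there is no real obstacle: the entire content of the proof is the observation that restricting to constant composition costs no more than the polynomial type-counting factor $|\scP_n(\calX)|$, yielding the extra $-(|\calX|-1)\log n$ on top of the $-\tfrac{1}{2}\log n$ from Feinstein.
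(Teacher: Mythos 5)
Your proof is correct, and it takes a genuinely different route from the paper's. The paper constructs the constant composition code \emph{directly}: it uses Feinstein's bound with $P_{X^n}$ uniform on the type class $\calT_P$ of the type $P \in \scP_n(\calX)$ nearest $P_X^*$, then performs a change-of-output-measure from $P_{X^n}W^n$ to $(PW)^n$ via \eqref{eqn:ineq_y} and Lemma~\ref{lem:Ds1} (this step costs $\log|\scP_n(\calX)|$), and finally invokes continuity estimates \eqref{eqn:mi_cont}--\eqref{eqn:v_cont} to replace $I(P,W)$ and $V(P,W)$ by $C$ and $V_\eps$. You instead begin with the \iid Feinstein code of Remark~\ref{rmk:fein}, whose third-order term is already $-\tfrac12\log n + O(1)$, and then expurgate a posteriori to the most populous type; the same $\log|\scP_n(\calX)|$ appears, but via pigeonhole rather than a change of measure, and the continuity arguments are unnecessary since the \iid ensemble is built from $P_X^*$ itself. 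Your route is more elementary and makes transparent that the entire $-(|\calX|-1)\log n$ penalty is a type-counting artifact. The trade-off is that the paper's direct construction is designed to generalize: the remark following the theorem notes that for channels with additive cost constraints the same proof applies, because all codewords in $\calT_P$ with $\bbE_P[b(X)]\le\Gamma$ automatically satisfy the cost constraint. Your \iid-then-expurgate argument does not extend as cleanly to that setting, since when the constraint is tight a constant fraction of \iid codewords will violate it, and one would need additional expurgation (or a shifted generating distribution) before the type-pigeonhole step.

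One minor point of phrasing: Feinstein's theorem as stated (Proposition~\ref{prop:fein}) is proved by a greedy codeword-selection argument rather than by \iid random coding, so it is more accurate to say you apply Feinstein's bound with the product input distribution $(P_X^*)^n$ and then expurgate the \emph{resulting} deterministic code. This does not affect the validity of the pigeonhole step, which only requires that the $M$ codewords live in $\calX^n$.
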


\begin{proof}[Proof sketch of Theorem~\ref{thm:const_comp}]
We use Feinstein's theorem (Proposition~\ref{prop:fein}.  Choose a type $P \in \scP_n(\calX)$ that is the closest  in the variational distance sense  to $P_X^*$ achieving $V_\eps$. By \cite[Lem.~2.1.2]{Dembo}, we know that
\begin{equation}
\big\|P-P_X^*\big\|_1\le \frac{|\calX|}{n}. \label{eqn:types_appr}
\end{equation}
Then consider the input distribution  in Feinstein's theorem to be $P_{X^n}(\bx)$,  the  uniform distribution over $\calT_P$, i.e., 
\begin{equation}
P_{X^n}(\bx) = \frac{ \bbI\{\bx\in\calT_P\}}{|\calT_P|}.
\end{equation}
Clearly such a code is constant composition. Now we claim that  \begin{align}
 P_{X^n} W^n (\by) \le |\scP_n(\calX)| (PW)^n(\by)\label{eqn:ineq_y}
\end{align}
for all $\by \in\calY^n$. To see this note that for $\bx\in\calT_P$, 
\begin{equation}
P_{X^n}(\bx) = \frac{1}{|\calT_P|}\le  |\scP_n(\calX)|\exp\big( -nH(P)\big)=|\scP_n(\calX)| P^n(\bx) \label{eqn:inequlity_x}
\end{equation}
where the inequality follows from Lemma~\ref{lem:size_type_class} and the final equality from Lemma~\ref{lem:prob_seq}.  For $\bx\notin\calT_P$,   \eqref{eqn:inequlity_x} also holds as $P_{X^n}(\bx)=0$. Multiplying \eqref{eqn:inequlity_x} by $W^n(\by|\bx)$ and summing over all $\bx$ yields \eqref{eqn:ineq_y}. Let $\tilde{\bx}$ be an arbitrary sequence in $\calT_P$, i.e., $\tilde{\bx}$ is a sequence with type $P$.  The $(\eps-\eta)$-information spectrum divergence in Feinstein's theorem  can be bounded as
\begin{align}
&D_\rms^{\eps-\eta} \big(P_{X^n}\times W^n \, \big\| \, P_{X^n} \times P_{X^n}W^n\big)  \nn\\*
&=  D_\rms^{\eps-\eta} \big(W^n(\cdot|\tilde{\bx}) \, \big\| \,  P_{X^n}W^n\big)\label{eqn:symm}  \\
&\ge  D_\rms^{\eps-\eta} \big(W^n(\cdot|\tilde{\bx}) \, \big\| \,  (PW)^n\big)  -\log|\scP_n(\calX)|\label{eqn:usey} \\
&\ge  n I(P,W)+ \sqrt{n V(P,W)}   \Phi^{-1}\bigg(\eps-\eta-\frac{6\, T(P,W)}{\sqrt{n V(P,W)^3}}\bigg)   \nn\\* 
 &\qquad\qquad\qquad\qquad - \log|\scP_n(\calX)| \label{eqn:use_be_cc}
\end{align}
where \eqref{eqn:symm} follows from permutation invariance within a type class, and  the change of output measure step in~\eqref{eqn:usey} uses the bound in~\eqref{eqn:ineq_y} as well as the   consequence of the sifting property of $D_\rms^{\eps-\eta}$ in~\eqref{eqn:pmf_dom}. Inequality \eqref{eqn:use_be_cc} uses the lower bound in the  Berry-Esseen bound on $D_\rms^{\eps-\eta}$ in Proposition \ref{prop:ds_be} with $T(P,W) := \sum_{x } P(x) T(W(\cdot|x) \| PW)$. Choose $\eta$ in Feinstein's theorem to be $\frac{1}{\sqrt{n}}$. 
In view of \eqref{eqn:types_appr}, the following continuity properties hold for  $c_1,c_2>0$:
\begin{align}
\big|I(P,W) - I(P_X^*,W) \big|  &\le  c_1 n^{-2}    ,\quad\mbox{and} \label{eqn:mi_cont} \\*
 \Big|\sqrt{ V(P,W) }-\sqrt{V(P_X^*,W)}\Big|&\le c_2 n^{-1}. \label{eqn:v_cont}
\end{align}
The bound in \eqref{eqn:mi_cont} follows because $P\mapsto I(P,W)$ behaves as a quadratic function near $P_X^*$ while \eqref{eqn:v_cont} follows from the Lipschitz-ness of $P\mapsto\sqrt{V(P,W)}$ near $P_X^*$ because $V_\eps(W)>0$.  Combining these bounds with the type counting lemma in \eqref{eqn:type_coutn2}  and Taylor expansion of~$\Phi^{-1}(\cdot)$ in~\eqref{eqn:use_be_cc} concludes the proof.
 \end{proof}
We remark that if there are additive cost constraints on the codewords, the above proof goes through almost unchanged. The leading term in the asymptotic expansion in \eqref{eqn:o1_direct_cc} would, of course, be the capacity-cost function \cite[Sec.~3.3]{elgamal}. The analogues of $V_{\min}(W)$ and $V_{\max}(W)$ that define  the $\eps$-dispersion (cf.~\eqref{eqn:eps_disp2}) would involve the maximum and minimum over the set of input distributions $P$ satisfying $\bbE_P[b(X)]\le \Gamma$. The third-order term remains unchanged. For more details, the reader is referred to~\cite{kost14}.

In fact, the Gaussian approximation can   be achieved with   constant composition codes that are also {\em partially universal}. The only statistics of the DMC we need to know are the capacity and the $\eps$-dispersion.  The idea is to compare the empirical mutual information of a codeword and the channel output $\hatI(\bx(m) \wedge \by)$ to a threshold (that depends on capacity and dispersion), similar to maximum mutual information decoding~\cite{CK81,Goppa}.  This technique was delineated in  the proof of Theorem~\ref{thm:src_univ} for   lossless source coding.  Essentially, in  channel coding, it uses the fact that if $X^n$ is uniform over the type class $\calT_P$  and $Y^n$ is the corresponding channel output, the empirical mutual information $\hatI(X^n\wedge Y^n) $ satisfies the central limit relation
\begin{equation}
\sqrt{n}\big(\hatI(X^n\wedge Y^n) - I(P,W) \big)\stackrel{\mathrm{d}}{\longrightarrow}\calN\big(0,V(P,W) \big). \label{eqn:channel_con}
\end{equation}

\subsection{Converse Bounds: Asymptotic Expansions}
The following are the strongest known asymptotic converse bounds. 
\begin{theorem} \label{thm:asymp_conv}
If the DMC $W$ satisfies  $V_\eps(W)>0$,   
\begin{equation}
\log M^*_{\av}(W^n,\eps) \le nC + \sqrt{nV_\eps} \Phi^{-1}(\eps)+\frac{1}{2}\log n + O(1).  \label{eqn:tt}
\end{equation}
If, in addition, the DMC is symmetric and singular,  
\begin{equation}
\log M^*_{\av}(W^n,\eps) \le nC + \sqrt{nV_\eps} \Phi^{-1}(\eps)+O(1). \label{eqn:aw}
\end{equation}
\end{theorem}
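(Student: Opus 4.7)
The plan is to apply the symbol-wise converse bound of Proposition~\ref{prop:converse} to the $n$-fold channel $W^n$ with slack parameter $\eta = 1/\sqrt{n}$, so that the additive $\log(1/\eta)$ term already contributes the $+\tfrac{1}{2}\log n$ in \eqref{eqn:tt}. What remains is to exhibit an output distribution $Q^{(n)} \in \scP(\calY^n)$ such that
\begin{equation}
\sup_{\bx \in \calX^n}\, D_\rms^{\eps+\eta}\big(W^n(\cdot|\bx) \,\big\|\, Q^{(n)}\big) \;\le\; nC + \sqrt{nV_\eps}\,\Phi^{-1}(\eps) + O(1). \nonumber
\end{equation}
A product CAOD $(Q^*)^n$ does not suffice in general, because a sequence $\bx$ whose type is far from $\Pi(W)$ could make $D(W(\cdot|x_i)\|Q^*)$ strictly smaller than $C$ only on average, while the log-likelihood variance under $W^n(\cdot|\bx)$ is not controlled by $V_\eps$. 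Instead I would take the type-mixture output
\begin{equation}
Q^{(n)}(\by) \;:=\; \frac{1}{|\scP_n(\calX)|} \sum_{P \in \scP_n(\calX)} (PW)^n(\by). \nonumber
\end{equation}

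For any $\bx$ with type $P_\bx$, the mixture bound $Q^{(n)} \ge |\scP_n(\calX)|^{-1} (P_\bx W)^n$ combined with the dominating-measure property~\eqref{eqn:pmf_dom} of $D_\rms^\eps$ gives
\begin{equation}
D_\rms^{\eps+\eta}\big(W^n(\cdot|\bx) \,\big\|\, Q^{(n)}\big) \;\le\; D_\rms^{\eps+\eta}\big(W^n(\cdot|\bx) \,\big\|\, (P_\bx W)^n\big) + (|\calX|-1)\log(n+1), \nonumber
\end{equation}
where the polynomial loss is absorbed into the eventual $O(1)$ term after combining with the other $\log n$ contributions (this is why the symbol-wise form of the converse is essential). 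Now $W^n(\cdot|\bx)$ and $(P_\bx W)^n$ are both product measures, so the log-likelihood ratio $\log\tfrac{W^n(\by|\bx)}{(P_\bx W)^n(\by)}$ is a sum of $n$ independent (though non-identically distributed) random variables with mean $nI(P_\bx,W)$ and variance $nV(P_\bx,W)$. Proposition~\ref{prop:ds_be} (when $V(P_\bx,W)$ is bounded below) or Proposition~\ref{prop:ds_ch} (as a uniform fallback) then yields
\begin{equation}
D_\rms^{\eps+\eta}\big(W^n(\cdot|\bx) \,\big\|\, (P_\bx W)^n\big) \;\le\; nI(P_\bx,W) + \sqrt{nV(P_\bx,W)}\,\Phi^{-1}\!\bigg(\eps + \eta + \frac{c}{\sqrt{n}}\bigg) + O(1). \nonumber
\end{equation}

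The main obstacle is the subsequent optimization $\sup_{\bx} = \sup_{P \in \scP_n(\calX)}$ of this expression; I need to show it is attained, up to $O(1)$, at a type lying within $O(1/\sqrt{n})$ of the set $\Pi(W)$ of CAIDs, at which $V(P_\bx,W) = V_\eps(W) + O(1/\sqrt{n})$ by continuity. The key analytic input is that $P \mapsto I(P,W)$ is smooth and attains its maximum $C$ on $\Pi$, so for types at variational distance $\delta$ from $\Pi$ one has $I(P,W) = C - \Theta(\delta^2)$, whereas $\sqrt{V(P,W)}$ varies only linearly in $\delta$. Balancing $n\delta^2$ against $\sqrt{n}\delta$ forces the optimal $\delta$ to be $O(1/\sqrt{n})$; over this regime, Taylor expansions of $I(\cdot,W)$ and of $\Phi^{-1}$ about $\eps$ absorb all corrections into $O(1)$ and yield~\eqref{eqn:tt}. (Some care is needed if $\Pi$ is not a singleton, but the compactness and convexity of $\Pi$ reduce this to the choice of the CAID achieving $V_\eps = V_{\min}$ or $V_{\max}$ according to whether $\Phi^{-1}(\eps)$ is negative or non-negative.)

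For the refined bound~\eqref{eqn:aw} under symmetry and singularity, I would drop the mixture and use $Q^{(n)} = (Q^*)^n$, which eliminates the $(|\calX|-1)\log(n+1)$ change-of-measure loss. Symmetry forces $D(W(\cdot|x)\|Q^*) = C$ for \emph{every} input symbol $x$, so $n I(P_\bx,W)$ becomes $nC$ uniformly in $\bx$, and symmetry also makes $V(W(\cdot|x)\|Q^*)$ constant, equal to $V_\eps(W)$. Singularity forces each symbolwise log-likelihood ratio to take only finitely many values including $\pm\infty$, giving a lattice structure that allows a sharp evaluation of $D_\rms^\eps$ (in the spirit of exact asymptotics) without spending the $\tfrac{1}{2}\log n$ that is unavoidable in the Berry--Esseen estimate. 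Since no $\log(1/\eta)$ slack is needed either (one can use the direct relation between $D_\rmh^\eps$ and the converse without the $D_\rmh^\eps \le D_\rms^{\eps+\eta}$ step of Lemma~\ref{lem:relation}), the third-order term collapses to $O(1)$, completing~\eqref{eqn:aw}.
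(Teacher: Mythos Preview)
Your approach to \eqref{eqn:tt} has a genuine gap. The type-mixture output distribution
\[
Q^{(n)}(\by) \;=\; \frac{1}{|\scP_n(\calX)|} \sum_{P \in \scP_n(\calX)} (PW)^n(\by)
\]
does \emph{not} yield the claimed $\tfrac{1}{2}\log n + O(1)$ third-order term. The change-of-measure step costs $\log|\scP_n(\calX)| \le (|\calX|-1)\log(n+1)$, and this $O(\log n)$ term is \emph{not} absorbed into $O(1)$; there is nothing for it to cancel against. Combined with the $\tfrac{1}{2}\log n$ from $\log(1/\eta)$, your argument gives only
\[
\log M^*_{\av}(W^n,\eps) \;\le\; nC + \sqrt{nV_\eps}\,\Phi^{-1}(\eps) + \Big(|\calX|-\tfrac12\Big)\log n + O(1),
\]
which is exactly the older bound of Strassen and Polyanskiy--Poor--Verd\'u that the paper is improving upon. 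The paper explicitly makes this point: with the pure type mixture, the third-order term is $(|\calX|-\tfrac12)\log n + O(1)$.

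To obtain $\tfrac{1}{2}\log n + O(1)$ independent of $|\calX|$, the paper uses a different output distribution: a convex combination of the type mixture (for types far from $\Pi$) and a \emph{net} $\{Q_{\bk}\}_{\bk\in\calK}$ of output distributions that quantizes $\scP(\calY)$ with grid spacing $\Theta(1/\sqrt{n})$, weighted by $\exp(-\gamma\|\bk\|_2^2)/F$. The exponential weights ensure the normalization $F$ is bounded in $n$, so sifting out the net element $Q_{\bk(\bx)}$ closest to $P_\bx W$ costs only $\gamma\|\bk(\bx)\|_2^2 + O(1)$ rather than $O(\log n)$. The $\tfrac{1}{\sqrt{n}}$-denseness of the net then makes $D(W\|Q_{\bk(\bx)}|P_\bx) \le I(P_\bx,W) + O(1/n)$ and $\|\bk(\bx)\|_2^2 = O(n\xi^2)$ where $\xi$ is the distance of $P_\bx$ from $\Pi$; Strassen's quadratic expansion $I(P_\bx,W)\le C - \alpha\xi^2$ absorbs the $\gamma\|\bk(\bx)\|_2^2$ term into the first-order loss. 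Your balancing argument for the near/far dichotomy is correct in spirit, but it must be applied on top of this net construction, not the type mixture, to reach the stated third-order term.

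For \eqref{eqn:aw} the paper does not give a proof (it attributes the result to Altu\u{g}--Wagner), so there is nothing to compare against; but your sketch is too vague to constitute an argument, and in particular the claim that ``no $\log(1/\eta)$ slack is needed'' requires justification.
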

The claim in \eqref{eqn:tt} is due to Tomamichel-Tan~\cite{TomTan12}, and proved concurrently by Moulin \cite{Moulin13}, while~\eqref{eqn:aw} is due to Altu\u{g}-Wagner~\cite{altug13}. The case $V_\eps(W)=0$ was also treated in Tomamichel-Tan~\cite{TomTan12} but we focus on channels with $V_\eps(W)>0$. See \cite[Fig.~1]{TomTan12} for a summary of the best known upper bounds on $\log M^*_{\av}(W^n,\eps)$ for  all classes of DMCs (regardless of the positivity of  $V_\eps(W)$).

Theorem~\ref{thm:asymp_conv} implies that $\log M^*_{\av}(W^n,\eps) $ is upper bounded by the Gaussian approximation   $nC + \sqrt{nV_\eps} \Phi^{-1}(\eps)$ plus at most $\frac{1}{2}\log n+O(1)$. In general, this cannot be improved without further assumptions on the channel because it can be shown that third-order term is $\frac{1}{2}\log n+O(1)$ for binary symmetric channels  \cite[Thm.~52]{PPV10}. In fact, for non-singular channels, Theorem~\ref{thm:asymp_ach} shows that $\frac{1}{2}\log n+O(1)$  is achievable in the third-order.  The inequality in \eqref{eqn:tt} improves on the results by Strassen~\cite[Thm.~1.2]{Strassen} and Polyanskiy-Poor-Verd\'u~\cite[Eq.~(279)]{PPV10} who showed that the third-order term is upper bounded by $(|\calX|-\frac{1}{2})\log n + O(1)$. The upper bound presented in \eqref{eqn:tt} is independent of the input alphabet $|\calX|$.

Furthermore, under the stronger condition of symmetry and singularity, the third-order term can be   tightened to  $O(1)$. In view of  the first part of Theorem~\ref{thm:asymp_ach}, the third-order term of these channels {\em is} $O(1)$. 

As the entire proof of Theorem~\ref{thm:asymp_conv} is rather lengthy, we will only provide a proof sketch of \eqref{eqn:tt} for $V_{\min}(W)>0$,  highlighting the key features, including a  novel construction of a net to approximate all output distributions. The following proof sketch  is still fairly long, and the reader can skip it without any essential loss of any continuity. 

\begin{proof}[Proof sketch of \eqref{eqn:tt}]
We assume that $V_{\min}(W)>0$.  
For  DMC, the bound in Proposition~\ref{prop:converse} evaluates to
\begin{align}
\log  M^*_\av(W^n, \eps) \leq \min_{Q^{(n)}  }\ \max_{\bx \in \calX^{n}} D_\rms^{\eps+\eta}\big( W^n(\cdot|\bx) \big\|Q^{(n)}\big) + 
    \log \frac{1}{\eta}.\label{eqn:general-n}
\end{align}
In the following, we choose $\eta=\frac{1}{\sqrt{n}}$ so the $\log$ term above gives our $\frac{1}{2}\log n$. It is thus important to find a suitable choice of $Q^{(n)} \in \scP(\calY^{n})$ to further upper bound the above.  Symmetry considerations (see, e.g., \cite[Sec.~V]{Pol13}) allow us to restrict the search to distributions that are invariant under permutations of the $n$ channel uses.

Let $\zeta :=|\calY| (|\calY|- 1)$ and let $\gamma>0$. Consider the following convex combination of product distributions:
\begin{align}
  Q^{(n)}(\by) &:= \frac{1}{2} \sum_{\bk \in \mathcal{K}}
  \frac{\exp \big(- \gamma \|\bk\|_2^2 \big)}{F}
  \, \prod_{i=1}^n Q_{\bk}(y_i)  \nn\\*
   &\qquad + \frac12 \sum_{P_{\bx} \in \scP_n(\calX)} \frac{1}{|\scP_n(\calX)|} \prod_{i=1}^n P_{\bx}W(y_i),  \label{eqn:Qn}
\end{align}
where $F$ is a normalization constant that ensures $\sum_{\by}Q^{(n)}(\by) =1$, 
\begin{align}
  Q_{\bk}(y)  := Q^*(y) + \frac{k_{y}}{\sqrt{n \zeta }}, 
  \end{align}
 and the index set $\calK$ is defined as 
 \begin{align} 
  \mathcal{K} := \bigg\{ \bk=\{k_y\}_{y\in\calY} \in \mathbb{Z}^{|\calY|} \, : \, \sum_{y\in\calY} k_y = 0 ,\, k_y \geq - Q^*(y) \sqrt{n \zeta } \bigg\}.
\end{align}
See Fig.~\ref{eqn:quantize}.  The convex combination of output distributions induced by input types $(P_{\bx}W)^{  n}$ and the optimal  output distribution $(Q^*)^{n}$ (corresponding to $\bk=\bzero$) in $Q^{(n)}$ is  inspired partly  by Hayashi~\cite[Thm.~2]{Hayashi09}.  What we have done in our choice of $Q_{\bk}$ is   to uniformly quantize the simplex $\scP(\calY)$ along axis-parallel directions to form a net.   The constraint  that each $\bk$ belongs to $\mathcal{K}$ ensures that each $Q_{\bk}$ is a valid probability mass function.  It can be shown that $F<\infty$. Furthermore one can verify that  for any $Q\in\scP(\calY)$, there exists a $\bk\in\calK$ such that 
\begin{equation}
\|Q-Q_{\bk}\|_2\le\frac{1}{\sqrt{n}}
\end{equation}
so the net we have constructed is $\frac{1}{\sqrt{n}}$-dense in the $\ell_2$-norm metric.  

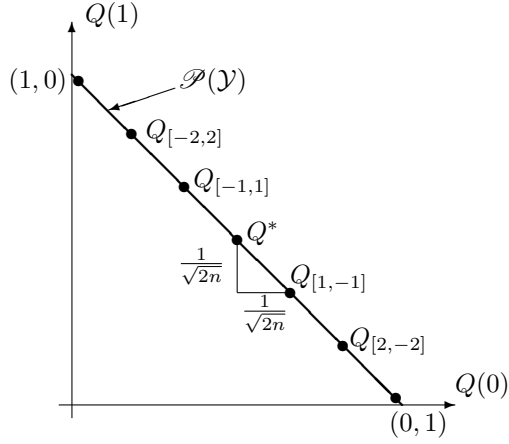
\begin{figure}
\centering
\begin{picture}(150,150)
\put(0,5){\vector(1,0){150}}
\put(5,0){\vector(0,1){150}}

\put(150,10){\mbox{$Q(0)$}}
\put(10,150){\mbox{$Q(1)$}}
\put(67.5,67.5){\circle*{4}}
\put(71,67.5){\mbox{$Q^*$}}
\put(87.5,47.5){\circle*{4}}
\put(107.5,27.5){\circle*{4}}
\put(127.5,7.5){\circle*{4}}

\put(47.5,87.5){\circle*{4}}
\put(27.5,107.5){\circle*{4}}
\put(7.5,127.5){\circle*{4}}

\put(67.5,67.5){\line(0,-1){20}}
\put(67.5,47.5){\line(1,0){20}}

\put(69,38){\mbox{$\frac{1}{\sqrt{2n}}$}}
\put(45,56){\mbox{$\frac{1}{\sqrt{2n}}$}}

\put(51,87.5){\mbox{$Q_{[-1,1]}$}}
\put(87.5,51){\mbox{$Q_{[1,-1]}$}}

\put(110,27.5){\mbox{$Q_{[2,-2]}$}}

\put(33,106){\mbox{$Q_{[-2,2]}$}}

\put(45,125){\mbox{$\scP(\calY)$}}
\put(44,125){\vector(-3,-1){25}}

\put(125,-5){\mbox{$(0,1)$}}
\put(-19,125){\mbox{$(1,0)$}}
\thicklines
\put(5,130){\line(1,-1){125}}
\end{picture}
\caption{Illustration of the choice of $\{Q_{\bk} \}_{\bk\in\calK}$ for $\calY=\{0,1\}$. Note that all probability distributions lie on the line $Q(0)+ Q(1)=1$ and each element of the net is denoted by $Q_{\bk}$ where $\bk$ denotes some vector with integer elements. }
\label{eqn:quantize}
\end{figure}

Let us  provide some intuition for the choice of $Q^{(n)}$. The first part of the convex
combination is used to approximate output distributions induced by input  types that are close to
the set of CAIDs $\Pi$. We choose a weight for each element of the net that drops exponentially with
the distance from the unique CAOD. This ensures that the normalization $F$ does not depend on
$n$ even though the number of elements in the net increases with $n$. The smaller weights for types
far from the CAIDs will later be compensated by the larger deviation of the corresponding mutual
information from the capacity. This is achieved by the second part of the convex combination which
we use to match the input types far from the CAIDs.  This partition of input types into those that are close and far from $\Pi$ was also used by Strassen~\cite{Strassen} in his proof of the second-order asymptotics for DMCs,

Now we just have to evaluate $ D_\rms^{\eps+\eta}\big( W^n(\cdot|\bx) \big\|Q^{(n)}\big)$ for all $\bx\in\calX^n$. The idea is to partition input sequences depending on their distance from the set of CAIDs. For this define
\begin{equation}
\Pi_\mu:=\Big\{  P\in\scP(\calX)\, :\, \min_{P^*\in \Pi} \| P-P^*\|_2\le\mu \Big\}
\end{equation}
for some small $\mu>0$.  The choice of $\mu$ will be made later.

For sequences not in $\Pi_\mu$, we pick $(P_{\bx}W)^n$ from the  convex combination (per Lemma~\ref{lem:Ds1}) giving 
\begin{equation}
D_\rms^{\eps+\eta}\big( W^n(\cdot|\bx) \big\|Q^{(n)}\big)\le D_\rms^{\eps+\eta}\big( W^n(\cdot|\bx) \big\| (P_{\bx}W)^n \big)+ \log \big(2|\scP_n(\calX) |\big).
\end{equation}
Next the Chebyshev type bound in Proposition~\ref{prop:ds_ch} yields
\begin{equation}
D_\rms^{\eps+\eta}\big( W^n(\cdot|\bx) \big\|Q^{(n)}\big)\le nI(P_{\bx},W) + \sqrt{ \frac{nV(P_{\bx},W)}{ 1-\eps-\eta}} + \log \big(2|\scP_n(\calX) |\big).
\end{equation}
Since $I(P_{\bx},W)\le C'<C$ (i.e., the first-order  mutual information term is strictly bounded away from capacity), $V(P_{\bx},W)$ is uniformly bounded \cite[Rmk.~3.1.1]{Han10} and the number of types is polynomial, the right-hand-side of the preceding inequality is upper bounded by $nC'+ O(\sqrt{n})$. This is smaller than the Gaussian approximation for all sufficiently large $n$ as $C'<C$.

Now for sequences   in $\Pi_\mu$, we pick $Q_{\bk(\bx)}$ from the net that is closest to $P_{\bx}W$. Per Lemma~\ref{lem:Ds1},  this gives
\begin{equation}
D_\rms^{\eps+\eta}\big( W^n(\cdot|\bx) \big\|Q^{(n)}\big)\le D_\rms^{\eps+\eta}\big( W^n(\cdot|\bx) \big\|Q_{\bk(\bx)}^n\big)+  \gamma\|\bk(\bx)\|_2^2+ \log \big(2F\big).
\end{equation}
By the Berry-Esseen-type bound in Proposition~\ref{prop:ds_be}, we have 
\begin{align}
 &D_\rms^{\eps+\eta}\big( W^n(\cdot|\bx) \big\|Q^{(n)}\big)  
   \le nD(W\| Q_{\bk(\bx)}|P_{\bx})   \nn\\*
  &+ \sqrt{nV(W\| Q_{\bk(\bx)}|P_{\bx}) } \Phi^{-1}\bigg(\eps+\frac{\kappa}{\sqrt{n}}\bigg) +  \gamma\|\bk(\bx)\|_2^2+\log\big(2F\big) \label{eqn:boundDs}
\end{align}
for some finite $\kappa >0$. By the $\frac{1}{\sqrt{n}}$-denseness of the net, the positivity of the CAOD, and the bound $D(\tilQ \|Q )\le \|\tilQ -Q \|_2^2/\min_z Q (z)$  \cite[Lem.~6.3]{Csi06} we can show that there exists a constant $q>0$ such that 
\begin{equation}
D(W\| Q_{\bk(\bx)}|P_{\bx}) \le I(P_{\bx},W) + \frac{\| P_{\bx}W-Q_{\bk(\bx)}\|_2^2}{q }\le I(P_{\bx},W)+\frac{1}{nq}. \label{eqn:boundD}
\end{equation}
Furthermore by the Lipschitz-ness of $Q\mapsto\sqrt{V(W\|Q|P)}$ which follows from the fact that $Q(y)>0$ for all $y\in\calY$, we have 
\begin{equation}
\Big|\sqrt{nV(W\| Q_{\bk(\bx)}|P_{\bx}) } -\sqrt{V(P_{\bx},W)}  \Big| \le \beta\|P_{\bx} W-Q_{\bk(\bx)}\|_2 \le \frac{\beta}{\sqrt{n}}.\label{eqn:boundV}
\end{equation}
It is known from Strassen's work \cite[Eq.~(4.41)]{Strassen} and continuity considerations that for all $P_{\bx}\in\Pi_\mu$, 
\begin{equation}
I(P_{\bx},W)\le C-\alpha\xi^2 \quad\mbox{and}\quad \Big|\sqrt{V(P_{\bx},W)}-\sqrt{V(P^*,W)}\Big|\le\beta\xi,
\end{equation}
where $P^*$ is the closest element in $\Pi$ to $P_\bx$ and $\xi$ is the corresponding Euclidean distance.  Let $\|W\|_2$ be the spectral norm of $W$. By the construction of the net, 
\begin{align}
\|\bk(\bx)\|_2&\le\sqrt{n\zeta}\|Q_{\bk(\bx)}-Q^*\|_2  \\*
 &\le \sqrt{n\zeta} \big( \|Q_{\bk(\bx)}-P_{\bx}W\|_2 + \|P_{\bx}W-Q^*\|_2\big) \\*
 &\le \sqrt{n\zeta}\bigg( \frac{1}{\sqrt{n}}+\|W\|_2\xi\bigg) \label{eqn:boundK}
\end{align}
Uniting \eqref{eqn:boundDs}, \eqref{eqn:boundD}, \eqref{eqn:boundV} and \eqref{eqn:boundK} and using some simple algebra completes the proof.
\end{proof}

As can be seen from the above proof, the net serves to approximate all possible output distributions so that, together with standard continuity arguments concerning information quantities, the remainder terms resulting from \eqref{eqn:boundD}, \eqref{eqn:boundV} and \eqref{eqn:boundK}  are all $O(1)$. 

If we had chosen the more ``natural'' output distribution 
\begin{equation}
\tilQ^{(n)}(\by)=\sum_{P_{\bx} \in \scP_n(\calX)}   \frac{1}{|\scP_n(\calX)|}   \prod_{i=1}^n P_{\bx}W(y_i) \label{eqn:Qn1}
\end{equation}
in place of $Q^{(n)}$ in \eqref{eqn:Qn}, an application of Lemma~\ref{lem:Ds1}, the type counting lemma in \eqref{eqn:type_coutn2},  and  continuity arguments shows that the third-order term would be $(|\calX|-\frac{1}{2})\log n+O(1)$. This upper bound on the third-order term was shown in the works by Strassen~\cite[Thm.~1.2]{Strassen} and Polyanskiy-Poor-Verd\'u~\cite[Eq.~(279)]{PPV10}. The choice of output distribution in \eqref{eqn:Qn1} is essentially due to Hayashi~\cite{Hayashi09}.
 
\section{Asymptotic Expansions for Gaussian Channels}   \label{sec:awgn}
In this section, we consider discrete-time additive white Gaussian noise (AWGN) channels in which
\begin{equation}
Y_i = X_i + Z_i ,
\end{equation}
for each time $i = 1,\ldots, n$. The noise  $\{Z_i\}_{i=1}^n $ is a memoryless, stationary Gaussian process with zero mean and unit variance  so the channel can be expressed as 
\begin{equation}
W(y|x)=\calN(y;x,1)=\frac{1}{\sqrt{2\pi}}\rme^{ - (y-x)^2/2}.
\end{equation}
This is perhaps the most important  and well-studied channel in communication systems.  In the case of Gaussian channels, we must impose a cost constraint on the codewords, namely for every $m$, 
\begin{equation}
\|f(m)\|_2^2  = \sum_{i=1}^n f_i(m)^2\le n\, \snr
\end{equation}
where $n$ is the blocklength, $\snr$ is the admissible power and $f_i(m)$ is the $i$-th coordinate of the $m$-th codeword. The {\em signal-to-noise  ratio} is thus $\snr$.   We use the notation $M^*_{\av}(W^n,\snr,\eps)$ to mean the maximum number of codewords transmissible over $W^n$ with average error probability and signal-to-noise ratio not exceeding $\eps\in (0,1)$ and $\snr$ respectively. We define $M^*_{\max}(W^n,\snr,\eps)$ in an analogous fashion. 

Define the {\em Gaussian capacity} and {\em Gaussian dispersion} functions as
\begin{align}
\rvC(\snr) := \frac{1}{2}\log(1+\snr),  \quad\mbox{and}\quad 
\rvV(\snr) := \log^2\rme\cdot\frac{\snr(\snr+2)}{2(\snr+1)^2} \label{eqn:gauss_disp}
\end{align}
respectively.
The  direct part of the following theorem was proved in Tan-Tomamichel~\cite{TanTom13} and the converse in Polyanskiy-Poor-Verd\'u~\cite[Thm.~54]{PPV10}. The second-order asymptotics (ignoring the third-order term) was proved  concurrently with \cite{PPV10} by Hayashi~\cite[Thm.~5]{Hayashi09}. Hayashi showed  the direct part   using the second-order asymptotics for DMCs with cost constraints (similar to Theorem~\ref{thm:const_comp}) and a quantization argument (also see~\cite{Tan_ICCS}). The converse part  was shown using the Hayashi-Nagaoka  converse bound   in \eqref{eqn:hay_nag} with the output distribution chosen to be the product CAOD. 
\begin{theorem} \label{thm:awgn_asy}
For every $\snr\in (0,\infty)$, 
\begin{equation}
\log M^*_{\av}(W^n,\snr,\eps) = n\rvC(\snr) + \sqrt{n\rvV(\snr)} \Phi^{-1}(\eps)+ \frac{1}{2}\log n +   O(1). \label{eqn:awgn_asymp}
\end{equation}
\end{theorem}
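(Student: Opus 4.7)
\bigskip

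\noindent\textbf{Proof proposal.} I would prove the matching achievability and converse bounds separately, with both relying on a careful Berry--Esseen analysis of the information density under a spherically symmetric input distribution (codewords on the power sphere $\bbS_{n,\snr} := \{ \bx \in \bbR^n : \|\bx\|_2^2 = n\,\snr\}$).

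\smallskip

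\emph{Direct part ($\geq$).} The plan is to use the RCU bound (Proposition \ref{prop:rcu}), applied to the DMC analogue after a standard discretization/limit argument, but with the input distribution $P_{X^n}$ chosen to be the uniform distribution on $\bbS_{n,\snr}$ and the auxiliary output distribution $Q^{(n)}$ set to the product CAOD $\calN(0,1+\snr)^n$. The advantage of spherical codewords is that the power constraint is satisfied exactly, and one can check that for every $\bx \in \bbS_{n,\snr}$ the information density decomposes as
\begin{equation}
\log \frac{W^n(\by|\bx)}{Q^{(n)}(\by)} = n\,\rvC(\snr) + \sum_{i=1}^n \xi_i(\bx,\by),
\end{equation}
where, when $(X^n,Y^n) \sim P_{X^n} \times W^n$, the summands $\xi_i(X^n,Y^n)$ are exchangeable with zero mean, variance tending to $\rvV(\snr)$, and uniformly bounded third absolute moment. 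An application of the Berry--Esseen theorem (Theorem \ref{thm:berry_gen}) identifies the second-order term, and applying Theorem \ref{thm:str_ld} inside the RCU bound (exactly as in the $\frac{1}{2}\log n$ achievability argument sketched after Theorem \ref{thm:asymp_ach} for non-singular DMCs) delivers the additional $\frac{1}{2}\log n + O(1)$ third-order contribution.

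\smallskip

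\emph{Converse part ($\leq$).} I would invoke the symbol-wise converse bound (Proposition \ref{prop:converse}), restricted (via the cost constraint) to codewords $\bx$ with $\|\bx\|_2^2 \le n\,\snr$, with $\eta := \frac{1}{\sqrt n}$ so that $\log(1/\eta)$ already contributes the desired $\frac{1}{2}\log n$. It therefore suffices to exhibit a single auxiliary output distribution $Q^{(n)}$ on $\bbR^n$ such that
\begin{equation}
\sup_{\bx : \|\bx\|_2^2 \le n\,\snr} D_{\rms}^{\eps+\eta}\bigl( W^n(\cdot|\bx) \,\bigl\|\, Q^{(n)} \bigr) \le n\,\rvC(\snr) + \sqrt{n\,\rvV(\snr)}\,\Phi^{-1}(\eps) + O(1).
\end{equation}
The natural choice is the spherically symmetric output distribution obtained by convolving the uniform law on $\bbS_{n,\snr}$ with $\calN(0,\bI_n)$; equivalently, one may take a small convex combination of $\calN(0,(1+\snr)\bI_n)$ and this spherical output law (in the spirit of the Strassen / Hayashi output-distribution construction used in the proof sketch of Theorem \ref{thm:asymp_conv}). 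Under such a $Q^{(n)}$, rotational symmetry reduces the log-likelihood ratio to a function of $\|\by\|_2^2$ and $\langle \bx,\by\rangle$, and a Berry--Esseen expansion of this function of a sum of i.i.d.\ chi-squared/Gaussian contributions yields the required estimate on $D_\rms^{\eps+\eta}$ up to constants uniformly in $\bx$.

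\smallskip

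\emph{Main obstacle.} The hardest part is tracking the third-order term \emph{precisely} at $\frac{1}{2}\log n$, not $\Theta(\log n)$, on both sides. On the direct side, this means avoiding the weaker Feinstein bound and correctly invoking Theorem \ref{thm:str_ld} to gain the $1/\sqrt{n}$ factor inside the RCU expectation, which requires bounded third absolute moments of the per-letter information densities under spherical inputs---provable but technically delicate because the spherical input is not i.i.d. On the converse side, the difficulty is analogous to the construction in the proof of Theorem \ref{thm:asymp_conv}: one must design $Q^{(n)}$ so that $D(W^n(\cdot|\bx)\|Q^{(n)}) \le n\,\rvC(\snr) + O(1)$ uniformly over the entire power ball and simultaneously control the conditional variance so that the Berry--Esseen remainder is $O(1/\sqrt n)$. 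Once this is accomplished, Taylor expanding $\Phi^{-1}$ around $\eps$ and combining with the $\log(1/\eta)=\tfrac12\log n$ contribution closes the argument.
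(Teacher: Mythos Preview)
Your converse is over-engineered. Once you apply the Yaglom map trick (the $n\to n+1$ argument) to reduce to codewords satisfying $\|\bx\|_2^2 = n\,\snr$ exactly, the plain product CAOD $Q^{(n)}=\calN(0,1+\snr)^{n}$ already does the job: a direct calculation shows that for every such $\bx$ the per-letter information densities $\log\frac{W(Y_i|x_i)}{Q_Y^*(Y_i)}$ have mean $\rvC(\snr)$ and variance $\rvV(\snr)$ \emph{uniformly} in $\bx$, so Proposition~\ref{prop:converse} with $\eta=n^{-1/2}$ plus the scalar Berry--Esseen theorem gives the upper bound immediately. No spherical output law, no convex combination, no net construction is needed; the Gaussian geometry does this for free.

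Your direct part has a genuine gap. You invoke Theorem~\ref{thm:str_ld} ``exactly as in the $\tfrac12\log n$ argument for non-singular DMCs,'' but that argument works because the fresh codeword $\bar X^n$ in the RCU bound is drawn i.i.d., so conditioned on $Y^n=\by$ the metric $q(\bar X^n,\by)$ is a sum of \emph{independent} terms---the hypothesis of Theorem~\ref{thm:str_ld}. With $P_{X^n}$ uniform on the sphere, the coordinates of $\bar X^n$ are only exchangeable, not independent, and the relevant quantity reduces (up to a shift depending on $\by$) to the inner product $\langle\bar X^n,\by\rangle$. Neither Theorem~\ref{thm:berry_gen} nor Theorem~\ref{thm:str_ld} applies to this object. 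The paper supplies the missing ingredient: a local estimate
\[
\Pr\bigl(\langle X^n,Y^n\rangle\in[b,b+\mu]\ \big|\ Y^n=\by\bigr)\le \kappa\,\frac{\mu}{\sqrt n}
\]
valid uniformly over $b$ and over typical $\by$, proved via Laplace's method for high-dimensional integrals on the sphere. This estimate is then sliced (as in the proof of Theorem~\ref{thm:str_ld}) to obtain $g(t,\by)\le \gamma\,e^{-t}/\sqrt n$, which is what produces the $\tfrac12\log n$. Without this Laplace-approximation lemma, your RCU evaluation stalls at the point where you need the $1/\sqrt n$ prefactor for a dependent sum.
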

For the AWGN channel, we see that the asymptotic expansion is known exactly up to the third order under the average error setting.  The converse proof  (upper bound of \eqref{eqn:awgn_asymp}) is simple and uses a specialization of Proposition~\ref{prop:converse} with the product CAOD. 

The achievability proof is, however, more involved and uses the RCU bound  and Laplace's technique for approximating high-dimensional integrals~\cite{Shun95,Tierney86}. The main step establishes that if $X^n$ is uniform  on the   power sphere $\{\bx:  \|\bx\|_2^2=n\, \snr\}$, one has
\begin{equation}
  \Pr\big( \langle X^n,Y^n\rangle\in [b, b + \mu] \, \big| \, Y^n = \by\big)\le \kappa\cdot\frac{\mu}{\sqrt{n}}. \label{eqn:estimate}
\end{equation}
where $\kappa$ does not depend on $b\in\bbR$ and typical $\by$, i.e., $\by$ such that $\|\by\|_2^2\approx n(\snr+1)$. The estimate in \eqref{eqn:estimate} is not obvious as  the inner product $\langle X^n,Y^n\rangle$ is not a sum of independent random variables and so standard limit theorems (like those in Section~\ref{sec:prob}) cannot be employed directly.  The division by $\sqrt{n}$  gives us the $\frac{1}{2}\log n$ beyond the Gaussian approximation. 

If one is content with just the Gaussian approximation with an $O(1)$ third-order term,  one can evaluate the  so-called $\kappa \beta$-bound  \cite[Thm.~25]{PPV10}. See \cite[Thm.~54]{PPV10} for the justification. The reader is also  referred to MolavianJazi-Laneman~\cite{Mol13} for an elegant proof strategy using the central limit theorem for functions (Theorem \ref{thm:func_clt}) to prove the achievability part of Theorem \ref{thm:awgn_asy} under   the average error setting with an $O(1)$ third-order term. 

It remains an open question with regard to  whether  $\frac{1}{2}\log n+O(1)$ is achievable under the maximum error setting, i.e., whether $\log M^*_{\max}(W^n,\snr,\eps)$ is lower bounded by the expansion in~\eqref{eqn:awgn_asymp}.


\begin{proof}
We start with the converse. By  appending   to a length-$n$ codeword (possibly  power strictly less than $\snr$) an extra $(n+1)^{\mathrm{st}}$ coordinate to equalize powers \cite[Lem.~39]{PPV10} \cite[Sec.~X]{Shannon59} (known as the $n\to n+1$ argument or the Yaglom map trick~\cite[Ch.~9, Thm.~6]{conway}), we have that 
\begin{equation}
M^*_{\av}(W^n,\snr,\eps)\le M^*_{\av,\mathrm{eq}}(W^{n+1},\snr,\eps)
\end{equation}
where $M^*_{\av,\mathrm{eq}}(W^{n },\snr,\eps)$ is similar to $M^*_{\av}(W^n,\snr,\eps)$, except that the codewords must satisfy the cost constraints with {\em equality}, i.e., $\|f(m)\|_2^2 =\| \bx(m)\|_2^2=n\, \snr$. Since increasing the blocklength by $1$ does not affect the asymptotics of $\log M^*_{\av}(W^n,\snr,\eps)$,  we may as well assume   that all codewords satisfy the cost constraints with equality.  By Proposition~\ref{prop:converse} applied to $n$ uses of the AWGN channel, we have
\begin{equation}
\!\log \! M^*_{\av}(W^n,\snr,\eps)\!\le\!\inf_{Q^{(n)}  }\sup_{\|\bx \|_2^2=n\, \snr} D_\rms^{\eps+\eta}\big( W^n(\cdot|\bx) \| Q^{(n)}\big) +\log\frac{1}{\eta}. \!\!
\end{equation}
Take $\eta=\frac{1}{\sqrt{n}}$ so the final $\log$ term gives   $\frac{1}{2}\log n$. It remains to show that the $(\eps+\eta)$-information spectrum divergence term is upper bounded by the Gaussian approximation plus at most a constant term.

For this purpose, we have to choose the output distribution  $Q^{(n)}\in\scP(\bbR^n)$. This choice is easy compared to the DMC case. We choose
\begin{equation}
Q^{(n)}(\by) = \prod_{i=1}^nQ_Y^*(y_i),\quad\mbox{where}\quad Q_Y^*(y)=\calN(y ;0,1+\snr). \label{eqn:output_ga}
\end{equation}
One can then check that  for every $\bx \in\bbR^n$ such that $\|\bx\|_2^2=n\, \snr$,
\begin{align}
\bbE\bigg[\frac{1}{n}\sum_{i=1}^n \log\frac{W(Y_i|x_i)}{Q_{Y}^*(Y_i)}\bigg] &= \rvC(\snr),\,\,\mbox{and} \label{eqn:stats1}\\*
\var\bigg[\frac{1}{n}\sum_{i=1}^n \log\frac{W(Y_i|x_i)}{Q_{Y}^*(Y_i)}\bigg]&= \frac{\rvV(\snr)}{n}.\label{eqn:stats2}
\end{align}
Then, by the Berry-Esseen-type bound in Proposition~\ref{prop:ds_be}, we have 
\begin{equation}
D_\rms^{\eps+\eta}\big( W^n(\cdot|\bx) \| Q^{(n)}\big)\le n  \rvC(\snr) +\sqrt{n \rvV(\snr)}\Phi^{-1} \bigg( \eps+\eta+ \frac{6\, T}{\sqrt{n \rvV(\snr)^3}}\bigg)
\end{equation}
where $T<\infty$ is related to    the third absolute moments of   $\log\frac{W(Y|x_i)}{Q_{Y^*}(Y)}$. A Taylor expansion of  $\Phi^{-1}(\cdot)$ concludes the proof of the converse.

Since the proof of the direct part is long, we only highlight some key ideas in the following steps. Details can be found in \cite{TanTom13}. 

Step 1: (Random coding distribution)  Consider the following input distribution to be applied to the RCU bound:
\begin{equation}
P_{X^n}(\bx)=\frac{\delta\{\|\bx\|_2^2-n\, \snr \} }{ A_n(\sqrt{n\, \snr})} \label{eqn:awgn_input}
\end{equation}
where $\delta\{\cdot\}$ is the Dirac delta and $A_n(r)=\frac{2\pi^{n/2}}{\Gamma(n/2 )}r^{n-1}$ is the surface area of a sphere of radius-$r$ in $\bbR^n$.  The power constraints are automatically satisfied with probability one. Let 
\begin{equation}
q(\bx,\by) := \log\frac{W^n(\by|\bx)}{ P_{X^n}W^n(\by)}
\end{equation}
be the log-likelihood ratio. We will take advantage of  the fact that
\begin{equation}
  q(\bx,\by) = \frac{n}{2} \log \frac{1}{2\pi} + \langle \bx, \by \rangle - n\, \snr - \|\by\|_2^2 - \log P_{X^n}W^n(\by) \label{eqn:inner_product}
\end{equation}
only depends on the codeword through the inner product $\langle \bx, \by \rangle=\sum_{i=1}^n x_i y_i$. In fact, $q(\bx,\by)$ is equal to $\langle \bx, \by \rangle$ up to a shift that only depends on $\|\by\|_2^2$.

Step 2: (RCU bound)
The RCU bound (Proposition \ref{prop:rcu}) states that there exists a blocklength-$n$ code with $M$ codewords and average error probability $\eps'$ such that 
\begin{equation}
 \eps'\le\bbE\Big[\min\big\{1,M\,\Pr\big(q(\bar{X}^n, Y^n)\ge q(X^n, Y^n)\,\big|\,X^n,Y^n\big)   \big\} \Big], \label{eqn:gauss_rcu}
 \end{equation} 
 where $(\barX^n, X^n,Y^n)\sim P_{X^n}(\bar{\bx}) P_{X^n}(\bx)W^n(\by|\bx)$.  Let 
 \begin{equation}
 g(t,\by):=\Pr\big(q(\bar{X}^n, Y^n)\ge t\,\big|\,Y^n=\by\big)   \label{eqn:gty}
 \end{equation}
 so the probability in \eqref{eqn:gauss_rcu} can be written as 
 \begin{equation}
 \Pr\big(q(\bar{X}^n, Y^n)\ge q(X^n, Y^n)\,\big|\,X^n,Y^n\big) = g(q(X^n,Y^n), Y^n). 
 \end{equation}
By using Bayes rule, we see that 
\begin{equation}
 g(t,\by) = \bbE\big[ \exp(-q(X^n,Y^n)) \bbI\{ q(X^n,Y^n) >t \} \, \big|\, Y^n=\by\big].  \label{eqn:integrating}
 \end{equation} 
 
  Step 3: (A high-probability set)
 Now, we define a set of channel outputs with high probability
\begin{equation}
 \calT:=\Big\{ \by: \frac{1}{n}\|\by\|_2^2 \in [\snr+1-\delta_n, \snr+1+\delta_n]\Big\}
 \end{equation} 
 With $\delta_n =n^{-1/3}$, it is easy to show that $P_{X^n}W^n(\calT)\ge 1-\xi_n$ where $\xi_n = \exp(-\Theta(n^{1/3}))$. 
 
Step 4: (Probability of the log-likelihood ratio belonging to an interval)
We would like to upper bound $g(t,\by)$ in \eqref{eqn:gty} to evaluate the RCU bound.  As an intermediate step,   we consider estimating 
\begin{equation}
\mathfrak{p}(a, \mu\, |\, \by) := \Pr\big( q(X^n,Y^n)\in [a,a+\mu] \, \big| \, Y^n= \by\big) ,
\end{equation}
where $a \in \mathbb{R}$ and $\mu > 0$ are some constants. Because $Y^n$ is fixed to some constant vector $\by$ and $\|X^n\|_2^2$ is also constant, $\mathfrak{p}(a, \mu\, |\, \by)$ can be rewritten using~\eqref{eqn:inner_product} as
\begin{equation}
\mathfrak{p}(a, \mu\, |\, \by) := \Pr\big( \langle X^n,Y^n\rangle\in [b,b + \mu] \, \big| \, Y^n = \by\big) ,  \label{eqn:inner_prod}
\end{equation}
for some other constant $b$ that depends on $a$.  So the crux of the proof boils down to understanding the behavior of the inner product $\langle X^n, Y^n\rangle=\sum_{i=1}^n X_i Y_i$ per the input distribution in \eqref{eqn:awgn_input}.
The following important estimate is shown in~\cite{TanTom13} using  Laplace approximation for integrals~\cite{Shun95,Tierney86}.

\begin{lemma} \label{lem:hs}
For all large enough $n$ (depending only on $\snr$),   all $\by\in\calT$ and all $a\in\bbR$,  
\begin{equation}
\mathfrak{p}(a, \mu\, |\, \by)\le \kappa \cdot \frac{\mu}{\sqrt{n}},
\end{equation}
where $\kappa >0$ also only depends only on the power $\snr$. 
\end{lemma}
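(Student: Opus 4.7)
The plan is to exploit the spherical symmetry of the input distribution $P_{X^n}$ in \eqref{eqn:awgn_input} to reduce the problem to bounding the density of a one-dimensional marginal of the uniform distribution on the unit sphere. Since $X^n$ is uniform on the sphere of radius $\sqrt{n\,\snr}$ and is independent of $Y^n$ (in the RCU bound, the codeword appearing in the inner probability is the independent copy $\bar X^n$), for a fixed vector $\by \in \bbR^n$ we may write
\begin{equation}
\langle X^n, \by\rangle = \sqrt{n\,\snr}\,\|\by\|_2 \cdot V, \qquad V := \Big\langle \tfrac{X^n}{\sqrt{n\,\snr}}, \tfrac{\by}{\|\by\|_2}\Big\rangle.
\end{equation}
By rotational invariance of the uniform distribution on $S^{n-1}$, the random variable $V$ has the same law as the first coordinate of a uniformly distributed vector on $S^{n-1}$, irrespective of the direction of $\by$.

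Next, I would invoke the standard closed form for the density of the first coordinate $V$ of a uniform vector on $S^{n-1}$, namely
\begin{equation}
f_V(v) = \frac{\Gamma(n/2)}{\sqrt{\pi}\,\Gamma((n-1)/2)} \,(1-v^2)^{(n-3)/2}, \qquad v\in(-1,1).
\end{equation}
Since $(1-v^2)^{(n-3)/2}\le 1$ for $n\ge 3$, Stirling's approximation yields a uniform bound $f_V(v) \le c_0 \sqrt{n}$, where $c_0>0$ is an absolute constant (in fact $c_0$ can be taken close to $1/\sqrt{2\pi}$ for large $n$). Translating \eqref{eqn:inner_prod} through the change of variables $v = b'/(\sqrt{n\,\snr}\|\by\|_2)$ gives
\begin{equation}
\mathfrak{p}(a,\mu\,|\,\by) \le c_0\sqrt{n}\cdot \frac{\mu}{\sqrt{n\,\snr}\,\|\by\|_2}.
\end{equation}

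To conclude, I use the assumption $\by\in\calT$, which guarantees $\|\by\|_2 \ge \sqrt{n(\snr+1-\delta_n)}$. Substituting this into the previous display yields
\begin{equation}
\mathfrak{p}(a,\mu\,|\,\by) \le \frac{c_0\,\mu}{\sqrt{n\,\snr(\snr+1-\delta_n)}} \le \frac{\kappa\,\mu}{\sqrt{n}},
\end{equation}
valid for $n$ large enough that $\delta_n = n^{-1/3}$ is smaller than, say, $(\snr+1)/2$, and with $\kappa$ depending only on $\snr$. The main conceptual step is the rotational-symmetry reduction in the first paragraph; the rest is a routine Stirling estimate. The only mildly delicate point to watch out for is ensuring the argument $v$ stays inside $[-1,1]$ — but this is automatic because $|\langle X^n,\by\rangle| \le \sqrt{n\,\snr}\,\|\by\|_2$ almost surely, so the interval in \eqref{eqn:inner_prod} either lies inside the support of $V$ or contributes zero probability.
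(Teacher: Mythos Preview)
Your argument has a genuine gap: you have misidentified the distribution of $X^n$ in the definition of $\mathfrak{p}(a,\mu\,|\,\by)$. You treat $X^n$ as the \emph{independent} copy $\bar X^n\sim P_{X^n}$, but this is not what is needed. Recall that the slicing step in \eqref{eqn:slices} is performed on the ``second form'' of $g(t,\by)$ in \eqref{eqn:integrating}, namely
\[
g(t,\by)=\bbE\big[\exp(-q(X^n,Y^n))\,\bbI\{q(X^n,Y^n)>t\}\,\big|\,Y^n=\by\big],
\]
where the conditional law of $X^n$ given $Y^n=\by$ is the \emph{posterior} $P_{X^n|Y^n}(\,\cdot\,|\by)\propto P_{X^n}(\bx)W^n(\by|\bx)$. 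Bounding $\exp(-q)$ by $\exp(-t-l\mu)$ on each slice leaves behind precisely $\Pr\big(q(X^n,\by)\in[t+l\mu,t+(l+1)\mu)\,\big|\,Y^n=\by\big)$, i.e.\ $\mathfrak{p}$ taken with respect to the posterior. If one instead used the marginal $P_{X^n}$ as you do, the inequality in \eqref{eqn:slices} simply does not hold: from form (A) one only gets $g(t,\by)=\sum_l \mathfrak{p}_{\mathrm{marg}}(t+l\mu,\mu\,|\,\by)$, with no $\exp(-t-l\mu)$ factor to extract.

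Under the posterior, $X^n$ is \emph{not} uniform on the sphere; its density on $\{\|\bx\|_2^2=n\,\snr\}$ is tilted by the factor $\exp(\langle\bx,\by\rangle)$, i.e.\ it is a von~Mises--Fisher distribution. Consequently, the one-dimensional marginal $V$ has density proportional to $(1-v^2)^{(n-3)/2}\exp\big(\sqrt{n\,\snr}\,\|\by\|_2\,v\big)$, and the exponent in the tilt is of order $n$ for $\by\in\calT$. Your Stirling estimate on the beta-type density no longer applies; one must control both the mode and the normalizing constant of this tilted density, which is exactly why the paper invokes Laplace's method for approximating integrals. The final order $\kappa\mu/\sqrt{n}$ turns out to be the same, but obtaining it requires the Laplace analysis rather than the elementary bound you propose.
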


 Step 5: (Probability that the decoding metric exceeds $t$ for an incorrect codeword)
We now return to bounding $g(t,\by)$ in \eqref{eqn:gty}. Again, we assume $\by\in\calT$. The idea here is to consider the   second form of $g(t,\by)$ in \eqref{eqn:integrating} and to slice the interval $[t,\infty)$ into non-overlapping segments $\{ [t+l\mu, t+(l+1)\mu): l \in \bbN\cup\{0\}\}$ where $\mu>0$ is a constant. Then we apply Lemma~\ref{lem:hs} to each segment. This is modeled on the proof of Theorem~\ref{thm:str_ld}. Carrying out the calculations, we have 
\begin{align}
g(t,\by) &\le \sum_{l=0}^{\infty} \exp(-t -l\mu)\mathfrak{p}(t+l\mu, \mu\, |\, \by)  \label{eqn:slices} \\*
&\le\sum_{l=0}^{\infty} \exp(-t -l\mu) \cdot\kappa \cdot \frac{\mu}{\sqrt{n}}    \\*
&= \frac{\exp(-t )}{1-\exp(-\mu)}   \cdot\frac{\kappa \cdot \mu}{ \sqrt{n}}    . \label{eqn:geom}
\end{align}
Since $\mu>0$ is a free parameter, we may choose it to be $\log 2$ yielding  \begin{equation}
g(t,\by) \le (2\log 2) \, \kappa \cdot\frac{ \exp(-t )}{\sqrt{n}} =:\gamma\cdot \frac{ \exp(-t )}{\sqrt{n}}. \label{eqn:Lambda_bd}
\end{equation}

Step 6: (Evaluation of RCU)
We now have all the necessary ingredients to evaluate the RCU bound in \eqref{eqn:gauss_rcu}. Consider,
 \begin{align}
\eps'&\le \bbE\left[ \min\big\{1,Mg\big(q(X^n,Y^n) ,Y^n\big) \big\}\right]\\*
&\le\Pr(Y^n\in \calT^c) \nn\\*
&  \,\, +  \bbE\left[ \min\big\{1,Mg(q(X^n,Y^n) ,Y^n) \big\}\,\Big|\, Y^n\in\calT \right] \cdot \Pr(Y^n \in \calT) .
  \end{align}
The first term is bounded above by $\xi_n$ and the second can be bounded above by 
\begin{align}
\bbE\left[  \min\left\{1, \frac{M \gamma  \exp(-q(X^n,Y^n) ) }{\sqrt{n}} \right\} \,\bigg|\, Y^n\in\calT \right] \cdot \Pr(Y^n \in \calT)
\end{align}
 due to~\eqref{eqn:Lambda_bd} with $t = q(X^n,Y^n)$. We split the expectation into two parts depending on whether $q(\bx,\by)> \log (M\gamma/\sqrt{n})$ or otherwise, i.e.,
\begin{align}  
& \bbE\left[  \min\left\{1, \frac{M\gamma\exp(-q(X^n,Y^n) ) }{\sqrt{n}} \right\} \,\bigg|\, Y^n\in\calT\right]  \\
&\le \Pr\left( q(X^n,Y^n) \le \log \frac{M\gamma}{\sqrt{n}} \,\bigg|\,  Y^n\in\calT\right) \nn\\*
& +\!  \frac{M\gamma}{\sqrt{n }} \bbE\left[ \bbI\left\{ q(X^n,Y^n)\!  >\!  \log\frac{M\gamma}{\sqrt{n}} \right\}\exp(-q(X^n,Y^n))  \bigg|  Y^n\!\in\!\calT\right] \label{eqn:expand_min} .
\end{align}
By applying \eqref{eqn:Lambda_bd} with $t = \log (M\gamma/\sqrt{n})$, we know that  the second term can be bounded above by $\gamma/\sqrt{n}$. 

Now let $Q_{Y}^*(y) = \calN(y;0,\snr +1)$ be the CAOD and $Q_{Y^n}^* (\by)=\prod_{i=1}^n Q_{Y}^*(y_i)$ its $n$-fold memoryless extension.   In Step 1 of the proof of Lem.~61 in~\cite{PPV10}, Polyanskiy-Poor-Verd\'u showed that  there exists a finite constant  $\zeta>0$ such that 
\begin{equation}
\sup_{\by\in\calF} \frac{  P_{X^n} W^n(\by)}{  Q_{Y^n}^*(\by)} \le \zeta .\label{eqn:change_meas}
\end{equation}
Thus, the first probability in \eqref{eqn:expand_min} multiplied by  $\Pr( Y^n\in\calT)$ can be upper bounded  using the Berry-Esseen theorem  and the statistics in \eqref{eqn:stats1}--\eqref{eqn:stats2} by
\begin{align}
\Pr\left( \log\frac{W^n(Y^n|X^n)}{Q_{Y^n}^*(Y^n)} \le \log \frac{M \gamma\zeta }{\sqrt{n}} \right)
 \le\Phi\left( \frac{\log \frac{M \gamma \zeta }{\sqrt{n}} -n\rvC(\snr )}{\sqrt{n\rvV(\snr )}} \right) + \frac{\beta}{\sqrt{n}} ,\label{eqn:be}  
\end{align}
where $\beta$ is a finite positive   constant that  depends only on $\snr$.

 Putting all the bounds together, we obtain
\begin{equation}
\eps'\le \Phi\left( \frac{\log \frac{M \gamma\zeta }{\sqrt{n}} -n\rvC(\snr )}{\sqrt{n\rvV(\snr )}} \right) + \frac{\beta}{\sqrt{n}} + \frac{\gamma}{\sqrt{n}}+\xi_n. \label{eqn:coiceM}
\end{equation}
Now choose $M$ to be the largest integer  satisfying
\begin{align}
\log M \le n\rvC(\snr ) + \sqrt{n \rvV(\snr )}& \Phi^{-1}\left( \eps - \frac{\beta+\gamma}{\sqrt{n}}-\xi_n \right)  + \frac{1}{2}\log n-\log(\gamma\zeta). \label{eqn:logM}
\end{align}
This choice ensures that $\eps'\le \eps$.   By a Taylor expansion of $\Phi^{-1}(\cdot)$, this completes the proof of the lower bound in \eqref{eqn:awgn_asymp}.
\end{proof}

\section{A Digression: Third-Order Asymptotics vs Error Exponent Prefactors} \label{sec:digress}
We conclude our discussion on fixed error asymptotics for channel coding with a final remark. We have seen from Theorems \ref{thm:asymp_ach} and \ref{thm:asymp_conv} that the third-order term in the normal approximation for DMCs is given by $\frac{1}{2}\log n + O(1)$ (resp.\ $O(1)$) for non-singular channels (resp.\ singular, symmetric channels). We have also seen from Theorem \ref{thm:awgn_asy} that the third-order term for AWGN channels is $\frac{1}{2}\log n + O(1)$. These results are summarized in Table \ref{tab:prefactor}.

\begin{table}
\centering
    \begin{tabular}{| c | c | c| }
    \hline
    Channel & Third-Order Term &  Prefactor     $\varrho_n$ \\ \hline
Non-singular, Symm.\ DMC   & $\frac{1}{2}\log n + O(1)$  & $\Theta\bigg(\displaystyle \frac{1}{n^{  (1+|E'(R)| )/2}}\bigg)$  \\[1.5ex]  \hline  
Singular, Symm.\  DMC & $O(1)$  & $\Theta\bigg(\displaystyle \frac{1}{n^{  1/2}}\bigg)$    \\[1.5ex]  \hline  
AWGN  & $\frac{1}{2}\log n + O(1)$ & $\Theta\bigg(\displaystyle \frac{1}{n^{  (1+|E'(R)| )/2}} \bigg)$  \\[1.5ex] \hline  
    \end{tabular}
    \caption{Comparison between the third-order term  in the normal approximation and prefactors in the error exponents regime $\varrho_n$ for various classes of channels. The reliability function \cite{Csi97, gallagerIT, Har08} is denoted as $E(R)$ and its derivative (if it exists) is $E'(R)$. For  the first row of the table, symmetry is   not required for the third-order term to  be equal to $\frac{1}{2}\log n + O(1)$ (cf.~\eqref{eqn:logn_direct} and~\eqref{eqn:tt}).   }
    \label{tab:prefactor}
\end{table}

In another line of study, Altu\u{g}-Wagner \cite{altug_refinement1,altug_refinement2} and Scarlett-Martinez-{Guill\'{e}n i F\`{a}bregas} \cite{Sca13} derived   {\em prefactors in the error exponents regime} for DMCs. In a nutshell, the authors were concerned with finding a sequence $\varrho_n$ such that, for high rates (i.e., rates above the critical rate),\footnote{We recall  from Section \ref{sec:nota} that $a_n\sim b_n$ iff $a_n/b_n\to 1$ as $n\to\infty$. }
\begin{equation}
 \eps^*\left(W^n,\lfloor\exp(nR) \rfloor\right) \sim  \varrho_n\cdot \exp\big(-nE(R) \big) , \label{eqn:exact_as}
  \end{equation}  
 where $\eps^*(W^n,M)$ is the smallest average error probability of a code for the channel $W^n$ with $M$ codewords, and    $E(R)$
 is the reliability function (or error exponent) of the channel \cite{Csi97,gallagerIT, Har08}. The results are also summarized in Table \ref{tab:prefactor}. For the AWGN channel, it can be verified from Shannon's work on the error exponents for the AWGN channel~\cite{Shannon59} that the prefactor is the same as that for non-singular, symmetric DMCs.  Also see the work by Wiechman and Sason~\cite{Wei08}. Table \ref{tab:prefactor} suggests that there is a correspondence between third-order terms and prefactors. A precise relation between these two fundamental quantities is an interesting avenue for future research.

\section{Joint Source-Channel Coding} 
We conclude our discussion on channel coding by putting together the results and techniques presented in this and the previous chapter on (lossy and lossless) source coding. We consider the fundamental problem of transmitting a memoryless source over a memoryless channel as shown in Fig.~\ref{fig:jscc}. Shannon showed~\cite{Shannon48, Shannon59b} that as long as  
\begin{equation}
\limsup_{n\to\infty}\frac{k_n}{n}< \frac{C(W)}{R(P,\Delta)}, \label{eqn:shannon_jscc}
\end{equation}
where $k_n$ is the number of  independent source symbols from $P$ and  $n$ is the number of channel uses, the probability of excess distortion   can be arbitrarily small in the limit of large blocklengths. The ratio $k_n/n$ is also known as the {\em bandwidth expansion ratio}. We summarize known fixed error probability-type results on source-channel transmission  in this section.    

 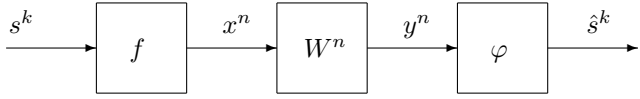
\begin{figure}[t]
\centering
\setlength{\unitlength}{.4mm}
\begin{picture}(200, 35)
\put(0, 15){\vector(1, 0){30}}
\put(60, 15){\vector(1,0){30}}
\put(120, 15){\vector(1,0){30}}
\put(180, 15){\vector(1,0){30}}
\put(30, 0){\line(1, 0){30}}
\put(30, 0){\line(0,1){30}}
\put(60, 0){\line(0,1){30}}
\put(30, 30){\line(1,0){30}}

\put(90, 0){\line(1, 0){30}}
\put(90, 0){\line(0,1){30}}
\put(120, 0){\line(0,1){30}}
\put(90, 30){\line(1,0){30}}

\put(-2, 20){  $s^k$}
\put(69, 20){  $x^n$}
\put(129, 20){  $y^n$} 
\put(41, 12){$f$ } 
\put(99, 12){$W^n$} 

\put(150, 0){\line(1, 0){30}}
\put(150, 0){\line(0,1){30}}
\put(180, 0){\line(0,1){30}}
\put(150, 30){\line(1,0){30}}
\put(161, 12){$\varphi$} 
\put(190, 20){  $\hats^k$} 
  \end{picture}
  \caption{Illustration of the joint source-channel coding problem.   }
  \label{fig:jscc}
\end{figure}
 
The  source-channel transmission problem is formally defined as follows:  A {\em $(d,\Delta, \eps)$-code} for source $S$ with distribution $P \in\scP(\calS)$ over the  channel $W \in\scP(\calY|\calX)$ is a pair of maps including an {\em encoder} $f: \calS\to\calX$ and a {\em decoder} $\varphi:\calY\to\calS$ such that the probability of excess distortion
\begin{equation}
\sum_{s\in\calS} P(s) W\big(  \{ y: d(s,\varphi(y) ) > \Delta \}\,\big|\, f(s)\big) \le\eps. 
\end{equation}
Again we assume there are no cost constraints on the channel inputs to simplify the exposition.  If there are cost constraints, a natural coding strategy would involve constant compostion codes as discussed in  Theorem \ref{thm:const_comp}.

In the conventional fixed-to-fixed length setting in which $\calX$ and $\calY$ are $n$-fold Cartesian products of the input and output alphabets respectively  and $\calS$ is the $k$-fold Cartesian product of  the source alphabet respectively, we may define the following: A {\em $(k,n,d^{(k)}, \Delta,\eps)$-code} is simply a $(d^{(k)},\Delta, \eps)$-code for the source $S^k$ with distribution $P^k \in\scP(\calS^k)$ and  over the channel $W^n \in\scP(\calY^n|\calX^n)$  such that the probability of excess distortion measure according to $d^{(k)}$ is no greater than $\eps$. 

The  source-channel non-asymptotic fundamental limit we are interested in is defined as follows:
\begin{align}
k^*(n, d^{(k)}, \Delta,\eps)\!:=\!\max\big\{k\in\bbN : \exists \mbox{ a }  (k,n,d^{(k)}, \Delta,\eps)\mbox{-code for } (P^k,W^n)\big\}.
\end{align}
This represents the maximum number of source symbols transmissible over the channel $W^n$ such that the probability of excess distortion (at distortion level $\Delta$) does not exceed $\eps$.  One is also interested in the maximum joint source-channel coding rate which is   ratio between the number of source symbols and the number of channel uses, i.e., 
\begin{equation}
R^*(n,  d^{(k)}, \Delta,\eps ):=\frac{k^*(n, d^{(k)}, \Delta,\eps)}{n}.
\end{equation}

\subsection{Asymptotic Expansion}
The main result of this section was proved independently by Kostina-Verd\'u~\cite{kost13} and  Wang-Ingber-Kochman~\cite{wang11} (for the special case of transmitting DMSes over DMCs). 
\begin{theorem} \label{thm:disp_jscc}
Assume the regularity conditions on the source  and distortion as in Theorem~\ref{thm:disp_lossy}. Assume that $W$ is a DMC with dispersion $V(W)=V_{\min}(W)=V_{\max}(W)>0$. Then, there exists a sequence of  $(k,n,d^{(k)}, \Delta,\eps)$-codes for $P^k$ and $W^n$ if and only if
\begin{align}
k R(P,\Delta) \!-\! n C(W)  \!=\!\sqrt{k V(P,\Delta) \!+\! nV(W) } \Phi^{-1}(\eps)\!+\! O(\log n). \label{eqn:disp_jscc}
\end{align}
Accordingly, by a simple rearrangement, one easily sees that
\begin{equation}
R^*(n,  d^{(k)}, \Delta,\eps )\! = \!\frac{C(W)}{ R(P,\Delta)}  \!+ \!\sqrt{ \frac{V(W,P, \Delta)}{n} }  \Phi^{-1}(\eps) \!+\! O\bigg( \frac{\log n}{n}\bigg)
\end{equation}
where the {\em rate-dispersion function} is 
\begin{equation}
V(W,P, \Delta) := \frac{R(P,\Delta)V(W) + C(W)V(P,\Delta) }{R(P,\Delta)^3}.
\end{equation}
\end{theorem}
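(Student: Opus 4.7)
The plan is to combine the non-asymptotic source-coding bounds of Chapter~\ref{ch:src} (Propositions~\ref{prop:lossy_ach} and~\ref{prop:rd_conv}) with the channel-coding bounds of this chapter (Propositions~\ref{prop:strengthened_fein} and~\ref{prop:converse}), and to evaluate the resulting expressions using the Berry-Esseen theorem (Theorem~\ref{thm:berry_gen}) applied to a sum of $k+n$ independent summands---the first $k$ coming from the source, the last $n$ from the channel. The central observation is that under the product CAOD $(Q^*)^n$ and any capacity-achieving input distribution, the source $\Delta$-tilted informations $\jmath(S_i;P,\Delta)$ and the channel information densities $\log\frac{W(Y_j|X_j)}{Q^*(Y_j)}$ are mutually independent, so their means sum to $kR(P,\Delta)-nC(W)$ and their variances sum to $kV(P,\Delta)+nV(W)$, which is precisely the variance appearing in~\eqref{eqn:disp_jscc}.

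For the direct part, I would analyze a random joint source-channel code in which, independently for each source sequence $s^k\in\calS^k$, a channel codeword $X^n(s^k)\sim(P_X^*)^n$ is drawn according to a CAID $P_X^*$, and the decoder uses a threshold rule on the combined information density $\log\frac{W^n(y^n|X^n(s^k))}{(Q^*)^n(y^n)}-\jmath(s^k;P^k,\Delta)$. Invoking Lemma~\ref{lem:tech} to convert the $\Delta$-tilted information into a statement about $(PW^*)^k$-probabilities of distortion balls, together with a strengthened-Feinstein-type argument on the channel, the excess-distortion probability can be bounded by
\begin{equation}
\Pr\!\left(\sum_{i=1}^{k}\jmath(S_i;P,\Delta)-\sum_{j=1}^{n}\log\tfrac{W(Y_j|X_j)}{Q^*(Y_j)}>-\tfrac{b+1}{2}\log n\right)+O(n^{-1/2})
\end{equation}
for some constant $b>0$ arising from Lemma~\ref{lem:tech}. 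The Berry-Esseen theorem applied to this sum of $k+n$ independent random variables, whose mean is $kR(P,\Delta)-nC(W)$ and whose variance is $kV(P,\Delta)+nV(W)$, yields the achievability direction of~\eqref{eqn:disp_jscc}.

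For the converse, I would combine Proposition~\ref{prop:rd_conv} with Proposition~\ref{prop:converse}. Any $(k,n,d^{(k)},\Delta,\eps)$-code induces a Markov chain $S^k\to X^n\to Y^n\to\hat{S}^k$; choosing the auxiliary output measure in the meta-converse to be $(Q^*)^n$ and applying the data-processing inequality for $D_\rmh^\eps$ as in the proof of Proposition~\ref{prop:converse}, one can derive the joint bound
\begin{equation}
\eps\ge\Pr\!\left(\sum_{i=1}^{k}\jmath(S_i;P,\Delta)-\sum_{j=1}^{n}\log\tfrac{W(Y_j|X_j)}{Q^*(Y_j)}>\gamma\right)-\exp(-\gamma).
\end{equation}
Setting $\gamma=\tfrac{1}{2}\log n$ and applying Berry-Esseen once more to the same sum of $k+n$ independent random variables delivers the matching lower bound on $kR(P,\Delta)-nC(W)$.

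The main obstacle will be the converse, specifically handling the $\sup_{\bx}$ in Proposition~\ref{prop:converse} in a way that retains the mean $-nC(W)$ and variance $nV(W)$ without surrendering more than $O(\log n)$. The hypothesis $V_{\min}(W)=V_{\max}(W)=V(W)>0$ is crucial here, as it renders the channel information variance input-independent on the set of CAIDs; to dispose of non-CAID inputs one may adapt the partition-of-input-types device used in the proof sketch of Theorem~\ref{thm:asymp_conv}, or alternatively employ the convex-combination output distribution from that proof so that atypical input types contribute only $O(\log n)$. Once this supremum is tamed, the remainder of the argument reduces, in both directions, to a single Berry-Esseen estimate on a sum of $k+n$ independent terms, the source block and the channel block being independent by construction.
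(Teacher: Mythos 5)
The paper in fact declines to prove this theorem: it writes ``We will not prove this theorem here, as the main ideas \ldots have been detailed in previous asymptotic expansions'' and then offers only the heuristic that $I_n - J_{k,n}$ is a sum of $k+n$ independent terms with mean $C(W)-\tfrac{k}{n}R(P,\Delta)$ and variance $\tfrac{1}{n}V(W)+\tfrac{k}{n^2}V(P,\Delta)$. Your proposal captures precisely that heuristic, and you correctly isolate the hypothesis $V_{\min}=V_{\max}$ as the ingredient that tames the supremum over $\bx$ in the converse, so at the level of the paper's own discussion you are aligned.

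That said, two steps in your sketch are doing more work than they can bear, and both would need to be repaired before the sketch becomes a proof. First, the achievability construction in which an i.i.d.\ channel codeword $X^n(s^k)$ is attached to \emph{each} $s^k\in\calS^k$, followed by a threshold rule on $\log\tfrac{W^n(y^n|X^n(s^k))}{(Q^*)^n(y^n)}-\jmath(s^k;P^k,\Delta)$, is a scheme for lossless transmission: the decoder can only output a candidate $s^k$ from the set that was encoded, and there is no mechanism by which a reconstruction $\hat{s}^k\in\hatcalX^k$ with distortion $\le\Delta$ (but $\hat{s}^k\neq s^k$) ever appears. The lossy schemes used in the cited works \cite{kost13,wang11} instead draw a (sub-exponential-in-$k$) list of reconstructions from $(PW^*)^k$, use Lemma~\ref{lem:tech} to guarantee that a good reconstruction exists with the right exponent, and then protect the index of that reconstruction with the channel code; the $\Delta$-tilted information enters through the list size, not through a per-source-sequence codeword. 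Second, the converse bound you display cannot be obtained by ``combining'' Propositions~\ref{prop:rd_conv} and~\ref{prop:converse}: those bounds govern two separate error events (failure of a standalone source code, failure of a standalone channel code), whereas here the excess-distortion event couples the two, and the codeword $X^n=f(S^k)$ is a deterministic function of $S^k$, so the two sums in your display are not independent under a fixed code. You need the dedicated joint source--channel converse of \cite{kost13} (or the type-based variant of \cite{wang11}), which is derived in one shot via a change of output measure to $(Q^*)^n$ together with the distortion-ball bound; only after conditioning on $S^k$ and on the (approximately constant-composition) codeword type does the conditional Berry--Esseen argument, of the flavour you describe and with $V_{\min}=V_{\max}$ to make the conditional variance input-independent, deliver \eqref{eqn:disp_jscc}.
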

We will not prove this theorem here, as the main ideas, based on new non-asymptotic bounds,  have been detailed in previous asymptotic expansions.

The {\em intuition} behind the result in Theorem \ref{thm:disp_jscc} is perhaps more important. The non-asymptotic bounds that are evaluated very roughly say that a  joint source-channel coding scheme with probability of excess distortion no larger than $\eps$ exists if and only if 
\begin{equation}
\Pr \big( I_n  < J_{k,n}    \big) \le \eps \label{eqn:jscc_bd}
\end{equation}
where the random variables $I_n$ and $J_{k,n}$ are defined as
\begin{equation}
I_n := \frac{1}{n}\log\frac{W^n(Y^n|\bx)}{(P_X^*W)^n(Y^n)},\quad\mbox{and} \quad J_{k,n} := \frac{1}{n}\jmath(S^k;P^k,\Delta)
\end{equation}
and $\bx$ has type  $P \in\scP_n(\calX)$ close to $\Pi \subset\scP(\calX)$, the set of CAIDs. The bound in \eqref{eqn:jscc_bd} provides the intuition that erroneous transmission of the source occurs if and only if the information density random variable   $I_n$ of the channel is not large enough to support the information content of the source,  represented by the $\Delta$-tilted information   $J_{k,n}$.   We can now estimate the probability in \eqref{eqn:jscc_bd} by using the central limit theorem for $k+n$ independent random variables, and the fact that    $I_n-J_{k,n}$ has  first- and second-order statistics
\begin{align}
\bbE[I_n-J_{k,n}] &= C(W) -\frac{k}{n}R(P,\Delta),\quad\mbox{and}  \\*
\var[I_n-J_{k,n}] &= \frac{1}{n}V(W)+\frac{k}{n^2}V(P,\Delta).
\end{align}
This essentially explains the asymptotic expansions in Theorem~\ref{thm:disp_jscc}. 
\begin{figure}[t]
\centering
\setlength{\unitlength}{.35mm}
\begin{picture}(330, 35)
\put(0, 15){\vector(1, 0){30}}
\put(60, 15){\vector(1,0){30}}
\put(120, 15){\vector(1,0){30}}
\put(180, 15){\vector(1,0){30}}
\put(30, 0){\line(1, 0){30}}
\put(30, 0){\line(0,1){30}}
\put(60, 0){\line(0,1){30}}
\put(30, 30){\line(1,0){30}}

\put(90, 0){\line(1, 0){30}}
\put(90, 0){\line(0,1){30}}
\put(120, 0){\line(0,1){30}}
\put(90, 30){\line(1,0){30}}

\put(-2, 20){  $s^k$}
\put(68, 20){  $m$}
\put(127, 20){  $x^n$} 
\put(41, 12){$f_\rms$ } 
\put(99, 12){$f_\rmc$} 

\put(150, 0){\line(1, 0){30}}
\put(150, 0){\line(0,1){30}}
\put(180, 0){\line(0,1){30}}
\put(150, 30){\line(1,0){30}}
\put(156, 12){$W^n$} 
\put(188, 20){  $y^n$} 

\put(219, 12){$\varphi_\rmc$} 
\put(279, 12){$\varphi_\rms$} 

\put(248, 20){  $\hatm$} 

\put(310, 20){  $\hats^k$} 

\put(210, 0){\line(1, 0){30}}
\put(210, 0){\line(0,1){30}}
\put(240, 0){\line(0,1){30}}
\put(210, 30){\line(1,0){30}}
\put(240, 15){\vector(1,0){30}}

\put(270, 0){\line(1, 0){30}}
\put(270, 0){\line(0,1){30}}
\put(300, 0){\line(0,1){30}}
\put(270, 30){\line(1,0){30}}
\put(300, 15){\vector(1,0){30}}
  \end{picture}
  \caption{Illustration of the separation scheme for    source-channel transmission    }
  \label{fig:sep_jscc}
\end{figure}
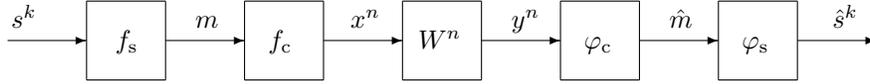

\subsection{What is the   Cost of Separation?} \label{sec:sep}
In showing the seminal result in~\eqref{eqn:shannon_jscc}, Shannon used a {\em separation} scheme. That is, he first considers   source compression to distortion level $\Delta$ using a source encoder $f_\rms$ and subsequently,   information transmission over channel $W^n$ using a channel encoder $f_\rmc$. To decode, simply reverse the process by using a channel decoder $\varphi_\rmd$ and a source decoder $\varphi_\rms$.  See Fig.~\ref{fig:sep_jscc} where $m$ denotes  the {\em digital interface}. While this idea of separation has guided the design of communication systems for decades and is  first-order optimal  in the limit of large blocklengths, it turns out that such is scheme is neither optimal from the error exponents\footnote{To be more precise, the suboptimality of separation in the error exponents  regime occurs only when $k R(P,\Delta)< n C(W)$. In the other case, by analyzing the probability of {\em no excess distortion}, Wang-Ingber-Kochman~\cite{wang12} showed, somewhat surprisingly, that separation is optimal.} perspective~\cite{Csi80} nor the  fixed error  setting.  What is the cost of separation in when the error probability is allowed to be non-vanishing?   By combining Theorem~\ref{thm:disp_lossy} (for rate distorion), Theorems~\ref{thm:asymp_ach}---\ref{thm:asymp_conv} (for channel coding), one sees that there exists a sequence of $(k,n,d^{(k)}, \Delta,\eps)$-codes for $P^k$ and $W^n$ satisfying
\begin{align} 
&k R(P,\Delta) - n C(W) +  O(\log n )  \nn\\*
& \quad \ge \max_{\eps_\rms + \eps_\rmc \le \eps}   \Big\{   \sqrt{kV (P,\Delta)}\Phi^{-1}(\eps_\rms)  +\sqrt{nV (W)}\Phi^{-1}(\eps_\rmc)   \Big\}    . \label{eqn:sep}
\end{align}
Inequality    \eqref{eqn:sep} suggests  that we first compress the source up to distortion level $\Delta$ with excess distortion probability $\eps_\rms$, then we transmit the resultant bit string  over the channel $W^n$ with average error probability $\eps_\rmc$. In order to have the  end-to-end excess distortion probability be no larger than $\eps$, one has to design the source and channel codes so that $\eps_\rms+\eps_\rmc\le\eps$. 
 
Because the maximum in \eqref{eqn:sep} is no larger than the  square root term in~\eqref{eqn:disp_jscc}, separation is strictly sub-optimal in the second-order asymptotic sense (unless either $V(W)$ or $V(P,\Delta)$ vanishes). This is unsurprising  because for the separation scheme, the source and channel error events are treated separately, while the (approximate) non-asymptotic bound in \eqref{eqn:jscc_bd} suggests that treating the system {\em jointly} results in better performances in terms of both error and rate.


\part{Network Information Theory}

\chapter{Channels with Random State} \label{ch:state}
This chapter departs from a key assumption in   usual channel coding   (Chapter~\ref{ch:cc}) in which the channel statistics do not change with time.  In many practical communication settings, one may encounter situations where there is uncertain knowledge of  the medium of transmission, or  where the medium is changing over time, such as a wireless channel with fading or   memory with stuck-at faults.  This situation may be modeled using a channel   whose conditional output probability distribution depends on a state process. Other prominent applications include digital watermarking and information hiding~\cite{Mou03}. A thorough review of the (first-order) results in channels with state (or side information) is available  in the excellent books by Keshet, Steinberg and Merhav \cite{Keshet} and   El Gamal and Kim~\cite[Ch.~7]{elgamal}.

The state may be known at the encoder only, the decoder only, or at both the encoder and decoder. The capacity is known in these cases when the state follows an \iid process and the channel is stationary and memoryless given the state.  In this chapter, we review   known fixed error probability results for channels with   random state known only at the decoder, channels with random state known  at both the encoder and decoder, Costa's dirty-paper coding (DPC) problem~\cite{costa},  mixed channels~\cite[Sec.~3.3]{Han10} and quasi-static  single-input-multiple-output (SIMO)  fading channels.   Asymptotic expansions of the logarithm of the maximum code size  are derived for each problem.


We briefly mention some problems we do not treat in this chapter. The second-order asymptotics for the discrete memoryless Gel'fand-Pinsker \cite{GP80} problem (where the state is known noncausally at the encoder only) has not been completely solved~\cite{WKT13, YAG13b} so we do not discuss this beyond the Gaussian case (the DPC problem). We also do not discuss the case where the state is known {\em causally} at the encoder. Second-order asymptotic   analysis has also not been performed for this problem   first considered by Shannon \cite{Sha58} (i.e.,  Shannon strategies).  We leave out channels with non-memoryless state, for example, the {\em  Gilbert-Elliott channel}~\cite{Elliott, Gilbert, Mush} for which the second-order asymptotics (dispersion) are known~\cite{PPV11} under various scenarios.  Finally, our focus here is on channels with a {\em random} state. We do not explore channels that depend on a non-random (but unknown) state. This is also known as  the {\em compound channel}, and the asymptotic expansion was derived by Polyanskiy~\cite{Pol13b}.  
\section{Random State at  the Decoder} 

We warm up with the simple model shown in Fig.~\ref{fig:state_d}. Here there is a  state distribution $P_S \in\scP(\calS)$  on a finite alphabet $\calS$ which generates an \iid random state $S$, i.e.,  a discrete memoryless source (DMS). The channel $W$   is a conditional probability distribution from $\calX\times\calS$ to $\calY$.   If the state process is \iid and the channel is discrete, stationary and memoryless given the state, it is easy to see that the capacity is 
\begin{equation}
C_{\mathrm{SI-D}}(W,P_S) = \max_{P\in\scP(\calX)} I(X;Y|S) = \max_{P\in\scP(\calX)} I(X;Y S). \label{eqn:state_dec}
\end{equation}
The idea is to regard $(Y, S)$ as the output of a new channel $\tilW(y,s|x) = P_S(s) W(y|x,s)$, and then to use Shannon's result for the capacity of a DMC in \eqref{eqn:cap}. Analogously to the problems we treated previously, we  define $M^*_{\mathrm{SI-D}}(W^n,P_{S^n},\eps)$ to be the maximum number of messages transmissible over the DMC $W^n$ with \iid state $S^n\sim P_{S^n}$  known at the decoder and with average error probability not exceeding $\eps\in (0,1)$. We also let $W_s(y|x) := W(y| x,s)$ denote the channel indexed by $s\in\calS$. 

\begin{figure}
\centering
\setlength{\unitlength}{.4mm}
\begin{picture}(200, 90)
\put(0, 15){\vector(1, 0){30}}
\put(60, 15){\vector(1,0){30}}
\put(120, 15){\vector(1,0){30}}
\put(180, 15){\vector(1,0){30}}
\put(30, 0){\line(1, 0){30}}
\put(30, 0){\line(0,1){30}}
\put(60, 0){\line(0,1){30}}
\put(30, 30){\line(1,0){30}}

\put(90, 0){\line(1, 0){30}}
\put(90, 0){\line(0,1){30}}
\put(120, 0){\line(0,1){30}}
\put(90, 30){\line(1,0){30}}

\put(10, 20){  $m$}
\put(68, 20){  $x$}
\put(128, 20){  $y$} 
\put(41, 12){$f$ } 
\put(99, 12){$W$} 

\put(150, 0){\line(1, 0){30}}
\put(150, 0){\line(0,1){30}}
\put(180, 0){\line(0,1){30}}
\put(150, 30){\line(1,0){30}}
\put(161, 12){$\varphi$} 
\put(190, 20){  $\hatm $} 

\put(90, 60){\line(1, 0){30}}
\put(90, 60){\line(0,1){30}}
\put(120, 60){\line(0,1){30}}
\put(90, 90){\line(1,0){30}}
\put(105, 60){\vector(0,-1){30}}
\put(120, 75){\line(1,0){45}}
\put(165, 75){\vector(0,-1){45}}


\put(105, 45){  $s$} 
\put(166, 45){  $s$} 
\put(99, 71){$P_{S}$} 
  \end{picture}
  \caption{Illustration of the state at decoder problem }
  \label{fig:state_d}
\end{figure}
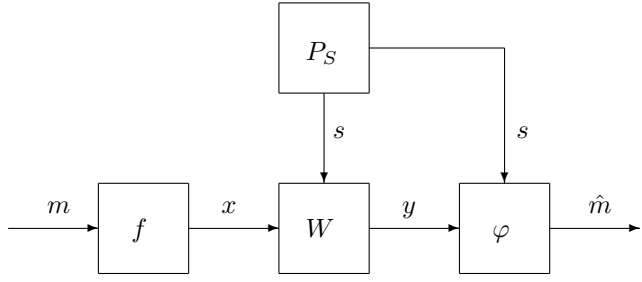

The following is due to Ingber and Feder  \cite{ingber10}.
\begin{theorem} \label{prop:state_d}
Assume that $V_\eps(W_s)>0$ for all $s\in\calS$ and $V_\eps(W_s)$ does not depend\footnote{If the CAIDs of each $W_s$ is unique, $V_\eps(W_s)$ does not depend on $\eps \in (0,1)$.} on $\eps\in (0,1)$. Then, 
\begin{align}
&\log M^*_{\mathrm{SI-D}}(W^n,P_{S^n},\eps) \nn\\*
&\quad = n C_{\mathrm{SI-D}}(W,P_S)  +\sqrt{n V_{\mathrm{SI-D}}(W,P_S) } \Phi^{-1}(\eps)   +O(\log n) ,
\end{align}
where the dispersion $V_{\mathrm{SI-D}}(W,P_S) $ is 
\begin{equation}
V_{\mathrm{SI-D}}(W,P_S)=\bbE_S[V(W_S)] + \var_S[ C(W_S) ]  \label{eqn:disp_d}
\end{equation}
and where $C(W_s)$ is the   capacity of channel $W_s \in\scP(\calY|\calX)$. 
\end{theorem}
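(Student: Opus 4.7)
The plan is to recast the problem as ordinary channel coding over an augmented (stateless) DMC and then directly invoke the asymptotic expansions from Chapter~\ref{ch:cc}. Specifically, define the channel $\tilW \in \scP(\calY\times\calS\,|\,\calX)$ by
\begin{equation}
\tilW(y,s\,|\,x) := P_S(s)\, W(y\,|\,x,s),
\end{equation}
with input alphabet $\calX$ and output alphabet $\calY\times\calS$. An $(M,\eps)$-code for $W^n$ with state $S^n$ revealed to the decoder is, by inspection, identical to an $(M,\eps)$-code for $\tilW^n$, since $\tilW^n$ is stationary and memoryless and its output is exactly $(Y^n,S^n)$. Hence
\begin{equation}
M^*_{\mathrm{SI-D}}(W^n,P_{S^n},\eps) = M^*_{\av}(\tilW^n,\eps),
\end{equation}
and Theorems~\ref{thm:asymp_ach} and~\ref{thm:asymp_conv} yield the expansion with first-order term $C(\tilW)$, second-order term $\sqrt{nV_\eps(\tilW)}\,\Phi^{-1}(\eps)$, and an $O(\log n)$ remainder, provided $V_\eps(\tilW)>0$.

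Next I would verify that $C(\tilW) = C_{\mathrm{SI-D}}(W,P_S)$ and that $V_\eps(\tilW)$ equals the dispersion in~\eqref{eqn:disp_d}. The capacity identity is immediate from~\eqref{eqn:state_dec} using $I(X;YS)=I(X;Y|S)$ (since $X\indep S$). For the dispersion, let $P^*$ be a CAID of $\tilW$; the information density at input $x$ and output $(y,s)$ simplifies as
\begin{equation}
\log\frac{\tilW(y,s|x)}{(P^*\tilW)(y,s)} \;=\; \log\frac{W(y|x,s)}{(P^*W_s)(y)},
\end{equation}
because the $P_S(s)$ factors cancel. Write this as $\imath_s(x;y)$ and apply the law of total variance to $\imath_S(X;Y)$ under $(X,S,Y)\sim P^*\times P_S\times W$:
\begin{equation}
\var\big[\imath_S(X;Y)\big] = \bbE_S\big[\var(\imath_S(X;Y)\,|\,S)\big] + \var_S\big[\bbE(\imath_S(X;Y)\,|\,S)\big].
\end{equation}
The inner conditional expectation evaluates to $I(P^*,W_s)$ and the inner conditional variance to $V(P^*,W_s)$. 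Under the hypothesis that $V_\eps(W_s)$ is $\eps$-independent (so that $V_{\min}(W_s)=V_{\max}(W_s)$ and the per-state dispersion is $\eps$-invariant), together with the observation that the KKT conditions for $P^*$ at the augmented channel force $P^*$ to be a CAID for $W_s$ in the support of $P_S$, one obtains $I(P^*,W_s)=C(W_s)$ and $V(P^*,W_s)=V(W_s)$, yielding
\begin{equation}
V_\eps(\tilW) \;=\; \bbE_S[V(W_S)]+\var_S[C(W_S)] \;=\; V_{\mathrm{SI-D}}(W,P_S).
\end{equation}

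The main obstacle is this last identification, i.e.\ showing that a CAID of the augmented channel serves simultaneously as a capacity- and dispersion-achieving input for each $W_s$. In general this need not hold, but the stated hypothesis that $V_\eps(W_s)$ is independent of $\eps$ (coupled with the uniqueness structure it implies for the CAID set of each $W_s$) is precisely what one needs to reduce the two-variance decomposition to~\eqref{eqn:disp_d}. Once this is handled, the $O(\log n)$ third-order term follows directly from Theorems~\ref{thm:asymp_ach} and~\ref{thm:asymp_conv} applied to $\tilW$; note that singularity or non-singularity of $\tilW$ only affects the constant hidden inside the $O(\log n)$, so no refinement beyond $O(\log n)$ is claimed.
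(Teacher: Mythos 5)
Your overall strategy---recast the problem as ordinary channel coding over the augmented stateless DMC $\tilW(y,s|x)=P_S(s)W(y|x,s)$, invoke the DMC asymptotic expansion, and then decompose the dispersion of $\tilW$ via the law of total variance---is exactly what the paper does. You also correctly locate the subtle step, namely passing from $\bbE_S[V(P^*,W_S)]+\var_S[I(P^*,W_S)]$ (with $P^*$ a CAID of $\tilW$) to $\bbE_S[V(W_S)]+\var_S[C(W_S)]$. However, your proposed resolution of that step is incorrect. The KKT conditions for $\tilW$ at $P^*$ only assert that the \emph{$P_S$-average} $\sum_s P_S(s)\,D(W_s(\cdot|x)\|P^*W_s)$ equals $C(\tilW)$ on the support of $P^*$ and is at most $C(\tilW)$ off the support; they impose no condition on each summand $D(W_s(\cdot|x)\|P^*W_s)$ individually. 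Thus $P^*$ is generally \emph{not} a CAID of any individual $W_s$, and $I(P^*,W_s)\ne C(W_s)$, $V(P^*,W_s)\ne V(W_s)$ in general. The hypothesis that $V_\eps(W_s)$ is $\eps$-independent addresses a different issue---uniqueness (or dispersion-constancy) of the CAID set of $W_s$ alone---and says nothing about whether the maximizer of $\sum_s P_S(s)I(P,W_s)$ coincides with the maximizer of each $I(P,W_s)$.

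What is actually needed is that the channels $\{W_s\}_{s\in\calS}$ admit a \emph{common} CAID (equivalently, that some $P^*$ simultaneously maximizes $I(P,W_s)$ for every $s$ in the support of $P_S$), in which case $P^*$ is automatically a CAID of $\tilW$ and the two-variance decomposition collapses to~\eqref{eqn:disp_d}. This assumption is implicit in the paper's proof sketch (and in the cited source~\cite{ingber10}) but is not entailed by the conditions as literally written in the theorem statement. You should either add the common-CAID assumption explicitly, or, failing that, leave the second-order term in the form $\bbE_S[V(P^*,W_S)]+\var_S[I(P^*,W_S)]$ with $P^*\in\argmax_P\sum_s P_S(s)I(P,W_s)$, which is the correct expression without that assumption.
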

The proof is   based on the fact that we can define a new channel $\tilW$ from  $\calX $ to $\calY\times \calS$ and so letting $X$  be a random variable whose distribution $P \in \scP(\calX)$ is a CAID, we have
\begin{align}
&V_{\mathrm{SI-D}}(W,P_S)  \nn\\*
&=\var\bigg[ \log\frac{\tilW(Y,S|X)}{P_X \tilW(Y,S)} \bigg]=\var\bigg[ \log\frac{W(Y |X,S)}{P_X W(Y|S)} \bigg] \\
&=\bbE\bigg[\var\Big[ \log\frac{W(Y |X,S)}{P_X W(Y|S)} \,\Big|\, S \Big] \bigg] + \var\bigg[\bbE \Big[ \log\frac{W(Y |X,S)}{P_X W(Y|S)} \,\Big|\, S \Big] \bigg] \label{eqn:law_var}\\
&=\bbE_S[V(W_S)] + \var[ C(W_S) ] \label{eqn:law_var2} 
\end{align}
where \eqref{eqn:law_var} follows from the law of total variance with the conditional distribution $P_X W(y|s):=\sum_x P_X(x) W(y|x,s)$, and \eqref{eqn:law_var2} follows from the definition of the capacity and dispersion of $W_s$. 

The dispersion in \eqref{eqn:disp_d} is intuitively pleasing: The term $\bbE_S[V(W_S)] $ represents the randomness  of  the channels $\{W_s: s\in\calS\}$ given the state; the term $\var_S[ C(W_S) ]$ represents the    randomness  of the state. 

\section{Random State at the Encoder and Decoder}   \label{sec:state_ed} 
The next model we will study is similar to that in the previous section. However, here the \iid state is known  noncausally at {\em both} the encoder and the decoder. See Fig.~\ref{fig:state_ed}. Again, let $W\in\scP(\calY|\calX\times\calS)$ be a state-dependent discrete memoryless channel, stationary and memoryless given the  state and let $P_S\in\scP(\calS)$ be a DMS. It is known~\cite[Sec.~7.4.1]{elgamal} that the capacity of this channel   is 
\begin{equation}
C_{\mathrm{SI-ED}}(W,P_S) = \max_{P_{X|S}\in\scP(\calX|\calS)} I(X;Y|S)  . \label{eqn:cap_ed}
\end{equation}
Goldsmith and Varaiya~\cite{Gold97} used time sharing of the state sequence to prove  the achievability part of \eqref{eqn:cap_ed}. Essentially, their  idea is to divide the message  into $|\calS|$ sub-messages (rate-splitting). Each of these sub-messages can be sent reliably if and only if its rate is smaller than $I(X;Y|S=s)$  for some $P_{X|S}(\cdot|s)$ assuming that the state sequence $S^n$ is strongly typical. Averaging $I(X;Y|S=s)$ over $P_S(s)$ proves the direct part of~\eqref{eqn:cap_ed}.  Clearly, if  there exists an optimizing distribution $P_{X|S}^*$ in \eqref{eqn:cap_ed} such that $P_{X|S}^*(\cdot|s)$ does not depend on $s$, then $C_{\mathrm{SI-ED}}(W,P_S) =C_{\mathrm{SI-D}}(W,P_S)$. For example, if the set of channels $\{W_s:s\in\calS\}$ consists of binary symmetric channels  with different crossover probabilities,  $P_{X|S}^*(\cdot|s)$ is uniform for all $s\in\calS$.

\begin{figure}
\centering
\setlength{\unitlength}{.4mm}
\begin{picture}(200, 90)
\put(0, 15){\vector(1, 0){30}}
\put(60, 15){\vector(1,0){30}}
\put(120, 15){\vector(1,0){30}}
\put(180, 15){\vector(1,0){30}}
\put(30, 0){\line(1, 0){30}}
\put(30, 0){\line(0,1){30}}
\put(60, 0){\line(0,1){30}}
\put(30, 30){\line(1,0){30}}

\put(90, 0){\line(1, 0){30}}
\put(90, 0){\line(0,1){30}}
\put(120, 0){\line(0,1){30}}
\put(90, 30){\line(1,0){30}}

\put(10, 20){  $m$}
\put(68, 20){  $x$}
\put(128, 20){  $y$} 
\put(41, 12){$f$ } 
\put(99, 12){$W$} 

\put(150, 0){\line(1, 0){30}}
\put(150, 0){\line(0,1){30}}
\put(180, 0){\line(0,1){30}}
\put(150, 30){\line(1,0){30}}
\put(161, 12){$\varphi$} 
\put(190, 20){  $\hatm $} 

\put(90, 60){\line(1, 0){30}}
\put(90, 60){\line(0,1){30}}
\put(120, 60){\line(0,1){30}}
\put(90, 90){\line(1,0){30}}
\put(105, 60){\vector(0,-1){30}}
\put(120, 75){\line(1,0){45}}
\put(165, 75){\vector(0,-1){45}}

\put(90, 75){\line(-1,0){45}}
\put(45, 75){\vector(0,-1){45}}


\put(105, 45){  $s$} 
\put(169, 45){  $s$} 
\put(45, 45){  $s$} 
\put(99, 71){$P_{S}$} 
  \end{picture}
  \caption{Illustration of the state at encoder and decoder problem }
  \label{fig:state_ed}
\end{figure}
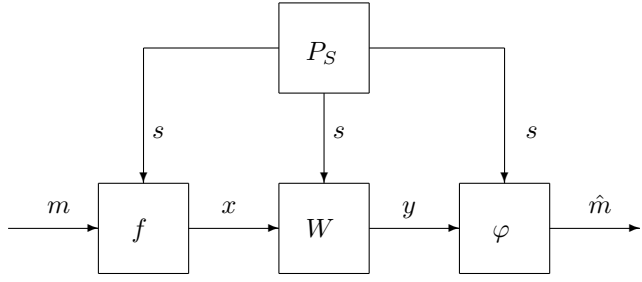

In the spirit of this monograph, let $M^*_{\mathrm{SI-ED}}(W^n,P_{S^n},\eps)$   be the maximum number of messages transmissible over the channel $W^n$ with \iid random state $S^n\sim P_{S^n}$  known at  both encoder and decoder and with average error probability not exceeding $\eps\in (0,1)$.  

The following is due to Tomamichel and Tan~\cite{TomTan13a}.

\begin{theorem} \label{thm:state_ed}
Let $W$ satisfy the assumptions in  Theorem~\ref{prop:state_d}. Then,
\begin{align}
&\log M^*_{\mathrm{SI-ED}}(W^n,P_{S^n},\eps) \nn\\*
&\quad = n C_{\mathrm{SI-ED}}(W,P_S)  +\sqrt{n V_{\mathrm{SI-ED}}(W,P_S) } \Phi^{-1}(\eps)   +O(\log n) ,
\end{align}
where the dispersion $V_{\mathrm{SI-ED}}(W,P_S)$ is the expression   given in~\eqref{eqn:disp_d}.
\end{theorem}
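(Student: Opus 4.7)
My plan is to leverage the Goldsmith--Varaiya rate-splitting idea and show that conditioning on the state reduces the SI-ED problem to parallel single-state channel coding problems to which the machinery of Chapter~\ref{ch:cc} can be applied sub-channel-wise. For the achievability I would first restrict attention to the high-probability event that the type $P_{s^n}$ of the state sequence satisfies $\|P_{s^n}-P_S\|_\infty\le n^{-1/3}$, which by a Hoeffding bound holds except on an event of probability $O(\exp(-n^{1/3}))=o(1/\sqrt n)$. On this event, let $n_s:=nP_{s^n}(s)$ and partition the $n$ channel uses into $|\calS|$ blocks according to the value of $s_i$. Since both encoder and decoder know $s^n$, on the block of size $n_s$ we can transmit an independent sub-message over $n_s$ uses of the DMC $W_s$ using a constant-composition code whose input type approximates the CAID of $W_s$; invoking Theorem~\ref{thm:const_comp} yields a sub-code of size at least
\begin{equation*}
\log M_s \ge n_s C(W_s) + \sqrt{n_s V(W_s)}\,\Phi^{-1}(\eps_s) - O(\log n),
\end{equation*}
with maximum error $\eps_s$ on that block.

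The overall code size factorizes as $\log M=\sum_{s\in\calS}\log M_s$, and the key analytic step is to combine these single-letter bounds into a single Gaussian approximation. The cleanest route is to bypass the union bound entirely and evaluate, via the strengthened Feinstein bound or the RCU bound with the product input $\prod_{i=1}^n P_{X|S}^*(\cdot|s_i)$, the information density $\sum_{i=1}^n\log\frac{W(Y_i|X_i,S_i)}{Q_{S_i}^*(Y_i)}$ where $Q_s^*:=P_{X|S}^*(\cdot|s)W_s$ is the conditional CAOD. Under $(P_{X|S}^*\times P_S\times W)^{\otimes n}$ the summands are i.i.d.\ with mean $C_{\mathrm{SI-ED}}(W,P_S)$ and, by the law of total variance applied exactly as in \eqref{eqn:law_var}--\eqref{eqn:law_var2}, with variance $\bbE_S[V(W_S)]+\var_S[C(W_S)]=V_{\mathrm{SI-ED}}(W,P_S)$. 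The Berry-Esseen theorem of Theorem~\ref{thm:berry_iid} applied to this sum, combined with the threshold argument behind Proposition~\ref{prop:strengthened_fein}, then yields
\begin{equation*}
\log M\ge nC_{\mathrm{SI-ED}}+\sqrt{nV_{\mathrm{SI-ED}}}\,\Phi^{-1}(\eps)+O(\log n),
\end{equation*}
and the two sources of randomness---the state sequence and the channel noise within each sub-block---are automatically combined since the information density already integrates both.

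For the converse I would adapt the symbol-wise bound of Proposition~\ref{prop:converse} to the super-channel from $\calX^n\times\calS^n$ to $\calY^n\times\calS^n$ in which the state is passed through deterministically, noting that codewords are maps $(m,s^n)\mapsto x^n(m,s^n)$ known to the decoder alongside $s^n$. With the reference output measure $Q^{(n)}(\by,\bs):=P_{S^n}(\bs)\prod_{i=1}^nQ_{s_i}^*(y_i)$, the log-likelihood ratio again decomposes as $\sum_i\log\frac{W(Y_i|x_i,S_i)}{Q_{S_i}^*(Y_i)}$. The same law-of-total-variance computation pins down its first two moments to $nC_{\mathrm{SI-ED}}$ and $nV_{\mathrm{SI-ED}}$ (this time uniformly over state-dependent codeword choices, which is essential), and a Berry-Esseen evaluation of $D_\rms^{\eps+\eta}$ via Proposition~\ref{prop:ds_be} with $\eta=1/\sqrt n$---together with the $\log(1/\eta)=\tfrac12\log n$ slack in Proposition~\ref{prop:converse}---closes the matching upper bound.

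The principal obstacle, as in Theorem~\ref{prop:state_d}, is keeping the two independent Gaussian contributions ($\var_S[C(W_S)]$ from the state and $\bbE_S[V(W_S)]$ from the conditional channel) coupled correctly through the analysis; the virtue of the information-density formulation above is that this happens automatically, so the work reduces to verifying that the third absolute moment of the per-letter information density is finite and uniformly bounded under the state-conditional optimizer $P_{X|S}^*$---this follows from the positivity of $Q_s^*(y)$ for every $s\in\calS$ and $y\in\calY$, which is inherited from the CAOD property in each sub-channel as in~\cite[Cor.~1 to Thm.~4.5.2]{gallagerIT}. The one complication is possible non-uniqueness of $P_{X|S}^*$ when some $W_s$ has non-unique CAID; a standard maximin/minimax argument on $\Pi(W_s)$, analogous to the $V_{\min}$ vs.\ $V_{\max}$ distinction in \eqref{eqn:eps_disp2}, handles this and explains the hypothesis in Theorem~\ref{thm:state_ed} that $V_\eps(W_s)$ does not depend on $\eps$.
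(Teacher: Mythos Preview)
Your achievability argument is sound: with the random code $X_i\sim P_{X|S}^*(\cdot|S_i)$ the triples $(S_i,X_i,Y_i)$ are genuinely i.i.d., so the information density $\sum_i\log\frac{W(Y_i|X_i,S_i)}{Q_{S_i}^*(Y_i)}$ has mean $nC_{\mathrm{SI-ED}}$ and variance $nV_{\mathrm{SI-ED}}$ by the law of total variance, and a single Berry--Esseen application suffices. This is in fact a cleaner route than the paper's, which conditions on $s^n$, invokes the non-stationary channel achievability, and only then averages using two technical lemmas.

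Your converse, however, has a genuine gap. For a fixed message $m$, the codeword $x^n=x^n(m,S^n)$ is an \emph{arbitrary} function of the whole state sequence, so the summands $\log\frac{W(Y_i|x_i(m,S^n),S_i)}{Q_{S_i}^*(Y_i)}$ are coupled through $S^n$ and are \emph{not} independent; Proposition~\ref{prop:ds_be} (which needs product structure) does not apply. Moreover, the assertion that the first two moments equal $nC_{\mathrm{SI-ED}}$ and $nV_{\mathrm{SI-ED}}$ ``uniformly over state-dependent codeword choices'' is simply false: both $\bbE_{S^n}\big[\sum_i D(W(\cdot|x_i(m,S^n),S_i)\|Q_{S_i}^*)\big]$ and the corresponding variance depend on the encoder map, and the mean can be strictly below $nC_{\mathrm{SI-ED}}$ whenever some $x_i$ falls outside the CAID support of $W_{s_i}$.

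The paper resolves this by a two-stage argument that your proposal does not supply. First, it \emph{conditions on $s^n$}: the channel $\prod_i W(\cdot|\cdot,s_i)$ is then memoryless (non-stationary), and the full channel-coding converse---with its partition into near-CAID and far-from-CAID codewords---yields
\[
\eps^*(W^n,M,s^n)=\Phi\!\left(\frac{\log M-n\chi(s^n)}{\sqrt{n\nu(s^n)}}\right)+O\!\left(\frac{1}{\sqrt n}\right),
\]
where $\chi(s^n)=\frac1n\sum_i C(W_{s_i})$ and $\nu(s^n)=\frac1n\sum_i V(W_{s_i})$ are the empirical capacity and dispersion. Second, it takes the expectation $\bbE_{S^n}[\cdot]$, which is nontrivial because \emph{both} the mean $\chi(S^n)$ and the variance $\nu(S^n)$ inside $\Phi$ are random. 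Lemma~\ref{lem:approx_V} shows the random $\nu(S^n)$ may be replaced by the deterministic $\bbE_S[V(W_S)]$ at cost $O(\frac{\log n}{n})$, and Lemma~\ref{lem:approx_sum_Vs} then uses a convolution-of-Gaussians argument on the normalized sum $\frac{1}{\sqrt n}\sum_i(C(W_{S_i})-C_{\mathrm{SI-ED}})$ to show that averaging $\Phi$ over the random mean is equivalent, up to $O(\frac{1}{\sqrt n})$, to a single $\Phi$ with inflated variance $\bbE_S[V(W_S)]+\var_S[C(W_S)]=V_{\mathrm{SI-ED}}$. This is precisely the ``coupling of the two Gaussian contributions'' you identify as the obstacle; for the converse it does \emph{not} happen automatically, and Lemmas~\ref{lem:approx_V}--\ref{lem:approx_sum_Vs} are the missing ingredients.
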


While the appearance of Theorem~\ref{thm:state_ed} is remarkably similar to that of Theorem~\ref{prop:state_d}, its justification is significantly more involved.  We will not provide the whole proof here as it is long but only highlight the key steps in the sketch below.  Before we do so, for  a sequence $\bs\in\calS^n$,  denote $P_{\bs} \in\scP_n(\calS)$ as its type and define
\begin{align}
\chi({\bs} ) &:= \sum_{s \in\calS} P_{\bs}(s)C(W_s) =  \frac{1}{n}\sum_{i=1}^n C(W_{s_i}) ,\quad\mbox{and} \label{eqn:emp_cap} \\ 
\nu({\bs} ) &:=   \sum_{s \in\calS} P_{\bs}(s)V(W_s) =\frac{1}{n}\sum_{i=1}^n V(W_{s_i}) 
\end{align}
to be the {\em empirical capacity} and the {\em empirical dispersion} respectively. 

\begin{proof}[Proof sketch of Theorem~\ref{thm:state_ed}]
Suppose first that the state is known to be some deterministic  sequence $\bs\in\calS^n$ of type $P_{\bs}$.  Denote the optimum error probability for a length-$n$ block code with $M$ codewords as $\eps^*(W^n,M,\bs)$. We know by a slight extension of the channel coding result  (Theorems~\ref{thm:asymp_ach} and~\ref{thm:asymp_conv}) to  memoryless but non-stationary channels that 
\begin{equation}
\eps^*(W^n,M,\bs)= \Phi\bigg( \frac{\log M -n\chi(\bs ) }{ \sqrt{n\nu(\bs )}}\bigg) + O\bigg( \frac{1}{\sqrt{n} }\bigg), \label{eqn:prob_err_state}
\end{equation}
where the implied constant in the $O(\cdot)$-notation above is uniform over all strongly typical state types $P_{\bs}$. 
The optimum error probability when the state is random and \iid is  denoted as $\eps^*(W^n,M)$ and  it can be written as the following expectation:
\begin{equation}
\eps^*(W^n,M ) = \bbE_{S^n}\big[ \eps^*(W^n,M,S^n)\big].
\end{equation}
Therefore, the analysis of the following expectation is crucial:
\begin{equation}
\bbE_{S^n} \bigg[  \Phi\bigg( \frac{\log M -n\chi({S^n} ) }{ \sqrt{n\nu( {S^n} )}}\bigg)  \bigg]. \label{eqn:expect_type}
\end{equation}
The analysis  of \eqref{eqn:expect_type} is facilitated  by following lemmas whose proofs can be found in \cite{TomTan13a}.

\begin{lemma} \label{lem:approx_V}
The following holds uniformly in $\alpha\in\bbR$:
\begin{equation}
 \bbE \left[\Phi\left(\sqrt{n} \cdot \frac{\alpha-\chi({S^n} )}{\sqrt{\nu({S^n} )}} \right)\right]  -\bbE \left[\Phi\left(\sqrt{n}\cdot\frac{\alpha-\chi({S^n})}{\sqrt{ \bbE_S[V(W_S)] }} \right)\right] = O\left(\frac{\log n}{n}\right).
\end{equation}
\end{lemma}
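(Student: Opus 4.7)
The plan is to replace $1/\sqrt{\nu(S^n)}$ by $1/\sqrt{\mu_V}$ (where $\mu_V := \bbE_S[V(W_S)]$) inside the Gaussian CDF via a Taylor expansion, exploiting exponential concentration of the normalized sum $\nu(S^n) = \tfrac{1}{n}\sum_{i=1}^n V(W_{S_i})$ around $\mu_V$. Since $\calS$ is finite, $V(W_s)$ is bounded, so Hoeffding's inequality gives $\Pr(|\nu(S^n)-\mu_V|>\delta_n) \le 2\exp(-c n \delta_n^2)$. Taking $\delta_n := \sqrt{K \log n / n}$ for $K$ large makes this tail probability $o(1/n)$, so on the atypical event $A_n^c$ the trivial bound $|\Phi(\cdot)-\Phi(\cdot)| \le 1$ already suffices.

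On the typical event $A_n := \{|\nu(S^n)-\mu_V|\le \delta_n\}$, write $1/\sqrt{\nu} - 1/\sqrt{\mu_V} = -(\nu-\mu_V)/(2\mu_V^{3/2}) + O\bigl((\nu-\mu_V)^2\bigr)$, and let $z_0 := \sqrt{n}(\alpha-\chi(S^n))/\sqrt{\mu_V}$, $z_{\bs} := \sqrt{n}(\alpha-\chi(S^n))/\sqrt{\nu(S^n)}$. A second-order Taylor expansion of $\Phi$ yields
\[
\Phi(z_{\bs}) - \Phi(z_0) = \phi(z_0)(z_{\bs}-z_0) + \tfrac{1}{2}\phi'(\bar z)(z_{\bs}-z_0)^2,
\]
for some intermediate $\bar z$. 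The key analytic inputs are the uniform bounds $|\phi(z)z| \le c_1$ and $|\phi'(z)z^2| = |z^3 \phi(z)| \le c_2$, which tame the potentially large factor $z_0$ appearing when $\alpha$ is far from $\mu_C := \bbE_S[C(W_S)]$ and are what deliver uniformity in $\alpha$. The quadratic remainder is then dominated pointwise by a constant multiple of $(\nu-\mu_V)^2$, with $\bbE[(\nu-\mu_V)^2] = \var(\nu(S^n)) = O(1/n)$.

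The subtle step is the linear term, which after exchangeability reduces to $-\tfrac{1}{2\mu_V}\bbE[\phi(z_0) z_0 (V(W_{S_1}) - \mu_V)]$. Here I would decompose $\chi(S^n) = \chi'' + C(W_{S_1})/n$, where $\chi'' := \tfrac{1}{n}\sum_{i\ne 1}C(W_{S_i})$ is independent of $S_1$, Taylor expand $\phi(z_0)z_0$ in $\chi$ around $\chi''$, and use $\bbE[V(W_{S_1})-\mu_V]=0$ to annihilate the zeroth-order contribution; the surviving correction picks up an extra factor of $1/\sqrt{n}$ from combining $\chi - \chi'' = O(1/n)$ with the factor $\sqrt{n}/\sqrt{\mu_V}$ coming from $\partial_\chi z_0$, and involves $\mathrm{cov}(C(W_S), V(W_S))$ multiplied by an expectation of $\phi(z)(1-z^2) = -\phi''(z)$, which is bounded. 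The main obstacle is carrying out this second-order decomposition cleanly enough to reach the $O(\log n/n)$ rate rather than the naive $O(1/\sqrt{n})$ obtained from the first-order expansion alone; the $\log n$ factor enters through the concentration window $\delta_n$, and combining all three contributions (atypical tail, quadratic remainder, refined linear term) gives the claimed bound uniformly in $\alpha \in \bbR$.
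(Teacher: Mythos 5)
Your typical/atypical split (with $\delta_n\asymp\sqrt{\log n/n}$ and Hoeffding giving a tail of $o(1/n)$) and your control of the quadratic Taylor remainder via the uniform bounds on $|z\phi(z)|$ and $|z^3\phi(z)|$ are sound; those contributions do come in at $O(\log n/n)$ uniformly in $\alpha$. The gap is in the linear term, which is exactly the ``subtle step'' you flag but do not resolve. After exchangeability and the split $\chi=\chi''+C(W_{S_1})/n$, the independence of $\chi''$ from $S_1$ kills the zeroth-order contribution as you say, but the first-order remainder involves $h'(\tilde\chi)\cdot C(W_{S_1})/n$ with $h(\chi)=\phi(z_0(\chi))\,z_0(\chi)$ and
\[
h'(\chi)=-\frac{\sqrt{n}}{\sqrt{\mu_V}}\,\phi(z_0)\bigl(1-z_0^2\bigr),
\]
whose magnitude is genuinely $\Theta(\sqrt{n})$ once $\alpha$ is within $O(1/\sqrt{n})$ of $\bbE_S[C(W_S)]$.

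Thus the linear contribution is of size $O(\sqrt{n})\cdot O(1/n)\cdot\cov(C(W_S),V(W_S))=O(1/\sqrt{n})$ --- which is precisely the bookkeeping in your own sentence about picking up ``an extra factor of $1/\sqrt{n}$.'' But $O(1/\sqrt{n})$ strictly dominates $O(\log n/n)$, so the concluding claim that ``combining all three contributions gives the claimed bound'' does not follow from what precedes it. Moreover the implicit constant is not small: for $\alpha=\bbE_S[C(W_S)]$, the factor $\bbE[\phi(z_0)(1-z_0^2)]=\bbE[-\phi''(z_0)]$ over the near-Gaussian $z_0$ tends to a strictly positive limit of the form $(1+\sigma^2)^{-3/2}/\sqrt{2\pi}$, and $\cov(C(W_S),V(W_S))$ is nonzero for generic state-dependent channels, so a one-step Taylor expansion around $\chi''$ cannot be sharpened to $O(\log n/n)$ by this route. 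Some additional cancellation, or a structurally different treatment of the linear term, is required. Since the monograph defers the proof of this lemma to its cited reference, I cannot compare routes directly, but as written your argument establishes only $O(1/\sqrt{n})$ and leaves the stated rate unjustified.
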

This lemma says that we can essentially replace the random quantity $\nu({S^n})$ in \eqref{eqn:expect_type} with the deterministic quantity $\bbE_S [V(W_S)]$.  The next step involves approximating $\chi({S^n})$ in \eqref{eqn:expect_type} with the true capacity $C_{\mathrm{SI-ED}}(W,P_S)$.

\begin{lemma}  \label{lem:approx_sum_Vs}
The following holds uniformly in $\alpha\in\mathbb{R}$:
\begin{align}
&\bbE\left[\Phi\left(\sqrt{n}\cdot\frac{\alpha-\chi({S^n})}{\sqrt{ \bbE_S[V(W_S)] }} \right)\right]  \nn\\*
 &\quad = \Phi\left(\sqrt{n}\cdot\frac{\alpha-C_{\mathrm{SI-ED}}(W,P_S) }{\sqrt{  \bbE_S[V(W_S)] + \var_S[ C(W_S) ] }} \right) + O\left( \frac{1}{\sqrt{n}} \right). \label{eqn:approx_sum_Vs}
\end{align}
\end{lemma}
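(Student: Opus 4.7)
The plan is to introduce an auxiliary standard Gaussian $Z$ independent of $S^n$ and rewrite the left-hand side as a probability, then recognize that $\sqrt{n}[\chi(S^n)-C_{\mathrm{SI-ED}}(W,P_S)]$ is a normalized sum of i.i.d.\ random variables to which the Berry--Esseen theorem applies. Concretely, I would first invoke the identity $C_{\mathrm{SI-ED}}(W,P_S)=\bbE_S[C(W_S)]$ (which follows by choosing $P_{X|S}^*(\cdot|s)$ to be a CAID for $W_s$ in~\eqref{eqn:cap_ed}), so that the  random variables $C(W_{S_i})-C_{\mathrm{SI-ED}}(W,P_S)$ are zero mean with variance $V_2:=\var_S[C(W_S)]$ and finite third absolute moment (since $\calS$ is finite). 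Set $V_1:=\bbE_S[V(W_S)]>0$ and $\beta:=\sqrt{n}(\alpha-C_{\mathrm{SI-ED}}(W,P_S))/\sqrt{V_1}$, and let $\tilde Y:=\sqrt{n}[\chi(S^n)-C_{\mathrm{SI-ED}}(W,P_S)]=\frac{1}{\sqrt{n}}\sum_{i=1}^n(C(W_{S_i})-C_{\mathrm{SI-ED}}(W,P_S))$. Then the left-hand side equals $\bbE[F(\tilde Y)]$ where $F(y):=\Phi(\beta-y/\sqrt{V_1})$.

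Next I would compare $\bbE[F(\tilde Y)]$ to $\bbE[F(Z'\sqrt{V_2})]$ for a standard Gaussian $Z'$. Writing $G_n$ for the CDF of $\tilde Y$ and $G$ for the CDF of $Z'\sqrt{V_2}$, Theorem~\ref{thm:berry_iid} gives $\sup_y|G_n(y)-G(y)|\le c/\sqrt{n}$. Since $F$ is monotonically decreasing and ranges from $1$ to $0$, integration by parts yields
\begin{equation}
\big|\bbE[F(\tilde Y)]-\bbE[F(Z'\sqrt{V_2})]\big|=\bigg|\int F'(y)(G_n(y)-G(y))\,\rmd y\bigg|\le \frac{c}{\sqrt{n}}\int|F'(y)|\,\rmd y=\frac{c}{\sqrt{n}},
\end{equation}
which is the desired $O(1/\sqrt{n})$ bound, \emph{uniform in $\alpha$} since the bound depends only on $V_1,V_2$ and the third absolute moment of $C(W_S)-C_{\mathrm{SI-ED}}(W,P_S)$.

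Finally I would evaluate $\bbE[F(Z'\sqrt{V_2})]$ in closed form by reintroducing $Z$: with $Z,Z'$ independent standard Gaussians,
\begin{equation}
\bbE[F(Z'\sqrt{V_2})]=\Pr\!\left(Z\le \beta-\tfrac{Z'\sqrt{V_2}}{\sqrt{V_1}}\right)=\Pr\!\left(Z\sqrt{V_1}+Z'\sqrt{V_2}\le \beta\sqrt{V_1}\right)=\Phi\!\left(\frac{\beta\sqrt{V_1}}{\sqrt{V_1+V_2}}\right),
\end{equation}
which, after substituting the definition of $\beta$, is exactly the right-hand side of~\eqref{eqn:approx_sum_Vs}. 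The degenerate case $V_2=0$ is immediate since $\chi(S^n)\equiv C_{\mathrm{SI-ED}}(W,P_S)$ almost surely. The main obstacle is obtaining the uniformity of the $O(1/\sqrt{n})$ error in $\alpha$: this is precisely what the integration-by-parts trick handles, by converting the pointwise Kolmogorov distance bound into a bound on the expectation that is independent of the location parameter $\beta$.
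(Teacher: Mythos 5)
Your proposal is correct and follows the same route the paper sketches: write $\chi(S^n)$ as an average of i.i.d.\ terms $C(W_{S_i})$, apply a Berry--Esseen argument to the CDF of the normalized sum, and evaluate the limiting expectation via the fact that the convolution (sum) of independent Gaussians is Gaussian. You have filled in the step the paper leaves implicit—the integration-by-parts device that converts the uniform Kolmogorov-distance bound into a bound on the expectation that is independent of the location parameter $\beta$ (equivalently, of $\alpha$)—and that is exactly the right way to make the paper's ``converges in distribution, then convolve'' sketch rigorous with the claimed $O(1/\sqrt{n})$ uniformity.
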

The idea behind the proof of this lemma is as follows: From \eqref{eqn:emp_cap}, one  can write $\chi({S^n})$ as an average of \iid random variables $C(W_{S_i})$. The expectation  in \eqref{eqn:approx_sum_Vs} can then be written as 
\begin{equation}
\bbE\left[\Phi\left( \sqrt{n} \cdot \frac{\alpha-  C_{\mathrm{SI-ED}}(W,P_S) }{\sqrt{\bbE_S[V(W_S)]} } + \sqrt{ \frac{\var_S[ C(W_S) ]}{\bbE_S[V(W_S)]} }  \cdot  J_n \right)\right]
\end{equation}
where
\begin{equation}
J_n := \frac{1}{\sqrt{n}} \sum_{i=1}^n E_i,\,\,\,\,\mbox{and}\,\,\,\, E_i :=  \frac{C(W_{S_i}) - C_{\mathrm{SI-ED}}(W,P_S)}{\sqrt{\var_S[ C(W_S) ] }}  
\end{equation}
Clearly, $E_i$ are zero-mean,  unit-variance, \iid random variables and thus $J_n$ converges in distribution to a standard Gaussian. Now, \eqref{eqn:approx_sum_Vs} can be established by using the fact that the convolution of two independent Gaussians is a Gaussian, where the mean and variance are the sums of the constituent means and variances.  Combining  Lemmas~\ref{lem:approx_V} and \ref{lem:approx_sum_Vs} with \eqref{eqn:prob_err_state}--\eqref{eqn:expect_type} completes the proof.
\end{proof}
Finally, we remark that by appropriate modifications to Lemmas~\ref{lem:approx_V} and~\ref{lem:approx_sum_Vs}, Theorem~\ref{thm:state_ed} can be generalized to the case where the distribution of the state sequence follows a time-homogeneous and ergodic Markov chain~\cite[Thm.~8]{TomTan13a}. 
\section{Writing on Dirty Paper} 
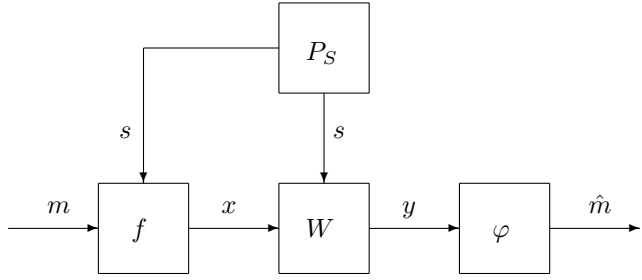
\begin{figure}
\centering
\setlength{\unitlength}{.4mm}
\begin{picture}(200, 90)
\put(0, 15){\vector(1, 0){30}}
\put(60, 15){\vector(1,0){30}}
\put(120, 15){\vector(1,0){30}}
\put(180, 15){\vector(1,0){30}}
\put(30, 0){\line(1, 0){30}}
\put(30, 0){\line(0,1){30}}
\put(60, 0){\line(0,1){30}}
\put(30, 30){\line(1,0){30}}

\put(90, 0){\line(1, 0){30}}
\put(90, 0){\line(0,1){30}}
\put(120, 0){\line(0,1){30}}
\put(90, 30){\line(1,0){30}}

\put(10, 20){  $m$}
\put(68, 20){  $x$}
\put(128, 20){  $y$} 
\put(41, 12){$f$ } 
\put(99, 12){$W$} 

\put(150, 0){\line(1, 0){30}}
\put(150, 0){\line(0,1){30}}
\put(180, 0){\line(0,1){30}}
\put(150, 30){\line(1,0){30}}
\put(161, 12){$\varphi$} 
\put(190, 20){  $\hatm $} 

\put(90, 60){\line(1, 0){30}}
\put(90, 60){\line(0,1){30}}
\put(120, 60){\line(0,1){30}}
\put(90, 90){\line(1,0){30}}
\put(105, 60){\vector(0,-1){30}}
\put(90, 75){\line(-1,0){45}}
\put(45, 75){\vector(0,-1){45}}


\put(105, 45){  $s$} 
\put(34, 45){  $s$} 
\put(99, 71){$P_{S}$} 
  \end{picture}
  \caption{Illustration of the Gel'fand-Pinsker problem }
  \label{fig:state_e}
\end{figure}
Costa's ``writing on dirty paper'' result is probably one of the most surprising   in network information theory. It is a special instance of the {\em Gel'fand-Pinsker} problem \cite{GP80} whose setup is shown in Fig.~\ref{fig:state_e}.  In contrast to the previous two sections, here the state (usually assumed to be i.i.d.) is    known noncausally at the encoder.  
The capacity of the Gel'fand-Pinsker channel   is
\begin{equation}
C_{\mathrm{SI-E}}(W,P_S)=\max_{P_{U|S}, f:\calU\times\calS\to\calX} I(U;Y)-I(U;S) \label{eqn:gp}
\end{equation}
where the auxiliary random variable $U$ can be constrained to have cardinality $|\calU| \le \min\{ |\calX| |\calS | , |\calY| + |\calS| +1 \}$. A strong converse was proved by Tyagi and Narayan~\cite{tyagi}.  

The Gaussian version of the problem, studied by Costa~\cite{costa}, and called {\em writing on dirty paper}, is as follows. The output of the channel $Y$ is the sum of   the channel input $X$, a Gaussian state $S\sim \calN(0,\inr)$ and independent noise $Z\sim\calN(0,1)$, i.e., 
\begin{equation}
Y_i=X_i+S_i+Z_i,\qquad\forall\, i = 1,\ldots, n.
\end{equation}
  As usual, we assume that the codeword  power is constrained to not exceed $\snr$, i.e., \begin{equation}
  \frac{1}{n}\sum_{i=1}^n X_i^2 \le \snr\label{eqn:dpc_con}
  \end{equation}
  with probability one. 
   If the state is not known at either terminal, then the capacity of the channel is 
\begin{equation}
C_{\mathrm{no-SI}}(W,P_S)=\rvC\bigg( \frac{\snr}{1+\inr}\bigg) .
\end{equation}
If the state is known at both terminals, the decoder can simply subtract it off and the channel behaves like an AWGN channel with signal-to-noise ratio $\snr$. Thus, the capacity is 
\begin{equation}
C_{\mathrm{SI-ED}}(W,P_S)=\rvC(\snr).
\end{equation}
Costa's showed the surprising result~\cite{costa} that  knowledge of the state is not required at the decoder for the capacity to be $\rvC(\snr)$! In other words,
\begin{equation}
C_{\mathrm{SI-E}}(W,P_S)=\rvC(\snr). \label{eqn:costa}
\end{equation}

The natural question, in the spirit of this monograph, is whether there is a degradation to higher-order terms in the asymptotic expansion of logarithm of the  maximum code size of the channel for a fixed average error probability (cf.~the AWGN case in Theorem~\ref{thm:awgn_asy}).  Scarlett~\cite{Scarlett14} and Jiang-Liu~\cite{jiang11} showed   the surprising result that   there is no degradation up to the second-order dispersion term! Furthermore, Scarlett \cite{Scarlett14} showed that the state sequence only has to satisfy a very  mild condition. In particular, it neither has to be   Gaussian nor ergodic.  The approach by Jiang-Liu~\cite{jiang11}  is via  lattice coding~\cite{Zamir}. The proof sketch below follows Scarlett's approach in~\cite{Scarlett14}. 

\begin{theorem} \label{thm:dpc}
Assume that there exists some  finite $\Gamma>0$ such that 
\begin{equation}
\Pr\bigg(\frac{1}{n} \|S^n\|_2^2  >  \Gamma \bigg) = O\bigg( \frac{\log n}{\sqrt{n}}\bigg). \label{eqn:state_cond} 
\end{equation}
For any $\snr \in (0,\infty)$, the maximum code size  for average error probability no larger than $\eps$ satisfies
\begin{equation}
\log M^*_{\mathrm{SI-E}}(W^n,P_{S^n},\eps)=n\rvC(\snr) + \sqrt{n\rvV(\snr)}\Phi^{-1}(\eps) + O(\log n).  \label{eqn:dpc}
\end{equation}
\end{theorem}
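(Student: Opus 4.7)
The converse is essentially free: any code for the dirty paper channel is also a code when the state $S^n$ is additionally revealed to the decoder, and in that enhanced problem the decoder can subtract $S^n$ to obtain a pure AWGN channel with signal-to-noise ratio $\snr$. The converse half of Theorem~\ref{thm:awgn_asy} then gives $\log M^*_{\mathrm{SI-E}}(W^n,P_{S^n},\eps)\le n\rvC(\snr)+\sqrt{n\rvV(\snr)}\Phi^{-1}(\eps)+O(\log n)$. So the substance is in the achievability direction, where we must show that no dispersion penalty is incurred for ignorance of the state at the decoder.

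The plan for the achievability bound is to implement Costa's choice inside a non-asymptotic Gel'fand-Pinsker achievability bound and then apply a Berry-Esseen analysis. Concretely, fix the MMSE coefficient $\alpha=\frac{\snr}{\snr+1}$, draw $X^n$ uniformly on the power sphere $\{\bx\in\bbR^n:\|\bx\|_2^2=n\,\snr\}$ (so the cost constraint~\eqref{eqn:dpc_con} holds with probability one) independently of $S^n$, and set the auxiliary codeword $U^n=X^n+\alpha S^n$. Generate $M\cdot L$ auxiliary codewords $U^n(m,\ell)$ i.i.d.\ from the induced distribution and partition them into $M$ bins of size $L$; the encoder, given $(m,\bs)$, looks for some $\ell$ such that $(U^n(m,\ell),\bs)$ is ``jointly typical'' in a soft sense (likelihood-based), and the decoder looks for the unique $(\hatm,\hat\ell)$ such that $(U^n(\hatm,\hat\ell),\by)$ is jointly typical. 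A non-asymptotic bound in the spirit of Feinstein/RCU---see e.g.\ \cite{WKT13,Scarlett14} for the form adapted to Gel'fand-Pinsker coding---gives an upper bound on the total error probability consisting of (i) an encoding-failure term controlled by $L$ and the deviation of $\log\frac{P_{U|S}(U|S)}{P_U(U)}$ from its mean, and (ii) a decoding-failure term of RCU type for the channel $P_{Y|U}$ induced by the coding distribution. The key information-density random variable for the analysis is
\begin{equation}
i_n(U^n;Y^n,S^n):=\sum_{i=1}^n\log\frac{P_{Y|U}(Y_i|U_i)}{P_Y(Y_i)}-\sum_{i=1}^n\log\frac{P_{U|S}(U_i|S_i)}{P_U(U_i)}.
\end{equation}

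The core computation then consists of verifying, via standard Costa algebra, that under the chosen distributions and with $\alpha=\snr/(\snr+1)$ the per-letter mean of $i_n$ equals $\rvC(\snr)$ and its per-letter variance equals $\rvV(\snr)$, so that the Berry-Esseen theorem (Theorem~\ref{thm:berry_gen}) yields the required $\sqrt{n\rvV(\snr)}\Phi^{-1}(\eps)$ second-order term and an $O(\log n)$ remainder once the bin size is chosen to be $\log L=nI(U;S)+\Theta(\log n)$ and the code size is chosen to be $\log M=n[I(U;Y)-I(U;S)]+\sqrt{n\rvV(\snr)}\Phi^{-1}(\eps-\delta_n)-\Theta(\log n)$ with $\delta_n=\Theta(1/\sqrt n)$. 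One technical simplification is that along the Costa choice $U-\alpha S$ is independent of $S$, which decouples the two sums above in distribution and allows the Berry-Esseen bound to be applied to an i.i.d.\ (or at least independent) sum.

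The main obstacle, and the one that forces the hypothesis~\eqref{eqn:state_cond}, is handling the non-Gaussianity of $S^n$ together with the spherical (hence non-Gaussian) input $X^n$. Two conditioning arguments are needed. First, replacing a Gaussian $X^n\sim\calN(0,\snr\bI)$ by the uniform distribution on the power-$n\snr$ sphere costs only $\frac12\log n+O(1)$ in the coding rate, by the same Laplace/change-of-measure calculation used in the proof of Theorem~\ref{thm:awgn_asy} (cf.\ Steps~1--6 there), so all the delicate estimates of the form~\eqref{eqn:change_meas} go through. Second, since $S^n$ is not assumed Gaussian, one cannot directly treat it as i.i.d.\ in the Berry-Esseen bound; the remedy is to condition on the high-probability event $\mathcal{A}_n:=\{\|S^n\|_2^2\le n\Gamma\}$. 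On $\mathcal{A}_n$ the information density $i_n$ is a sum of independent (across $i$) random variables whose mean and variance, computed conditionally on $S^n=\bs$, differ from their unconditional target values $n\rvC(\snr)$ and $n\rvV(\snr)$ by $O(1)$ uniformly in $\bs\in\mathcal{A}_n$ (the analysis here parallels that of Lemmas~\ref{lem:approx_V} and~\ref{lem:approx_sum_Vs}, but is simpler because Costa's choice makes the conditional mean and variance essentially independent of $\bs$). The contribution from $\mathcal{A}_n^c$ is absorbed into $\eps$ using precisely the hypothesis~\eqref{eqn:state_cond}, which supplies an $O(\log n/\sqrt n)=o(1)$ error term that may be folded into the Taylor expansion of $\Phi^{-1}(\eps-\delta_n)$ without worsening the $O(\log n)$ remainder. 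Combining the sphere-input reduction, the conditioning on $\mathcal{A}_n$, and the Berry-Esseen estimate of $\Pr(i_n<\log M+\log L+\gamma)$ yields~\eqref{eqn:dpc}.
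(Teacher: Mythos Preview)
Your converse argument is fine and matches the paper's. The high-level architecture of your achievability argument---Gel'fand--Pinsker binning with Costa's $\alpha=\snr/(\snr+1)$, Feinstein-type threshold decoding, Berry--Esseen on the relevant information density---is also correct. But the proposal misses the central technical idea of the achievability proof and, as written, contains a gap.

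The gap is your assertion that, conditionally on $S^n=\bs$, the mean and variance of $i_n$ equal $n\rvC(\snr)$ and $n\rvV(\snr)$ up to $O(1)$ \emph{uniformly} over $\bs\in\calA_n$. Costa's invariance $I(U;Y)-I(U;S)=\rvC(\snr)$ holds only when the mutual informations are computed with respect to the distributions $P^{(\tau)}_{SUY}$ \emph{adapted to the state power} $\tau=\|\bs\|_2^2/n$. With a single fixed reference distribution (your proposal), both $\bbE\big[\sum_i\log\frac{P_{Y|U}}{P_Y}\,\big|\,S^n=\bs\big]$ and $\bbE\big[\sum_i\log\frac{P_{U|S}}{P_U}\,\big|\,S^n=\bs\big]$ vary with $\|\bs\|_2^2$ over an $O(n)$ range on $\calA_n$, and there is no reason for the $\bs$-dependence to cancel. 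Correspondingly, the bin size $L$ cannot be chosen state-independently: the covering step needs $\log L\gtrsim nI^{(\tau)}(U;S)$, which depends on $\tau$, so a fixed $L$ forces you to provision for the largest $\tau\le\Gamma$ and then lose first-order rate for smaller $\tau$. (There is also a minor confusion in your codebook description: you cannot ``draw $X^n$ first and set $U^n=X^n+\alpha S^n$'' and then sample $U^n(m,\ell)$ i.i.d.\ from that; the auxiliary codewords must be generated independently of the realized state.)

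The paper resolves this by a \emph{power-type} argument: quantize $\|S^n\|_2^2/n$ into $\Theta(n)$ bins (power types), spend the first $\Theta(\log n)$ channel uses to convey the power type $\tau$ to the decoder (reliably, via an error-exponent argument), and then run a type-dependent codebook in which the $U$-codewords are uniform on a sphere of radius $\sqrt{n(\snr+\alpha^2\tau)}$, the bin size is $\log L^{(\tau)}=nI^{(\tau)}(U;S)+\Theta(\log n)$, and the decoding threshold is $\gamma^{(\tau)}=\log M+nI^{(\tau)}(U;S)+\Theta(\log n)$. Conditioned on $(\bs,\bu)$ in the relevant support one then gets $\bbE[q^{(\tau)}(U^n,Y^n)\mid\bs,\bu]=nI^{(\tau)}(U;Y)+O(1)$ and $\var[q^{(\tau)}(U^n,Y^n)\mid\bs,\bu]=n\rvV(\snr)+O(1)$; the $\tau$-dependence of the mean is exactly cancelled by the $\tau$-dependence of $\gamma^{(\tau)}$ via Costa's identity $I^{(\tau)}(U;Y)-I^{(\tau)}(U;S)=\rvC(\snr)$, which holds for every $\tau$. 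The hypothesis~\eqref{eqn:state_cond} enters only to bound the probability that $\tau$ is atypical, and the $\Theta(\log n)$ channel uses for type transmission are absorbed in the third-order term. This type-transmission step is the missing ingredient in your plan.
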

The condition in \eqref{eqn:state_cond} is mild. For example if $S^n$ is a zero-mean, \iid process and $\Gamma$ is chosen to be larger than $\bbE[S_1^2]$, under the condition that $\bbE[S_1^4]<\infty$, the probability decays at least as fast as $O(\frac{1}{n})$ by Chebyshev's inequality, thus satisfying \eqref{eqn:state_cond}. 

Before we sketch the proof of Theorem~\ref{thm:dpc}, let us recap Costa's proof of the DPC capacity in \eqref{eqn:costa}. He assumes $S$ is Gaussian with some variance $\inr$ and  chooses  $U=X+\alpha S$, where $X\sim\calN(0, \snr)$ and $S$ are independent. He then performs calculations which yield
\begin{align}
I(U;Y)  &= \frac{1}{2}\log\bigg( \frac{ (\snr + \inr  + 1) (\snr + \alpha^2\inr  ) }{\snr\cdot\inr  (1-\alpha)^2 + (\snr + \alpha^2 \inr )}\bigg),\\
I(U;S) &=\frac{1}{2}\log \bigg( \frac{\snr +\alpha^2\inr }{\snr}\bigg) , \quad \mbox{and}\\
I(U;Y) -I(U;S) &=\frac{1}{2}\log\bigg( \frac{\snr(\snr+\inr +1)}{\snr \cdot \inr  (1-\alpha)^2 + (\snr+\alpha^2 \inr )}\bigg).\label{eqn:cos_diff}
\end{align}
Differentiating the final expression \eqref{eqn:cos_diff} with respect to $\alpha$ and setting it to zero shows that $\alpha^* = \frac{\snr}{\snr+1}$ independent of $\inr$. Furthermore the expression \eqref{eqn:cos_diff} evaluated at $\alpha^*$ yields $\rvC(\snr)$ which is, of course, also independent of $\inr$. So the important thing to note here is that $I(U;Y)-I(U;S)$ is independent of $\inr$ at the optimum $\alpha$, which is the weight of the minimum mean squared error estimate of $X$ given $X+Z$. 

\begin{proof}[Proof sketch of Theorem~\ref{thm:dpc}]
The main ideas of the proof are sketched here. The converse follows from  Theorem~\ref{thm:awgn_asy} so we only have to prove achievability.  We start with some preliminary definitions. 

The analogue of types of states which take values in Euclidean space and  which we find helpful here is the notion of {\em power types}~\cite{merhav93}. Fix $\xi>0$ and consider the {\em power type class}
\begin{equation}
\calT_n (\tau) := \bigg\{  \bs \in\bbR^n :  \tau\le \frac{1}{n}\|\bs\|_2^2 <  \tau + \frac{\xi}{n}\bigg\}
\end{equation}
where $\tau = \frac{k \xi}{n}$. Intuitively, what we are doing is partitioning $[0,\infty)$ into small intervals, each  of length $\frac{\xi}{n}$. For any sequence $\bs\in \calT_n(\tau)$, we say that its power type is $\tau$, i.e.,  $n\tau\le \|\bs\|_2^2  \le  n\tau+\xi$. Thus, the normalized square of the $\ell_2$-norm, quantized to the left endpoint of the interval $[\tau,\tau+\frac{\xi}{n} )$, is the power type of $\bs$.   The set of all power types is denoted as $\calP_n \subset [0,\infty)$.

Consider the following {\em typical} set of power types (also called {\em typical types}) 
\begin{equation}
\tilde{\calP}_n:=\calP_n\cap  [0,\Gamma].
\end{equation}
Thus, we are simply truncating those power types $\tau$ that are larger than $\Gamma$, the threshold in the statement of the theorem. Clearly, the size of the typical  set of power types $| \tilde{\calP}_n |= \lfloor\Gamma n / \xi\rfloor= \Theta(n)$, which is polynomial in $n$. This is  similar  to the discrete case \cite[Ch.~2]{Csi97}. Furthermore,    by the assumption in \eqref{eqn:state_cond},
\begin{equation}
\Pr \big( P_{S^n} \notin \tilde{\calP}_n\big)=O\bigg( \frac{\log n}{\sqrt{n}}\bigg). \label{eqn:state_typ}
\end{equation}
We use the first $\Theta(\log n)$ symbols to transmit $P_{S^n}$, the state type.  The rest of the $n-\Theta(\log n)$ symbols are used to transmit the message.   By using the theory of error exponents for the Gel'fand-Pinsker problem~\cite{Mou07} and the fact that the number of  state types is polynomial, one can show that $P_{S^n}$  can be decoded with error probability $O(\frac{1}{\sqrt{n}})$.  The $\Theta(\log n)$ symbols used to transmit the state type does not affect the dispersion term. In the following, with a slight abuse of notation, $n$ refers to the {\em remaining}  channel uses.

The decoder uses information density thresholding with respect to the joint distribution
\begin{equation}
P_{SUY}^{(\tau)} (s,u,y) := P_{S}^{(\tau)} (s) P_{U|S}(y|s) P_{Y|SU}(y|s,u) \label{eqn:joint_dpc}
\end{equation}
where $\tau$ indexes a power type, the state distribution is $P_{S}^{(\tau)}= \calN(0,\tau)$, the conditional distributions $P_{U|S}(\cdot|s) = \calN(-\alpha s, \snr)$ and $P_{Y|SU}(\cdot|s,u ) = \calN(u+(1-\alpha)s, 1)$.  The corresponding mutual informations induced by  the joint distribution in~\eqref{eqn:joint_dpc} are denoted as $I^{(\tau)}(U;S)$ and  $I^{(\tau)}(U;Y)$. The constant $\alpha>0$ is arbitrary for now.

With these preparations, we are ready to prove Theorem~\ref{thm:dpc} and we divide the proof into several steps.  

Step 1 (Codebook Generation): The number of auxiliary codewords for each type $\tau\in \tilde{ \calP}_n$ is denoted as $L^{(\tau)}$. For each state type $\tau\in\tilde{ \calP}_n$ and each message $m \in \{1,\ldots, M\}$, generate a type-dependent codebook $\calC^{(\tau)}$ consisting of codewords $\{ U^n (m,l) : m \in \{1,\ldots, M\}, l \in \{1,\ldots, L^{(\tau)}\}\}$ where each codeword is drawn independently from  
\begin{equation}
P_{U^n}^{(\tau)}(\bu) := \frac{\delta \{ \|\bu\|_2^2 - n (\snr + \alpha^2 \tau) \}}{A_n\big(\sqrt{ n (\snr + \alpha^2 \tau)} \big)}.
\end{equation}
That is, similar to the proof of Theorem~\ref{thm:awgn_asy}, we uniformly generate  codewords $ U^n (m,l)$ from a  sphere in $\bbR^n$ with radius depending on the type, namely $\sqrt{n (\snr + \alpha^2 \tau)}$. 

Step 2 (Encoding): Given the state sequence $S^n$ and message $m$, the encoder first calculates the type of $S^n$, denoted as $\tau$. If $\tau$ is not typical in the sense of \eqref{eqn:state_typ} declare an error. The contribution to the overall error probability is given in~\eqref{eqn:state_typ} which is easily seen to not affect the second-order term in \eqref{eqn:dpc}.  If $\tau$ is typical, the encoder then proceeds to find  an index $\hatl\in \{1,\ldots, L^{(\tau)}\}$ such that $U^n (m,\hatl)$ is typical in the sense that 
\begin{equation}
\big\| U^n (m,\hatl) -\alpha S^n \big\|_2^2  \in \big[n \, \snr-\eta , n \, \snr\big],
\end{equation}
where $\eta>0$ is chosen to be a small constant.
If there are multiple such $\hatl$, choose one with the smallest index. If there is none, declare an encoding error. The encoder transmits $X^n :=  U^n (m,\hatl)-\alpha S^n$. Clearly the power constraint  on $X^n$ in \eqref{eqn:dpc_con} is satisfied with probability one. 

Step 3 (Decoding): Given the channel output $\by$ and the state type $\tau$, the decoder looks for a codeword $\bu (\tilm,\till)\in\calC^{(\tau)}$ such that 
\begin{equation}
q^{(\tau)}(\bu (\tilm,\till),\by):= \sum_{i=1}^n\log\frac{ P_{Y|U}^{(\tau)}(y_i | u_i(\tilm,\till)) }{P_{Y}^{(\tau)}(y_i)}\ge\gamma^{(\tau)}
\end{equation}
where $\gamma^{(\tau)}$ is a power type-dependent threshold to be chosen in the following.  The distribution $P_{UY}^{(\tau)}$ is defined according to \eqref{eqn:joint_dpc} and $q^{(\tau)}$ is simply an information density indexed by the power type $\tau$.

Step 4 (Analysis of Error Probability): Assume $m=1$. Let $\tau$ be the power type of the state   $S^n$.  Let $\hatl$ be the chosen index in the encoder step.  Clearly, the error event is the union of the following two events:
\begin{align}
\calE_\rmc &  \!:=\! \left\{ \forall\,  U^n (1,l ) \in\calC^{(\tau)} : \big\| U^n (1,l) -\alpha S^n \big\|_2^2    \notin [n \, \snr \!-\!\eta , n \, \snr] \right\}\\*
\calE_{\rmp} &\! :=\! \Big\{ \mbox{Decoder estimates an } \tilm\ne 1 \Big\}
\end{align}
If we set the number of auxiliary codewords for type class indexed by $\tau$ to be 
\begin{equation}
\log L^{(\tau)} := n I^{(\tau)} (U;S) + \kappa_1\log n,
\end{equation}
for some $\kappa_1>0$,  then by techniques similar to the covering lemma~\cite{elgamal}, we can show that 
\begin{equation}
\Pr\big( \calE_\rmc \, \big|\,  P_{S^n}  = \tau\big) \le\exp(-\psi n)
\end{equation}
for some $\psi>0$ and all typical types $\tau \in \tilde{\calP}_n$. The event $\calE_{\rmp}$ can be analyzed per Feinstein-style~\cite{Feinstein} threshold decoding as follows:
\begin{align}
\Pr\big(  \calE_\rmp  &  \, \big|\, \calE_1^c , P_{S^n}  = \tau \big)  \le \Pr\Big( q^{(\tau)}(U^n  ,Y^n) 
  \le \gamma^{(\tau)}  \big|\, \calE_1^c , P_{S^n}  = \tau  \Big)  \nn\\*
  & + ML^{(\tau)}  \Pr\Big( q^{(\tau)}(\barU^n  ,Y^n) > \gamma^{(\tau)}\big|\, \calE_1^c , P_{S^n}  = \tau   \Big) \label{eqn:f_thres}
\end{align}
where $\barU^n \sim P_{U^n}^{(\tau)}$ is independent of $Y^n$. By a change-of-measure argument similar to \eqref{eqn:change_meas}--\eqref{eqn:be} for the AWGN case, one can show that if $\gamma^{(\tau)}$ is chosen to be 
\begin{equation}
\gamma^{(\tau)}=\log M + n I^{(\tau)}(U;S) + \kappa_2\log n
\end{equation}
where  $\kappa_2 :=\kappa_1+1$, then the second term in \eqref{eqn:f_thres}  decays as $O(\frac{1}{n})$. So it remains to analyze the first-term. We do so using the Berry-Esseen theorem and  the fact that with $\alpha^* = \frac{\snr}{\snr+1}$, for any $\tau\in\tilde{\calP}_n$ and any $\bs$ and $\bu$ in the support of $P_{S^n, U^n,Y^n}$ conditioned on $\calE_1^c$ and $P_{S^n}=\tau$, 
\begin{align}
\bbE\big[q^{(\tau)}(U^n,Y^n) \big|  S^n \!=\!\bs,U^n = \bu \big] &= n I^{(\tau)} (U;Y) + O(1),\,\,\mbox{and} \\
\var\big[q^{(\tau)}(U^n,Y^n) \big|  S^n \!=\!\bs,U^n = \bu \big] & = n \rvV(\snr) +O(1)  .
\end{align}
The proof is completed by noting that for $\alpha^* = \frac{\snr}{\snr+1}$, the difference of mutual informations $I^{(\tau)}(U;Y)-I^{(\tau)}(U;S)$ equals $\rvC(\snr)$ for every  power type $\tau$ (in fact every variance) as we discussed prior to the start of this proof.
\end{proof}

\section{Mixed Channels} \label{sec:mixed}
In this section, we consider state-dependent DMCs  $W\in\scP(\calY|\calX\times\calS)$ where the state sequence is random but fixed throughout the entire transmission block once it is determined at the start. This class of channels is known as {\em mixed channels} \cite[Sec.~3.3]{Han10}.  The precise setup is as follows. Let $S$ be a state random variable with a binary  alphabet $\calS = \{0,1\}$ and let $P_S$ be its distribution. We consider two DMCs, each indexed by a state $s \in \calS$. These DMCs are denoted as $W_0:=W(\cdot | \cdot ,0)$ and $W_1:=W(\cdot|\cdot,1)$ and have  capacities  $C(W_0)$ and $C(W_1)$ respectively. Without loss of generality, we assume that $C(W_0)\le C(W_1)$.   We also assume that each of these channels has a unique CAID and the CAIDs coincide.\footnote{An example of this would be two binary symmetric channels. Both the CAIDs are uniform distributions on $\{0,1\}$ and they are clearly unique.} Their $\eps$-channel dispersions (cf.~\eqref{eqn:eps_disp1}--\eqref{eqn:eps_disp2}) are  denoted by $V(W_0)$ and $V(W_1)$ respectively. The $\eps$-dispersions are assumed to be positive and  are independent of $\eps$ because the CAIDs are unique.  

Before transmission begins, the entire state sequence $S^n = (S, \ldots, S) \in \{0,1\}^n$ is determined. Note that the probability that the DMC is $W_s$ is $\pi_s:=P_S(s)$. The realization of the state is   known to neither   the encoder nor the decoder.  The  probability of observing the sequence $\by\in\calY^n$ given an input sequence $\bx\in\calX^n$ is
\begin{equation}
\Pr(Y^n=\by|X^n=\bx)=\sum_{s\in\calS} \pi_s\prod_{i=1}^n W_s(y_i|x_i) =: W_{\mathrm{mix}}^{(n)}(\by|\bx), \label{eqn:mixed_eq}
\end{equation}
explaining the term {\em mixed} channels. 
We let $M^*_{\mathrm{mix}}(W^n,P_S,\eps)$ denote the maximum number of messages that can be transmitted through the channel $W^n$ when the state distribution is $P_S$ and if the tolerable average error probability is $\eps\in (0,1)$.

The class of mixed channels is the prototypical one in which the strong converse property \cite[Sec.~3.5]{Han10}  does not hold in general.
This means that the {\em $\eps$-capacity}
\begin{equation}
C_\eps(W,P_S):=\liminf_{n\to\infty}\frac{1}{n}\log M^*_{\mathrm{mixed}}(W^n,P_S,\eps)
\end{equation}
depends on $\eps$ in general. To state $C_\eps(W,P_S)$ for binary state distributions, we consider three different cases:\\
\noindent Case (i): $C(W_0)=C(W_1)$ and relative magnitudes of $\eps$ and $\pi_0$ are arbitrary \\
\noindent  Case (ii): $C(W_0)<C(W_1)$ and $\eps<\pi_0$\\
\noindent Case (iii): $C(W_0)<C(W_1)$ and $\eps\ge\pi_0$

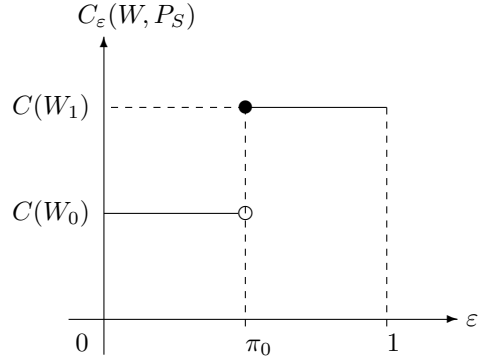
\begin{figure}
\centering
\begin{picture}(115, 115)
\setlength{\unitlength}{.47mm}
\put(0, 10){\vector(1, 0){110}}
\put(10, 0){\vector(0,1){90}}
\put(110,8){ $\eps$}
\put(0, 94){ $C_\eps(W,P_S)$} 

\put(10, 40){\line(1, 0){38}}
\put(50, 70){\line(1, 0){40}}

\multiput(50, 70)(0, -4){16}{\line(0,-1){2}}
\multiput(90, 70)(0, -4){16}{\line(0,-1){2}}

\multiput(50, 70)(-4, 0){10}{\line(-1,0){2}}

\put(-16, 68){$C(W_1)$}
\put(-16, 38){$C(W_0)$}
\put(90, 1){$1$}

\put(50, 1){$\pi_0$}
\put(2, 1){0}

\put(50, 70){\circle*{4}}
\put(50, 40){\circle{4}}
\end{picture}
\caption{Plot of the $\eps$-capacity against $\eps$ for the case $C(W_0)<C(W_1)$. The strong converse property \cite[Sec.~3.5]{Han10}  holds iff $C(W_0)=C(W_1)$ in which case $C_\eps(W,P_S)$ does not depend on $\eps$. }
\label{fig:eps_ca}
\end{figure}

It is known that \cite[Sec.~3.3]{Han10}   that 
\begin{equation}
C_\eps(W,P_S) =\left\{ \begin{array}{cc}
C(W_0) =C(W_1)& \mbox{Case (i)} \\
C(W_0) & \mbox{Case (ii)} \\
C(W_1) & \mbox{Case (iii)} 
\end{array} \right. \label{eqn:eps_c}
\end{equation}
A plot of the $\eps$-capacity is provided in Fig.~\ref{fig:eps_ca}.

The following theorem  was proved for the special case of Gilbert-Elliott channels~\cite{Elliott, Gilbert, Mush}  by Polyanskiy-Poor-Verd\'u~\cite[Thm.~7]{PPV11} where  $W_0$ and $W_1$ are binary symmetric channels so their CAIDs are uniform on $\calX$. The  coefficient $L(\eps;W,P_S) \in\bbR$ in the asymptotic expansion
\begin{equation}
\log M^*_{\mathrm{mix}}(W^n,P_S,\eps)=nC_\eps(W,P_S) + \sqrt{n}L(\eps;W,P_S)  + o(\sqrt{n}) ,
\end{equation}
was sought. This coefficient is termed the {\em second-order coding rate}. In the following theorem, we state and prove a more general version of the result by Polyanskiy-Poor-Verd\'u~\cite[Thm.~7]{PPV11}. For a result imposing even less restrictive assumptions, we refer the reader to the work by Yagi and Nomura~\cite{yagi}.
\begin{theorem}\label{thm:mixed}
Assume that each channel $W_s,s\in\calS$ has a unique  CAID and the CAIDs  coincide.    In the various cases above, the second-order coding rate is given as follows:

\noindent Case (i): $L(\eps;W,P_S) $ is the solution $l$ to the following equation:
\begin{equation}
\pi_0\,\Phi\bigg( \frac{l}{\sqrt{V(W_0)}}\bigg)+\pi_1\,\Phi\bigg( \frac{l}{\sqrt{V(W_1)}}\bigg)= \eps. \label{eqn:casei_mixed}
\end{equation}
\noindent Case (ii): 
\begin{equation}
L(\eps;W,P_S) =\sqrt{V(W_0)}\, \Phi^{-1}\bigg( \frac{\eps}{\pi_0}\bigg).\label{eqn:caseii_mixed}
\end{equation}
\noindent  Case (iii): 
\begin{equation}
L(\eps;W,P_S) =\sqrt{V(W_1)}\,\Phi^{-1}\bigg( \frac{\eps-\pi_0}{ \pi_1}\bigg).\label{eqn:caseiii_mixed}
\end{equation}
If $\eps = \pi_0$, then $L(\eps;W,P_S) = -\infty$.  
\end{theorem}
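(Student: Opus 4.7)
My plan is to exploit the exact decomposition of the error probability under the mixed channel. For any fixed code $(f,\varphi)$, the law of total probability gives
\begin{equation}
\eps_n = \pi_0\,\eps_n^{(0)} + \pi_1\,\eps_n^{(1)},
\end{equation}
where $\eps_n^{(s)}$ is the average error probability of the same code when the channel is $W_s^n$. Because the two DMCs share a unique CAID $P^*$, I intend to use a single codebook that is essentially optimal for \emph{both} constituent channels simultaneously. The second-order coding rate $L$ will then be determined by the smallest rate at which $\pi_0 F_n^{(0)} + \pi_1 F_n^{(1)} \le \eps$ can be arranged, where $F_n^{(s)}(\log M) := \Phi\bigl((\log M - nC(W_s))/\sqrt{nV(W_s)}\bigr)$ is the Gaussian approximation from Theorems \ref{thm:asymp_ach}--\ref{thm:asymp_conv} evaluated for channel $W_s$.

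For achievability, I would generate an $(M,\eps)_{\av}$-code by drawing codewords i.i.d. from $(P^*)^n$ (or, to get a cleaner third-order term, a constant composition ensemble of type close to $P^*$) and equip the receiver with a universal decoder (e.g., maximum empirical mutual information~\cite{CK81,Goppa}). Conditioned on $S=s$, the ensemble then inherits the second-order bound of Theorem \ref{thm:asymp_ach} for $W_s^n$, so $\eps_n^{(s)} \le F_n^{(s)}(\log M) + O(n^{-1/2})$. Substituting into the decomposition above and choosing $\log M = nC_\eps(W,P_S) + \sqrt{n}\,l$ with $l$ slightly below the target from~\eqref{eqn:casei_mixed}--\eqref{eqn:caseiii_mixed} yields $\eps_n \le \eps$ for $n$ large.

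For the converse, I would invoke the symbol-wise bound (Proposition~\ref{prop:converse}) applied to the mixed channel $W_{\mathrm{mix}}^{(n)}$ in~\eqref{eqn:mixed_eq}, with the mixture output distribution $Q^{(n)} := \pi_0 (P^*W_0)^n + \pi_1 (P^*W_1)^n$ and $\eta = n^{-1/2}$. Using Lemma~\ref{lem:Ds1} together with the identity $W_{\mathrm{mix}}^{(n)}(\cdot|\bx) = \pi_0 W_0^n(\cdot|\bx) + \pi_1 W_1^n(\cdot|\bx)$, one can show
\begin{equation}
\Pr\!\left(\log \tfrac{W_{\mathrm{mix}}^{(n)}(Y^n|\bx)}{Q^{(n)}(Y^n)}\le R\right) \ge \pi_0 \Pr_0\!\left(\log \tfrac{W_0^n(Y^n|\bx)}{(P^*W_0)^n(Y^n)}\le R - \log\tfrac{1}{\pi_0}\right) + \pi_1 (\cdots),
\end{equation}
where $\Pr_s$ denotes probability under $Y^n\sim W_s^n(\cdot|\bx)$. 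Applying the Berry--Esseen theorem to each of the two sums of independent log-likelihood ratios, and taking the supremum over constant composition sequences $\bx$ of type near $P^*$, produces an upper bound on $D_\rms^{\eps+\eta}$ matching the achievability side up to $O(\log n)$.

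The main obstacle, and where the three cases split, is the asymptotic analysis of $\pi_0 \Phi(\alpha_0) + \pi_1 \Phi(\alpha_1) = \eps$ with $\alpha_s = (\log M - nC(W_s))/\sqrt{nV(W_s)}$. When $C(W_0)=C(W_1)$, both arguments scale as $l/\sqrt{V(W_s)}$ and \eqref{eqn:casei_mixed} follows verbatim. When $C(W_0)<C(W_1)$ and $\eps<\pi_0$, the rate must be pinned to $nC(W_0)+\sqrt{n}l$, driving $\alpha_1\to -\infty$ so the $\pi_1$ term vanishes, leaving $\pi_0\Phi(l/\sqrt{V(W_0)})=\eps$; this gives~\eqref{eqn:caseii_mixed}. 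When $\eps\ge \pi_0$, the rate can be pushed to $nC(W_1)+\sqrt{n}l$, forcing $\alpha_0\to+\infty$ and the $\pi_0$ term saturates at $\pi_0$, leaving $\pi_1\Phi(l/\sqrt{V(W_1)})=\eps-\pi_0$ and thus \eqref{eqn:caseiii_mixed}. The critical case $\eps=\pi_0$ requires $l\to -\infty$ in~\eqref{eqn:caseiii_mixed}, consistent with the stated $L(\eps;W,P_S)=-\infty$. The delicate point is controlling the cross-terms in the mixture bound for the converse when one channel's contribution formally ``vanishes''; this requires carefully bounding how rapidly $\alpha_1\to -\infty$ (in Case (ii)) so the $\pi_1$-term contributes $o(1)$ to the Berry--Esseen estimate rather than being absorbed into a bigger remainder.
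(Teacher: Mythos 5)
Your overall skeleton matches the paper's: the law of total probability over $S$, a Berry--Esseen estimate per state, and the three-case split driven by the relative sizes of $\eps$ and $\pi_0$ and whether $C(W_0)=C(W_1)$. The case analysis at the end of your proposal, including the handling of the degenerate $\eps=\pi_0$, is exactly right. But there is a genuine gap in the converse and a nontrivial unjustified step in the achievability.

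\textbf{Converse.} The paper chooses the auxiliary output measure $Q^{(n)}=\tfrac12\big(Q_0^{(n)}+Q_1^{(n)}\big)$ with $Q_s^{(n)}(\by)=\sum_{P_{\bx}\in\scP_n(\calX)}|\scP_n(\calX)|^{-1}\prod_i P_{\bx}W_s(y_i)$, \emph{averaged over input types}, and then sifts out the term $(P_{\bx}W_s)^n$ corresponding to the actual codeword type. You instead propose $Q^{(n)}=\pi_0(P^*W_0)^n+\pi_1(P^*W_1)^n$, using only the CAID-induced output, and say you then ``take the supremum over constant composition sequences $\bx$ of type near $P^*$.'' That restriction is not licensed by Proposition~\ref{prop:converse}, which requires the supremum over \emph{all} $\bx\in\calX^n$. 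And it is not harmless: if the CAID $P^*$ has full support (which is generic), then $D(W_s\|P^*W_s\,|\,P_{\bx})=C(W_s)$ for \emph{every} type $P_{\bx}$, while the variance $V(W_s\|P^*W_s\,|\,P_{\bx})$ can be strictly smaller than $V(W_s)$ for types far from $P^*$ (it can even be zero for a deterministic input row, as in a Z-channel codeword $\bx=\mathbf{0}$). For $\eps<\tfrac12$, smaller variance makes $D_{\rms}^{\eps+\eta}\big(W_s^n(\cdot|\bx)\,\big\|\,(P^*W_s)^n\big)$ \emph{larger}, and the converse bound degrades to $nC_\eps+O(\log n)$ without the second-order term. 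The type-averaged output measure is precisely the device that lets the mean term $I(P_{\bx},W_s)$ detect that such $\bx$ are far from the CAID (so a Chebyshev bound can kick in), while Berry--Esseen with the correct variance applies for types near $P^*$. Without it your converse does not close.

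\textbf{Achievability.} You assert that a codebook drawn from $(P^*)^n$ with an MMI decoder ``inherits the second-order bound of Theorem~\ref{thm:asymp_ach} for $W_s^n$'' conditioned on $S=s$. Theorem~\ref{thm:asymp_ach} is proved with threshold decoding (strengthened Feinstein) or ML decoding with respect to the information density of the \emph{known} channel; its conclusion does not transfer to a decoder that is agnostic to $s$ without a separate argument. The paper's route is cleaner: apply Feinstein's theorem directly to the \emph{mixed} channel $W_{\mathrm{mix}}^{(n)}$ with output $P^nW_{\mathrm{mix}}^{(n)}$, split the information-spectrum probability by the law of total probability, and then change measure back to $(PW_s)^n$ by conditioning on $\calA_\gamma:=\{\by:\log\frac{(PW_s)^n(\by)}{P^nW_{\mathrm{mix}}^{(n)}(\by)}\ge-\gamma\}$ with $\gamma=\tfrac12\log n$, paying only a $\exp(-\gamma)=n^{-1/2}$ penalty. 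This produces exactly the sum $\pi_0\Phi(\cdot)+\pi_1\Phi(\cdot)$ you are after without any universality claim. Your decomposition of the target equation and the discussion of how $\alpha_0,\alpha_1$ behave in Cases (ii) and (iii) is correct and matches the paper; it is the two intermediate technical steps above that need to be replaced.
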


We observe that in Case (i) where the capacities $C(W_0)$ and $C(W_1)$ coincide (but not necessarily the dispersions), the second-order coding rate is a function of both the dispersions $V(W_0)$ and $V(W_1)$, together with $\pi_0$ and $\eps$. This function also involves two Gaussian cdfs, suggesting, in the proof, that we apply the central limit theorem twice. In the case where one capacity is strictly smaller than another (Cases (ii) and (iii)), there is only one Gaussian cdf, which means that one of the two channels dominates the overall system behavior. Intuitively for Case (ii), the first order term is $C(W_0)<C(W_1)$ and $\eps<\pi_0$, so the channel with the smaller capacity dominates the asymptotic behavior of the channel, resulting in the second-order term being solely dependent on $V(W_0)$.  In Case (iii), since $\eps\ge\pi_0$, we can tolerate a higher error probability  so the channel with the larger capacity dominates the asymptotic behavior. Hence, $L(\eps;W,P_S)$ depends only on $V(W_1)$.

 The corresponding result for source coding, random number generation  and Slepian-Wolf coding were derived by Nomura-Han~\cite{Nom13b,Nom13}. We only provide a proof sketch  of Case (i) in Theorem \ref{thm:mixed} here. 

\begin{proof}[Proof sketch of Case (i) in  Theorem \ref{thm:mixed}]
For the direct part of Case (i), we specialize Feinstein's theorem (Proposition~\ref{prop:fein}) with the input distribution chosen to be the $n$-fold product of the common CAID of $W_0$ and $W_1$, denoted as $P\in\scP(\calX)$.  Recall the definition of $W_{\mathrm{mix}}^{(n)}(\by|\bx)$ in  \eqref{eqn:mixed_eq}. By the law of total probability,  the probability defining   the $(\eps-\eta)$-information spectrum divergence simplifies as follows: 
\begin{align}
\mathfrak{p}& :=\Pr\left(\log \frac{W_{\mathrm{mix}}^{(n)} ( Y^n|X^n )}{P^n W_{\mathrm{mix}}^{(n)} (Y^n)} \le R \right) \\* 
&= \sum_{s \in\calS}\pi_s\Pr\left( \log\frac{W_s^n( Y_s^n|X^n )}{ P^nW_{\mathrm{mix}}^{(n)}  (Y_s^n)}  \le  R\right) ,\label{eqn:split_mixed}  
\end{align}
where $Y_s^n , s\in\calS$ denotes the output of  $W_s^n$ when the input is $X^n$. Fix $\gamma>0$. Consider the    probability indexed by $s=0$ in \eqref{eqn:split_mixed}:
\begin{align}
 \mathfrak{p}_0 &   :=  \Pr\left( \log\frac{W_0^n( Y_0^n|X^n )}{(PW_0)^n(Y_0^n)}  + \log\frac{(PW_0)^n( Y_0^n )}{ P^nW_{\mathrm{mix}}^{(n)}  (Y_0^n)} \le R\right) \\
&\le \Pr\left(   \log\frac{W_0^n( Y_0^n|X^n )}{(PW_0)^n(Y_0^n)}   +  \log\frac{(PW_0)^n( Y_0^n )}{ P^nW_{\mathrm{mix}}^{(n)}  (Y_0^n)}   \le  R \,\bigg|\, Y_0^n\in\calA_\gamma \right) \nn\\*
&\hspace{1in}+   \Pr \big(Y_0^n\in\calA_\gamma^c\big)  \label{eqn:change_out}
\end{align}
where  the set  
\begin{equation}
\calA_\gamma:=\bigg\{\by\in\calY^n:\log\frac{(PW_0)^n(\by )}{P^nW_{\mathrm{mix}}^{(n)} ( \by)}    \ge -\gamma \bigg\}.
\end{equation}
Because $Y_0^n \sim (PW_0)^n$,  we have $\Pr(\calA_\gamma^c)\le\exp(-\gamma)$. This, together with the definition of $\calA_\gamma$, implies that  
\begin{align}
\mathfrak{p}_0& \le \Pr\left( \log\frac{W_0^n( Y_0^n|X^n )}{(PW_0)^n(Y_0^n)} \le R+\gamma\right) + \exp( -\gamma)\\*
&\le \Phi\bigg( \frac{R+\gamma-nC(W_0)}{\sqrt{nV(W_0)}}\bigg) + O\bigg(  \frac{1}{\sqrt{n}}\bigg)+ \exp(- \gamma),
\end{align}
where the final step follows from the \iid version of the Berry-Esseen theorem (Theorem \ref{thm:berry_iid}).  The same technique can be used to upper bound the second probability in  \eqref{eqn:split_mixed}.  Choosing $\eta=\frac{1}{\sqrt{n}}$ and $\gamma= \frac12\log n$ results in  
\begin{align}
\mathfrak{p} \le \sum_{s \in\calS}\pi_s \Phi\bigg( \frac{R+\frac12\log n-nC_s}{\sqrt{nV_s}}\bigg)  +O\bigg(  \frac{1}{\sqrt{n}}\bigg). \label{eqn:mixed_direct}
\end{align}
Now we substitute this bound on $\mathfrak{p}$ into the definition of $(\eps-\eta)$-information spectrum divergence in Feinstein's theorem. We note that $C(W_0)=C(W_1)$ and thus may solve for a lower bound of $R$. This then completes the direct part of Case (i) in \eqref{eqn:casei_mixed}. Notice that for Case (ii),  all the derivations up to \eqref{eqn:mixed_direct} hold verbatim. However, note that since $C_\eps(W,P_S)=C(W_0)$, we have that  $R= nC(W_0) + l\sqrt{n} + o(\sqrt{n})$ for some $l\in\bbR$. By virtue of the fact that $C(W_0)<C(W_1)$, the second term in \eqref{eqn:mixed_direct} vanishes asymptotically and we recover~\eqref{eqn:caseii_mixed} which involves only one Gaussian cdf.

For the converse part of Case (i), we appeal to the symbol-wise converse bound (Proposition \ref{prop:converse}). For a fixed $\bx\in\calX^n$ and arbitrary output distribution $Q^{(n)} \in\scP(\calY^n)$,  the probability that defines the  $(\eps+\eta)$-information spectrum divergence can be written as 
\begin{align}
\mathfrak{q}   &:= \Pr \left(  \log\frac{W_{\mathrm{mix}}^{(n)}(Y^n|\bx )}{Q^{(n)}(Y^n) }\le R\right) \\*
&=\sum_{s\in\calS}\pi_s\Pr\left( \log\frac{W_s^n( Y_s^n|\bx )}{ Q^{(n)}(Y_s^n)}  \le  R\right)   \label{eqn:mixed_conv}
\end{align}
where $Y_s^n,s\in\calS$ is the output of  $W_s^n$ given input $\bx$. Now choose the output distribution to be 
\begin{equation}
Q^{(n)}(\by):= \frac{1}{2} \left( Q^{(n)}_0(\by) + Q^{(n)}_1(\by) \right) 
\end{equation}
where for each $s\in\calS$,
\begin{equation}
Q_s^{(n)} (\by) : =\sum_{P_{\bx} \in \scP_n(\calX)}   \frac{1}{|\scP_n(\calX)|}   \prod_{i=1}^n P_{\bx}W_s(y_i) \label{eqn:Qn1_mixed}
\end{equation}  
Now note that 
\begin{equation}
Q^{(n)}(\by) \ge \frac{1}{2 |\scP_n(\calX) |}\prod_{i=1}^n P_{\bx}W_s(y_i) 
\end{equation}
for any $s\in\calS$ and type $P_\bx\in\scP(\calX)$. 
By sifting out the type corresponding to $\bx$ for channel $W_0$, the probability in  \eqref{eqn:mixed_conv} corresponding to $s=0$ can be lower bounded as 
\begin{align}
\mathfrak{q}_0\ge\Pr\left( \log\frac{W_0^n( Y_0^n|\bx )}{ (P_{\bx}W_0)^n(Y_0^n)  } \le R - \log (2|\scP_n(\calX)|) \right). \label{eqn:mixed_lb}
\end{align}
By separately considering types close to (Berry-Esseen) and far away (Chebyshev) from the CAID similarly to the  proof of Theorem~\ref{thm:asymp_conv} (or~\cite[Thm.~3]{Hayashi09}), we can  show that   \eqref{eqn:mixed_lb} simplifies to 
\begin{equation}
\mathfrak{q}_0\ge\Phi  \bigg( \frac{R - |\calX|\log (2(n+1)) -n C(W_0)}{\sqrt{nV(W_0)}}\bigg)  - O\bigg(  \frac{1}{\sqrt{n}}\bigg)
\end{equation}
uniformly for all $\bx\in\calX^n$.
The same calculation holds for the second probability in \eqref{eqn:mixed_conv}. By choosing $\eta=\frac{1}{\sqrt{n}}$, we can upper bound $R$ using Proposition~\ref{prop:converse} and   the converse proof of Case (i) can be completed. 
\end{proof}
We observe that the crux of the above proof is to use the law of total probability to write the probabilities in the information spectrum divergences as   convex combination of constituent  probabilities involving non-mixed channels.  For the direct part, a change-of-output-measure by conditioning on the event $Y_0^n \in\calA_\gamma$ in \eqref{eqn:change_out} is required. For the converse part, the proof proceeds in a manner similar to the converse proof for the second-order asymptotics for DMCs, upon choosing the auxiliary output measure $Q^{(n)}$ appropriately.

\newcommand{\pow}{\snr}

\section{Quasi-Static  Fading Channels}  \label{sec:quasi}
The final channel with state we consider in this chapter is the {\em quasi-static single-input-multiple-output (SIMO)  channel} with $r$ receive antennas. The term {\em quasi-static} means that the channel statistics (fading coefficients) remain constant during the transmission of each codeword, similarly to  mixed channels.  Yang-Durisi-Koch-Polyanskiy~\cite{WangDurisi13} derived  asymptotic expansions for this channel model which is described precisely as follows: For time   $i =1,\ldots, n$, the channel law is given as
\begin{equation}
\begin{bmatrix}
Y_{i1} \\ \vdots \\ Y_{ir}
\end{bmatrix} = \begin{bmatrix}
H_1 \\ \vdots \\ H_{r}
\end{bmatrix}  X_i + \begin{bmatrix}
Z_{i1} \\ \vdots \\ Z_{ir}
\end{bmatrix} \label{eqn:kmimp}
\end{equation}
where    $H^r := (H_1,\ldots, H_r)'$ is  the  vector of (real-valued) \iid fading coefficients, which are random but remain constant for all channel uses, and $\{Z_{ij}\}$ are \iid    noises distributed as $\calN(0,1)$. In the theory of fading channels~\cite{Biglieri}, the channel inputs and outputs are usually complex-valued, but to illustrate the key ideas, it is sufficient  to consider real-valued channels  and fading coefficients. In this section, we restrict our attention to the real-valued SIMO model in \eqref{eqn:kmimp}. The channel input $X^n$ must satisfy
\begin{equation}
\| X^n \|_2^2 = \sum_{i=1}^n X_{ i}^2\le n\, \pow \label{eqn:rho_constraint}
\end{equation}
with probability one   for some permissible power $\pow  >0$. 

Two different setups are considered. First, both the encoder and decoder do not have information about the realization of $H^r$. Second, both the encoder and decoder have this information.   

For a given distribution on the fading coefficients $P_{H^r}$ (this plays the role of the state or side information), define $M^*_{\mathrm{no-SI}}(W^n,P_{H^r},\pow,\eps)$ and $M^*_{\mathrm{SI-ED}}(W^n,P_{H^r},\pow,\eps)$ to be the maximum number of codewords transmissible over $n$  independent uses of the channel under constraint~\eqref{eqn:rho_constraint}, with fading distribution $P_{H^r}$, and with average error probability not exceeding $\eps$ under the no side information and complete knowledge of side information settings respectively. It is known using the theory of general channels \cite[Thm.~6]{VH94}   that for every $\eps\in (0,1)$,  the following limits exist and are equal
\begin{align}
\lim_{n\to\infty}\frac{1}{n}\log& M^*_{\mathrm{no-SI}}(W^n,P_{H^r}, \pow,\eps) \nn\\* 
& =\lim_{n\to\infty}\frac{1}{n}\log M^*_{\mathrm{SI-ED}}(W^n,P_{H^r},\pow,\eps).
\end{align}
Their common value is the {\em $\eps$-capacity} \cite{Biglieri}, defined as 
\begin{equation}
C_\eps(W,P_{H^r}) :=\sup \Big\{ \xi \in \bbR: F(\xi; \pow,P_{H^r}) \le\eps  \Big\}, \label{eqn:Ceps}
\end{equation}
where the {\em outage function} is defined as 
\begin{equation}
F(\xi; \pow, P_{H^r}):=\Pr\Big( \rvC\big(  \pow \|H^r\|_2^2 \big)\le\xi\Big) \label{eqn:outage}
\end{equation}
Observe that for a fixed value of $H^r = \bh$ (i.e., the channel state is not random), the expression $ \rvC\big(  \pow \|\bh\|_2^2 \big)$ is simply the Shannon capacity of the channel. Beyond the first-order characterization, what  are the refined asymptotics of  $\log M^*_{\mathrm{no-SI}}(W^n,P_{H^r},\pow,\eps)$ and $\log M^*_{\mathrm{SI-ED}}(W^n,P_{H^r},\pow,\eps)$? The following surprising result was proved by Yang-Durisi-Koch-Polyanskiy~\cite{WangDurisi13}.
\begin{theorem} \label{thm:quasi}
Assume that the random variable  $G=\|H^r\|_2^2$ has a pdf that is twice continuously differentiable and that $C_\eps(W,P_{H^r}) $ in~\eqref{eqn:Ceps} is a point of growth of the outage function defined in \eqref{eqn:outage}, i.e., $F'(C_\eps(W,P_{H^r}); \pow, P_{H^r})>0$. Then 
\begin{align}
\log M^*_{\mathrm{no-SI}}(W^n,P_{H^r},\pow,\eps)  &= n C_\eps(W,P_{H^r}) + O(\log n),\quad \mbox{and}  \label{eqn:qs1}\\*
\log M^*_{\mathrm{SI-ED}}(W^n,P_{H^r},\pow,\eps)  &= n C_\eps(W,P_{H^r}) + O(\log n) .\label{eqn:qs2}
\end{align}
\end{theorem}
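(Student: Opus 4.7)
My plan is to handle the full side-information case (SI-ED) first and then reduce the no-SI case to it. The guiding intuition is that, conditioned on $H^r=\bh$, the channel in~\eqref{eqn:kmimp} is an AWGN channel with signal-to-noise ratio $\pow\|\bh\|_2^2$, and hence Theorem~\ref{thm:awgn_asy} supplies a tight asymptotic expansion for its maximum code size, with error probability that is either $o(1)$ or $1-o(1)$ depending on whether $\bh$ lies on the ``good'' or ``bad'' side of the outage threshold. Averaging over $H^r$ therefore produces an error probability that is essentially the outage function $F(\cdot;\pow,P_{H^r})$ of~\eqref{eqn:outage}, and the smoothness assumption $F'(C_\eps(W,P_{H^r}))>0$ lets us invert $F$ near $\eps$ to the required $O(\log n /n)$ precision in rate.

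\emph{Achievability for SI-ED.} Set $R_n:=C_\eps(W,P_{H^r})-\kappa \tfrac{\log n}{n}$ for a constant $\kappa>0$ to be chosen. For each realized $\bh$, declare the channel to be ``good'' if $\rvC(\pow\|\bh\|_2^2)\ge R_n+n^{-1/3}$ and ``bad'' otherwise; the encoder and decoder, knowing $\bh$, agree on which code to use. On good states apply the $(\lfloor \rme^{nR_n}\rfloor, \eps_n)_{\mathrm{avg}}$ AWGN code from Theorem~\ref{thm:awgn_asy} for SNR $\pow\|\bh\|_2^2$: the residual gap $n^{-1/3}$ makes the AWGN error probability $\eps_n = O(n^{-1/2})$ by the Berry--Esseen step in the AWGN achievability proof. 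On bad states simply declare a decoding error. The overall error probability is then at most $F(R_n+n^{-1/3};\pow,P_{H^r})+O(n^{-1/2})$, and by a Taylor expansion of $F$ around $C_\eps$ (enabled by the $C^2$-density assumption on $G=\|H^r\|_2^2$ and $F'(C_\eps)>0$) this equals $\eps + O(\log n /n)$ provided $\kappa$ is large enough. This gives the $O(\log n)$ lower bound in~\eqref{eqn:qs2}.

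\emph{Converse for SI-ED (and hence for no-SI).} Since side information can only help, the converse for SI-ED dominates. I would apply the meta-converse/symbol-wise bound in Proposition~\ref{prop:converse} to the enlarged channel whose output is $(Y^n,H^r)$ and whose input still obeys~\eqref{eqn:rho_constraint}, with auxiliary output distribution $Q^{(n)}(\by,\bh)=P_{H^r}(\bh)\prod_{i=1}^n\calN(y_i;0,1+\pow\|\bh\|_2^2)$, i.e.\ the product CAOD conditioned on $\bh$. Given $H^r=\bh$, the computation in the AWGN converse (Theorem~\ref{thm:awgn_asy}) yields $D_\rms^{\eps+\eta}(W^n(\cdot|\bx,\bh)\|Q^{(n)}(\cdot|\bh))\le n\rvC(\pow\|\bh\|_2^2)+O(\sqrt{n})$, uniformly in admissible $\bx$. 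Averaging over $\bh$ and using Chebyshev-type manipulations of the $\eps$-information spectrum divergence shows that any achievable rate $R$ must satisfy $F(R+O(\tfrac{\log n}{\sqrt n});\pow,P_{H^r})\le \eps+o(1)$, and inverting $F$ via the $F'(C_\eps)>0$ assumption converts this into $R\le C_\eps(W,P_{H^r})+O(\tfrac{\log n}{n})$. The same argument, without the side-information augmentation, furnishes the converse for the no-SI problem.

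\emph{Achievability for no-SI.} The hard part—and the most surprising—is the no-SI direction of~\eqref{eqn:qs1}: the receiver must decode without knowing $\bh$, yet no dispersion penalty appears. The plan is to draw codewords uniformly from the power sphere $\{\bx:\|\bx\|_2^2=n\pow\}$, exactly as in the AWGN achievability proof of Theorem~\ref{thm:awgn_asy}, and to evaluate the RCU bound (Proposition~\ref{prop:rcu}) with this codebook. Conditioning on $H^r=\bh$ reduces the inner analysis to the AWGN RCU computation, and the key geometric estimate~\eqref{eqn:estimate} on the conditional density of $\langle X^n,Y^n\rangle$ continues to apply because, given $\bh$, $Y^n$ is just an AWGN output with a rescaled signal. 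The decoder can be made universal by employing a generalized likelihood/empirical-SNR rule (or, equivalently, a threshold test against the mixture output $\int Q^{(n)}(\by|\bh)\,\rmd P_{H^r}(\bh)$); a short change-of-measure argument bounded by $\log|\scP_n|$-type factors—analogous to the mixed-channels step in~\eqref{eqn:mixed_lb}—absorbs the cost of not knowing $\bh$ into an additional $O(\log n)$ term. Combining the conditional AWGN RCU estimate with an average over $\bh$ again produces error probability $F(R_n;\pow,P_{H^r})+O(n^{-1/2})$ and hence the bound in~\eqref{eqn:qs1}. The main obstacle I anticipate is justifying rigorously that the universal decoder loses only $O(\log n)$ rate uniformly across the relevant fading states; this is where the smoothness hypotheses on the density of $G$ are most heavily used, since they are needed to control the sensitivity of both the outage function and the mixture output density near the critical threshold $C_\eps(W,P_{H^r})$.
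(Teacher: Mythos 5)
Your overall framing (condition on $H^r=\bh$, reduce to AWGN, then average over the fading state) matches the paper's and \cite{WangDurisi13}'s approach at a high level, but the central technical step of your achievability argument is incorrect, and it is precisely the step that makes the theorem surprising.

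\paragraph{The good/bad thresholding cannot give $O(\log n)$.} You declare a state $\bh$ ``good'' when $\rvC(\pow\|\bh\|_2^2)\ge R_n + n^{-1/3}$ with $R_n = C_\eps-\kappa\log n/n$, so the total error is at most $F(R_n+n^{-1/3};\pow,P_{H^r}) + o(1)$. But $n^{-1/3}\gg \log n/n$, so $R_n+n^{-1/3} = C_\eps + n^{-1/3}(1+o(1))$, and by the assumed $F'(C_\eps)>0$ this is $\eps + F'(C_\eps)\,n^{-1/3}(1+o(1)) > \eps$ for all large $n$. The scheme therefore violates the error constraint. Worse, this is not a tuning issue: any hard threshold at $R_n+\delta_n$ forces $\delta_n=\omega(1/\sqrt n)$ so that the conditional AWGN error on ``good'' states (which is $\approx\Phi(-\sqrt{n}\,\delta_n/\sqrt{V})$) actually vanishes, yet such $\delta_n$ then contributes $F'(C_\eps)\,\delta_n=\omega(1/\sqrt n)$ extra outage beyond $\eps$. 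To push the scheme back under $\eps$ you must retreat $R_n$ by $\Theta(\delta_n)=\omega(1/\sqrt n)$, which gives $\log M = nC_\eps - \omega(\sqrt n)$, far short of the claimed $nC_\eps + O(\log n)$.

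\paragraph{The missing idea.} The paper's proof does not partition into good and bad states. It keeps the full Gaussian conditional error probability $\Phi\bigl((\log M - n\rvC(\pow\|\bh\|_2^2))/\sqrt{n\rvV(\pow\|\bh\|_2^2)}\bigr)$ for every $\bh$, integrates it against the \emph{continuous} density of $G=\|H^r\|_2^2$, and then invokes the key lemma (stated right after Theorem~\ref{thm:quasi}): if $A$ has zero mean, unit variance, finite third moment and $B$ is independent of $A$ with a twice continuously differentiable pdf, then $\Pr(A\le\sqrt{n}B)=\Pr(B\ge 0)+O(1/n)$. Identifying $A$ with the AWGN dispersion fluctuation and $B$ with $\log M - n\rvC(\pow\|H^r\|_2^2)$, this lemma shows that the $\Theta(\sqrt{n})$ dispersion term, once averaged over the smooth state density, contributes only $O(1/n)$ to the error probability---which is why no $\sqrt{n}$ term survives in the rate. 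Your proposal never uses this; the smoothness hypothesis on the pdf of $G$ appears in your write-up only in a vague ``inverting $F$'' step, not where it actually does the work.

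\paragraph{The converse sketch has the symmetric problem.} You claim the converse gives $F(R+O(\log n/\sqrt n);\pow,P_{H^r})\le \eps + o(1)$ and that this inverts to $R\le C_\eps + O(\log n/n)$; that deduction does not follow. If one only has the Chebyshev-type bound $D_\rms \le nD_n + O(\sqrt{n})$ from Proposition~\ref{prop:ds_ch}, the resulting rate bound is off by $\Theta(1/\sqrt{n})$, not $\Theta(\log n/n)$. Again the correct bound is obtained by expressing the left-hand probability in $D_\rms$ as the outage average $\bbE_{H^r}\bigl[\Phi\bigl((R-n\rvC(\pow\|H^r\|_2^2))/\sqrt{n\rvV(\pow\|H^r\|_2^2)}\bigr)\bigr]$ and applying the same smoothing lemma to collapse the dispersion fluctuation into an $O(1/n)$ perturbation of $F(R/n)$. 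Without that lemma, neither your achievability nor your converse reaches $O(\log n)$ precision.
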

The condition on the channel gain $G$  is satisfied by many fading models of interest, including Rayleigh, Rician and Nakagami.

Theorem~\ref{thm:quasi} says  interestingly that, in the quasi-static setting, the $\Theta(\sqrt{n})$ dispersion terms that we usually see  in asymptotic expansions are absent. This means that the $\eps$-capacity is  good benchmark for the finite blocklength fundamental limits $\log M^*_{\mathrm{no-SI}}(W^n,P_{H^r},\pow,\eps)$ and $\log M^*_{\mathrm{SI-ED}}(W^n,P_{H^r},\pow,\eps)$ since the backoff from the $\eps$-capacity is of the order $\Theta(\frac{\log n}{n})$ and not the larger $\Theta(\frac{1}{\sqrt{n}})$.

We will not detail the proof of Theorem  \ref{thm:quasi} here,  as it is rather involved. See~\cite{WangDurisi13} for the details. However, we will provide a plausibility argument as to why the  $\Theta(\sqrt{n})$ term is absent in the expansions in \eqref{eqn:qs1}--\eqref{eqn:qs2}. Since the quasi-static fading channel is conditionally ergodic (meaning that given $H^r = \bh$, it is ergodic), one has that 
\begin{equation}
\eps^*(   W^n,\bh , \pow,M) \approx \Pr \Big( n  \rvC\big(  \pow \|\bh\|_2^2 \big) + \sqrt{n  \rvV\big(  \pow \|\bh\|_2^2 \big)}\, Z\le\log M \Big) \label{eqn:eh}
\end{equation}
where $\eps^*(   W^n,\bh, \pow,M)$ the smallest  error probability with $M$ codewords and channel gains $H^r=\bh$, and $Z$ is the standard normal   random variable.   Note that  $ \rvC\big(  \pow \|\bh\|_2^2 \big)$ and $ \rvV\big(  \pow \|\bh\|_2^2 \big)$ are respectively the capacity and dispersion of the   channels  conditioned on $H^r=\bh$. If $Z$ is independent of $H^r$, the above probability is close to one in the ``outage case'', i.e., when $n \rvC\big(  \pow \|\bh\|_2^2 \big)< \log M$. Hence, taking the expectation over $H^r$,  
\begin{equation}
\eps^*(   W^n,P_{H^r} , \pow,M) \approx \Pr\left( n  \rvC\big(  \pow \|H^r\|_2^2 \big)  \le \log M\right), \label{eqn:drop_dis}
\end{equation}
where $\eps^*(   W^n,P_{H^r} , \pow,M)$ the  smallest error probability with $M$ codewords and random channel gains. In fact, the above argument can be formalized using the following lemma whose proof can be found in~\cite{WangDurisi13}.
\begin{lemma}
Let $A$ be a random variable with zero mean, unit variance and finite third moment. Let $B$ be independent of $A$ with twice continuously differentiable pdf. Then,
\begin{equation}
\Pr\big(A \le\sqrt{n}B\big) = \Pr\big(B\ge 0\big) + O\bigg( \frac{1}{n}\bigg).
\end{equation}
\end{lemma}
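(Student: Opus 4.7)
The plan is to condition on $A$ and exploit the smoothness of the cdf $F_B$ of $B$ via a Taylor expansion of order two, with the error terms controlled by the third absolute moment of $A$. Since $B$ has a pdf, $F_B$ is continuous, so $\Pr(B=x)=0$ for every $x$; in particular $\Pr(B\ge 0)=\Pr(B>0)=1-F_B(0)$, and by conditioning on $A$ together with the independence of $A$ and $B$,
\begin{equation}
\Pr\big(A\le \sqrt{n} B\big) \;=\; \bbE\Big[\, 1 - F_B\!\left(\tfrac{A}{\sqrt{n}}\right) \Big].
\end{equation}
Thus it suffices to show that $\bbE[F_B(A/\sqrt{n})] = F_B(0) + O(1/n)$.

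Next, I would Taylor expand $F_B$ around $0$ to second order. Because $f_B=F_B'$ is twice continuously differentiable by hypothesis, $F_B$ is in $C^3$ with $F_B'''=f_B''$, so for every $a\in\bbR$,
\begin{equation}
F_B\!\left(\tfrac{a}{\sqrt{n}}\right) = F_B(0) + \tfrac{a}{\sqrt{n}}\, f_B(0) + \tfrac{a^2}{2n}\, f_B'(0) + R_n(a),
\end{equation}
where $|R_n(a)|\le \tfrac{|a|^3}{6\,n^{3/2}}\, \|f_B''\|_\infty$. Taking expectations and using $\bbE[A]=0$, $\bbE[A^2]=1$, and the finite third absolute moment assumption $\bbE[|A|^3]<\infty$, the linear term drops out, the quadratic term contributes $f_B'(0)/(2n)=O(1/n)$, and the remainder contributes at most $\tfrac{\|f_B''\|_\infty\,\bbE[|A|^3]}{6\, n^{3/2}}=O(n^{-3/2})$. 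Combining these estimates gives $\bbE[F_B(A/\sqrt{n})] = F_B(0) + O(1/n)$, and substituting into the identity of the first paragraph yields $\Pr(A\le \sqrt{n}B) = 1 - F_B(0) + O(1/n) = \Pr(B\ge 0) + O(1/n)$, as desired.

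The only technical subtlety is justifying the uniform bound on the Taylor remainder: this requires $\|f_B''\|_\infty < \infty$, which is the natural reading of ``twice continuously differentiable pdf'' once one also asks for boundedness (as is the case for the Rayleigh, Rician and Nakagami fading models relevant to Section~\ref{sec:quasi}). If only pointwise $C^2$ regularity is available, one would instead split the expectation over $\{|A|\le n^{1/6}\}$ and its complement: on the former event, $|A|/\sqrt{n}\le n^{-1/3}\to 0$, so by continuity of $f_B''$ one can replace $\|f_B''\|_\infty$ by a local bound near $0$; on the latter, Markov's inequality applied to $|A|^3$ gives a probability of order $n^{-1/2}$, and the trivial bound $|F_B(\cdot)|\le 1$ together with a careful moment estimate still yields $O(1/n)$ overall. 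The main obstacle is therefore purely in making this remainder estimate rigorous; the structural identity and the Taylor expansion itself are routine.
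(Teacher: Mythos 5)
The paper itself does not prove this lemma; it simply cites Yang--Durisi--Koch--Polyanskiy~\cite{WangDurisi13}, so there is no in-text argument to compare against. Your overall strategy---condition on $A$ to write $\Pr(A \le \sqrt{n}B) = 1 - \bbE[F_B(A/\sqrt{n})]$, then Taylor expand $F_B$ around $0$ and use the moment hypotheses on $A$---is exactly the natural route, and the structural identity and the expansion are both correct.

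However, the fallback argument you give in the last paragraph, meant to remove the assumption $\|f_B''\|_\infty<\infty$, contains a concrete quantitative gap: the truncation threshold $n^{1/6}$ is too small. On the tail event $\{|A|>n^{1/6}\}$ the only uniform bound available is $|F_B(A/\sqrt{n})-F_B(0)|\le 1$, so the tail contribution is bounded by
\begin{equation}
\Pr\big(|A|>n^{1/6}\big)\le \frac{\bbE[|A|^3]}{(n^{1/6})^3}=\frac{\bbE[|A|^3]}{\sqrt{n}},
\end{equation}
which is $O(n^{-1/2})$, not $O(n^{-1})$. No ``careful moment estimate'' fixes this: the tail event has probability genuinely of order $n^{-1/2}$ for heavy-tailed $A$ with $\bbE[|A|^3]<\infty$, and even if one instead uses the Lipschitz bound $|F_B(x)-F_B(0)|\le \|f_B\|_\infty|x|$ one only improves the tail to $O(n^{-5/6})$. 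The fix is to split at $T_n=n^{1/3}$. Then Markov's inequality with the third absolute moment gives $\Pr(|A|>T_n)\le \bbE[|A|^3]/T_n^3 = O(1/n)$ on the tail, while on the body $\{|A|\le n^{1/3}\}$ the argument $A/\sqrt{n}$ lies in $[-n^{-1/6},n^{-1/6}]\subset[-1,1]$, so continuity of $f_B'$ on the compact set $[-1,1]$ already supplies a local bound $M:=\sup_{|\xi|\le 1}|f_B'(\xi)|<\infty$. Expanding only to \emph{first} order with Lagrange remainder, $F_B(x)=F_B(0)+x f_B(0)+\tfrac{x^2}{2}f_B'(\xi_x)$, the body contribution is then controlled by
\begin{equation}
\bigg|\frac{f_B(0)}{\sqrt{n}}\,\bbE\big[A\,\bbI\{|A|\le n^{1/3}\}\big]\bigg| + \frac{M}{2n}\,\bbE[A^2] = O(n^{-7/6}) + O(n^{-1}),
\end{equation}
where the first term uses $\bbE[A]=0$ together with $|\bbE[A\,\bbI\{|A|>n^{1/3}\}]|\le \bbE[|A|^3]/n^{2/3}$. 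Adding the $O(1/n)$ tail gives the lemma. Note that this argument in fact only requires $f_B\in C^1$; the $C^2$ hypothesis in the statement is stronger than what is needed here.
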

The approximation in \eqref{eqn:drop_dis} is then justified by taking
\begin{equation}
A= \sqrt{   \rvV\big(  \pow \|H^r\|_2^2 \big)}\,Z,\quad\mbox{and}\quad B=\log M - n\rvC\big(  \pow \|H^r\|_2^2 \big) .
\end{equation}
Finally, we remark that this quasi-static SIMO model is different from that in Section~\ref{sec:state_ed} in two significant ways: First, the state here is a continuous random variable  and second, according to \eqref{eqn:kmimp}, the quasi-static scenario here implies that the state $H^r$ is constant throughout transmission and does not vary across time $i = 1,\ldots, n$. This explains the difference in second-order behavior vis-\`a-vis the result in Theorem~\ref{thm:state_ed}. The distinction between this model and that in   Section \ref{sec:mixed} on mixed channels with finitely many  states is that the fading coefficients contained in  $H^r$ are continuous random variables.

 
\chapter{Distributed Lossless Source Coding}\label{ch:sw}
It is not an exaggeration to say that  one of  the most surprising results  in network information theory is the theorem by Slepian and Wolf~\cite{sw73} concerning  distributed lossless source coding.   For the lossless source coding problem as discussed  extensively in Chapter~\ref{ch:src}, it can be easily seen that if we would like to   losslessly and reliably reconstruct $X^n$ from its compressed version and correlated side-information $Y^n$ that is available to both encoder and decoder, then the minimum rate of compression is $H(X|Y)$.  What happens if the side information is only available to the decoder but {\em not} the encoder? Surprisingly, the minimum rate of compression is still $H(X|Y)$! It hints at the encoder being able to perform some form of universal encoding regardless of the nature of whatever side-information is available to the decoder. 

A more general version of this problem is shown in Fig.~\ref{fig:sw}. Here, two correlated sources are to be losslessly reconstructed in a distributed fashion. That is, encoder $1$   sees $X_1$  and not $X_2$, and vice versa. Slepian and Wolf showed in \cite{sw73} that if $X_1^n$ and $X_2^n$ are generated from a {\em discrete memoryless multiple source} (DMMS) $P_{X_1^n X_2^n}$, then the set of achievable rate pairs $(R_1, R_2)$  belongs to the set
\begin{align}
R_1 \ge H(X_1|X_2) ,\quad R_2  \ge H(X_2|X_1), \quad  
R_1+R_2  \ge H(X_1, X_2) . \label{eqn:orr_sw}
\end{align}
In this chapter, we analyze refinements to Slepian and Wolf's seminal result. Essentially, we fix a point $(R_1^*, R_2^*)$ on the boundary of the region in \eqref{eqn:orr_sw}. We then find all possible {\em second-order coding rate pairs} $(L_1, L_2) \in\bbR^2$ such that there exists length-$n$ block codes of sizes $M_{jn}, j=1,2$ and error probabilities $\eps_n$ such that 
\begin{equation}
\log M_{jn}\le n R_j^* + \sqrt{n} L_j + o\big(\sqrt{n}\big),\quad\mbox{and}\quad \eps_n\le\eps+ o(1).
\end{equation}
The latter condition means that the sequence of codes is {\em $\eps$-reliable}. We will see that if $(R_1^*, R_2^*)$ is a corner point, the set of all such $(L_1, L_2)$ is characterized in terms of a {\em multivariate} Gaussian cdf.  This is the distinguishing feature compared to  results in the previous chapters.

The material in this chapter is based on the work by  Nomura and Han~\cite{Nom13} and Tan and Kosut~\cite{TK14}.
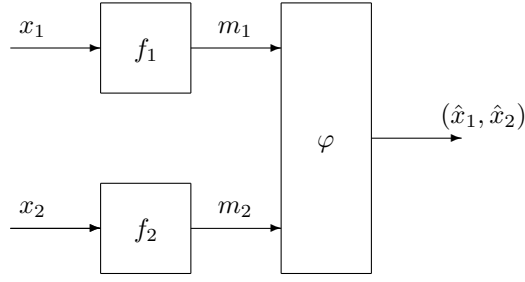
\begin{figure}[t]
\centering
\setlength{\unitlength}{.4mm}
\begin{picture}(150, 90)
\put(0, 15){\vector(1, 0){30}}
\put(60, 15){\vector(1,0){30}}
\put(30, 0){\line(1, 0){30}}
\put(30, 0){\line(0,1){30}}
\put(60, 0){\line(0,1){30}}
\put(30, 30){\line(1,0){30}}

\put(0, 75){\vector(1, 0){30}}
\put(60, 75){\vector(1, 0){30}}
\put(0, 15){\vector(1, 0){30}}
\put(120, 45){\vector(1,0){30}}
\put(30, 60){\line(1, 0){30}}
\put(30, 60){\line(0,1){30}}
\put(60, 60){\line(0,1){30}}
\put(30, 90){\line(1,0){30}}

\put(90, 0){\line(1, 0){30}}
\put(90, 0){\line(0,1){90}}
\put(120, 0){\line(0,1){90}}
\put(90, 90){\line(1,0){30}}

\put(0, 20){  $x_2$}
\put(0, 80){  $x_1$}
\put(66, 20){  $m_2$}
\put(66, 80){  $m_1$}
\put(140, 50){  $(\hatx_1,\hatx_2)$} 
\put(41, 12){$f_2$ } 
\put(41, 72){$f_1$ } 
\put(102, 42){$\varphi$} 
%
  \end{picture}
  \caption{Illustration of the Slepian-Wolf~\cite{sw73} problem.   }
  \label{fig:sw}
\end{figure}
\section{Definitions and  Non-Asymptotic Bounds}
In this section, we set up the distributed lossless source coding problem formally and mention some known non-asymptotic bounds.  Let $P_{X_1 X_2} \in \scP(\calX_1\times\calX_2)$ be a correlated source. See Fig.~\ref{fig:sw}. 

An {\em $(M_1, M_2, \eps)$-code} for the correlated source  $P_{X_1 X_2}\in \scP(\calX_1\times\calX_2)$ consists of a triplet of maps that includes two encoders $f_j : \calX_j\to\{1,\ldots, M_j\}$ for $j=1,2$ and a decoder $\varphi: \{1,\ldots, M_1\}\times \{1,\ldots, M_2\}\to\calX_1\times\calX_2$ such that the {\em error probability}
\begin{equation}
P_{X_1 X_2} \big( \{ (x_1, x_2) \in\calX_1\times\calX_2 : \varphi\big(f_1(x_1), f_2(x_2)\big) \ne (x_1, x_2)   \}\big)\le\eps.
\end{equation}

The  numbers $M_1$ and $M_2$  are called the  {\em sizes} of the code. 

We now state known achievability and converse bounds due to Miyake and Kanaya~\cite{miyake}.  See Theorems 7.2.1 and 7.2.2 in Han's book~\cite{Han10} for the proofs of these results. The achievability bound is based on Cover's random binning~\cite{cover75} idea. 

\begin{proposition}[Achievability Bound for Slepian-Wolf problem]\label{prop:ach_sw}
For every $\gamma>0$, there  exists an $(M_1, M_2, \eps)$-code   satisfying
\begin{align}
\eps\le\Pr\bigg(\log\frac{1}{P_{X_1|X_2}(X_1|X_2)} &\ge\log M_1 - \gamma \quad\mbox{or} \nn\\*
\log\frac{1}{P_{X_2|X_1}(X_2|X_1)}&\ge\log M_2 - \gamma \quad\mbox{or}  \nn\\*
\log\frac{1}{P_{X_1 X_2}(X_1,  X_2)}&\ge\log (M_1 M_2) -\gamma  \bigg)  + 3\exp(-\gamma). \label{eqn:ach_sw}
\end{align}
\end{proposition}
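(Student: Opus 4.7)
The plan is to prove Proposition \ref{prop:ach_sw} by a random binning argument in the spirit of Cover \cite{cover75}, tailored to produce threshold-style (non-asymptotic) terms rather than typicality-based ones. First, I would generate the two encoders independently at random: for each $j \in \{1,2\}$ and each $x_j \in \calX_j$, assign $f_j(x_j)$ uniformly on $\{1,\ldots,M_j\}$, independently across symbols and across the two encoders. The decoder, upon receiving $(m_1,m_2)$, declares $(\hatx_1,\hatx_2)$ to be the unique pair in $f_1^{-1}(m_1)\times f_2^{-1}(m_2)$ satisfying the three threshold conditions
\begin{align}
\log\frac{1}{P_{X_1|X_2}(\hatx_1|\hatx_2)} &< \log M_1-\gamma, \\
\log\frac{1}{P_{X_2|X_1}(\hatx_2|\hatx_1)} &< \log M_2-\gamma, \\
\log\frac{1}{P_{X_1X_2}(\hatx_1,\hatx_2)} &< \log(M_1M_2)-\gamma,
\end{align}
and declares an error if no such unique pair exists.

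Next I would decompose the error event (conditioned on the true source pair being $(X_1,X_2)$) into two sub-events: event $\calE_0$, that the true pair $(X_1,X_2)$ fails at least one of the three threshold inequalities; and event $\calE_1$, that some competing pair $(\tilx_1,\tilx_2)\ne (X_1,X_2)$ lies in the same bins as $(X_1,X_2)$ and also satisfies all three inequalities. The probability of $\calE_0$ is exactly the first term on the right-hand side of \eqref{eqn:ach_sw}, so it remains to bound $\Pr(\calE_1)$ by $3\exp(-\gamma)$ after averaging over the random binning. I would split $\calE_1$ into three further sub-cases according to which coordinates of $(\tilx_1,\tilx_2)$ differ from $(X_1,X_2)$, and apply the union bound.

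The heart of the argument will be a standard counting-plus-random-bin calculation for each sub-case. For the sub-case $\tilx_1\ne X_1$, $\tilx_2=X_2$, conditional on $(X_1,X_2)=(x_1,x_2)$, the competitor must satisfy $P_{X_1|X_2}(\tilx_1|x_2)>\exp(\gamma)/M_1$, so there can be at most $M_1\exp(-\gamma)$ such $\tilx_1$; since each is in $x_1$'s bin with probability $1/M_1$ under the random binning, the expected number of offending competitors is at most $\exp(-\gamma)$. Markov's inequality then gives that the contribution from this sub-case is at most $\exp(-\gamma)$. The sub-case $\tilx_2\ne X_2$, $\tilx_1=X_1$ is symmetric, and the sub-case where both coordinates differ uses the joint threshold $P_{X_1X_2}(\tilx_1,\tilx_2)>\exp(\gamma)/(M_1M_2)$ together with the probability $1/(M_1M_2)$ of joint bin collision to again give at most $\exp(-\gamma)$. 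Summing contributes the $3\exp(-\gamma)$ term. Finally, since the average error probability over the random binning satisfies the claimed bound, by a standard expurgation/selection argument there exists at least one deterministic pair of encoders $(f_1,f_2)$ whose error probability is no larger than the average, yielding the desired $(M_1,M_2,\eps)$-code.

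I do not anticipate a serious technical obstacle: the main subtlety is being careful with the decoder specification so that the union bound over the three sub-cases of $\calE_1$ is clean, and in particular ensuring that the three threshold events used by the decoder are precisely matched to the three summands in \eqref{eqn:ach_sw} so that the ``true pair fails'' event produces exactly the probability stated, without extra slack.
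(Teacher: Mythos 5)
Your proof is correct and is essentially the same random-binning-with-threshold-decoder argument that the paper points to (it cites Miyake--Kanaya and Han's Theorem 7.2.1 rather than reproducing the proof): bin each source uniformly at random, decode to the unique pair in the received bins passing the three thresholds, split the error into the true pair failing a threshold (giving the probability term) plus a competitor surviving (giving $3\exp(-\gamma)$ via the three sub-cases), and finally derandomize by selecting a code no worse than the ensemble average. The only cosmetic remark is that the union bound over the at most $M_1\exp(-\gamma)$ offending competitors gives the same $\exp(-\gamma)$ bound directly, so invoking Markov on the collision count is equivalent but slightly indirect.
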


The converse bound is based on standard  techniques in information spectrum \cite[Ch.~7]{Han10} analysis.
\begin{proposition}[Converse Bound for Slepian-Wolf problem] \label{prop:conv_sw}
For any $\gamma>0$, every $(M_1, M_2, \eps)$-code  must satisfy
\begin{align}
\eps\ge\Pr\bigg(\log\frac{1}{P_{X_1|X_2}(X_1|X_2)} &\ge\log M_1 +\gamma \quad\mbox{or} \nn\\*
\log\frac{1}{P_{X_2|X_1}(X_2|X_1)}&\ge\log M_2+\gamma \quad\mbox{or}  \nn\\*
\log\frac{1}{P_{X_1 X_2}(X_1,  X_2)}&\ge\log (M_1 M_2)+ \gamma  \bigg)- 3\exp(-\gamma). \label{eqn:conv_sw}
\end{align}
\end{proposition}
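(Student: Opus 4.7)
The plan is to mirror the one-dimensional converse argument from Proposition~\ref{thm:src_conv} but applied componentwise to a union of three ``atypical'' sets, one for each of the three inequalities appearing inside the probability. Define the correct-decoding set
\begin{equation}
\calS := \big\{(x_1,x_2)\in\calX_1\times\calX_2 : \varphi(f_1(x_1),f_2(x_2)) = (x_1,x_2)\big\},
\end{equation}
so that by hypothesis $P_{X_1 X_2}(\calS) \ge 1-\eps$. Also introduce the three bad sets
\begin{align}
\calT_1 & := \Big\{(x_1,x_2) : P_{X_1|X_2}(x_1|x_2)\le \exp(-\gamma)/M_1\Big\}, \\
\calT_2 & := \Big\{(x_1,x_2) : P_{X_2|X_1}(x_2|x_1)\le \exp(-\gamma)/M_2\Big\}, \\
\calT_{12} & := \Big\{(x_1,x_2) : P_{X_1X_2}(x_1,x_2)\le \exp(-\gamma)/(M_1M_2)\Big\},
\end{align}
so that the probability on the right-hand side of \eqref{eqn:conv_sw} equals $P_{X_1X_2}(\calT_1\cup\calT_2\cup\calT_{12})$. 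The target inequality will follow once I establish that $P_{X_1X_2}\big((\calT_1\cup\calT_2\cup\calT_{12})\cap\calS\big)\le 3\exp(-\gamma)$; combining with $P_{X_1X_2}(\calS^c)\le\eps$ and a union bound then yields \eqref{eqn:conv_sw}.

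The main combinatorial observation that makes this go through is that correct decoding severely restricts the ``slices'' of $\calS$. First, for any fixed $x_2$, the set $\calS(x_2):=\{x_1:(x_1,x_2)\in\calS\}$ has cardinality at most $M_1$, since $x_1\in\calS(x_2)$ is uniquely determined by $f_1(x_1)$ once $f_2(x_2)$ is fixed via the decoder $\varphi$. Symmetrically $|\calS(x_1)|\le M_2$ for each $x_1$, and globally $|\calS|\le M_1 M_2$. These are the exact analogues of the $|\calS|\le M$ bound used in Proposition~\ref{thm:src_conv}.

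With these cardinality bounds in hand, the remaining work is three short computations of the same shape as \eqref{eqn:PTS}. For example,
\begin{align}
P_{X_1X_2}(\calT_1\cap\calS) & = \sum_{x_2}P_{X_2}(x_2)\sum_{x_1\in\calT_1\cap\calS(x_2)}P_{X_1|X_2}(x_1|x_2) \\*
& \le \sum_{x_2}P_{X_2}(x_2)\,|\calS(x_2)|\cdot\frac{\exp(-\gamma)}{M_1}\le \exp(-\gamma),
\end{align}
and the estimates for $\calT_2\cap\calS$ and $\calT_{12}\cap\calS$ are entirely analogous, using $|\calS(x_1)|\le M_2$ and $|\calS|\le M_1M_2$ respectively. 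A union bound over the three sets supplies the factor of $3$.

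I do not anticipate a serious obstacle: the proof is essentially three parallel copies of the one-shot converse in Proposition~\ref{thm:src_conv}, glued together by a union bound. The only subtlety worth checking carefully is the slice-cardinality claim $|\calS(x_2)|\le M_1$, which relies on the fact that both $f_2$ and $\varphi$ are deterministic so that, given $x_2$, the map $x_1\mapsto\varphi(f_1(x_1),f_2(x_2))$ recovers $x_1$ uniquely on $\calS(x_2)$; this is exactly where the Slepian-Wolf decoder's structure enters the converse.
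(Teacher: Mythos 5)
Your proof is correct and matches the standard information-spectrum converse for Slepian--Wolf (the one in Han's Theorems 7.2.1--7.2.2, which the paper cites): you decompose via the correct-decoding set $\calS$, use the slice-cardinality bounds $|\calS(x_2)|\le M_1$, $|\calS(x_1)|\le M_2$, $|\calS|\le M_1M_2$ that come from deterministic decoding, and union-bound over the three atypical sets. This is exactly the componentwise generalization of Proposition~\ref{thm:src_conv} that the paper intends, so there is nothing to add.
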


Notice that the {\em entropy density vector}
\begin{equation}
\bh_{X_1 X_2}(x_1, x_2)  :=  \begin{bmatrix}
\log\frac{1}{P_{X_1|X_2}(x_1|x_2)} & \log\frac{1}{P_{X_2|X_1}(x_2|x_1)}& \log\frac{1}{P_{X_1 X_2}(x_1, x_2)} 
\end{bmatrix}' \label{eqn:ent_dens}
\end{equation}
plays a  prominent role in both the direct and converse bounds.

\section{Second-Order Asymptotics}

We would like to  make concrete statements about performance of optimal codes with asymptotic error probabilities not exceeding $\eps$ and blocklength  $n$ tending to infinity. For this purpose, we assume that the source $P_{X_1 X_2}$ is a DMMS, i.e., 
\begin{equation}
P_{X_1^n X_2^n}(\bx_1, \bx_2) =  \prod_{i=1}^n P_{X_1 X_2}(x_{1i}, x_{2i} ),\quad\forall\, (\bx_1,\bx_2)\in\calX_1^n \times \calX_2^n.
\end{equation}
As such, the alphabets $\calX_j , j = 1,2$ in the definition of an $(M_1, M_2, \eps)$-code are replaced by their $n$-fold Cartesian products.

\subsection{Definition of the Second-Order Rate Region and Remarks}
Unlike the point-to-point problems where the first-order fundamental limit is a single number (e.g., capacity for channel coding, rate-distortion function for lossy compression), for multi-terminal problems like the Slepian-Wolf problem there is a {\em continuum} of first-order fundamental limits. Hence, to define second-order quantities, we must ``center'' the analysis at a   point  $(R_1^*, R_2^*)$ on the boundary of the optimal rate region (in source coding scenarios) or capacity region (in channel coding settings). Subsequently, we can ask what is the {\em local} second-order behavior of the system in the vicinity of $(R_1^*, R_2^*)$. This is the essence of second-order asymptotics for multi-terminal problems.  Note that for multi-terminal problems, we exclusively study second-order asymptotics, and we do not go beyond this to study third-order asymptotics.

Fix a rate pair $(R_1^*, R_2^*)$ on the boundary of the optimal rate region given by  \eqref{eqn:orr_sw}.  Let $(L_1, L_2) \in\bbR^2$ be called an  {\em achievable $(\eps, R_1^*, R_2^*)$-second-order coding rate pair} if there exists a sequence of $(M_{1n}, M_{2n} ,\eps_n)$-codes for the correlated source $P_{X_1^n X_2^n}$ such that the sequence of error probabilities does not exceed $\eps$ asymptotically, i.e.,
\begin{equation}
\limsup_{n\to\infty}\eps_n\le\eps \label{eqn:sw_err}
\end{equation}
and furthermore, the size of the codes satisfy
\begin{equation}
\limsup_{n\to\infty}\frac{1}{\sqrt{n}} \big( \log M_{jn}-n R_j^* \big) \le L_j ,\quad j = 1,2. \label{eqn:sw_second}
\end{equation}
The set of all achievable $(\eps, R_1^*, R_2^*)$-second-order coding rate pairs is denoted as $\calL(\eps; R_1^*, R_2^*) \subset\bbR^2$, the {\em second-order coding rate region}. Note that even though we term the elements of $\calL(\eps; R_1^*, R_2^*)$ as ``rates'', they could be negative. This convention follows that in Hayashi's works~\cite{Hayashi08, Hayashi09}. The number $L_j$ has units is bits per square-root source symbols. 

Let us pause for a moment to understand the above definition as it is a recurring theme in subsequent chapters on second-order asymptotics in network information theory. Slepian-Wolf~\cite{sw73} showed that there exists  a sequence of codes for the (stationary, memoryless) correlated source $(X_1, X_2)$ whose error probabilities vanish asymptotically (i.e., $\eps_n=o(1)$) and whose sizes $M_{jn}$ satisfy 
\begin{equation}
\limsup_{n\to\infty}\frac{1}{n} \log M_{jn}\le R_j,\quad j = 1,2
\end{equation}
where the rates $R_1$ and $R_2$ satisfy the bounds in \eqref{eqn:orr_sw}. Hence, the definition of a second-order coding rate pair  in \eqref{eqn:sw_second} is a refinement of the scaling of the code sizes in Slepian-Wolf's setting,  centering the rate analysis at $(R_1^*, R_2^*)$, and analyzing deviations of order $\Theta(\frac{1}{\sqrt{n}})$ from this first-order fundamental limit. In doing so, we allow the error probability to be non-vanishing per \eqref{eqn:sw_err}. This requirement   is subtly  different from that in the chapters on source and channel coding where we are interested in approximating non-asymptotic fundamental limits like $\log M^*(P,\eps)$ or $\log M^*_{\av}(W,\eps)$ and  therein, the error probabilities  are constrained to be no larger than a non-vanishing $\eps \in (0,1)$ {\em for all blocklengths}. Here we allow some slack (i.e., $\eps_n\le\eps+o(1)$). This turns out to be  immaterial from the perspective of second-order asymptotics, as we are  seeking to characterize a region of second-order rates $\calL(\eps;R_1^*, R_2^*)$ and we are not attempting to characterize higher-order (i.e., third-order) terms in an asymptotic expansion. The $o(1)$ slack affects the  third-order asymptotics but since we are not interested in this study for network information theory problems, we find it convenient to define  $\calL(\eps;R_1^*, R_2^*)$ using~\eqref{eqn:sw_err}--\eqref{eqn:sw_second}, analogous to information spectrum analysis \cite{Han10}.

\begin{figure}[t]
\centering
\begin{picture}(115, 135)
\setlength{\unitlength}{.47mm}
\input{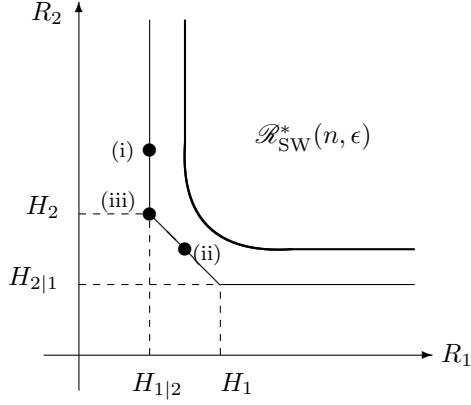}
 
\put(30,50){\circle*{4}}

\put(16,52){\footnotesize (iii)}
\put(19,66){\footnotesize (i)}

\put(30,68){\circle*{4}}

\put(40,40){\circle*{4}}

\put(42,37){\footnotesize (ii)}
\end{picture}   
\caption{Illustration of the different cases in Theorem~\ref{thm:disp_sw} where  $H_1=H(X_1)$ and $H_{2|1}=H(X_2|X_1)$ etc. The curve   is a schematic    of the boundary of the set of  rate pairs $(R_1, R_2)$ achievable at   blocklength $n$ with error probability no more than $\eps<\frac{1}{2}$. The set is  denoted by $\scR^*_{\mathrm{SW}}(n,\epsilon)$.}
\label{fig:slices}
\end{figure}
Before we state the main result of this chapter, let us consider the following bivariate generalization of the cdf of a Gaussian:
\begin{equation}
\Psi( t_1, t_2 ; \bmu,\bSigma) := \int_{-\infty}^{t_1 }\int_{-\infty}^{t_2 } \calN(\bx; \bmu;\bSigma)\, \rmd \bx,  \label{eqn:biv}
\end{equation}
where $\calN(\bx; \bmu;\bSigma)$ is the pdf of a bivariate Gaussian,   defined in \eqref{eqn:pdf_multi_gauss}. Also define the {\em source dispersion matrix}
\begin{align}
\bV &  = \bV(P_{X_1 X_2}):=\cov\big[\bh(X_1, X_2) \big] \\*
& = \begin{bmatrix}
V_{1|2} & \rho_{1,2} \sqrt{V_{1|2} V_{2|1}} & \rho_{1,12}  \sqrt{V_{1|2}  V_{1,2} }  \\
 \rho_{1,2} \sqrt{V_{1|2} V_{2|1}} & V_{2|1} & \rho_{2,12} \sqrt{V_{2|1} V_{1,2} }  \\
 \rho_{1,12}  \sqrt{V_{1|2}  V_{1,2} } & \rho_{2,12} \sqrt{V_{2|1}  V_{1,2} }  &   V_{1,2} 
\end{bmatrix} .
\end{align}
We also denote the diagonal entries as  $ V(X_1|X_2)=V_{1|2} , V(X_2|X_1)=V_{2|1}$ and $V(X_1, X_2)=V_{1,2}$. 
Define $\bV_{1,12} $ (resp.\ $\bV_{2,12}$) as the $2\times 2$ submatrix indexed by the  $1^{\mathrm{st}}$ (resp.\ $2^{\mathrm{nd}}$) and $3^{\mathrm{rd}}$ entries of $\bV$, i.e.,
\begin{align}
\bV_{1,12}&:= \begin{bmatrix}
V_{1|2}   & \rho_{1,12}  \sqrt{V_{1|2}  V_{1,2} }  \\
 \rho_{1,12}  \sqrt{V_{1|2}  V_{1,2} }   &  V_{1,2} 
\end{bmatrix} \label{eqn:V112} , 
\end{align}
and $\bV_{2,12}$ is defined similarly. 

\subsection{Main Result: Second-Order Coding Rate Region}

The set $\calL( \eps; R_1^*, R_2^*)$ is characterized in the following result. This result was proved by Nomura and Han~\cite{Nom13}. A slightly different form of this result was proved earlier by Tan and Kosut~\cite{TK14}.
\begin{theorem} \label{thm:disp_sw}
Assume   $\bV$ is positive definite.  Depending on $(R_1^*, R_2^*)$ (see Fig.~\ref{fig:slices}), there are $5$   cases of which we state~$3$ explicitly: \\
Case (i):    $R_1^*=H(X_1|X_2)$ and $R_2^*  >  H(X_2)$ (vertical boundary) 
\begin{equation}
\calL(\eps; R_1^*,R_2^* ) = \Big\{ (L_1, L_2) : L_1\ge \sqrt{V(X_1|X_2)}\Phi^{-1}(1- \eps)\Big\}.  \label{eqn:vert_bd}
\end{equation}
Case (ii):    $R_1^*+R_2^*=H(X_1,X_2)$  and $H(X_1|X_2)<R_1^* <H(X_1)$  (diagonal face) 
\begin{equation}
\calL(\eps; R_1^*,R_2^* ) = \Big\{ (L_1, L_2) : L_1 + L_2\ge \sqrt{V(X_1, X_2)}\Phi^{-1}(1- \eps)\Big\}.  \label{eqn:diag_bd}
\end{equation}
Case (iii):   $R_1^* = H(X_1|X_2)$ and $R_2^*=H(X_2)$ (top-left corner point) 
\begin{equation}
\calL(\eps; R_1^*,R_2^* ) = \Big\{ (L_1, L_2) :   \Psi(L_1, L_1+L_2; \bzero, \bV_{1,12}) \ge 1-\eps \Big\}.  \label{eqn:corner_bd}
\end{equation}
\end{theorem}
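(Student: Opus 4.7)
The plan is to apply the non-asymptotic bounds (Propositions~\ref{prop:ach_sw} and~\ref{prop:conv_sw}) to the DMMS $P_{X_1^n X_2^n}$ and then invoke the vector Berry-Esseen theorem (Theorem~\ref{theorem:multidimensional-berry-esseen} or Corollary~\ref{corollary:multidimensional-berry-esseen}) applied to the (centered) entropy density vector $\bh_{X_1 X_2}$ defined in~\eqref{eqn:ent_dens}. The key observation is that $\bbE[\bh_{X_1 X_2}(X_1,X_2)] = (H(X_1|X_2),H(X_2|X_1),H(X_1,X_2))'$ and $\cov(\bh_{X_1 X_2}(X_1,X_2)) = \bV$, which is positive definite by hypothesis.

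First I would specialize both non-asymptotic bounds with $\gamma = \frac{1}{2}\log n$ so that the residual $3\exp(-\gamma) = 3/\sqrt{n}$ is absorbed into the second-order remainder. Taking $\log M_{jn} = n R_j^* + \sqrt{n} L_j$ and centering/normalizing by $\sqrt{n}$, the three events in~\eqref{eqn:ach_sw} and~\eqref{eqn:conv_sw} become events on the random vector $S_n^3 := \tfrac{1}{\sqrt{n}}\sum_{i=1}^n (\bh_{X_1 X_2}(X_{1i},X_{2i}) - \bbE[\bh_{X_1 X_2}])$, whose law converges uniformly over convex sets to $\calN(\bzero,\bV)$ at rate $O(1/\sqrt{n})$ by Corollary~\ref{corollary:multidimensional-berry-esseen}. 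The corresponding thresholds scale like $\sqrt{n}(R_j^*-H(\cdot)) + L_j$ (up to an $O(\log n / \sqrt{n})$ term from $\gamma$), and whether each threshold is $\Theta(\sqrt{n})$ or $\Theta(1)$ determines whether the corresponding constraint is ``active'' or inactive asymptotically.

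Case~(iii) (corner point) is the representative and most interesting case: since $R_1^*=H(X_1|X_2)$ and $R_2^*=H(X_2)$, the first and third thresholds become $L_1+o(1)$ and $L_1+L_2+o(1)$ respectively, while the second threshold is $\sqrt{n}(R_2^* - H(X_2|X_1)) + L_2 = \sqrt{n}\, I(X_1;X_2) + O(1) \to +\infty$. Hence the second event has vanishing probability (e.g.\ by Chebyshev applied to the centered component, using the fact that the entries of $\bV$ are finite), and the union of the three events in~\eqref{eqn:ach_sw}/\eqref{eqn:conv_sw} has the same asymptotic probability as the union of the first and third. The $2$-dimensional projection of $S_n^3$ onto the $(1,3)$ coordinates has limiting covariance $\bV_{1,12}$ from~\eqref{eqn:V112}, and the vector Berry-Esseen theorem applied to the convex complement of the quadrant $\{s_1 < L_1,\ s_3 < L_1+L_2\}$ yields that the probability of the union converges to $1-\Psi(L_1,L_1+L_2;\bzero,\bV_{1,12})$. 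Matching this with $\eps$ via Propositions~\ref{prop:ach_sw} and~\ref{prop:conv_sw} gives~\eqref{eqn:corner_bd} in both directions. Cases~(i) and~(ii) are strictly easier: on the vertical boundary only the first constraint is active (the other two thresholds diverge to $+\infty$), so the probability reduces to a univariate Gaussian tail with variance $V(X_1|X_2)$, yielding~\eqref{eqn:vert_bd}; on the diagonal face only the sum-rate constraint is active, giving~\eqref{eqn:diag_bd} with variance $V(X_1,X_2)$. The two remaining (symmetric) cases---horizontal boundary and bottom-right corner---follow by the same argument with the roles of the coordinates permuted.

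The main obstacle I anticipate is rigorously verifying that the ``inactive'' constraints contribute only $o(1)$ to the relevant union probabilities uniformly over the relevant $L_j$. One must be careful that the chosen $\log M_{jn}$ really do put the corresponding thresholds at $\Omega(\sqrt{n})$ (this is immediate from $R_2^* > H(X_2|X_1)$ in case (iii) since the gap is $I(X_1;X_2)>0$ whenever $\bV$ is positive definite). A secondary technical point is the need to work with the \emph{convex} complement of a rectangle (rather than a halfspace) when invoking the vector Berry-Esseen theorem --- fortunately, this set is convex, so Corollary~\ref{corollary:multidimensional-berry-esseen} applies directly with rate $O(1/\sqrt{n})$. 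Finally, one needs the standard continuity/Taylor argument on $\Psi$ (which is continuous since $\bV_{1,12} \succ 0$) to translate the ``$\le \eps + o(1)$'' condition on error probabilities in~\eqref{eqn:sw_err} into the closed region described by~\eqref{eqn:corner_bd}.
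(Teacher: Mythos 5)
Your proposal follows essentially the same route as the paper: specialize Propositions~\ref{prop:ach_sw} and~\ref{prop:conv_sw}, express the complementary probability as the joint probability of the normalized entropy-density vector lying in a convex orthant, discard the inactive constraint(s) by a law-of-large-numbers or Chebyshev argument, and apply the multivariate Berry-Esseen bound (Corollary~\ref{corollary:multidimensional-berry-esseen}); your choice $\gamma=\tfrac12\log n$ and the offsets to $\log M_{jn}$ are cosmetic variants of the paper's $\gamma=n^{1/4}$ and $+2n^{1/4}$ offsets, all immaterial to the second-order statement. One minor misstatement to fix: the complement of the quadrant $\{s_1<L_1,\ s_3<L_1+L_2\}$ is \emph{not} convex (it is a union of two half-spaces); Corollary~\ref{corollary:multidimensional-berry-esseen} is applied to the quadrant itself, which is convex, and the union probability is then obtained as one minus $\Psi(L_1,L_1+L_2;\bzero,\bV_{1,12})$---exactly as the paper does when writing $1-\mathfrak{p}$ in~\eqref{eqn:one-minus-p}.
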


\begin{figure}
\centering
\includegraphics[width = .75\columnwidth]{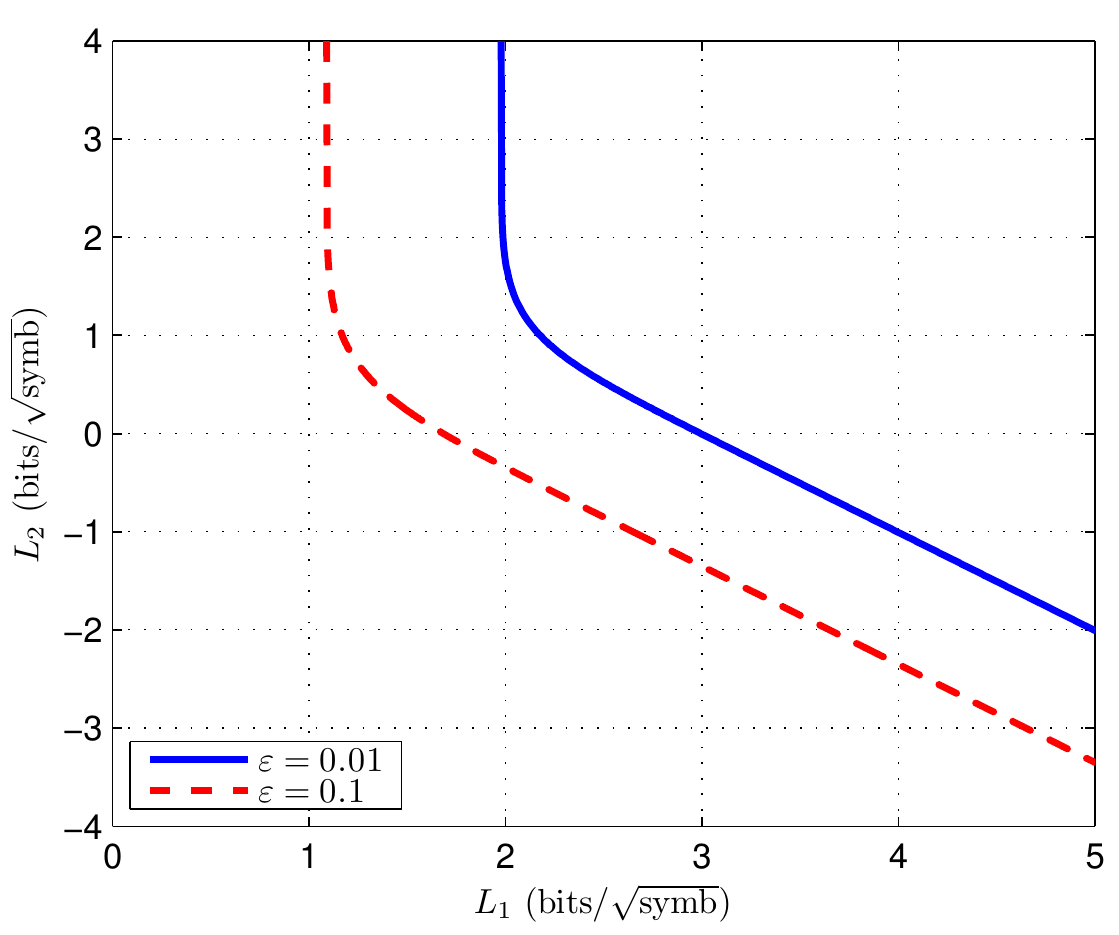}
\caption{Illustration of   $\calL(\eps; R_1^*,R_2^* )$  in \eqref{eqn:corner_bd}  for the source  $P_{X_1 X_2}$ in \eqref{eqn:sw_src} with $\eps=0.01,0.1$. The regions are to the top right of the boundaries indicated. }
\label{fig:sw_reg}
\end{figure}

The region $\calL(\eps; R_1^*,R_2^* )$ for Case (iii) is illustrated in Fig.~\ref{fig:sw_reg} for a binary source $(X_1, X_2)$ with distribution 
\begin{equation} \label{eqn:sw_src}
P_{X_1 X_2}(x_1, x_2) =\begin{bmatrix}
0.7 & 0.1 \\ 0.1 & 0.1
\end{bmatrix}.
\end{equation}
Note that $\calL( \eps; R_1^*, R_2^*)$ for other points on the boundary can be found by symmetry. For example for the horizontal boundary, simply interchange the indices $1$ and $2$ in~\eqref{eqn:vert_bd}. The case in which $\bV$ is not positive definite was dealt with in detail in~\cite{TK14}. 

\subsection{Proof of Main Result and Remarks}
\begin{proof}
The proof of the direct part specializes the non-asymptotic bound in Proposition~\ref{prop:ach_sw} with the choice $\gamma = n^{1/4}$. Choose code sizes $M_{1n}$ and $M_{2n}$ to be the smallest integers satisfying 
\begin{align}
\log M_{jn} &\ge n R_j^*+\sqrt{n} L_j + 2n^{1/4}  ,\quad j = 1,2 , \label{eqn:sizeM1}
\end{align}
for some $(L_1, L_2)\in\bbR^2$.  Substitute these choices into  the probability in~\eqref{eqn:ach_sw}, denoted as $\mathfrak{p}$. The complementary  probability $1-\mathfrak{p}$ is
\begin{align}
1\!-\!\mathfrak{p}\!=\!\Pr\left( \bh_{X_1^n X_2^n}(X_1^n, X_2^n) \!<  \!\begin{bmatrix}
n R_1^*+\sqrt{n} L_1 +  n^{1/4} \\
n R_2^*+\sqrt{n} L_2 +  n^{1/4} \\
n (R_1^*\!+\!R_2^*)\!+\!\sqrt{n}(L_1\!+\! L_2) \!+\! 3n^{1/4} 
\end{bmatrix} \right). \label{eqn:one-minus-p}
\end{align}
Recall that $\bh_{X_1^n X_2^n}(\bx_1,\bx_2)$ is the entropy density in \eqref{eqn:ent_dens} and   that  inequalities (like $<$)  are applied element-wise.   The  three events in the probability above  are 
\begin{align}
\calA_1 &  \!:=\! \bigg\{ \frac{1}{n}\log\frac{1}{P_{X_1^n|X_2^n}(X_1^n|X_2^n) } \!<\! R_1^* \!+ \!\frac{L_1}{\sqrt{n}} + n^{-3/4}\bigg\},\\
\calA_2 & \!:=\! \bigg\{ \frac{1}{n}\log\frac{1}{P_{X_2^n|X_1^n}(X_2^n|X_1^n) } \!< \! R_2^* \!+\! \frac{L_2}{\sqrt{n}} + n^{-3/4}\bigg\},\quad\mbox{and}\\
\calA_{12} & \!:=\! \bigg\{ \frac{1}{n}\log\frac{1}{P_{X_1^n X_2^n}(X_1^n , X_2^n) } \!<\! R_1^*  \!+\! R_2^*\! +\! \frac{L_1 \!+\! L_2}{\sqrt{n}} +3 n^{-3/4}\bigg\}.
\end{align}
As such, the probability in \eqref{eqn:one-minus-p} is $\Pr(\calA_1\cap\calA_2\cap\calA_{12})$. 

Let us consider Case (i) in Theorem~\ref{thm:disp_sw}. In this case,  $R_2^*>H(X_2)$ and $R_1^*+R_2^*> H(X_1 , X_2)$.  By the weak law of large numbers, $\Pr(\calA_2)\to 1$ and $\Pr(\calA_{12})\to 1$ as $n$ grows. In fact,   these probabilities converge to one  exponentially fast.  Thus, 
\begin{equation}
1-\mathfrak{p}\ge \Pr(\calA_1) +  \exp(-n\xi) \label{eqn:apply_ld}
\end{equation}
for some $\xi>0$. Furthermore, because $R_1^*=H(X_1|X_2)$, $\Pr(\calA_1)$ can be estimated using the Berry-Esseen theorem as 
\begin{equation}
\Pr(\calA_1)\ge \Phi\bigg( \frac{L_1}{\sqrt{V( X_1|X_2)} } \bigg) + O(n^{-1/4}).
\end{equation}
Hence, one has
\begin{equation}
\mathfrak{p}\le 1-\Phi\bigg( \frac{L_1}{\sqrt{V( X_1|X_2)} }  \bigg) + O(n^{-1/4}).
\end{equation}
Coupled with the fact that $\exp(-\gamma)=\exp(-n^{1/4})$, the proof of the direct part of~\eqref{eqn:vert_bd} is complete. The converse employs essentially the same technique. Case (ii) is also similar with the exception that now $\Pr(\calA_1)\to 1$ and $\Pr(\calA_2)\to 1$, while $\Pr(\calA_{12})$ is estimated using the Berry-Esseen theorem. 

We are left with Case (iii). In this case, only $\Pr(\calA_2)\to 1$. Thus,   just as in \eqref{eqn:apply_ld},  \eqref{eqn:one-minus-p} can be estimated as 
\begin{equation}
1-\mathfrak{p}\ge \Pr(\calA_1\cap\calA_{12}) + \exp(-n\xi') \label{eqn:drop1}
\end{equation}
for some $\xi'>0$. The probability can now be estimated using the multivariate Berry-Esseen theorem (Corollary~\ref{corollary:multidimensional-berry-esseen}) as 
\begin{equation}
\Pr(\calA_1\cap\calA_{12})\ge\Psi \big( L_1, L_1+ L_2 ; \bzero, \bV_{1, 12} \big) + O(n^{-1/4}).\label{eqn:drop2}
\end{equation}
We complete the proof of \eqref{eqn:corner_bd} similarly to Case (i). The converse is completely analogous.
\end{proof}

A couple of take-home messages  are in order:
  
 First, consider Case (i). In this case, we are operating ``far away'' from the constraint concerning the second rate  and the sum rate constraint. This corresponds to the events $\calA_2^c$ and $\calA_{12}^c$. Thus, by the theory of large deviations, $\Pr(\calA_2^c)$ and $\Pr(\calA_{12}^c)$ both tend to zero exponentially fast. Essentially for these two error events, we are in the error exponents regime.\footnote{Of course, the error exponents for the Slepian-Wolf problem are known~\cite{Csi82,Gal76} but any exponential bound suffices for our purposes here.} The same holds true for Case~(ii).
 
 Second, consider Case (iii). This is the most interesting case for the second-order asymptotics for the Slepian-Wolf problem.  We are operating at a corner point and are far away from the second rate constraint, i.e., in the error exponents regime for $\calA_2^c$. The remaining two events $\calA_1$ and $\calA_{12}$ are, however, still in the central limit regime and hence their {\em joint probability} must be estimated using the {\em multivariate} Berry-Esseen theorem. Instead of the result being expressible  in terms of a univariate Gaussian cdf $\Phi$ (which is the case for single-terminal problems in Part II of this monograph), the multivariate version of the Gaussian cdf $\Psi$, parameterized by the (in general, full)  {\em covariance   matrix} $\bV_{1,12}$ in \eqref{eqn:V112}, must be employed. Compared to the {\em cooperative} case where $X_2^n$ (resp.\ $X_1^n$) is available to encoder $1$ (resp.\ encoder $2$), we see from the result in Case (iii)   that Slepian-Wolf coding, in general, incurs a rate-loss over the case where side-information is available to all terminals. Indeed, when side-information is available at all terminals, the matrix $\bV_{1,12}$ that characterizes $\calL(\eps;R_1^*, R_2^*)$ in Case (iii) would be diagonal~\cite{TK14}, since the source coding problems involving $X_1$ and $X_2$ are now independent of each other. In other words, in this case, there exists a sequence of   codes  with error probabilities $\eps_n$ satisfying  \eqref{eqn:sw_err} and sizes $(M_{1n}, M_{2n})$  satisfying 
 \begin{align}
 \log M_{1n}  &\le nH(X_1|X_2) -  \sqrt{n V(X_1 |X_2) } \Phi^{-1}(\eps) + o(\sqrt{n}),  \label{eqn:src_X1}\\
  \log M_{2n}  &\le nH(X_2|X_1) -  \sqrt{n V(X_2 |X_1) } \Phi^{-1}(\eps) +o(\sqrt{n}),\label{eqn:src_X2}\\
    \log (M_{1n}M_{2n} )&\le nH(X_1,X_2) -  \sqrt{n V(X_1 , X_2) } \Phi^{-1}(\eps) + o(\sqrt{n})  .\label{eqn:src_X12}
 \end{align}
Inequality  \eqref{eqn:src_X1} corresponds to the problem of source coding $X_1$ with $X_2$ available as full (non-coded) side information at the decoder. Inequality \eqref{eqn:src_X2} swaps the role of $X_1$ and $X_2$. Finally, inequality \eqref{eqn:src_X12} corresponds to lossless source coding of the vector source $(X_1, X_2)$, similarly to the result  on lossless source coding without side information in Section~\ref{sec:asymp_lossless}.


\section{Second-Order Asymptotics of Slepian-Wolf Coding via the Method of  Types}  \label{sec:sw_uni}
Just as in Section~\ref{sec:universal_lossless} (second-order asymptotics of lossless data compression via the method of types), we can show that codes that do not necessarily have to have full knowledge of the source statistics (i.e., partially universal source codes) can achieve the second-order coding rate region $\calL(\eps; R_1^*,R_2^*)$.  However, the coding scheme does require the knowledge of the entropies together with the pair of second-order rates $(L_1,L_2)$ we would like to achieve.   We illustrate the  achievability proof technique   for Case (iii) of Theorem~\ref{thm:disp_sw}, in which $R_1^* = H(X_1|X_2)$ and $R_2^*=H(X_2)$. 

The code construction is based on  Cover's {\em random binning} idea~\cite{cover75} and the decoding strategy is similar to  {\em minimum empirical entropy decoding}~\cite{CK81,Csi97}.   Fix $(L_1, L_2)\in\calL(\eps; R_1^*,R_2^*)$ where $\calL(\eps; R_1^*,R_2^*)$ is given in \eqref{eqn:corner_bd}. Also  fix code sizes $M_{1n}$ and $M_{2n}$ satisfying \eqref{eqn:sizeM1}.  For each $j = 1,2$, uniformly  and independently assign each sequence $\bx_j\in\calX_j^n$ into one of $M_{jn}$ bins labeled as $\calB_j(m_j), m_j \in \{1,\ldots, M_{jn}\}$.  The bin assignments are revealed to all parties. To send $\bx_j \in\calX_j^n$, encoder $j$ transmits its bin index $m_j$.  

The decoder, upon receipt of the bin indices $(m_1, m_2 ) \in\{1,\ldots, M_{1n}\} \times \{1,\ldots, M_{2n}\}$, finds a pair of sequences  $(\hat{\bx}_1, \hat{\bx}_2) \in\calB_1(m_1)\times\calB_2(m_2)$ satisfying
\begin{align}
\hat{\bH}(\bx_1,\bx_2):=\begin{bmatrix}
\hatH(\bx_1 |\bx_2 )\\ \hatH(\bx_2 |\bx_1 ) \\ \hatH(\bx_1  , \bx_2 ) 
\end{bmatrix} \le \begin{bmatrix}
\gamma_1 \\ \gamma_2\\ \gamma_{12}
\end{bmatrix} =: \bgamma \label{eqn:vector_thres}
\end{align}
for some thresholds $\gamma_1,\gamma_2,\gamma_{12}$ defined as 
\begin{align}
\gamma_1  &:= H(X_1|X_2) + \frac{L_1}{\sqrt{n}}  + n^{-1/4} \\
\gamma_2  &:= H(X_2 ) + \frac{L_2}{\sqrt{n}}  + n^{-1/4} \\
\gamma_{ 12}  &:= H(X_1 , X_2 ) + \frac{L_1+L_2}{\sqrt{n}} + n^{-1/4}  
\end{align}
If there is no sequence pair   $(\hat{\bx}_1, \hat{\bx}_2) \in\calB_1(m_1)\times\calB_2(m_2)$  satisfying \eqref{eqn:vector_thres} or if there is more than one, declare an error.   Note that the thresholds depend on the entropies and $(L_1, L_2)$, hence these values need to be known to the decoder. 

  Let the generated source sequences be $X_1^n$ and $X_2^n$ and their associated bin indices be $M_1 = M_1(X_1^n)$ and $M_2 = M_2(X_2^n)$ respectively. By symmetry, we may assume that $M_1=M_2=1$.  The error events are  as follows:
\begin{align}
\calE_0 &:= \big\{  \hat{\bH}(X_1^n, X_2^n)  \not\le\bgamma\big\},\\*
\calE_1 &:= \big\{\exists\, \tilde{\bx}_1\in\calB_1(1) :\tilde{\bx}_1\ne X_1^n,  \hat{\bH}(\tilde{\bx}_1, X_2^n)  \le\bgamma\big\},\\
\calE_2 &:= \big\{\exists\, \tilde{\bx}_2\in\calB_2(1):  \tilde{\bx}_2 \ne X_2^n ,  \hat{\bH}(X_1^n, \tilde{\bx}_2)  \le\bgamma\big\},\quad\mbox{and}\\
\calE_{12} &:= \big\{\exists\, (\tilde{\bx}_1, \tilde{\bx}_2)\in\calB_1(1)\times\calB_2(1): \tilde{\bx}_1\ne X_1^n, \tilde{\bx}_2 \ne X_2^n , \nn\\*
&  \qquad\qquad  \hat{\bH}(\tilde{\bx}_1,\tilde{ \bx}_2)  \le\bgamma\big\}.
\end{align}
Let $\bH(X_1, X_2)=[H(X_1|X_2), H(X_2|X_1), H(X_1, X_2)]'$. It can be verified that the following central limit relation holds \cite{TK14}:
\begin{equation}
\sqrt{n}\big(\hat{\bH}(X_1^n, X_2^n)  - \bH(X_1,X_2) \big)\stackrel{\mathrm{d}}{\longrightarrow}\calN\big(\bzero ,\bV \big).
\end{equation}
This is the multi-dimensional analogue of \eqref{eqn:clt_lossles} for almost lossless source coding.
Thus, by the same argument as that in \eqref{eqn:drop1}--\eqref{eqn:drop2} (ignoring the second entry in $\hat{\bH}(X_1^n, X_2^n) $ because $R_2^*=H(X_2) > H(X_2|X_1)$), one has 
\begin{equation}
\Pr(\calE_0)\le \eps +O(n^{-1/4}).
\end{equation}
Furthermore, by using the method of types, we may verify that 
\begin{align}
\Pr(\calE_1) & \!\le\! \sum_{\bx_1,\bx_2} P_{X_1^n X_2^n} (\bx_1,\bx_2) \sum_{\tilde{\bx}_1\ne\bx_1: \hat{\bH}(\tilde{\bx}_1,\bx_2)\le\bgamma} \Pr \big(\tilde{\bx}_1\in\calB_1(1)\big)\\
&\! \le \!\sum_{\bx_1,\bx_2} P_{X_1^n X_2^n} (\bx_1,\bx_2) \sum_{\tilde{\bx}_1\ne\bx_1: \hatH(\tilde{\bx}_1 | \bx_2)\le\gamma_1} \Pr\big(\tilde{\bx}_1\in\calB_1(1)\big)\\
&\! = \!\sum_{\bx_1,\bx_2} P_{X_1^n X_2^n} (\bx_1,\bx_2) \sum_{\tilde{\bx}_1\ne\bx_1: \hatH(\tilde{\bx}_1 | \bx_2)\le\gamma_1} \frac{1}{ M_{1n} }\label{eqn:uniformity}\\
&\! \le \!\sum_{\bx_1,\bx_2} P_{X_1^n X_2^n} (\bx_1,\bx_2) \sum_{ \substack{V\in\scV_n(\calX_2;P_{\bx_2}) :\\ H(V|P_{\bx_2}) \le\gamma_1 }}\sum_{\tilde{\bx}_1\in\calT_V(\bx_2) }   \frac{1}{M_{1n}}\label{eqn:partition_types} \\
&\! \le \!\sum_{\bx_1,\bx_2} P_{X_1^n X_2^n} (\bx_1,\bx_2) \sum_{ \substack{V\in\scV_n(\calX_2;P_{\bx_2}) :\\ H(V|P_{\bx_2}) \le\gamma_1 }}  \frac{\exp\big(nH(V|P_{\bx_2})\big)}{M_{1n}} \label{eqn:size_cond_tc}\\
&\! \le \!\sum_{\bx_1,\bx_2} P_{X_1^n X_2^n} (\bx_1,\bx_2) \sum_{ \substack{V\in\scV_n(\calX_2;P_{\bx_2}) :\\ H(V|P_{\bx_2}) \le\gamma_1 }}  \frac{\exp\big(n \gamma_1\big)}{M_{1n}} \\*
&\!\le\! (n+1)^{|\calX_1| |\calX_2|} \exp( - n^{ 1/4}),\label{eqn:use_type_count}
\end{align}
where in \eqref{eqn:uniformity} we used the uniformity of the binning, in \eqref{eqn:partition_types}  we partitioned the set of sequences $\tilde{\bx}_1$ into conditional types given $\bx_2$ and in \eqref{eqn:size_cond_tc}, we used the fact that $|\calT_V(\bx_2)|\le \exp\big( nH(V|P_{\bx_2})\big)$ (cf.~Lemma~\ref{lem:size_type_class}).  Finally, the type counting lemma  and the choices of $\gamma_1$ and $M_{1n}$ were used in \eqref{eqn:use_type_count}. The same calculation can be performed for $\Pr(\calE_2)$ and $\Pr(\calE_{12})$. Thus, asymptotically, the error probability is no larger than $\eps$, as desired. 
\section{Other    Fixed Error Asymptotic  Notions} \label{sec:other_fixed}
In the preceding sections, we were solely concerned with the deviations of order $\Theta(\frac{1}{\sqrt{n}})$ away from the first-order fundamental limit $(R_1^*, R_2^*)$.   However, one may also be interested in other metrics that quantify backoffs from particular first-order fundamental limits. Here we mention three other quantities that have appeared in the literature.

\subsection{Weighted Sum-Rate Dispersion}
For constants $\alpha, \beta\ge 0$, the minimum value of  $\alpha R_1 + \beta R_2$
for asymptotically achievable $(R_1, R_2)$ is called the  {\em optimal weighted sum-rate}. Of particular interest is the case
$\alpha =\beta=1$, corresponding to the standard sum-rate $R_1+R_2$, but
other cases may be important as well, e.g.,  if transmitting
from encoder $1$ is more costly than transmitting from encoder
$2$. Because of the polygonal shape of the optimal region
described in the Slepian-Wolf region in~\eqref{eqn:orr_sw}, the optimal weighted sum-rate is always
achieved at (at least) one of the two corner points, and the optimal rate
is given by
\begin{equation}
R_{\mathrm{sum}}^*(\alpha,\beta) :=  \left\{ \begin{array}{cc}
 \alpha H(X_1|X_2) + \beta H(X_2)& \alpha\ge \beta \\
  \alpha H(X_1) + \beta H(X_2|X_1)& \alpha < \beta 
\end{array} \right. .
\end{equation}
One can then define $J \in\bbR$ to be  an  {\em achievable $(\eps,\alpha,\beta)$-weighted second-order coding rate} if there exists a sequence of $(M_{1n}, M_{2n}, \eps_n)$-codes for the correlated source $P_{X_1^n X_2^n}$ such that  the error probability condition in~\eqref{eqn:sw_err} holds and 
\begin{equation}
\limsup_{n\to\infty}  \frac{1}{\sqrt{n}} \big( \alpha \log M_{1n}  + \beta \log M_{2n}  -n R_{\mathrm{sum}}^*(\alpha,\beta) \big)\le J.
\end{equation}
In \cite{TK14}, the smallest such $J$, denoted as $J^*(\eps;\alpha,\beta)$, was found using a proof technique similar to that for Theorem~\ref{thm:disp_sw}.

\subsection{Dispersion-Angle Pairs}
One can also imagine approaching a point on the boundary $(R_1^*, R_2^*)$ {\em fixing an angle of approach $\theta \in [0,2\pi)$}. Let $(F,\theta)$ be called an {\em achievable $(\eps,R_1^*,R_2^*)$-dispersion-angle pair} if there exists a sequence of $(M_{1n}, M_{2n}, \eps_n)$-codes for the correlated source $P_{X_1^n X_2^n}$ such that    the error probability condition in~\eqref{eqn:sw_err} holds and
\begin{align}
\limsup_{n\to\infty}  \frac{1}{\sqrt{n}}  \big( \log M_{1n} - n R_1^* \big) &\le \sqrt{F}  \,\cos\theta  ,\quad\mbox{and} \\*
\limsup_{n\to\infty}  \frac{1}{\sqrt{n}}  \big( \log M_{2n} - n R_2^* \big) & \le \sqrt{F} \,  \sin\theta   .
\end{align}
Clearly,   dispersion-angle pairs $(F,\theta)$  are in one-to-one correspondence with second-order coding rate pairs $(L_1, L_2)$. The minimum such $F$ for a given $\theta$, denoted as $F^*(\theta , \eps; R_1^*, R_2^*)$, measures the speed of approach to $(R_1^*, R_2^*)$   at an angle $\theta$.  This fundamental quantity $F^*(\theta  ,\eps; R_1^*, R_2^*)$ was also characterized in \cite{TK14}.

\subsection{Global Approaches} \label{sec:glob}
Authors of early works on second-order  asymptotics in multi-terminal systems~\cite{huang12,Mol12,TK12a} considered {\em global rate regions}, meaning that they  were concerned with quantifying the sizes $(M_{1n}, M_{2n})$ of length-$n$ block codes with error probability not exceeding $\eps$. These sizes  are  called {\em $(n,\eps)$-achievable}. In the Slepian-Wolf context, a result by Tan-Kosut~\cite{TK12a} states that $(M_{1n}, M_{2n})$ are $(n,\eps)$-achievable  iff
\begin{align}
 \begin{bmatrix}\log M_{1n} \\ \log M_{2n} \\ \log (M_{1n}  M_{2n})
\end{bmatrix} \!\in \! \begin{bmatrix}
nH(X_1|X_2) \\ nH(X_2|X_1)  \\ nH(X_1 , X_2) 
\end{bmatrix}  \!- \! \sqrt{n} \Psi^{-1}(\bV,\eps) \! + \! O\left(\log n\right)\bone  \label{eqn:glo_sw}
\end{align}
where ${\Psi^{-1}}(\bV,\eps)$ is an appropriate generalization of the $\Phi^{-1}$ function and $\bone$ is the vector of all ones.  The precise definition of $\Psi^{-1}(\bV,\eps)$, given in \eqref{eqn:psi_inv1} and illustrated Fig.~\ref{fig:psi_inv}, will not be of concern here. 

While statements such as \eqref{eqn:glo_sw} are mathematically correct and are reminiscent of asymptotic expansions in the point-to-point case (cf.~that for lossless source coding in~\eqref{eqn:nonzero_VP}), they do not provide the complete picture with regard to the  convergence of rate pairs to a fundamental limit, e.g.,  a corner point of the Slepian-Wolf region. Indeed, an {\em achievability}  statement similar  to \eqref{eqn:glo_sw} holds for the  DM-MAC {\em for each input distribution} \cite{huang12,Mol12,Scarlett13b,TK12a}
 and hence the {\em union} over all input distributions. However,   one of the major deficiencies of such a statement  is that the $O(\log n)$ third-order term is not uniform in the input distributions; this poses serious challenges in the interpretation of the result if we consider  random coding using a sequence of input distributions that varies  with the  blocklength (cf.\ Chapter \ref{ch:mac}). Thus, as pointed out by Haim-Erez-Kochman~\cite{Haim12}, for multi-user problems, the value of {\em global  expansions} such as that in \eqref{eqn:glo_sw} is   limited, and can only be regarded as stepping stones to obtain {\em local} results (if possible).  Indeed, we do this for the Gaussian MAC with degraded message sets in Chapter~\ref{ch:mac}.  

The {\em main takeaway} of this section is that one should adopt  the local, weighted sum-rate, or dispersion-angle problem setups to analyze the second-order asymptotics for multi-terminal problems. These setups are  information-theoretic in nature. In particular, operational quantities  (such as the set $\calL(\eps;R_1^*, R_2^*)$ or the number $F^*(\theta, \eps; R_1^*, R_2^*)$)  are \emph{defined} then    \emph{equated} to information quantities. 
\chapter{A Special Class of Gaussian Interference Channels} \label{ch:ic}

This chapter presents results on second-order asymptotics for a channel-type network information theory problem. The problem we consider here is a special case of the two-sender, two-receiver interference channel (IC) shown in Fig.~\ref{fig:int}. This model is a basic building block in many modern wireless systems, so theoretical results and insights are   of tremendous practical relevance. The IC   was first studied  by Ahlswede~\cite{Ahls74} who established basic bounds on the capacity region. However, the capacity region for the discrete memoryless and Gaussian memoryless cases have remained as open problems for over $40$ years except for some very  special cases. The best known inner bound is due to Han and Kobayashi~\cite{Han81}. A simplified form of the Han-Kobayashi inner bound was presented by Chong-Motani-Garg-El Gamal~\cite{Chong08}. 

\begin{figure}[t]
\centering
\setlength{\unitlength}{.4mm}
\begin{picture}(190, 90)
\put(0, 15){\vector(1, 0){30}}
\put(60, 15){\vector(1,0){30}}
\put(30, 0){\line(1, 0){30}}
\put(30, 0){\line(0,1){30}}
\put(60, 0){\line(0,1){30}}
\put(30, 30){\line(1,0){30}}

\put(0, 75){\vector(1, 0){30}}
\put(60, 75){\vector(1, 0){30}}
\put(0, 15){\vector(1, 0){30}}
\put(30, 60){\line(1, 0){30}}
\put(30, 60){\line(0,1){30}}
\put(60, 60){\line(0,1){30}}
\put(30, 90){\line(1,0){30}}

\put(90, 0){\line(1, 0){30}}
\put(90, 0){\line(0,1){90}}
\put(120, 0){\line(0,1){90}}
\put(90, 90){\line(1,0){30}}

\put(0, 20){  $m_2$}
\put(0, 80){  $m_1$}
\put(66, 20){  $x_2$}
\put(66, 80){  $x_1$}
\put(41, 12){$f_2$ } 
\put(41, 72){$f_1$ } 
\put(101, 42){$W$} 

\put(126, 20){  $y_2$}
\put(126, 80){  $y_1$}

\put(186, 20){  $\hatm_2$}
\put(186, 80){  $\hatm_1$}

\put(120, 15){\vector(1, 0){30}}
\put(120, 75){\vector(1, 0){30}}

\put(150, 0){\line(1, 0){30}}
\put(150, 0){\line(0,1){30}}
\put(180, 0){\line(0,1){30}}
\put(150, 30){\line(1,0){30}}

\put(150, 60){\line(1, 0){30}}
\put(150, 60){\line(0,1){30}}
\put(180, 60){\line(0,1){30}}
\put(150, 90){\line(1,0){30}}

\put(180, 15){\vector(1, 0){30}}
\put(180, 75){\vector(1, 0){30}}

\put(161, 12){$\varphi_2$ } 
\put(161, 72){$\varphi_1$ } 
  \end{picture}
  \caption{Illustration of the interference channel problem.   }
  \label{fig:int}
\end{figure}
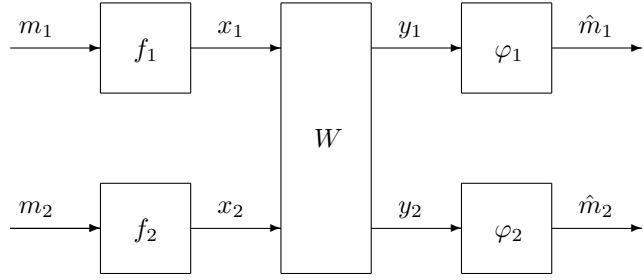

Since the determination of the capacity region is formidable,  the derivation of conclusive results for the second-order asymptotics of general  memoryless ICs is also beyond us at this point in time. One very special case in which the capacity region is known is the  IC with {\em very strong interference} (VSI). In this case, the intuition is that each receiver can reliably decode the non-intended message which then aids in decoding the intended message. The capacity region for the discrete memoryless IC with VSI consists of the set of rate pairs $(R_1, R_2)$ satisfying
\begin{equation}
R_1 \le I(X_1;Y_1 | X_2, Q),\quad\mbox{and}\quad  R_2 \le I(X_2;Y_2 | X_1, Q)\label{eqn:vsi_disc}
\end{equation}
for some $P_{Q} , P_{X_1|Q}$ and $P_{X_2|Q}$, where  $Q$ is known as the {\em time-sharing random variable}. In the Gaussian case in which Carleial~\cite{Carleial75} studied, the above region can be written more  explicitly as 
\begin{equation}
R_1 \le \rvC(\snr_1) \quad\mbox{and}\quad R_2 \le \rvC(\snr_2), \label{eqn:carl}
\end{equation}
where $\snr_j$ is the signal-to-noise ratio  of the direct channel from sender $j$ to receiver $j$ and the Gaussian capacity function is  defined as  $\rvC(\snr) := \frac{1}{2}\log (1+\snr)$.   See Fig.~\ref{fig:vs} for an illustration of the capacity region and~the monograph by Shang and Chen~\cite{Shang} for further discussions on Gaussian interference channels. Carleial's result    is   surprising because it appears that interference does not reduce the capacity of the constituent channels since $\rvC(\snr_j)$ {\em is}  the capacity of the $j^{\mathrm{th}}$ channel. In Carleial's own words \cite{Carleial75},
\begin{quote}
\emph{``Very strong interference  is as innocuous as no interference at all.''}  
\end{quote}
 Similarly to the discrete case in \eqref{eqn:vsi_disc}, the (first-order optimal) achievability proof strategy for the Gaussian case  involves first decoding the interference, subtracting it off from the received channel output, and finally, reliably decoding the intended message. The VSI condition ensures that the rate constraints in \eqref{eqn:carl}, representing requirement for the second decoding steps to succeed, dominate.

In this chapter, we make a slightly stronger  assumption compared to that made by Carleial~\cite{Carleial75}. We assume that the inequalities that define the VSI condition are strict; we call this the {\em strictly VSI} (SVSI) assumption/regime. With this assumption, we are able to derive the second-order asymptotics of this class of Gaussian ICs. 

Although the main result in this chapter appears to be  similar to the Slepian-Wolf case (in Chapter~\ref{ch:sw}), there are several take-home messages that differ from the simpler Slepian-Wolf problem. 
\begin{enumerate}
\item First, similar to Carleial's observation  that for Gaussian ICs with VSI the capacity is not reduced, we show that the {\em dispersions are   not affected} under the SVSI assumption. More precisely,  the second-order coding rate region (a set similar to that for the Slepian-Wolf problem in Chapter \ref{ch:sw}), is characterized entirely in terms of the dispersions $\rvV(\snr_j)$ of the two {\em direct}  AWGN channels from encoder $j$ to decoder $j$;
\item Second, the main result in this chapter suggests that under the SVSI assumption, and in the second-order asymptotic setting, the two error events (of incorrectly decoding messages $1$ and $2$) are almost independent;
\item Third,  for the direct part, we demonstrate  the  utility of an achievability  proof  technique by MolavianJazi-Laneman~\cite{Mol13} that is also applicable to our problem of  Gaussian ICs with SVSI. This technique   is,  in general, applicable to multi-terminal Gaussian channels. In the  asymptotic evaluation of the information spectrum bound (Feinstein bound~\cite{Feinstein}), the problem is ``lifted''  to higher dimensions to  facilitate the application of limit theorems for {\em independent}  random vectors;
\end{enumerate}
This chapter is based on work by Le, Tan and Motani~\cite{Quoc14}.

\section{Definitions and Non-Asymptotic Bounds}
Let us now state  the Gaussian IC problem. The Gaussian IC is defined by the following input-output relation:
\begin{align}
Y_{1i}  &= g_{11} X_{1i}  + g_{12} X_{2i} + Z_{1i} , \label{eqn:ic1} \\*
Y_{2i}  &= g_{21} X_{1i}  + g_{22} X_{2i} + Z_{2i}  , 
\end{align}
where  $i =1,\ldots, n$ and $g_{jk}$ are the  channel gains from sender $k$ to receiver $j$ and $Z_{1i}\sim\calN(0,1)$ and $Z_{2i}\sim\calN(0,1)$  are {\em independent} noise components.\footnote{The  independence  assumption between $Z_{1i}$ and $Z_{2i}$ was not made in Carleial's work~\cite{Carleial75}  (i.e., $Z_{1i}$ and $Z_{2i}$  may be correlated) but we need this assumption for the analyses here. It is well known that the capacity region of any general  IC depends only on the marginals   \cite[Ch.~6]{elgamal} but it is, in general, not true that  the set of achievable second-order rates    $\calL(\eps; R_1^*, R_2^*)$, defined in \eqref{eqn:2nd_ic}, has the same property. This will become clear in the proof of Theorem~\ref{thm:disp_vs} in the text following~\eqref{eqn:ic_stats}.  }     Thus, the channel from $(x_{1},x_2)$ to $(y_1 , y_2)$ is 
\begin{equation}
W(y_1, y_2 | x_1, x_2) = \frac{1}{2\pi}\exp\Bigg( -\frac{1}{2}\bigg\| \begin{bmatrix}
y_1 \\ y_2
\end{bmatrix}- \begin{bmatrix} g_{11} & g_{12} \\ g_{21} & g_{22} \end{bmatrix} \begin{bmatrix}
x_1  \\ x_2
\end{bmatrix}\bigg\|_2^2\Bigg).
\end{equation}
Let $W_1$ and $W_2$ denote the marginals of $W$.  The channel also acts in a stationary, memoryless way so
\begin{equation}
W^n(\by_1,\by_2|\bx_1,\bx_2) =\prod_{i=1}^n W(y_{1i}, y_{2i} | x_{1i}, x_{2i}). 
\end{equation}

It will be convenient  to make the dependence of the code on the blocklength explicit right away. We define an  {\em $(n,M_1, M_2,S_1, S_2,\eps)$-code} for the Gaussian IC as four maps that consists of two encoders $f_j : \{1,\ldots, M_j\}\to\bbR^n , j = 1,2$ and two decoders $\varphi_j : \bbR^n\to \{1,\ldots, M_j\}$ such that the following {\em power constraints}\footnote{The notation $f_{ji}(m_j)$ denotes the $i^{\mathrm{th}}$ coordinate of the codeword  $f_j(m_j ) \in \bbR^n$.} are satisfied 
\begin{equation}
\big\|  f_j(m_j ) \big\|_2^2 =\sum_{i=1}^n f_{ji}(m_j)^2\le nS_j  \label{eqn:snr_ic}
\end{equation}
and, denoting $\calD_{m_1,m_2}:= \{ (\by_1, \by_2) : \varphi_1(\by_1)= m_1\mbox{ and } \varphi_2(\by_2)= m_2\}$ as the decoding region for $(m_1, m_2)$,   the {\em average error probability} 
\begin{equation}\label{eqn:error_ic}
\frac{1}{M_1 M_2}\sum_{m_1=1}^{M_1}\sum_{m_2=1}^{M_2}W^n\big( \bbR^n\times\bbR^n\setminus\calD_{m_1, m_2}\big| f_1(m_1) , f_2(m_2) \big)\le\eps.
\end{equation}
In \eqref{eqn:snr_ic}, $S_1$ and $S_2$ are the admissible powers on the codewords $f_1(m_1)$ and $f_2(m_2)$. The {\em signal-to-noise  ratios} of the direct channels are 
\begin{equation}
\snr_1 :=g_{11}^2S_1,\quad\mbox{and}\quad \snr_2 := g_{22}^2 S_2. 
\end{equation}
The {\em interference-to-noise ratios} are 
\begin{equation}
\inr_1 := g_{12}^2 S_2,\quad\mbox{and}\quad \inr_2 := g_{21}^2  S_1.
\end{equation}

We say that the Gaussian IC $W$, together with the transmit  powers $(S_1, S_2)$,  is in the {\em VSI regime} 
 if the signal- and interference-to-noise ratios satisfy
\begin{equation}
\snr_1 \le \frac{\inr_2}{1+\snr_2},\quad\mbox{and} \quad  \snr_2 \le \frac{\inr_1}{1+\snr_1},\label{eqn:svsi2}
\end{equation}
or equivalently, in terms of capacities, 
\begin{equation}
\rvC(\snr_1 )+\rvC(\snr_2)\le\min\{ \rvC(\snr_1 + \inr_1),  \rvC(\snr_2 + \inr_2) \} . \label{eqn:svsi3}
\end{equation}
The Gaussian IC is in the {\em SVSI regime} if the inequalities in  \eqref{eqn:svsi2}--\eqref{eqn:svsi3} are strict. 
Intuitively, the VSI (or SVSI) assumption means that the cross channel gains $g_{12}$ and $g_{21}$ are much stronger (larger) than the direct gains $g_{11}$ and $g_{22}$ for given transmit powers $S_1$ and $S_2$.

We now state non-asymptotic bounds that are   evaluated asymptotically later. The proofs of these bounds  are standard. See~\cite{Han98} or~\cite{Quoc14}.

\begin{proposition}[Achievability bound for   IC] \label{prop:ach_ic}
Fix any  input distributions  $P_{X_1^n}$ and $P_{X_2^n}$ whose support satisfies the power constraints in \eqref{eqn:snr_ic},  i.e., $\|X_j^n\|_2\le nS_j$ with probability one.    For every $n\in\bbN$, every $\gamma>0$ and for any choice of (conditional) output distributions $Q_{Y_1^n | X_2^n}$, $Q_{Y_2^n | X_1^n}$, $Q_{Y_1^n}$ and $Q_{Y_2}^n$   there exists an $(n,M_{1}, M_{2}, S_1, S_2, \eps)$-code for the   IC such that 
\begin{align}
\eps  \le  \Pr \bigg(  \log\frac{W_1^n(Y_1^n | X_1^n , X_2^n) }{Q_{Y_1^n|X_2^n} (Y_1^n|X_2^n) } & \le\log M_{1} + n\gamma   \quad\mbox{or } \nn\\ 
\log\frac{W_2^n(Y_2^n | X_1^n , X_2^n) }{Q_{Y_2^n|X_1^n} (Y_2^n|X_1^n) } & \le\log M_{2} + n\gamma   \quad\mbox{or } \nn\\
\log\frac{W_1^n(Y_1^n | X_1^n , X_2^n) }{Q_{Y_1^n } (Y_1^n ) } & \le\log (M_{1} M_{2}) \!+\! n\gamma   \quad\mbox{or } \nn\\
\log\frac{W_2^n(Y_2^n | X_1^n , X_2^n) }{Q_{Y_2^n } (Y_2^n ) } & \le\log (M_{1} M_{2}) \!+ \!n\gamma    \bigg)   +\zeta\exp(-n\gamma), \label{eqn:direct_ic}
\end{align}
where $\zeta:=\sum_{k=1}^2\sum_{j =1}^2 \zeta_{jk} $ and 
\begin{align}
\zeta_{11} & := \sup_{\bx_2,\by_1} \frac{ P_{X_1^n} W_1^n (\by_1 | \bx_2) }{ Q_{Y_1^n|X_2^n}(\by_1|\bx_2) },\quad \zeta_{12} :  = \sup_{ \by_1} \frac{ P_{X_1^n}P_{X_2^n} W_1^n (\by_1 ) }{ Q_{Y_1^n }(\by_1 ) }  \label{eqn:K1}\\*
\zeta_{21} & := \sup_{\bx_1,\by_2} \frac{ P_{X_2^n} W_2^n (\by_2 | \bx_1) }{ Q_{Y_2^n|X_1^n}(\by_2|\bx_1) },\quad 
\zeta_{22}  := \sup_{ \by_2} \frac{ P_{X_1^n}P_{X_2^n} W_2^n (\by_2 ) }{ Q_{Y_2^n }(\by_2 ) }  \label{eqn:K2}
\end{align}
\end{proposition}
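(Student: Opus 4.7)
The plan is a random coding argument with joint threshold decoding, combined with a change-of-measure step to accommodate the arbitrary auxiliary output distributions $Q_{Y_1^n|X_2^n}$, $Q_{Y_2^n|X_1^n}$, $Q_{Y_1^n}$ and $Q_{Y_2^n}$. First, I would generate codewords $X_1^n(m_1) \sim P_{X_1^n}$ for $m_1 \in \{1,\ldots, M_1\}$ and $X_2^n(m_2) \sim P_{X_2^n}$ for $m_2 \in \{1,\ldots, M_2\}$, all mutually independent. Because the supports of $P_{X_1^n}$ and $P_{X_2^n}$ lie in the admissible power shells, the constraint in~\eqref{eqn:snr_ic} is satisfied with probability one. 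By the symmetry of the ensemble, assume the transmitted pair is $(m_1, m_2) = (1, 1)$, so that $(Y_1^n, Y_2^n)$ has distribution $W^n(\cdot,\cdot\,|\,X_1^n(1), X_2^n(1))$.

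For the decoding rule, each receiver $j$ would perform joint threshold decoding: declare $\hat m_j = \tilm_j$ if $\tilm_j$ is the unique index for which there exists some $\tilm_{3-j}$ making both densities $\log\frac{W_j^n(Y_j^n|X_1^n(\tilm_1), X_2^n(\tilm_2))}{Q_{Y_j^n|X_{3-j}^n}(Y_j^n|X_{3-j}^n(\tilm_{3-j}))}$ and $\log\frac{W_j^n(Y_j^n|X_1^n(\tilm_1), X_2^n(\tilm_2))}{Q_{Y_j^n}(Y_j^n)}$ exceed $\log M_j + n\gamma$ and $\log(M_1 M_2) + n\gamma$ respectively. The ensemble-averaged error probability is then bounded by the union of two event types: (i) the true pair $(1,1)$ fails one of the four thresholds, and (ii) some competing pair $(\tilm_1, \tilm_2)$ with $\tilm_j \ne m_j$ spuriously passes the thresholds at receiver $j$. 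Event~(i) contributes exactly the probability term in~\eqref{eqn:direct_ic}, since $(X_1^n(1), X_2^n(1), Y_1^n, Y_2^n)$ has the joint law $P_{X_1^n}\times P_{X_2^n}\times W^n$.

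For event~(ii), there are four families of competitors -- at receiver~1 the families $\{(\tilm_1, 1): \tilm_1 \ne 1\}$ and $\{(\tilm_1, \tilm_2): \tilm_1 \ne 1, \tilm_2 \ne 1\}$, and the analogous two families at receiver~2 -- each to be bounded by the same three-step template. Taking the $(\tilm_1, 1)$ family as representative: since $X_1^n(\tilm_1)$ is independent of $(X_1^n(1), X_2^n(1), Y_1^n)$ and has the same marginal as $X_1^n(1)$, I would first change measure from the true marginal $P_{X_1^n}W_1^n(\cdot|X_2^n(1))$ of $Y_1^n$ given $X_2^n(1)$ to $Q_{Y_1^n|X_2^n}(\cdot|X_2^n(1))$, which introduces the multiplicative constant $\zeta_{11}$ from~\eqref{eqn:K1}; next, applying $\bbI\{W/Q > T\}\le (W/Q)/T$ cancels $T = M_1\exp(n\gamma)$ against the total mass of $W_1^n(\cdot|\tilde x_1, x_2)$; finally, summing over the at most $M_1$ competitors yields the contribution $\zeta_{11}\exp(-n\gamma)$. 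The same template applied to the other three competitor families produces $\zeta_{12}\exp(-n\gamma)$, $\zeta_{21}\exp(-n\gamma)$ and $\zeta_{22}\exp(-n\gamma)$, which sum to $\zeta\exp(-n\gamma)$.

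A standard random coding extraction then guarantees the existence of at least one deterministic codebook achieving the claimed bound. The main obstacle is the bookkeeping in the change-of-measure step -- one must isolate precisely the right conditioning variables so that the quotient of joint densities collapses exactly to the suprema defining $\zeta_{jk}$ in~\eqref{eqn:K1}--\eqref{eqn:K2}, rather than to some looser constant, and the independence structure of the interfering codewords must be used correctly when passing from the union bound to the per-competitor estimate.
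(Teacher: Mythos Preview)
Your proposal is correct and is precisely the standard Feinstein-style random coding argument (with threshold decoding and a change-of-measure step to accommodate the auxiliary output distributions) that the paper alludes to; the paper itself does not give an explicit proof of this proposition, stating only that ``the proofs of these bounds are standard'' and referring to \cite{Han98} and \cite{Quoc14}. Your decomposition into the missed-detection event (the true pair failing one of the four thresholds) and the four families of false-alarm events, each handled via the bound $\bbI\{W/Q>T\}\le (W/Q)/T$ after the change of measure, is exactly what those references do.
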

This is a generalization of the average error version of Feinstein's lemma~\cite{Feinstein} (Proposition~\ref{prop:fein}). Notice that we have the freedom to choose the  output distributions at the cost of having to control the ratios $\zeta_{jk}$ of the induced output distributions and our choice  of output distributions.

\begin{proposition}[Converse bound for   IC]  \label{prop:conv_ic}
For every $n\in\bbN$, every $\gamma>0$ and for any choice of (conditional)  output distributions $Q_{Y_1^n | X_2^n}$ and $Q_{Y_2^n | X_1^n}$, every $(n,M_{1}, M_{2}, S_1, S_2, \eps)$-code for the   IC must satisfy
\begin{align}
\eps \ge  \Pr \bigg(  \log\frac{W_1^n(Y_1^n | X_1^n , X_2^n) }{Q_{Y_1^n|X_2^n} (Y_1^n|X_2^n) } & \le\log M_{1} - n\gamma   \quad\mbox{or } \nn\\ 
\log\frac{W_2^n(Y_2^n | X_1^n , X_2^n) }{Q_{Y_2^n|X_1^n} (Y_2^n|X_1^n) } & \le\log M_{2}- n\gamma    \bigg) -2\exp(-n\gamma)\label{eqn:converse_ic}
\end{align}
for some   input distributions  $P_{X_1^n}$ and $P_{X_2^n}$ whose support satisfies the power constraints in \eqref{eqn:snr_ic}. 
\end{proposition}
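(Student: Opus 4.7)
The plan is to adapt the classical Verd\'u--Han~\cite{VH94} / Hayashi--Nagaoka~\cite{Hayashi03} information-spectrum converse from point-to-point channel coding to the two-user interference setting, by handling the two receivers separately via a union bound. First I would let $(M_1, M_2)$ be jointly uniform on $\{1,\ldots, M_1\} \times \{1,\ldots, M_2\}$ and set $X_j^n := f_j(M_j)$; this induces input distributions $P_{X_1^n}$ and $P_{X_2^n}$ whose supports automatically satisfy the power constraints in~\eqref{eqn:snr_ic}. Let $(Y_1^n, Y_2^n)$ be the corresponding channel outputs and $\hat M_j := \varphi_j(Y_j^n)$. By the average error criterion~\eqref{eqn:error_ic}, the correct-decoding events $\calC_j := \{\hat M_j = M_j\}$ satisfy $\Pr(\calC_1^c \cup \calC_2^c) \le \eps$. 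Denote by $\calA_1$ and $\calA_2$ the two threshold events appearing on the right-hand side of~\eqref{eqn:converse_ic}; the goal is to show $\Pr(\calA_1 \cup \calA_2) \le \eps + 2\exp(-n\gamma)$.

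Next, the elementary decomposition
\begin{align*}
\Pr(\calA_1 \cup \calA_2) &\le \Pr(\calC_1^c \cup \calC_2^c) + \Pr\big((\calA_1 \cup \calA_2) \cap \calC_1 \cap \calC_2 \big) \\
&\le \eps + \Pr(\calA_1 \cap \calC_1) + \Pr(\calA_2 \cap \calC_2)
\end{align*}
reduces the problem to showing $\Pr(\calA_j \cap \calC_j) \le \exp(-n\gamma)$ for each $j = 1,2$.

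For $j = 1$, I would condition on $(M_1, M_2) = (m_1, m_2)$ so that $X_1^n = \bx_1 := f_1(m_1)$ and $X_2^n = \bx_2 := f_2(m_2)$ are deterministic; the event $\calC_1$ becomes $\{Y_1^n \in \calD_1(m_1)\}$ with $\calD_1(m_1) := \varphi_1^{-1}(m_1)$, and on $\calA_1$ the defining inequality yields the pointwise bound $W_1^n(Y_1^n | \bx_1, \bx_2) \le M_1 \exp(-n\gamma) \cdot Q_{Y_1^n|X_2^n}(Y_1^n | \bx_2)$. Integrating this bound over the set of $\by_1$ lying in both $\calD_1(m_1)$ and the slice of $\calA_1$ corresponding to $(\bx_1,\bx_2)$ gives
\begin{equation*}
\Pr(\calA_1 \cap \calC_1 \mid M_1 = m_1, M_2 = m_2) \le M_1 \exp(-n\gamma) \cdot Q_{Y_1^n|X_2^n}\big(\calD_1(m_1) \,\big|\, \bx_2\big).
\end{equation*}
The crux is then to average over $m_1$ uniformly: because the decoding regions $\{\calD_1(m_1)\}_{m_1=1}^{M_1}$ are disjoint subsets of $\bbR^n$, the $M_1$ prefactor cancels against the $1/M_1$ from the uniform average and $\sum_{m_1} Q_{Y_1^n|X_2^n}(\calD_1(m_1)|\bx_2) \le 1$, leaving $\Pr(\calA_1 \cap \calC_1 \mid M_2 = m_2) \le \exp(-n\gamma)$. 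Averaging over $m_2$ preserves this bound, and a symmetric argument handles $j = 2$; combining with the decomposition above delivers the claim.

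There is no serious obstacle here: the proof is really a bookkeeping exercise built around the standard change-of-measure trick. The only subtlety worth highlighting is that the argument relies solely on the decoding regions at each receiver partitioning the respective output space, which is what enables the change of measure from $W_j^n$ to $Q_{Y_j^n | X_{3-j}^n}$ without incurring any spurious alphabet-size or codebook-size factors, and is why the bound holds for arbitrary conditional output distributions without requiring them to arise from any particular auxiliary channel.
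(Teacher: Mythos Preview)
Your proposal is correct and follows precisely the standard Verd\'u--Han/Hayashi--Nagaoka information-spectrum argument that the paper alludes to (the paper does not spell out the proof but refers to \cite{Han98} and \cite{Quoc14}, and remarks that it ``uses this genie-aided idea''). Your conditioning on $M_2=m_2$ when bounding $\Pr(\calA_1\cap\calC_1)$ is exactly the genie step of giving decoder~1 access to $X_2^n$, after which the point-to-point change-of-measure and disjoint-decoding-region argument applies verbatim; nothing is missing.
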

Observe the following features of the non-asymptotic converse, which is a generalization of the ideas of Verd\'u-Han~\cite[Lem.~4]{VH94}  and Hayashi-Nagaoka~\cite[Lem.~4]{Hayashi03}: First, there are only two error events compared to the four in the achievability bound. The SVSI assumption allows us to eliminate two error events in the direct bound so the two bounds match in the second-order sense. Second, we are free to choose   output distributions without any penalty (cf.~the achievability bound in Proposition~\ref{prop:ach_ic}). Third, the intuition behind this bound is in line with the SVSI assumption--namely that decoder~$1$ knows the codeword $X_2^n$ and vice versa. Indeed, the proof of Proposition~\ref{prop:conv_ic} uses this genie-aided idea.

\section{Second-Order Asymptotics}
Similar to the study of the second-order asymptotics for the Slepian-Wolf problem, we are interested in deviations from the boundary of the capacity region of order $O(\frac{1}{\sqrt{n}})$ for the Gaussian IC under the SVSI assumption. This motivates the following definition.

Let $(R_1^*, R_2^*)$ be a point on the boundary of the capacity region in~\eqref{eqn:carl}. Let $(L_1, L_2)\in\bbR^2$ be called an {\em achievable $(\eps, R_1^*, R_2^*)$-second-order coding rate pair} if there exists a sequence of $(n, M_{1n}, M_{2n}, S_1, S_2 ,\eps_n)$-codes for the Gaussian IC such that 
\begin{align}
\limsup_{n\to\infty}\eps_n\le\eps ,\quad\mbox{and}\quad \liminf_{n\to\infty}\frac{1}{\sqrt{n}} \big( \log M_{jn}-nR_j^*\big)\ge L_j,\label{eqn:2nd_ic}
\end{align}
for $j=1,2$. 
The set of all achievable $(\eps, R_1^*, R_2^*)$-second-order coding rate pairs is denoted as $\calL(\eps; R_1^*, R_2^*) \subset\bbR^2$.   The intuition behind this definition is exactly analogous to the Slepian-Wolf case. 

Define $V_j :=\rvV(\snr_j)$ where, recall from \eqref{eqn:gauss_disp}  that,
\begin{equation} 
\rvV(\snr) =\log^2\rme\cdot \frac{\snr (\snr+2 )}{2(\snr+1)^2}
\end{equation}
 is the Gaussian dispersion function.

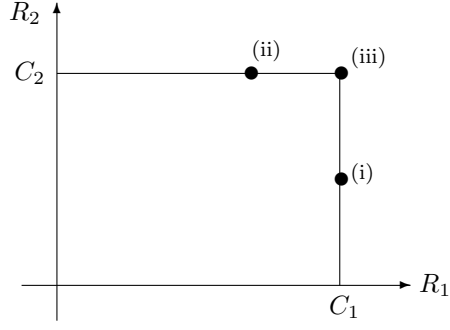
\begin{figure}[t]
\centering
\begin{picture}(115, 115)
\setlength{\unitlength}{.47mm}
\put(0, 10){\vector(1, 0){110}}
\put(10, 0){\vector(0,1){90}}
\put(110,8){ $R_1$}
\put(-6, 85){ $R_2$}
 
%
\put(93,73){\footnotesize (iii)}
\put(93,40){\footnotesize (i)}
%
%
%
\put(65,75){\footnotesize (ii)}

\put(10, 70){\line(1, 0){80}}
\put(90, 10){\line(0, 1){60}}

\put(65, 70){\circle*{4}}
\put(90.5, 70){\circle*{4}}
\put(90.5, 40){\circle*{4}}

\put(-2, 68){$C_2$}
\put(87, 2){$C_1$}
\end{picture}
\caption{Illustration of the different cases in Theorem~\ref{thm:disp_vs}. For brevity, we write $C_j = \rvC(\snr_j)$ for $j = 1,2$.  }
\label{fig:vs}
\end{figure}

\begin{theorem} \label{thm:disp_vs}
Let the Gaussian IC $W$, together with the transmit powers $(S_1, S_2)$, be in the SVSI regime. Depending on $(R_1^*, R_2^*)$ (see Fig.~\ref{fig:vs}), there are $3$ different cases:   \\
Case (i): $R_1^* = \rvC(\snr_1)$ and $R_2^* < \rvC(\snr_2)$ (vertical boundary)
\begin{equation}
\calL(\eps; R_1^*,R_2^* ) = \Big\{ (L_1, L_2) : L_1\le \sqrt{V_1}\Phi^{-1}(  \eps)\Big\}.  \label{eqn:vert_bd_ic}
\end{equation}
Case (ii): $R_1^* < \rvC(\snr_1)$ and $R_2^* = \rvC(\snr_2)$ (horizontal boundary)
\begin{equation}
\calL(\eps; R_1^*,R_2^* ) = \Big\{ (L_1, L_2) : L_2\le \sqrt{V_2}\Phi^{-1}(  \eps)\Big\}.  \label{eqn:hor_bd_ic}
\end{equation}
Case (iii): $R_1^* =\rvC(\snr_1)$ and $R_2^* = \rvC(\snr_2)$ (corner point)
\begin{equation}
\calL(\eps; R_1^*,R_2^* ) = \bigg\{ (L_1, L_2) :\Phi\Big(- \frac{L_1}{\sqrt{V_1}}\Big)\Phi\Big( -\frac{L_2}{\sqrt{V_2}}\Big) \ge 1-\eps\bigg\}.  \label{eqn:cor_pt_ic}
\end{equation}
\end{theorem}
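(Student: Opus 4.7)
My plan is to prove Theorem~\ref{thm:disp_vs} by evaluating the non-asymptotic achievability bound in Proposition~\ref{prop:ach_ic} and the converse bound in Proposition~\ref{prop:conv_ic} using carefully chosen input and output distributions, exploiting the SVSI condition to kill two of the four error events in the direct part, and using the independence of the two noise processes $Z_1^n$ and $Z_2^n$ to factorize the remaining error probability into a product. Because Cases~(i) and~(ii) correspond to operating with a strict capacity gap on one of the two links, each reduces to a standard AWGN dispersion analysis on one link (Theorem~\ref{thm:awgn_asy}) while the other link is in the error-exponent regime, so I will focus on Case~(iii), the corner point $(\rvC(\snr_1),\rvC(\snr_2))$.

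For the achievability part I would pick $P_{X_j^n}$ to be uniform on the power sphere $\{\bx \in \bbR^n : \|\bx\|_2^2 = nS_j\}$ for $j=1,2$; this matches the choice that works for the point-to-point AWGN channel in the proof of Theorem~\ref{thm:awgn_asy}. The natural ``matched'' output distributions are $Q_{Y_j^n\mid X_{-j}^n}(\by_j\mid\bx_{-j}) := \prod_{i=1}^n \calN(y_{ji};\, g_{j,-j}x_{-j,i},\, 1+\snr_j)$ and $Q_{Y_j^n}(\by_j) := \prod_{i=1}^n \calN(y_{ji};\, 0,\, 1+\snr_j+\inr_j)$; under these choices the first‐order means of the four log-likelihood ratios in~\eqref{eqn:direct_ic} are $n\rvC(\snr_j)$ for the two ``conditional'' ones and $n\rvC(\snr_j+\inr_j)$ for the two ``marginal'' ones, while the ratios $\zeta_{jk}$ in~\eqref{eqn:K1}--\eqref{eqn:K2} are polynomial in $n$ (as in \cite[Step~1 of Lem.~61]{PPV10}) so that picking $\gamma = (c\log n)/n$ with $c$ large makes $\zeta\exp(-n\gamma)$ vanish. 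Setting $\log M_{jn} = n\rvC(\snr_j) + \sqrt{n}L_j$, the strict SVSI inequalities $\rvC(\snr_j+\inr_j) > \rvC(\snr_1)+\rvC(\snr_2)$ ensure the two ``marginal'' events have exponentially vanishing probability by the weak law of large numbers. The remaining two events reduce to $\log\bigl(W_j^n(Y_j^n\mid X_1^n,X_2^n)/Q_{Y_j^n\mid X_{-j}^n}(Y_j^n\mid X_{-j}^n)\bigr) \le \log M_{jn} + n\gamma$ for $j=1,2$; because $Y_j^n - g_{j,-j}X_{-j}^n = g_{jj}X_j^n + Z_j^n$, the $j$-th log-likelihood ratio is a function of $(X_j^n, Z_j^n)$ only, and by the independence of $Z_1^n$ and $Z_2^n$ together with $X_1^n \perp X_2^n$, the two events are genuinely independent. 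Hence the union-bound probability equals $1 - \Pr(\mathcal{E}_1^c)\Pr(\mathcal{E}_2^c)$.

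For the converse part I would apply Proposition~\ref{prop:conv_ic} with the same conditional output distributions $Q_{Y_j^n\mid X_{-j}^n}$ and with $P_{X_j^n}$ chosen as the uniform distribution over the codebook of user~$j$ (each codeword lies on the admissible power ball, which by the standard $n\to n+1$ augmentation from Section~\ref{sec:awgn} may be taken to have equal power $nS_j$). The same factorization argument as in the achievability part applies: the two log-likelihood ratios appearing in~\eqref{eqn:converse_ic} depend only on $(X_j^n, Z_j^n)$ respectively, so the joint lower bound equals a product of two marginal AWGN-type probabilities. The choice $\gamma = (c\log n)/n$ absorbs the $2\exp(-n\gamma)$ term, giving a matching asymptotic bound.

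To bring both halves home I would apply the single-user AWGN dispersion analysis (i.e., the approach of Theorem~\ref{thm:awgn_asy}, or equivalently the MolavianJazi--Laneman lifting technique based on Theorem~\ref{thm:func_clt}) to each marginal probability, yielding $\Pr(\mathcal{E}_j^c) = \Phi(-L_j/\sqrt{V_j}) + O(n^{-1/2})$; taking the product gives the characterization in~\eqref{eqn:cor_pt_ic}. The main obstacle is the handling of the spherical input distribution, which makes the log-likelihood ratios not literally sums of independent random variables, so a direct Berry--Esseen theorem is unavailable; one must either follow the Laplace-integral route used in the AWGN proof or invoke Theorem~\ref{thm:func_clt} to obtain the required Gaussian approximation to order $O(n^{-1/2})$. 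A secondary technical obstacle is verifying that the change-of-measure ratios $\zeta_{jk}$ (four of them, corresponding to the joint and conditional induced output distributions) really are only polynomial in $n$ under the spherical input; this needs a standard but nontrivial calculation controlling the density of the uniform-on-sphere $\ell_2$-projection.
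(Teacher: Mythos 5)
Your proposal is correct and follows essentially the same route as the paper: spherical codeword distributions, product (conditional) Gaussian output reference measures, elimination of the two ``marginal'' error events via SVSI, and factorization of the remaining two events followed by a single-user AWGN dispersion analysis of each. The only cosmetic difference is that you observe the \emph{exact} unconditional independence of the two remaining events (since $j_1$ is a function of $(X_1^n,Z_1^n)$ alone and $j_2$ of $(X_2^n,Z_2^n)$ alone, and these pairs are independent), whereas the paper derives the same factorization by computing the information-density covariance matrix in \eqref{eqn:ic_stats} and noting it is diagonal before applying the multivariate Berry--Esseen theorem; both devices rest on the independence of $Z_1^n$ and $Z_2^n$ and both are made rigorous with the MolavianJazi--Laneman lifting of Theorem~\ref{thm:func_clt}, which you correctly flag as the way around the non-i.i.d.\ spherical input.
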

A proof sketch of this result is provided in Section \ref{sec:prf_vs}.  The region $\calL(\eps; R_1^*,R_2^* ) $ for Case (iii) is sketched in   Fig.~\ref{fig:region_ic} for the symmetric case in which $V_1=V_2$.

A few remarks are in order:  First, for Case (i), $\calL(\eps; R_1^*, R_2^*)$  depends only on $\eps $ and $V_1$.   Note that $\sqrt{V_1}\Phi^{-1}(\eps )$ is   the optimum (maximum) second-order coding rate of the AWGN channel  (Theorem \ref{thm:awgn_asy}) from $X_1$ to $Y_1$  when there is no interference, i.e., $g_{12}=0$ in~\eqref{eqn:ic1}. The fact that user $2$'s parameters do not feature in \eqref{eqn:vert_bd_ic}   is because $R_2^*< \rvC(\snr_2)$. This implies that the channel $2$ operates     in large deviations (error exponents) regime so the second constraint in \eqref{eqn:carl} does not feature  in the second-order analysis, since the error probability of decoding message~$2$ is exponentially small.  An analogous observation was also made for the Slepian-Wolf problem in Chapter~\ref{ch:sw}. 

\begin{figure}[t]
\centering
\includegraphics[width = .7\columnwidth]{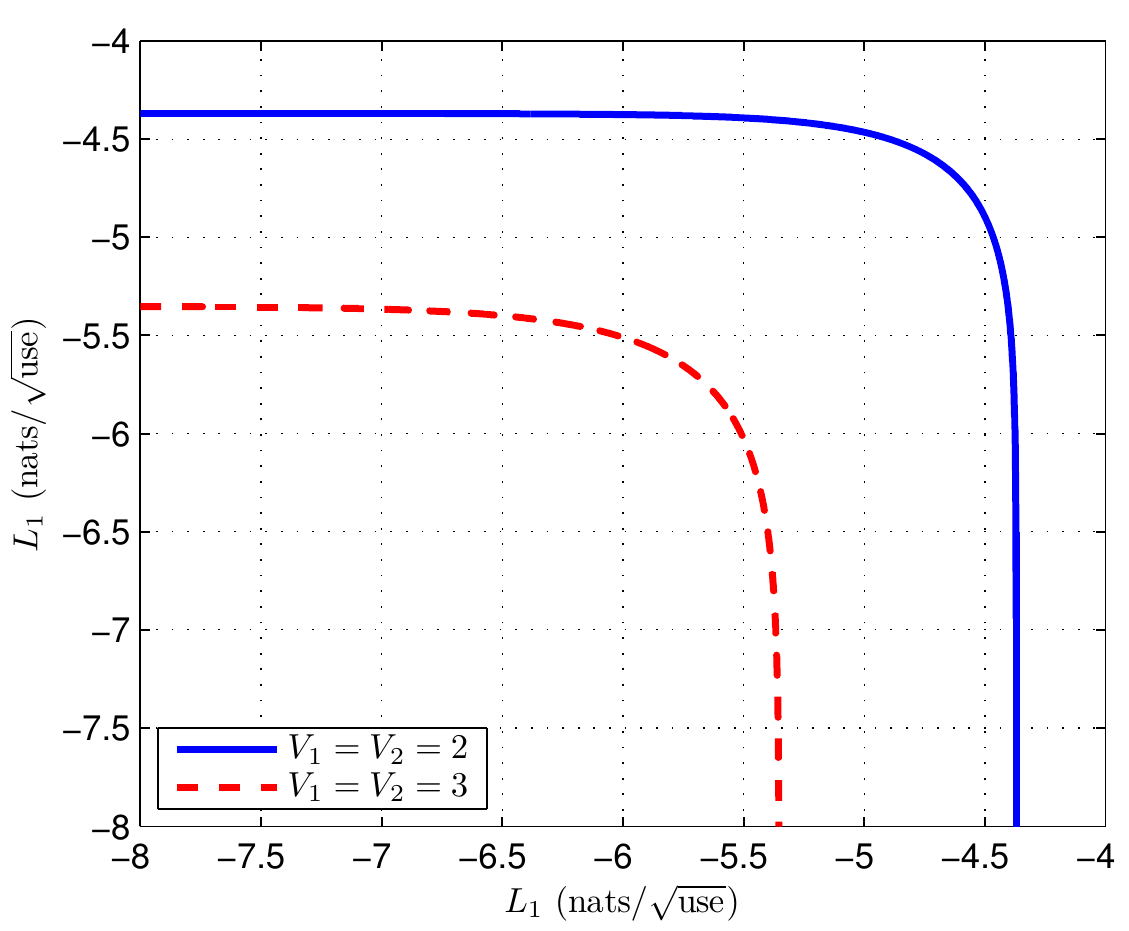}
\caption{Illustration of the region  $\calL(\eps; R_1^*,R_2^* )$ in Case (iii) with $\eps=10^{-3}$. The regions are to the bottom left of the boundaries indicated. }
\label{fig:region_ic}
\end{figure}

Second, notice that   for Case (iii), $\calL(\eps; R_1^*, R_2^*)$ is a function of $\eps$ and  {\em both} $V_1$ and $V_2$ as we are operating at rates near the {\em corner point} of the capacity region. Both constraints  in the capacity region in~\eqref{eqn:carl} are active. We provide an intuitive reasoning for the result in \eqref{eqn:cor_pt_ic}. 
Let $\calG_j$ denote the event that message $j=1,2$ is decoded correctly. The error probability criterion in~\eqref{eqn:error_ic} can be rewritten as 
\begin{align}
 \Pr \big( \calG_1  \cap\calG_2 \big)  \geq 1 - \eps  . \label{eqn:1minuseps}
 \end{align}
Assuming {\em independence} of the events $\calG_1$ and $\calG_2$, which is generally not true   in an IC because of   interfering signals,  
 \begin{align}
   \Pr \big(\calG_1 \big) \Pr \big(\calG_2\big)  \geq 1 - \eps  . \label{eqn:indep}
   \end{align}
Given  that the  number of messages for codebook $j$ satisfies 
\begin{equation}
M_{jn} = \big\lfloor\exp \big( n R^*_j  + \sqrt{n}L_j + o(\sqrt{n } )  \big) \big\rfloor, \label{eqn:size_code}
\end{equation}
the optimum probability of correct detection satisfies  (cf.~Theorem~\ref{thm:awgn_asy})
\begin{equation}
   \Pr\big(\calG_j \big) =\Phi \bigg(-\frac{L_j}{\sqrt{V_j}}\bigg) + o(1) \label{eqn:opt_prob}
   \end{equation}   
   which then (heuristically) justifies \eqref{eqn:cor_pt_ic}.  The proof makes the steps from~\eqref{eqn:1minuseps}--\eqref{eqn:opt_prob} rigorous.    Since $V_ 1$ and $V_2$ are the dispersions of the Gaussian channels without interference, this is the second-order analogue of   Carleial's result for Gaussian ICs in the VSI regime~\cite{Carleial75} because the {\em dispersions are not affected}. Note that no cross dispersion terms  are present in~\eqref{eqn:cor_pt_ic} unlike the Slepian-Wolf problem, where the correlation of two different entropy densities appears in the characterization of $\calL(\eps; R_1^*, R_2^*)$ for corner points $(R_1^*,R_2^*)$. 

 Finally, it is   somewhat surprising that in the converse, even though we must ensure that the codewords  $X_1^n$ and $X_2^n$ are independent,  we do not  need to leverage the wringing technique invented by Ahlswede~\cite{Ahl82}, which was used to prove that the discrete memoryless MAC admits a strong converse. This is thanks to Gaussianity which allows us to show that the first- and second-order  statistics of a certain set of information densities  in \eqref{eqn:dens_ic0}--\eqref{eqn:dens_ic} are independent of $\bx_1$ and $\bx_2$ belonging to their respective    power spheres.

\section{Proof Sketch of the Main Result}\label{sec:prf_vs}
The proof of Theorem~\ref{thm:disp_vs} is somewhat long and tedious so we only sketch the  key steps and refer the reader to \cite{Quoc14} for the detailed calculations.

\begin{proof} 
We begin with the converse. We may assume, using the same argument as that for the point-to-point AWGN channel (cf.~the Yaglom map trick~\cite[Ch.~9, Thm.~6]{conway} in the proof of Theorem~\ref{thm:awgn_asy}) that all the codewords $\bx_j(m_j)$  satisfy $\|\bx_j(m_j ) \|_2^2 = nS_j,j=1,2$. Choose the  auxiliary output distributions in Proposition~\ref{prop:conv_ic} to be the $n$-fold products of 
\begin{align}
Q_{Y_1|X_2} (y_1|x_2)  &:= \calN ( y_1 ; g_{12}x_2 , g_{11}^2 S_1 + 1) ,\quad\mbox{and} \label{eqn:ic_out1}\\
Q_{Y_2|X_1} (y_2|x_1)  &:= \calN ( y_2; g_{21}x_1, g_{22}^2 S_2 + 1) .\label{eqn:ic_out2}
\end{align}
These are the output distributions induced if the input distributions $\tilP_{X_1^n}$ and $\tilP_{X_2^n}$ are $n$-fold products of  $\calN(0, S_1)$ and $\calN(0, S_2)$ respectively.  Fix any achievable $(\eps, R_1^*, R_2^*)$-second-order coding rate pair $(L_1, L_2)$, i.e., $(L_1, L_2) \in\calL(\eps;R_1^*, R_2^*)$. Then, for every $\xi>0$, every  sequence of $(n, M_{1n}, M_{2n}, S_1, S_2,\eps_n)$-codes satisfies
\begin{equation}
\log M_{jn}\ge n  R_j^*  + \sqrt{n} (L_j -\xi) ,\quad j =1,2,  \label{eqn:large_en}
\end{equation}
for $n$ large enough.  To keep our notation succinct, define  the information densities
\begin{align}
j_1(\bx_1,\bx_2, Y_1^n)&\!:=\!\log\frac{W_1^n (Y_1^n | \bx_1, \bx_2) }{Q_{Y_1^n | X_2^n } (Y_1^n |\bx_2)} \!=\!\sum_{i=1}^n \log\frac{W_1(Y_{1i}|x_{1i},x_{2i}) }{Q_{Y_1|X_2}(Y_{1i}|x_{2i}) } ,\quad\mbox{and}\label{eqn:dens_ic0}\\
j_2(\bx_1,\bx_2, Y_2^n)&\!:=\! \log\frac{W_2^n (Y_2^n | \bx_1, \bx_2) }{Q_{Y_2^n | X_1^n } (Y_2^n |\bx_1)} \!=\!\sum_{i=1}^n \log\frac{W_2(Y_{2i}|x_{1i},x_{2i}) }{Q_{Y_2|X_1}(Y_{2i}|x_{1i}) }  . \label{eqn:dens_ic}
\end{align}
Let $C_j:=\rvC(\snr_j)$ for $j = 1,2$.  For any pair of vectors $(\bx_1,\bx_2)$ satisfying $\|\bx_j\|_2^2 =nS_j$,
\begin{align}
\bbE  \begin{bmatrix}
j_1(\bx_1,\bx_2, Y_1^n)\\ j_2(\bx_1,\bx_2, Y_2^n)
\end{bmatrix}  &=  n \begin{bmatrix}
C_1 \\ C_2
\end{bmatrix}, \quad\mbox{and}\label{eqn:ic_stats0} \\*
\cov\begin{bmatrix}
j_1(\bx_1,\bx_2, Y_1^n)\\ j_2(\bx_1,\bx_2, Y_2^n)
\end{bmatrix}   &=  n \begin{bmatrix}
V_1 & 0 \\ 0 & V_2
\end{bmatrix}. \label{eqn:ic_stats}
\end{align}
Importantly, notice that the covariance matrix in \eqref{eqn:ic_stats} is {\em diagonal}. This is due to the independence of the noises $Z_{1i}$ and $Z_{2i}$ and is the crux of the converse proof for the corner point case in~\eqref{eqn:cor_pt_ic}. 

Now let $\gamma :=n^{-3/4}$ in the probability in the non-asymptotic converse bound in~\eqref{eqn:converse_ic}. We denote  this probability as $\mathfrak{p}$. By the law of total probability, the complementary probability $1-\mathfrak{p}$ can be written as  
\begin{equation}
1\! -\!\mathfrak{p}\!=\!\int\!  \Pr\Bigg( \begin{bmatrix}
j_1(\bx_1,\bx_2, Y_1^n) \\ j_2(\bx_1,\bx_2, Y_2^n)
\end{bmatrix} \!>\!\begin{bmatrix}
\log M_{1n} \!-\! n^{1/4}\\ \log M_{2n}\!-\! n^{1/4}
\end{bmatrix}  \Bigg) \, \rmd P_{X_1^n}(\bx_1)\, \rmd P_{X_2^n}(\bx_2 ).
\end{equation}
By \eqref{eqn:large_en},  for large enough $n$,  the inner  probability  evaluates to 
\begin{align}
&\Pr\Bigg( \begin{bmatrix}
j_1(\bx_1,\bx_2, Y_1^n) \\ j_2(\bx_1,\bx_2, Y_2^n)
\end{bmatrix} > \begin{bmatrix}
\log M_{1n} -n^{1/4}\\ \log M_{2n}-n^{1/4}
\end{bmatrix}  \Bigg) \nn\\*
&\le \Pr \Bigg( \!\begin{bmatrix}
j_1(\bx_1,\bx_2, Y_1^n) \\ j_2(\bx_1,\bx_2, Y_2^n)
\end{bmatrix} >  \begin{bmatrix}
nR_1^* - \sqrt{n} (L_1 -2\xi) \\ nR_2^* - \sqrt{n} (L_2 -2\xi) 
\end{bmatrix}   \! \Bigg) \\
&\le \Psi \Bigg( \begin{bmatrix}
\sqrt{n} (C_1 - R_1^* )  - L_1  +   2\xi \\ \sqrt{n} (C_2  - R_2^* )  - L_2  + 2\xi 
\end{bmatrix} ; \bzero,   \begin{bmatrix}
V_1 & 0 \\ 0 & V_2
\end{bmatrix}\Bigg) +\frac{\kappa}{\sqrt{n}}   \label{eqn:apply_multi_be}\\
&= \prod_{j=1}^2\Phi\bigg(  \frac{\sqrt{n} (C_j - R_j^* ) - L_j  + 2\xi }{\sqrt{V_j}}\bigg)+\frac{\kappa}{\sqrt{n}} , \label{eqn:diag_mat}
\end{align}
where \eqref{eqn:apply_multi_be} is an application of the multivariate Berry-Esseen theorem (Corollary \ref{corollary:multidimensional-berry-esseen}) and $\kappa$ is a finite constant. Note that $\Psi$ denotes the bivariate generalization of the Gaussian cdf, defined in~\eqref{eqn:biv}. Equality~\eqref{eqn:diag_mat} holds because the covariance matrix  in \eqref{eqn:apply_multi_be} is diagonal by the calculation in~\eqref{eqn:ic_stats}. Since the bound in \eqref{eqn:diag_mat} does not depend on  $\bx_1,\bx_2$ as long as $\|\bx_j\|_2^2=nS_j$,  we have 
\begin{equation}
1-\mathfrak{p}\le\prod_{j=1}^2\Phi\bigg(  \frac{\sqrt{n} (C_j - R_j^* ) - L_j  + 2\xi }{\sqrt{V_j}}\bigg)+\frac{\kappa}{\sqrt{n}} , \label{eqn:diag_mat2}
\end{equation}
In Case (i), $R_1^*=C_1$ and $R_2^*< C_2$ so the term corresponding to $j=2$ in the above product converges to one and we have 
\begin{equation}
1-\mathfrak{p} \le \Phi\bigg(  \frac{ - L_1  + 2\xi }{\sqrt{V_j}}\bigg)+ \delta_n,
\end{equation}
where $\delta_n\to 0$ as $n \to\infty$. Thus   Proposition~\ref{prop:conv_ic} yields
\begin{equation}
\eps_n\ge \Phi\bigg(  \frac{ L_1  - 2\xi }{\sqrt{V_j}}\bigg)+\delta_n.
\end{equation}
Taking $\limsup$ on both sides yields
\begin{equation}
\limsup_{n\to\infty}\eps_n\ge \Phi\bigg(  \frac{ L_1  -2\xi }{\sqrt{V_j}}\bigg).
\end{equation}
Since $\limsup_{n\to\infty}\eps_n\le\eps$, we can write
\begin{equation}
L_1\le\sqrt{V_1}\Phi^{-1}(\eps) + 2\xi. 
\end{equation}
Since $\xi>0$ is arbitrarily small, we may take $\xi\downarrow 0$ to complete the proof of the converse part for Case (i). For Case (ii), swap the indices $1$ and $2$ in the above calculation. For Case (iii), the analysis until \eqref{eqn:diag_mat} applies. However, now $R_j^*=C_j$ for both $j = 1,2$ so both $\Phi(\cdot)$ functions in \eqref{eqn:diag_mat} are numbers strictly between $0$ and $1$. Consequently, we have 
\begin{equation}
1-\mathfrak{p} \le \Phi\bigg(  \frac{ - L_1  + 2\xi }{\sqrt{V_j}}\bigg)\Phi\bigg(  \frac{ - L_2  + 2\xi }{\sqrt{V_2}}\bigg)+ \frac{\kappa}{\sqrt{n}}.
\end{equation}
The rest of the arguments are  similar  to those for  Case (i).

For the direct part, similarly to the single-user case in \eqref{eqn:awgn_input}, we choose the input distributions
\begin{equation}
P_{X_j^n}(\bx_j) = \frac{\delta\{\|\bx_j\|_2^2-nS_j \} }{ A_n(\sqrt{nS_j} )},\qquad j =1,2, \label{eqn:inputs_ic}
\end{equation}
where $\delta\{\cdot\}$ is the Dirac $\delta$-function and $A_n(r)$ is the area of a sphere in $\bbR^n$ with radius $r$.   Clearly, the power constraints are satisfied with probability one. We choose the conditional output distributions  $Q_{Y_1^n|X_2^n}$ and $Q_{Y_2^n|X_1^n}$ as in  \eqref{eqn:ic_out1} and \eqref{eqn:ic_out2} and the output distributions  $Q_{Y_1^n}$ and $Q_{Y_2^n}$ to be the $n$-fold products of 
\begin{align}
Q_{Y_1}(y_1) &:= \calN(y_1; 0, g_{11}^2 S_1 + g_{12}^2 S_2 + 1),  \label{eqn:QY1}\quad\mbox{and}\\*
Q_{Y_2}(y_2) &:= \calN(y_2; 0, g_{21}^2 S_1 + g_{22}^2 S_2 + 1). \label{eqn:QY2}
\end{align}
With these choices of auxiliary output distributions, one can show the following technical lemma  concerning the ratios of the induced (conditional) output distributions and the chosen (conditional)  output distributions in Proposition~\ref{prop:ach_ic}. This is the multi-terminal analogue of  \eqref{eqn:change_meas} for the point-to-point AWGN channel and it allows us to  replace the inconvenient induced output distributions $P_{X_1^n  }W_1^n$ and $P_{X_1^n}P_{X_2^n}W_1^n$  (which is present in standard Feinstein-type achievability bounds, for example \cite{Han98}) with the convenient $Q_{Y_1^n|X_2^n}$ and $Q_{Y_1^n}$ without too much degradation in error probability. 
\begin{lemma}
Let $Q_{Y_1^n},Q_{Y_2^n}, Q_{Y_1^n|X_2^n}$ and $Q_{Y_2^n|X_1^n}$ be defined as the $n$-fold products of those in \eqref{eqn:QY1}, \eqref{eqn:QY2}, \eqref{eqn:ic_out1} and \eqref{eqn:ic_out2} respectively. Then, there exists a  finite constant $\bar{\zeta}$ such that  the ratios $\zeta_{jk}$ in \eqref{eqn:K1}--\eqref{eqn:K2} are uniformly bounded  by $\bar{\zeta}$ as $n$ grows.  Hence, their sum $\zeta=\sum_{j,k=1}^2 \zeta_{jk}$ is also uniformly bounded. 
\end{lemma}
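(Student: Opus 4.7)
The four ratios $\zeta_{jk}$ share a common structure: in each case, the numerator is the density at $\by_j$ induced on the (conditional) output of a Gaussian channel by a sphere-uniform input distribution, and the denominator is a Gaussian density with variance chosen to match the second moment of the induced output. The plan is to exploit spherical symmetry to reduce each supremum to a one-dimensional problem in the radial coordinate, and then to use the explicit Bessel-function representation of the density of a Gaussian convolved with a sphere-uniform law, together with Debye's uniform asymptotic expansion, to obtain a bound that is independent of $n$.

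Consider $\zeta_{11}$ first. By translation invariance, for fixed $\bx_2$ we may shift and work with $\br := \by_1 - g_{12}\bx_2$, so the numerator becomes the density at $\br$ of $g_{11} X_1^n + Z_1^n$, where $g_{11} X_1^n$ is uniform on the sphere of radius $r_1:=\sqrt{n\,\snr_1}$ in $\bbR^n$ and $Z_1^n\sim\calN(\bzero,\bI_n)$; the denominator is $\calN(\br;\bzero,(1+\snr_1)\bI_n)$. Both densities are spherically symmetric, so the ratio depends only on $\rho:=\|\br\|_2$. Conditioning on the angle between $\br$ and $g_{11}X_1^n$ and writing the angular integral as a Bessel-function moment, one obtains the closed form
\begin{equation*}
P_{X_1^n}W_1^n(\br\,|\,\bx_2) = \frac{e^{-(\rho^{2}+r_1^{2})/2}\,\Gamma(n/2)}{(2\pi)^{n/2}}\,\Big(\tfrac{2}{r_1\rho}\Big)^{(n-2)/2} I_{(n-2)/2}(r_1\rho).
\end{equation*}
Taking the ratio with the explicit Gaussian denominator and applying Stirling's approximation for $\Gamma(n/2)$ together with Debye's expansion for $I_\nu(\nu z)$ with $\nu=(n-2)/2$ and $z=2\sqrt{\snr_1 u}$ (where $u:=\rho^{2}/n$), the exponential part of the log-ratio reduces to $\tfrac{n}{2}g(u)$ for an explicit function $g$ whose unique maximizer is $u^\ast=1+\snr_1$ and which satisfies $g(u^\ast)=0$, so the exponential factor is $\le 1$; the remaining polynomial-in-$n$ prefactors then collapse after cancellation, yielding a constant bound uniform in $\rho$ and hence in $(\bx_2,\by_1)$. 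The case of $\zeta_{21}$ is identical after swapping indices.

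For $\zeta_{12}$ (and analogously $\zeta_{22}$), the numerator is now the unconditional density of $Y_1^n = g_{11}X_1^n + g_{12}X_2^n + Z_1^n$ under the product of two independent sphere-uniform inputs, compared against $\calN(\bzero,(1+\snr_1+\inr_1)\bI_n)$. The idea is to first integrate out $X_2^n$: the conditional density of $Y_1^n$ given $X_1^n$ has the Bessel form above (with $r$ replaced by $\sqrt{n\,\inr_1}$ and variance inflated by $g_{11}^2\|X_1^n\|^2/n+1 = \snr_1+1$), whose pointwise ratio to $\calN(\bzero,(1+\snr_1+\inr_1)\bI_n)$ is bounded by the same Debye-expansion argument applied to the $X_2^n$-integral. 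Averaging this pointwise bound over $X_1^n$ preserves it, giving the required constant $\bar\zeta$.

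The main technical obstacle is the uniformity of the Debye bound: the standard asymptotic $I_\nu(z)\sim e^{z}/\sqrt{2\pi z}$ is only valid when $z\gg \nu$, whereas here $z/\nu = 2\sqrt{\snr_1 u}$ ranges over $(0,\infty)$ as $\rho$ varies, so the uniform Debye expansion must be used throughout and the $O(\log n)$ correction terms in $\log\Gamma(n/2)$, $(r\rho/2)^{-(n-2)/2}$ and the Bessel prefactor $(1+4\snr_1 u)^{-1/4}$ must be tracked carefully to ensure they cancel rather than accumulate. Once this cancellation is verified at the exponential level (where the saddle-point computation gives $g(u^\ast)=0$) and at the polynomial level, the uniform constant $\bar\zeta$ follows.
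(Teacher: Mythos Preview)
The paper does not actually supply its own proof of this lemma; it cites the works of Le--Tan--Motani and MolavianJazi--Laneman. Your approach for $\zeta_{11}$ and $\zeta_{21}$---reducing by spherical symmetry to the radial coordinate, writing the sphere-plus-Gaussian density via a modified Bessel function, and bounding the ratio using Stirling together with the uniform Debye expansion---is exactly the argument those references (and Polyanskiy--Poor--Verd\'u, Lemma~61) use, and it is correct.

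The argument for $\zeta_{12}$ (and $\zeta_{22}$), however, has a genuine gap. You claim that after integrating out $X_2^n$, the conditional density of $Y_1^n$ given $X_1^n=\bx_1$ is a Bessel form ``with variance inflated by $\snr_1+1$''. This is not so: given $X_1^n=\bx_1$, one has $Y_1^n = g_{11}\bx_1 + g_{12}X_2^n + Z_1^n$, which is a Bessel form of radius $\sqrt{n\,\inr_1}$ \emph{centered at $g_{11}\bx_1$} with noise variance $1$, not a centered Bessel form with inflated variance. Consequently, the pointwise ratio of this conditional density to the final target $\calN(\bzero,(1+\snr_1+\inr_1)\bI_n)$ is \emph{not} uniformly bounded in $(\bx_1,\by_1)$: for $\by_1$ aligned with $g_{11}\bx_1$, a direct computation shows this ratio grows exponentially in $n$. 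Thus your step ``averaging this pointwise bound over $X_1^n$ preserves it'' fails, because there is no uniform pointwise bound to average.

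The fix, which is what the cited references do, is a two-stage iteration. First apply the single-sphere bound to the $X_2^n$-integral to get
\[
P_{X_2^n}W_1^n(\by_1\mid\bx_1)\;\le\; C\cdot\calN\big(\by_1;\,g_{11}\bx_1,\,(1+\inr_1)\bI_n\big),
\]
uniformly in $(\bx_1,\by_1)$. Then integrate over $X_1^n$: the right side becomes $C$ times the density of $g_{11}X_1^n+\tilde Z_1^n$ with $\tilde Z_1^n\sim\calN(\bzero,(1+\inr_1)\bI_n)$, which is again a sphere-plus-Gaussian density, now with noise variance $1+\inr_1$ and sphere radius $\sqrt{n\,\snr_1}$. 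A second application of the single-sphere bound gives $\le C'\cdot\calN(\bzero,(1+\snr_1+\inr_1)\bI_n)$, whence $\zeta_{12}\le CC'$. The variance inflation you wrote down does occur, but only \emph{after} the first Bessel bound has replaced the $X_2^n$-sphere by a Gaussian---it is not a property of the conditional density itself.
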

The proof of this lemma can be found in \cite{Quoc14} and \cite{Mol13}.

Because $X_1^n$ and $X_2^n$ are uniform on their respective power spheres, it is not straightforward to analyze the behavior of   random vector
\begin{equation}
\bB  =\begin{bmatrix}
B_{11} \\ B_{21} \\ B_{12} \\ B_{22}
\end{bmatrix}:= \begin{bmatrix}
 \log\frac{W_1^n(Y_1^n | X_1^n , X_2^n) }{Q_{Y_1^n|X_2^n} (Y_1^n|X_2^n) } \\ \log\frac{W_2^n(Y_2^n | X_1^n , X_2^n) }{Q_{Y_2^n|X_1^n} (Y_2^n|X_1^n) }  \\
\log\frac{W_1^n(Y_1^n | X_1^n , X_2^n) }{Q_{Y_1^n } (Y_1^n ) }  \\
\log\frac{W_2^n(Y_2^n | X_1^n , X_2^n) }{Q_{Y_2^n } (Y_2^n ) } 
\end{bmatrix},
\end{equation}
which is present in \eqref{eqn:direct_ic}.  Note that $\bB$ can be written as a sum of {\em dependent} random variables due to the product structure of  the chosen  output distributions.  To analyze the probabilistic behavior of $\bB$ for large $n$, we leverage   a technique by MolavianJazi and Laneman~\cite{Mol13}. The basic ideas are      as follows: Let $T_j^n \sim \calN(\bzero_n, \bI_{n\times n})$ for $j = 1,2$ be standard Gaussian random vectors that are independent of each other and of the noises $Z^n_j$.  Note that the input distributions in \eqref{eqn:inputs_ic} allow  us to write  $X_{ji}$ as
\begin{equation}
X_{ji} = \sqrt{nS_j } \frac{ T_{ji} }{ \| T_j^n \|_2 },\qquad i =  1,\ldots, n. \label{eqn:XJi}
\end{equation}
Indeed, $\|X_j^n\|_2^2 = nS_j$ with probability one from the random code construction and \eqref{eqn:XJi}. Now consider the  length-$10$  random vector $\mathbf{U}_i := ( \{U_{j1i} \}_{j=1}^4,\{U_{j2i} \}_{j=1}^4,  U_{9i}, U_{10i})$, where  
\begin{alignat}{2}
&U_{11i}  :=  1 - Z_{1i}^2,\quad   & &  U_{21i}   :=  g_{11}\sqrt{S_1 }T_{1i}Z_{1i},\,   \\*
&U_{31i}  :=  g_{12}\sqrt{S_2 }T_{2i}Z_{1i}, \quad  && U_{41i}  :=  g_{11}g_{12} \sqrt{S_1  S_2 }T_{1i} T_{2i}, \,   \\*
&U_{12i}  :=  1 - Z_{2i}^2,\quad  &&U_{22i}  :=  g_{22}\sqrt{S_2 }T_{2i}Z_{2i}, \,   \\
&U_{32i}  :=  g_{21}\sqrt{S_1 } T_{1i}Z_{2i},\quad &&U_{42i}  :=  g_{21}g_{22} \sqrt{S_1  S_2 }T_{1i} T_{2i}, \,   \\
&U_{9i}    :=  T_{1i}^2 -1,\quad && U_{10i}  :=  T_{2i}^2 -1.
\end{alignat}
Clearly, $\bU_i$ is \iid across channel uses. Furthermore, $\bbE[\bU_1]=\bzero$ and $\bbE[ \| \bU_1\|_2^3]$ is finite.  The covariance matrix of $\bU_1$  can also be computed. Define   the functions $\tau_{11},\tau_{12}:\mathbb{R}^{10}\to \mathbb{R}$  
\begin{align}
\tau_{11}(\mathbf{u}) &:=   \snr_1 \,   u_{11} + \frac{2 u_{21}}{\sqrt{1 + u_9}},\quad\mbox{and} \\
\tau_{12}(\mathbf{u}) &:= ( \snr_1 + \inr_1) u_{11} + \frac{2 u_{21}}{\sqrt{1 + u_9}} + \frac{2 u_{31}}{\sqrt{1 + u_{10}}}  \nn\\* 
                          &\qquad           + \frac{2 u_{41}}{\sqrt{1 + u_9} \sqrt{1 + u_{10}}},
\end{align}
for user $1$, and  analogously for user $2$. Then, through some algebra, one sees that   $B_{11}$ and $B_{12}$ can be written as 
\begin{align}
B_{11}  &= n \rvC( \snr_1) + \frac{n}{2 (1+\snr_1  ) }  \, \tau_{11} \bigg( \frac{1}{n}\sum_{i=1}^n\bU_i\bigg) ,\quad\mbox{and} \label{eqn:defB1} \\
B_{12} &= n \rvC( \snr_1+\inr_1) + \frac{n}{2 (1+\snr_1+\inr_1   ) }  \, \tau_{12} \bigg( \frac{1}{n}\sum_{i=1}^n\bU_i\bigg) . \label{eqn:defB3}
\end{align}
The  other random variables in  the  $\bB$ vector can be expressed similarly. 

From \eqref{eqn:defB1}--\eqref{eqn:defB3}, we are able to see the essence of the MolavianJazi-Laneman~\cite{Mol13} technique. The information densities $B_{jk}, j,k=1,2$ were initially difficult to analyze because the input random vectors $X_j^n$ in \eqref{eqn:inputs_ic} are uniform on power spheres. This choice of input distributions  results in codewords $X_j^n$ whose coordinates are {\em dependent} so standard limit theorems do not readily apply. By defining   higher-dimensional  random vectors $\bU_i$ and appropriate functions $\tau_{jk}$, one then sees that $\bB$ can be expressed as a {\em  function   of  a sum of \iid random vectors}.  Now, one may consider  a Taylor expansion of the differentiable functions $\tau_{jk}$ around the mean $\bzero$ to approximate   $\bB$ with a  sum of \iid random vectors. Through this analysis, one can rigorously show that 
\begin{align}
\frac{1}{\sqrt{n}}\left( \bB- n \begin{bmatrix}
\rvC( \snr_1)  \\ \rvC( \snr_2) \\ \rvC( \snr_1+\inr_1) \\ \rvC( \snr_2+\inr_2    ) 
\end{bmatrix} \right)\stackrel{\mathrm{d}}{\longrightarrow}\calN\left( \bzero, \begin{bmatrix}
V_1 & 0 & * & * \\ 0 & V_2 & *  &  * \\ * & * & * & *  \\ * & * & * & * 
\end{bmatrix} \right), \label{eqn:ic_conver}
\end{align}
where the entries marked as $*$ are finite and inconsequential for the subsequent analyses. Recall also that $V_j = \rvV(\snr_j)$ for $j = 1,2$.  In fact, the rate of convergence to Gaussianity in \eqref{eqn:ic_conver} can be quantified by means of Theorem~\ref{thm:func_clt}. 

With these preparations, we are ready to evaluate the probability in the direct bound in~\eqref{eqn:direct_ic}, which we denote as $\mathfrak{p}$.  We consider all three cases in tandem. Fix $(L_1,L_2 ) \in\calL(\eps; R_1^*, R_2^*)$.  Let the number of codewords in the $j^{\mathrm{th}}$ codebook be 
\begin{align}
M_{jn}  = \lfloor \exp\big( nR_j^* + \sqrt{n} L_j - 2 n^{1/4}   \big)  \rfloor
\end{align}
for $j=1,2$. It is clear that 
\begin{align}
\liminf_{n \to \infty} \frac {1}{\sqrt{n}} \big(\log M_{jn} - n R_j^*\big) \geq L_j.
\end{align}
Also let $\gamma  := n^{-3/4}$. With these choices, the complementary probability $1-\mathfrak{p}$ can be expressed as 
\begin{align}
1-\mathfrak{p} =\Pr\left( \bB> \begin{bmatrix}
n R_1^* + \sqrt{n} L_1 -   n^{1/4}  \\ n R_2^* + \sqrt{n} L_2-   n^{1/4}  \\ n (R_1^*+R_2^*) + \sqrt{n} (L_1+L_2) -  3 n^{1/4}   \\ n (R_1^*+R_2^*) + \sqrt{n} (L_1+L_2) -  3 n^{1/4}   
\end{bmatrix} \right). \label{eqn:comp_p}
\end{align}
Now by the SVSI assumption in \eqref{eqn:svsi2}--\eqref{eqn:svsi3},
\begin{equation}
 R_1^*+R_2^*  \le \rvC(\snr_1  ) +  \rvC(\snr_2 )  < \min\{\rvC(\snr_1 + \inr_1 ),\rvC(\snr_2 + \inr_2 )\}. \label{eqn:assump_svsi}
\end{equation}
The convergence in~\eqref{eqn:ic_conver} implies that 
\begin{equation}
\bbE[B_{12}] =n\rvC(\snr_1+\inr_1),\quad\mbox{and}\quad \bbE[B_{22}]=n\rvC(\snr_2+\inr_2).
\end{equation}
Since the expectations of $B_{12}$ and $B_{22}$ are strictly larger than $R_1^* +R_2^*$ (cf.~\eqref{eqn:assump_svsi}), by  standard Chernoff bounding techniques,
\begin{align}
\Pr\big(B_{12}\!\le \! n (R_1^*\!+\! R_2^*) \!+\! \sqrt{n} (L_1\!+\!L_2) \! - \! 3 n^{1/4} \big)&\!\le\!\exp(-n\xi),\,\, \mbox{and} \\* 
\Pr\big(B_{22}\! \le \! n (R_1^*\!+\! R_2^*) \!+ \!\sqrt{n} (L_1\!+\!L_2)\! - \! 3 n^{1/4} \big) &\!\le\!\exp(-n\xi) ,
\end{align}
for some $\xi>0$. 
Consequently, by the union bound, \eqref{eqn:comp_p} reduces to 
\begin{equation}
1-\mathfrak{p}\ge \Pr\left( \begin{bmatrix}
B_{11}\\ B_{21}
\end{bmatrix}  > \begin{bmatrix}
n R_1^* + \sqrt{n} L_1 -   n^{1/4}  \\ n R_2^* + \sqrt{n} L_2-   n^{1/4}   
\end{bmatrix} \right) -2\exp(-n\xi) .\label{eqn:ic_final}
\end{equation}
 Just as in the converse, one can then analyze this probability for the various cases using the convergence to Gaussianity in~\eqref{eqn:ic_conver}. This completes the proof of the direct part.
\end{proof} 
\newcommand{\underg}{\underline{g}} 
\newcommand{\overg}{\overline{g}}

\chapter{A Special Class of  Gaussian Multiple Access Channels} \label{ch:mac}

The multiple access channel (MAC) is a communication model in which many parties would like to simultaneously send independent messages over a common medium to a sole destination. Together with the broadcast, interference and relay channels, the MAC is a  fundamental building block of more complicated communication networks. For example, the MAC is an appropriate   model  for the uplink of cellular systems where multiple mobile phone users would like to communicate to a distant base station over a wireless medium.  The capacity region of the MAC is, by now, well known and goes back to the work by  Ahlswede~\cite{ahl71} and Liao~\cite{liao} in the early 1970s. The strong converse was established by Dueck~\cite{dueck81} and Ahlswede~\cite{Ahl82}. 


A yet simpler model, which we consider in this chapter, is the  {\em asymmetric  MAC} (A-MAC) as  shown in Fig.~\ref{fig:g_mac}. This channel model, also known as the MAC with {\em degraded message sets}   \cite[Ex.~5.18(b)]{elgamal} or the {\em cognitive}~\cite{dev06} MAC,  was first studied by Haroutunian~\cite{Har75}, Prelov \cite{Pre84} and van der Meulen~\cite{vdM85}. Here,  encoder $1$ has knowledge of {\em both} messages $m_1 $ and $m_2$, while encoder $2$ only has its own message $m_2$.  For the Gaussian case,  the channel law is $Y=X_1 + X_2 + Z$, where $Z$ is standard Gaussian noise. The capacity region \cite[Ex.~5.18(b)]{elgamal} is  the set of all $(R_1, R_2)$ satisfying
\begin{align}
R_1   \le \rvC\big( (1-\rho^2) S_1 \big),\quad \mbox{and} \quad   R_1+ R_2   \le \rvC\big( S_1 + S_2 + 2\rho\sqrt{  S_1S_2} \big) \label{eqn:gmac_region}
\end{align}
for some $\rho \in [0,1]$ where $S_1$ and $S_2$ are the admissible transmit powers. Rate pairs in \eqref{eqn:gmac_region} are achieved using superposition coding~\cite{cover72}. This region for $S_1=S_2=1$ is shown in Fig.~\ref{fig:cr}. Observe that $\rho \in [0,1]$ parametrizes points on the boundary. Each point on the  curved part of the boundary is achieved by a unique   bivariate Gaussian distribution.

 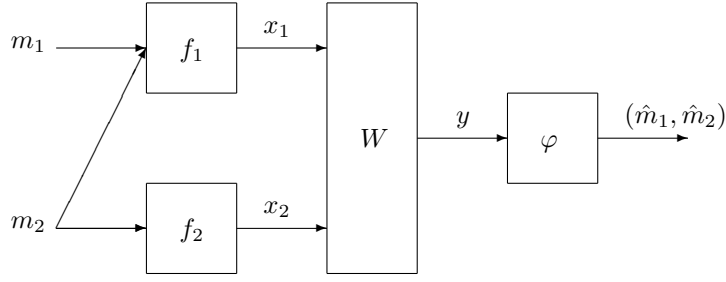
\begin{figure}[t]
\centering
\setlength{\unitlength}{.4mm}
\begin{picture}(190, 90)
\put(0, 15){\vector(1, 0){30}}
\put(60, 15){\vector(1,0){30}}
\put(30, 0){\line(1, 0){30}}
\put(30, 0){\line(0,1){30}}
\put(60, 0){\line(0,1){30}}
\put(30, 30){\line(1,0){30}}

\put(0, 15){\vector(1,2){30}}

\put(0, 75){\vector(1, 0){30}}
\put(60, 75){\vector(1, 0){30}}
\put(0, 15){\vector(1, 0){30}}
\put(30, 60){\line(1, 0){30}}
\put(30, 60){\line(0,1){30}}
\put(60, 60){\line(0,1){30}}
\put(30, 90){\line(1,0){30}}

\put(90, 0){\line(1, 0){30}}
\put(90, 0){\line(0,1){90}}
\put(120, 0){\line(0,1){90}}
\put(90, 90){\line(1,0){30}}

\put(-18, 15){  $m_2$}
\put(-18, 75){  $m_1$}
\put(66, 20){  $x_2$}
\put(66, 80){  $x_1$}
\put(41, 12){$f_2$ } 
\put(41, 72){$f_1$ } 
\put(101, 42){$W$} 

\put(130,50){  $y$}


\put(186, 50){  $(\hatm_1,\hatm_2)$}

\put(120, 45){\vector(1, 0){30}}

\put(150, 30){\line(1, 0){30}}
\put(150, 30){\line(0,1){30}}
\put(180, 30){\line(0,1){30}}
\put(150, 60){\line(1,0){30}}


\put(180, 45){\vector(1, 0){30}}

\put(161, 42){$\varphi$ } 
  \end{picture}
  \caption{Illustration of the asymmetric  MAC or A-MAC    }
  \label{fig:g_mac}
\end{figure}

\begin{figure} 
\centering
\begin{overpic}[width = .725\columnwidth]{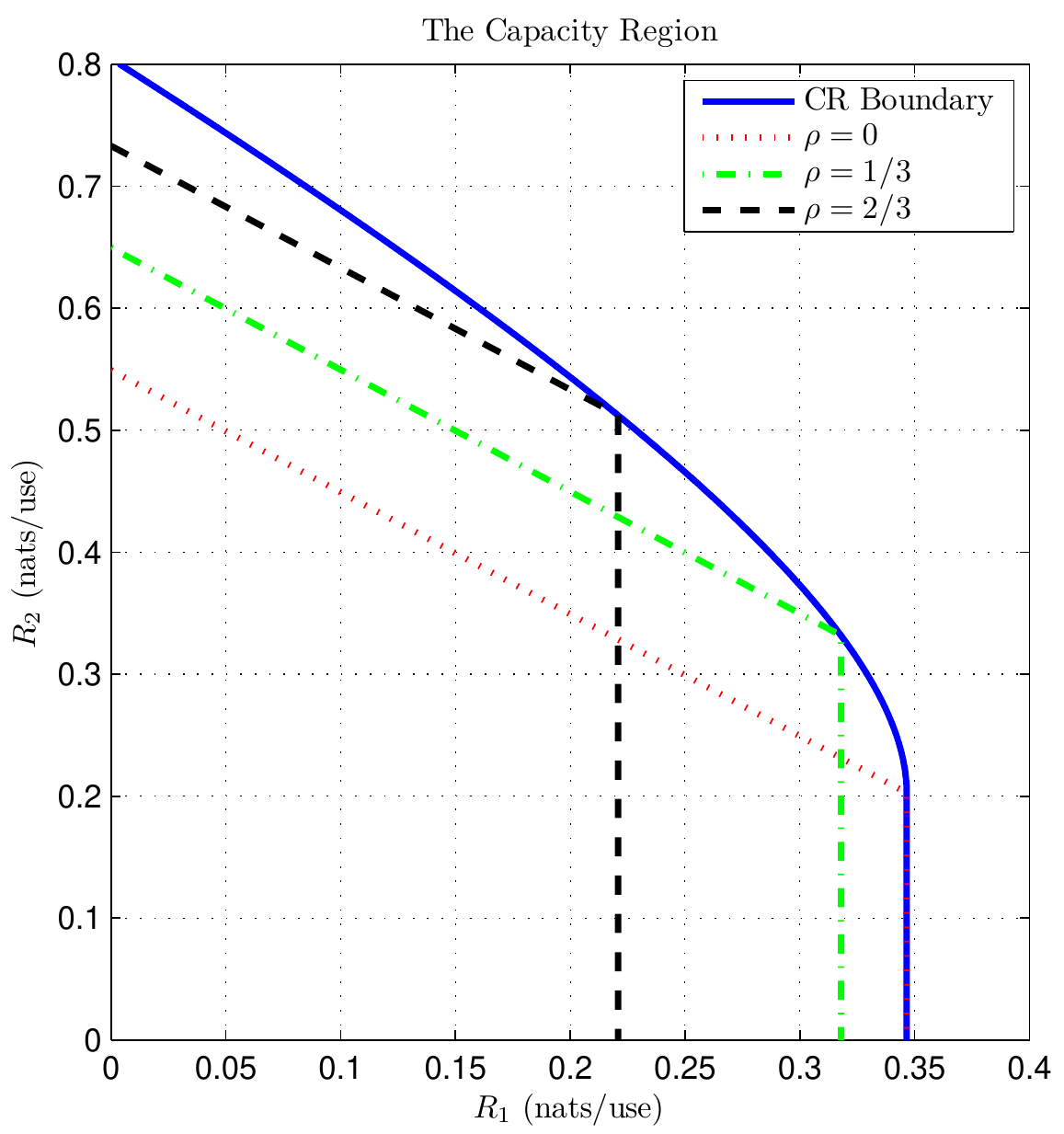}
\put(80,23){\circle*{2}}
\put(82,23){(i)}
\put(9,96){(iii)}
\put(55,66){(ii)}
\put(55,63){\circle*{2}}
\put(10,94){\circle*{2}}
\thicklines 
\put(40, 48){\vector(1, 1){15}}
\put(65, 42.5){\vector(-1, 2){10}}  
\put(38, 45){\mbox{$\bv$}}
\put(65, 39.5){\mbox{$\bv'$}}
\end{overpic}
\caption{Capacity region   of a Gaussian   A-MAC where $S_1 =  S_2 = 1$. The three cases of Theorem~\ref{thm:local} are   illustrated.  Each $\rho \in (0,1]$ corresponds to a trapezoid of rate pairs   achievable by a unique input distribution   $\calN(\bzero,\bSigma(\rho))$. However, coding with a fixed input distribution 
is insufficient to achieve all angles of approach to a   boundary point 
 as there are  regions within $\calC$   not in the 
trapezoid parametrized by $\rho$. Suppose $\rho=\frac{2}{3}$, one can 
approach the corner point in the direction indicated by the vector $\bv$ using the fixed input distribution 
    $\calN(\bzero,\bSigma(\frac{2}{3}))$, but the same is not true of
the direction indicated by $\bv'$, since the approach is from  {\em outside} the trapezoid.  }
\label{fig:cr}
\end{figure}

In this chapter, we show that the assumptions concerning  Gaussianity and asymmetry of the messages sets (i.e., partial cooperation) allow us to determine the second-order asymptotics of this model. The main result here is  of a somewhat different flavor compared to results in previous chapters on multi-terminal information theory problems because the second-order rate region $\calL(\eps;R_1^*, R_2^*)$ is characterized not only in terms of covariances of vectors of information densities or {\em dispersions}.   Indeed, we will see that there is a subtle interaction between the  {\em derivatives  of the first-order capacity terms}  in~\eqref{eqn:gmac_region} with respect to $\rho$, and the dispersions in the description of $\calL(\eps;R_1^*, R_2^*)$. The fact that the derivatives appear in the answer to an information-theoretic question appears to be novel.\footnote{In fact, the dispersion of the compound channel \cite{Pol13b} is a function of the dispersions of the constitudent channels and the derivatives of the  capacity terms.}  The difference  in the characterization of $\calL(\eps;R_1^*, R_2^*)$ compared to second-order regions in previous chapters  is because, with the union over $\rho\in [0,1]$,  the boundary of the capacity region  in \eqref{eqn:gmac_region} is curved in contrast to the polygonal capacity regions in previous chapters. We will see that the curvature of the boundary results in the second-order region $\calL(\eps;R_1^*, R_2^*)$ being a {\em half-space} in $\bbR^2$. This half-space  is characterized by a slope and intercept, both of which are expressible in terms of the  dispersions, together with the derivatives of the capacities.

Intuitively, the extra derivative term  arises because we need to account for all possible angles of approach to a boundary point $(R_1^*,R_2^*)$. Using a  sequence of input distributions parametrized by a {\em single} correlation parameter $\rho$ not depending on the blocklength 
turns out to be  suboptimal in the second-order sense, as we can only achieve the angles of approach within the specific trapezoid parametrized by  $\rho$ (see Fig.~\ref{fig:cr} and its caption). Thus, our coding strategy is to let the sequence of input distributions vary with the blocklength. In particular, they are parametrized by a sequence $\{\rho_n\}_{n\in\bbN}$ that converges to $\rho$ with speed $\Theta(\frac{1}{\sqrt{n}})$. A Taylor expansion of the first-order capacity   vector then yields the derivative term.

Similarly to the Gaussian IC with SVSI, the achievability proof uses the coding on spheres strategy in which pairs of codewords are drawn uniformly at random from high-dimensional spheres. However, because the underlying coding strategy involves superposition coding, the analysis is more subtle. In particular, we are required to bound the ratios of certain induced output densities  and product output densities. 
The proof of the converse part   involves several new ideas including (i) reduction to  {\em almost  constant correlation type subcodes}; (ii) evaluation of a  {\em global} outer bound and (iii) specialization of the global outer bound to obtain  {\em local} second-order asymptotic results.

The material in this chapter is   based on work by Scarlett and Tan~\cite{ScarlettTan}.

\section{Definitions and Non-Asymptotic Bounds}
The model we consider is as follows:
\begin{align}
Y_{i}  = X_{1i} + X_{2i} + Z_i
\end{align}
where $i = 1,\ldots, n$ and $Z_i \sim\calN(0,1)$ is white Gaussian noise.  The channel gains are set to unity  without loss of generality.  Thus, the channel transition  law is 
\begin{equation}
W(y|x_1, x_2) = \frac{1}{\sqrt{2\pi} } \exp\left( -\frac{1}{2}(y- x_1 - x_2)^2\right). 
\end{equation}
The channel operates in a stationary and memoryless manner. 

We define an {\em $(n, M_{1}, M_{2}, S_1, S_2,\eps)$-code}  for the Gaussian A-MAC  which includes two encoders $f_1 : \{ 1,\ldots, M_{1}\} \times   \{1,\ldots, M_2\}\to\bbR^n$, $f_2:\{1,\ldots, M_2\}\to\bbR^n$ and a decoder $\varphi:\bbR^n \to  \{ 1,\ldots, M_{1} \} \times   \{1,\ldots, M_2\}$ such that the following {\em power constraints} are satisfied
\begin{align}
\big\|  f_1(m_1, m_2 ) \big\|_2^2 =\sum_{i=1}^n f_{1i}(m_1)^2 &\le nS_1,\quad\mbox{and} \label{eqn:snr_mac1}\\* 
\big\|  f_2( m_2 ) \big\|_2^2 = \sum_{i=1}^n f_{2i}(m_2)^2 &\le nS_2 ,\label{eqn:snr_mac2}
\end{align}
and the {\em average error probability} 
\begin{equation}\label{eqn:error_mac}
\frac{1}{M_1 M_2}\sum_{m_1=1}^{M_1}\sum_{m_2=1}^{M_2}W^n\big( \bbR^n\!\times\!\bbR^n\setminus\calD_{m_1, m_2}\big| f_1(m_1, m_2) , f_2(m_2) \big)\!\le\!\eps.
\end{equation}
As with the Gaussian IC discussed in the previous chapter, $\calD_{m_1, m_2}$ denotes the decoding region for messages $(m_1, m_2)$ and $S_j$ represents the admissible power for the $j^{\mathrm{th}}$ user.

The following non-asymptotic bounds are easily derived. They are analogues of the bounds by Feinstein~\cite{Feinstein} and Verd\'u-Han~\cite{VH94} (or Hayashi-Nagaoka \cite{Hayashi03}). See  Boucheron-Salamatian~\cite{bouch00} for the proofs of   similar results.

\begin{proposition}[Achievability bound for  the A-MAC] \label{prop:ach_mac}
Fix any  input joint distribution   $P_{X_1^n   X_2^n}$ whose support satisfies the power constraints in \eqref{eqn:snr_mac1}--\eqref{eqn:snr_mac2},  i.e., $\|X_j^n\|_2\le nS_j$ with probability one.    For every $n\in\bbN$, every $\gamma>0$, any choice of output  distributions $Q_{Y^n | X_2^n}$ and  $Q_{Y^n}$, and any two sets $\calA_1 \subset\calX_2^n\times\calY^n$ and $\calA_{12}\subset \calY^n$,      there exists an $(n,M_{1}, M_{2}, S_1, S_2, \eps)$-code for the   A-MAC such that 
\begin{align}
\eps & \le  \Pr \bigg(  \log\frac{W^n(Y^n | X_1^n , X_2^n) }{Q_{Y^n|X_2^n} (Y^n|X_2^n) } \! \le\!\log M_{1} \!+\! n\gamma   \quad\mbox{or } \nn\\*
 & \hspace{.5in}\log\frac{W^n(Y^n | X_1^n , X_2^n) }{Q_{Y^n } (Y^n ) }\!\le\!\log (M_{1} M_{2}) \!+\! n\gamma   \bigg)   \nn\\*
&\hspace{.5in}+\Pr\big( (X_2^n,Y^n) \notin\calA_1 \big)+\Pr\big( Y^n  \notin\calA_{12}\big) +\zeta \exp(-n\gamma), \label{eqn:direct_mac}
\end{align}
where $\zeta=\zeta_1+\zeta_{12}$ and 
\begin{align}
\zeta_{1} & := \sup_{(\bx_2,\by ) \in\calA_1} \frac{ P_{X_1^n|X_2^n} W^n (\by | \bx_2) }{ Q_{Y^n|X_2^n}(\by | \bx_2) },\quad \zeta_{12} :  = \sup_{ \by \in\calA_{12} } \frac{ P_{X_1^n X_2^n }  W^n (\by  ) }{ Q_{Y^n }(\by ) }.   \label{eqn:K1_mac}
\end{align}
\end{proposition}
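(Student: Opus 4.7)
The plan is to establish the bound via a random coding argument with superposition structure, matching the degraded message-set nature of the A-MAC, combined with threshold (Feinstein-type) decoding. First, I would generate $M_2$ independent codewords $X_2^n(m_2) \sim P_{X_2^n}$ (the marginal of $P_{X_1^n X_2^n}$), and then, for each $m_2$ and each $m_1$, generate $X_1^n(m_1,m_2)$ conditionally independently according to $P_{X_1^n|X_2^n}(\cdot \mid X_2^n(m_2))$. The decoder, upon receiving $\by$, searches for the unique pair $(\hat m_1,\hat m_2)$ simultaneously satisfying the two threshold conditions
\begin{align*}
\log\frac{W^n(\by \mid X_1^n(\hat m_1,\hat m_2), X_2^n(\hat m_2))}{Q_{Y^n|X_2^n}(\by \mid X_2^n(\hat m_2))} &> \log M_1 + n\gamma, \\
\log\frac{W^n(\by \mid X_1^n(\hat m_1,\hat m_2), X_2^n(\hat m_2))}{Q_{Y^n}(\by)} &> \log(M_1 M_2) + n\gamma,
\end{align*}
declaring an error otherwise.

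Next, by symmetry of the random code, I would assume $(m_1,m_2)=(1,1)$ was transmitted and decompose the error event into three pieces: $\calE_0$, the event that the correct codewords fail either threshold (this directly yields the first probability in \eqref{eqn:direct_mac}); $\calE_1$, the event that some $\hat m_1\ne 1$ with $\hat m_2=1$ satisfies the first (and hence both) threshold; and $\calE_{12}$, the event that some $\hat m_2\ne 1$ satisfies the second threshold for some $\hat m_1$. The superposition codebook makes the codewords for $\hat m_1\ne 1$ (with $\hat m_2=1$) conditionally i.i.d.\ from $P_{X_1^n|X_2^n}(\cdot|X_2^n(1))$ given $X_2^n(1)$, while those for $\hat m_2\ne 1$ are fully independent of $(X_1^n(1,1),X_2^n(1),Y^n)$ with law $P_{X_1^n X_2^n}$—exactly the structure needed to invoke the union bound followed by a change of measure.

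The key step is to bound $\Pr(\calE_1)$ and $\Pr(\calE_{12})$ using the Markov-type inequality $\mathbbm{1}\{A>t\}\le A/t$ applied to the likelihood ratios, which converts the threshold probability into an integral of the induced output distribution against the reference $Q$. Concretely, conditioned on $X_2^n(1)=\bx_2$ and $Y^n=\by$, one obtains
\[
\Pr\bigl(\text{threshold 1 holds for } \bar X_1^n\bigr) \le \frac{1}{M_1\exp(n\gamma)}\cdot\frac{P_{X_1^n|X_2^n}W^n(\by|\bx_2)}{Q_{Y^n|X_2^n}(\by|\bx_2)},
\]
and the last ratio is exactly what $\zeta_1$ controls, but only when $(\bx_2,\by)\in\calA_1$. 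Outside $\calA_1$ the ratio may be uncontrolled, so I split the expectation and bound the atypical contribution by $\Pr((X_2^n,Y^n)\notin\calA_1)$. After union-bounding over the at most $M_1-1$ competing $\hat m_1$, the factor $M_1$ cancels, yielding the clean tail $\zeta_1\exp(-n\gamma)$. A parallel argument using the full joint $P_{X_1^n X_2^n}W^n$ versus $Q_{Y^n}$, restricted to $\by\in\calA_{12}$, handles $\calE_{12}$ and produces $\zeta_{12}\exp(-n\gamma)$; summing these with $\Pr(\calE_0)$ gives the claimed bound, and existence of at least one deterministic code follows by the standard averaging over the random ensemble.

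The main obstacle, and the reason the bound must be stated with the auxiliary sets $\calA_1$ and $\calA_{12}$, is that the ratios $P_{X_1^n|X_2^n}W^n/Q_{Y^n|X_2^n}$ and $P_{X_1^n X_2^n}W^n/Q_{Y^n}$ are generally unbounded over all $(\bx_2,\by)$ or $\by$, especially for continuous alphabets such as the Gaussian A-MAC where one typically chooses $P_{X_1^n X_2^n}$ uniformly on products of power spheres while $Q_{Y^n}$ is a product Gaussian; the suprema $\zeta_1,\zeta_{12}$ over the whole output space can blow up with $n$. Restricting to the ``typical'' sets $\calA_1,\calA_{12}$—chosen so that the ratios remain $O(\mathrm{poly}(n))$ while $\Pr((X_2^n,Y^n)\notin\calA_1)$ and $\Pr(Y^n\notin\calA_{12})$ decay sufficiently fast—is precisely the mechanism by which this obstacle is circumvented, and is the feature that makes the bound effective when later applied to obtain second-order asymptotics for the Gaussian A-MAC.
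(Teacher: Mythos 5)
Your proposal is correct and follows the standard route one would take for this kind of bound: superposition random coding matched to the degraded message set structure, a joint threshold decoder, union bounds over the two types of competitors (same $m_2$, wrong $m_1$; and wrong $m_2$), the change-of-measure step $\bbI\{A>t\}\le A/t$ to convert each pairwise error probability into a ratio of the induced output density to the auxiliary $Q$, and the split over the typicality sets $\calA_1,\calA_{12}$ to keep those ratios controlled. The paper itself does not spell out the proof (it cites Boucheron--Salamatian for similar bounds), but your construction and error decomposition are precisely the argument that is implicit there, and your explanation of why the sets $\calA_1,\calA_{12}$ are needed—because the ratios $\zeta_1,\zeta_{12}$ would otherwise be uncontrolled, particularly for the spherical-input / Gaussian-output choice used in the A-MAC application—is the right reading of why the proposition is stated in this form.

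One small slip in phrasing: you write that a competitor $(\hat m_1,1)$ with $\hat m_1\ne 1$ ``satisfies the first (and hence both) threshold.'' Satisfying the first threshold does not imply satisfying the second, and the implication in your parenthetical is not true. What is true, and what your subsequent computation actually uses, is the converse direction: the event that a competitor satisfies \emph{both} thresholds is contained in the event that it satisfies the first, so $\Pr(\text{both})\le\Pr(\text{first})$. The argument is therefore valid as executed; only the justifying clause is stated backwards. The analogous remark applies to $\calE_{12}$, where you bound by the event that the second threshold holds, again because the joint event is contained in it.
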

Again notice that our freedom to choose $Q_{Y^n|X_2^n}$ and $Q_{Y^n}$ results in the need to control $\zeta_1$ and $\zeta_{12}$, which are the maximum values of the ratios of the densities induced by the code with respect to the chosen output densities. The maximum values are restricted to those typical values of $(\bx_2,\by)$ and $\by$ indicated by the chosen sets $\calA_{1}$ and $\calA_{12}$.

\begin{proposition}[Converse bound for  the  A-MAC] \label{prop:con_mac}
For every $n\in\bbN$, every $\gamma>0$ and for any choice of output distributions $Q_{Y^n | X_2^n}$ and  $Q_{Y^n}$,     every $(n,M_{1}, M_{2}, S_1, S_2, \eps)$-code for the A-MAC must satisfy 
\begin{align}
\eps  \ge  \Pr \bigg(  \log\frac{W^n(Y^n | X_1^n , X_2^n) }{Q_{Y^n|X_2^n} (Y^n|X_2^n) } &\! \le\!\log M_{1} \!-\! n\gamma   \quad\mbox{or } \nn\\ 
\log\frac{W^n(Y^n | X_1^n , X_2^n) }{Q_{Y^n } (Y^n ) } & \!\le\!\log (M_{1} M_{2}) \!-\! n\gamma   \bigg)  \!-\!2  \exp(-n\gamma), \label{eqn:con_mac}
\end{align}
for some input joint distribution   $P_{X_1^n X_2^n}$  whose support satisfies the power constraints in \eqref{eqn:snr_mac1}--\eqref{eqn:snr_mac2}.
\end{proposition}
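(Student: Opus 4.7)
The plan is to adapt the Verd\'u--Han \cite{VH94} / Hayashi--Nagaoka \cite{Hayashi03} change-of-measure converse to the A-MAC, exploiting the asymmetric message structure to handle the first information density separately. Let $(M_1,M_2)$ be uniform on $\{1,\ldots,M_1\}\times\{1,\ldots,M_2\}$; together with the encoders $f_1,f_2$ this induces the joint input distribution $P_{X_1^n X_2^n}$ appearing in the bound, whose support lies in the power-constrained set by \eqref{eqn:snr_mac1}--\eqref{eqn:snr_mac2}. Define the two ``small information density'' events
\begin{align}
\calE_1 &:= \bigg\{ \log\frac{W^n(Y^n|X_1^n,X_2^n)}{Q_{Y^n|X_2^n}(Y^n|X_2^n)} \le \log M_1 - n\gamma \bigg\}, \\
\calE_{12} &:= \bigg\{ \log\frac{W^n(Y^n|X_1^n,X_2^n)}{Q_{Y^n}(Y^n)} \le \log(M_1 M_2) - n\gamma \bigg\}.
\end{align}
The target inequality is equivalent, after taking complements, to $\Pr(\text{correct}) \le \Pr(\calE_1^c \cap \calE_{12}^c) + 2\exp(-n\gamma)$, which by a union bound reduces to proving $\Pr(\text{correct},\calE_1) \le \exp(-n\gamma)$ and $\Pr(\text{correct},\calE_{12}) \le \exp(-n\gamma)$ separately.

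The first of these bounds is where the asymmetric structure is used. On $\calE_1$ we may replace $W^n(\by|f_1(m_1,m_2),f_2(m_2))$ by $M_1 \exp(-n\gamma) \, Q_{Y^n|X_2^n}(\by|f_2(m_2))$ inside the indicator. Crucially, $Q_{Y^n|X_2^n}$ depends on the codeword only through $f_2(m_2)$, so after averaging,
\begin{align}
\Pr(\text{correct},\calE_1)
&\le \frac{\exp(-n\gamma)}{M_2}\sum_{m_2=1}^{M_2}\sum_{m_1=1}^{M_1}\int_{\calD_{m_1,m_2}} Q_{Y^n|X_2^n}\big(\by\,\big|\,f_2(m_2)\big)\,\rmd\by \\*
&\le \frac{\exp(-n\gamma)}{M_2}\sum_{m_2=1}^{M_2} 1 = \exp(-n\gamma),
\end{align}
where the key step is that for each fixed $m_2$ the decoding regions $\{\calD_{m_1,m_2}\}_{m_1}$ are disjoint subsets of $\bbR^n$, so their union has $Q_{Y^n|X_2^n}(\cdot\,|\,f_2(m_2))$-measure at most one. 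The second bound is simpler: on $\calE_{12}$, replace $W^n$ by $M_1 M_2\exp(-n\gamma) \, Q_{Y^n}$, and use the fact that the full collection $\{\calD_{m_1,m_2}\}_{m_1,m_2}$ consists of disjoint subsets of $\bbR^n$, so the total $Q_{Y^n}$-mass is at most one.

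Combining the two change-of-measure estimates via
\begin{equation}
\Pr(\text{correct}) \le \Pr(\calE_1^c \cap \calE_{12}^c) + \Pr(\text{correct},\calE_1) + \Pr(\text{correct},\calE_{12})
\end{equation}
and using $\eps \ge 1-\Pr(\text{correct})$ yields \eqref{eqn:con_mac}. There is no real technical obstacle beyond this bookkeeping; the only subtlety is making sure that when changing to the conditional reference measure $Q_{Y^n|X_2^n}$ one leverages the fact that encoder~$2$'s codeword $f_2(m_2)$ does not depend on $m_1$, so that the ``1/$M_1$'' factor from averaging over $m_1$ correctly cancels the ``$M_1$'' threshold introduced by $\calE_1$. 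This asymmetric cancellation is precisely what allows the single-user-like bound on the first information density and explains why the A-MAC converse involves only two (not three) information density events, which in turn is what makes it match the achievability bound of Proposition~\ref{prop:ach_mac} in the second-order sense.
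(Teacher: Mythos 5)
Your proof is correct and takes exactly the change-of-measure route that the paper references (the non-asymptotic converse in the Verd\'u--Han / Hayashi--Nagaoka style, cf.\ Boucheron--Salamatian): decompose the probability of correct decoding over the events $\calE_1$ and $\calE_{12}$, then for each use the small-information-density condition to swap $W^n$ for the auxiliary reference measure times the appropriate threshold, and finally exploit disjointness of the decoding regions to sum the reference measure to at most one. The observation you single out---that $Q_{Y^n|X_2^n}$ is evaluated at $f_2(m_2)$, which is independent of $m_1$, so that the $M_1$ from the threshold cancels against the $1/M_1$ from averaging over $m_1$---is precisely the place where the degraded-message-set structure is used, and it is what reduces the converse to two (rather than three) information-density events.
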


\section{Second-Order Asymptotics}
As in the previous chapters on multi-terminal problems, given a point on the boundary of the capacity region $(R_1^* , R_2^*)$, we are interested in characterizing the set of all $(L_1,L_2)$ pairs for which there exists a sequence of $(n, M_{1n}, M_{2n}, S_1, S_2,\eps_n)$-codes such that 
\begin{align}
\liminf_{n\to\infty}\frac{1}{\sqrt{n}} \big( \log M_{jn}-nR_j^*\big)\ge L_j, \,\,\mbox{and}\,\,\limsup_{n\to\infty}\eps_n\le\eps .  \label{eqn:2nd_mac}
\end{align}
We denote this set as $\calL(\eps; R_1^*, R_2^*) \subset\bbR^2$. 

\subsection{Preliminary Definitions}

Before we can state the main results,  we need to define a few more fundamental quantities. For a pair of rates $(R_1, R_2)$, the {\em rate vector}  is
\begin{align}
\bR := \begin{bmatrix}
R_1 \\ R_1 + R_2
\end{bmatrix} . \label{eqn:rate_vec}
\end{align}
The input distribution to achieve a point on the boundary 
characterized by some $\rho \in [0,1]$ is a $2$-dimensional 
Gaussian distribution with zero mean and covariance matrix 
\begin{align} \label{eqn:sigmamatrix}
    \bSigma(\rho) :=\begin{bmatrix}
    S_1 & \rho \sqrt{S_1S_2}\\
    \rho \sqrt{S_1S_2} & S_2
    \end{bmatrix} .
\end{align}
The corresponding {\em mutual information vector} is given by 
\begin{align}
\bI(\rho)=\begin{bmatrix}
I_1(\rho) \\ I_{12}(\rho)
\end{bmatrix} := \begin{bmatrix}
\rvC\big( S_1(1-\rho^2)  \big)  \\ \rvC\big( S_1+S_2 + 2\rho\sqrt{S_1S_2}   \big) 
\end{bmatrix} . \label{eqn:mi_vec}
\end{align}
Let 
\begin{equation}
\rvV(x,y) := \log^2\rme\cdot \frac{x(y+2)}{2(x+1)(y+1)} 
\end{equation}
 be the  {\em Gaussian cross-dispersion function} and note that  $\rvV(x) := \rvV(x,x)$ is the 
{\em Gaussian dispersion function} defined previously in \eqref{eqn:gauss_disp}.   For fixed $0\le \rho\le 1$, define the {\em information-dispersion matrix}
\begin{align}
\bV(\rho) :=\begin{bmatrix}
V_1(\rho)  & V_{1,12}(\rho)\\ V_{1,12}(\rho)  & V_{12}(\rho) \end{bmatrix}, \label{eqn:inf_disp_matr}
\end{align}
where the elements of the matrix are  
\begin{align}
V_1(\rho) &:= \rvV\big( S_1 (1-\rho^2)\big) ,   \\* 
V_{1,12}(\rho) &:= \rvV\big( S_1 (1-\rho^2), S_1 + S_2 + 2\rho\sqrt{S_1 S_2} \big)  ,  \\*
V_{12}(\rho) &:= \rvV\big(S_{1}+S_{2}+2\rho\sqrt{S_{1}S_{2}} \big)  .  
\end{align}

Let  $(X_1,X_2)\sim P_{X_1  X_2} = \calN(\bzero,\bSigma(\rho))$  and define $Q_{Y|X_2}$ and $Q_Y$ to be Gaussian distributions   induced by $P_{X_1  X_2}$ and  $W$, namely
\begin{align}
Q_{Y|X_2}(y|x_2) & : = \calN\big(y; x_2 (1+ \rho \sqrt{S_1/S_2}) , 1+ S_1(1- \rho ^2) \big), \quad\mbox{and} \label{eqn:out1}\\*
Q_{Y}(y) & : = \calN\big(y;0, 1+S_1 + S_2 + 2\rho \sqrt{S_1 S_2}\big). \label{eqn:out2}
\end{align}  
It should be noted that the random variables  $(X_1, X_2)$ and the densities 
$Q_{Y|X_2}$ and $Q_Y$  all depend on $\rho$; this dependence is suppressed 
throughout the chapter.  The mutual information vector  $\bI(\rho)$ and  
information-dispersion matrix $\bV(\rho)$ are  the mean vector and conditional 
covariance matrix of the information density vector 
\begin{align} 
    \bj(x_1, x_2  , y)  := \begin{bmatrix}
    j_1(x_1, x_2  , y) \\ j_{12}(x_1, x_2  , y)
    \end{bmatrix}   
     = \begin{bmatrix}
    \log\frac{W(y|x_1,x_2)}{Q_{Y|X_2}(y|x_2)}  \\ \log\frac{W(y|x_1,x_2)}{Q_{Y}(y)}  
    \end{bmatrix}.
    \label{eqn:info_dens}
\end{align} 
That is, we can write $\bI(\rho)$ and $\bV(\rho)$ as
\begin{align}
\bI(\rho)   &=\bbE\big[\, \bj(X_1,X_2,Y)\big], \quad\mbox{and} \label{eqn:mean_v}\\*
  \bV(\rho)   &=\bbE\big[\cov \big(\bj(X_1,X_2,Y) \, \big|\, X_1,X_2 \big)\big], \label{eqn:cov_v}
\end{align}
with $(X_1, X_2, Y)\sim P_{X_1 X_2} \times W$.  We also need a generalization of the $\Phi^{-1}(\cdot)$ function. Define the ``inverse image'' of $\Psi(z_1,z_2;\bzero,\bSigma)$ as 
\begin{equation}
\Psi^{-1}(\bSigma,\eps) := \big\{ (t_1, t_2) \in\bbR^2: \Psi( -z_1, -z_2 ; \bzero ,\bSigma)  \ge 1-\eps\big\}. \label{eqn:psi_inv1}
\end{equation}
An illustration of this set is provided in Fig.~\ref{fig:psi_inv}. Observe that for $\eps<\frac{1}{2}$, the set lies entirely within the third quadrant of the $\bbR^2$ plane. This represents ``backoffs'' from the first-order fundamental limits.

\begin{figure}
\centering
\includegraphics[width = .97\columnwidth]{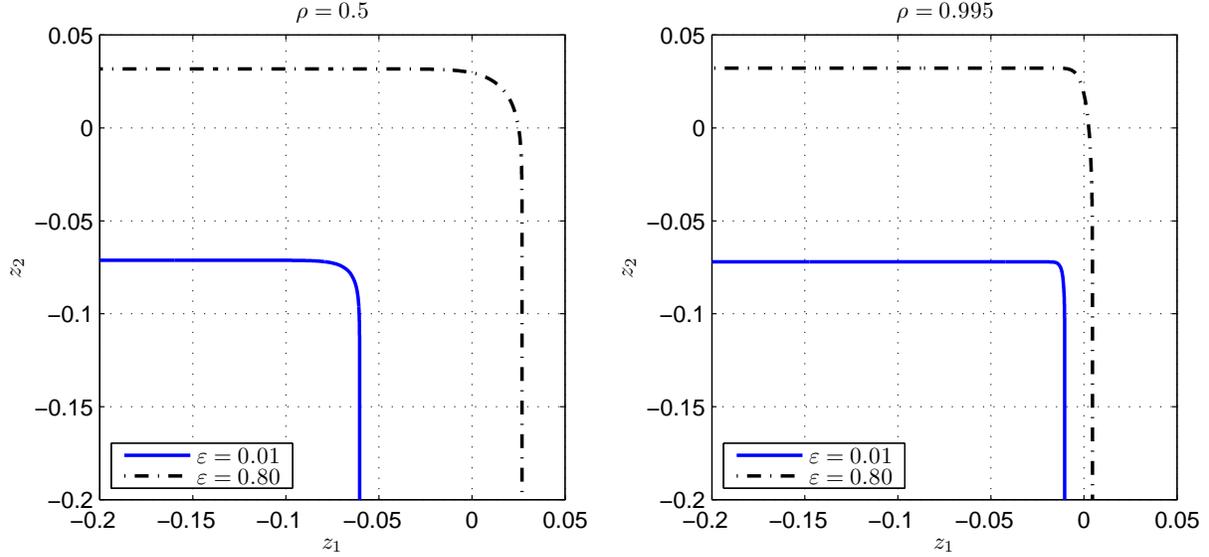}
\caption{Illustration of the set $\Psi^{-1}(\bV(\rho),\eps)$ where $\bV(\rho)$ is defined in \eqref{eqn:inf_disp_matr}. The regions are to the bottom left of the boundaries indicated.}
\label{fig:psi_inv}
\end{figure}

\subsection{Global Second-Order Asymptotics}

Here we provide inner and outer bounds on $\calC(n,\eps)$, defined to be
the set of $(R_1,R_2)$ pairs such that there exist codebooks of
length $n$ and rates at least $R_1$ and $R_2$ yielding an average error
probability not exceeding $\eps$.  Let $\underg(\rho,\eps,n)$ 
and $\overg(\rho,\eps,n)$ be arbitrary functions of $\rho$, 
$\eps$ and $n$ for now, and define the inner and outer regions
\begin{align}
\underline{\calR}(n,\eps;\rho)  & :=  \bigg\{ (R_1,R_2)  :   \bR  \in  \bI(\rho)  +  \frac{\Psi^{-1}(\bV(\rho),\eps)}{\sqrt{n}}  +   \underg(\rho,\eps,n) \bone\bigg\}, \label{eqn:Rin}   \\*
\overline{\calR}(n,\eps;\rho)&  :=  \bigg\{ (R_1,R_2) :  \bR  \in  \bI(\rho)  +  \frac{\Psi^{-1}(\bV(\rho),\eps)}{\sqrt{n}}  +   \overg(\rho,\eps,n) \bone\bigg\} \label{eqn:Rout} .
\end{align}

\begin{lemma}[Global Bounds on the $(n,\eps)$-Capacity Region] \label{lem:global}
    There exist functions $\underg(\rho,\eps,n)$ and $\overg(\rho,\eps,n)$ such that 
    \begin{align}
    \bigcup_{0\le\rho\le 1}\underline{\calR}(n,\eps;\rho) \subset\calC(n,\eps) \subset\bigcup_{-1\le\rho\le 1}\overline{\calR}(n,\eps;\rho), \label{eqn:unions}
    \end{align}
    and $\underg$ and $\overg$ satisfy the following properties:  \\ 
    (i)  For    any  sequence $\{\rho_n\}_{n\in\bbN}$ with $\rho_n\to\rho \in (-1 ,1)$, we have 
            \begin{equation} 
          \underg(\rho_n,\eps,n) = O\left(\frac{\log n}{n}\right),\quad\mbox{and}\quad \overg(\rho_n,\eps,n) = O\left(\frac{\log n}{n}\right). \label{eqn:thirdorder1}
            \end{equation}
(ii) Else,  for any  sequence $\{\rho_n\}_{n\in\bbN}$ with $\rho_n\to\pm1$, we have
            \begin{equation}
         \underg(\rho_n,\eps,n) = o\left(\frac{1}{\sqrt{n}}\right),\quad\mbox{and}\quad \overg(\rho_n,\eps,n) = o\left(\frac{1}{\sqrt{n}}\right). \label{eqn:thirdorder2}
            \end{equation}
\end{lemma}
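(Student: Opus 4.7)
The plan is to prove Lemma \ref{lem:global} by asymptotically evaluating the non-asymptotic bounds in Propositions \ref{prop:ach_mac} and \ref{prop:con_mac} via the multivariate Berry-Esseen theorem (Corollary \ref{corollary:multidimensional-berry-esseen}) and the MolavianJazi-Laneman lifting (Theorem \ref{thm:func_clt}), paralleling the proof sketches of Theorems \ref{thm:awgn_asy} and \ref{thm:disp_vs}. The key auxiliary output distributions in both halves will be the $n$-fold products of the Gaussians in \eqref{eqn:out1}--\eqref{eqn:out2}, so that the per-letter information density vector \eqref{eqn:info_dens} has conditional mean $\bI(\rho)$ and conditional covariance $\bV(\rho)$ as in \eqref{eqn:mean_v}--\eqref{eqn:cov_v}.

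For the inner bound, fix $\rho\in[0,1]$ and use the ``coding on spheres'' ensemble that emulates $\calN(\bzero,\bSigma(\rho))$ while satisfying the power constraints with probability one: draw $X_2^n$ uniformly on $\{\bx_2:\|\bx_2\|_2^2=nS_2\}$ and then $X_1^n$ uniformly on the affine shell $\{\bx_1:\|\bx_1\|_2^2=nS_1,\ \langle\bx_1,X_2^n\rangle=n\rho\sqrt{S_1S_2}\}$. Apply Proposition \ref{prop:ach_mac} with the Gaussian-product $Q_{Y^n|X_2^n}$ and $Q_{Y^n}$, and let $\calA_1,\calA_{12}$ be the typical sets on which $\|Y^n\|_2^2/n$ and $\langle X_2^n,Y^n\rangle/n$ are within $n^{-1/3}$ of their expectations; standard Gaussian concentration gives $\Pr(\calA_1^c),\Pr(\calA_{12}^c)=o(n^{-k})$ for every $k$, while a computation analogous to \eqref{eqn:change_meas} shows that the ratios $\zeta_1,\zeta_{12}$ in \eqref{eqn:K1_mac} are polynomial in $n$ on these sets. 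Via the MolavianJazi-Laneman technique, $\bj(X_1^n,X_2^n,Y^n)$ can be written as a $C^2$ function of a normalized sum of i.i.d., zero-mean $\bbR^k$-vectors whose Jacobian at the origin induces mean $n\bI(\rho)$ and covariance $n\bV(\rho)$, so Theorem \ref{thm:func_clt} yields a uniform $O(1/\sqrt{n})$ rate in the bivariate CLT. Taking $\gamma=\Theta((\log n)/n)$ and matching $(\log M_{1n},\log(M_{1n}M_{2n}))$ to $n\bI(\rho)+\sqrt{n}\,\Psi^{-1}(\bV(\rho),\eps)+O(\log n)\bone$ then gives $\underg(\rho,\eps,n)=O((\log n)/n)$ whenever $\rho$ is bounded away from $\pm 1$.

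For the outer bound, an arbitrary code need not be product Gaussian, so I would first expurgate to an almost-constant-correlation subcode. Partition the $M_{1n}M_{2n}$ codeword pairs $(f_1(m_1,m_2),f_2(m_2))$ according to the value of $\hat\rho_{m_1,m_2}:=\langle f_1(m_1,m_2),f_2(m_2)\rangle/(n\sqrt{S_1S_2})$ quantized to step $1/\sqrt{n}$ on $[-1,1]$; one of the $O(\sqrt{n})$ buckets contains a subcode of size at least $M_{1n}M_{2n}/O(\sqrt{n})$, and a further averaging over the two message indices retains the individual rates $R_1$ and $R_1+R_2$ at a total loss of $\tfrac{1}{2n}\log n=O((\log n)/n)$. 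On this subcode the empirical second-order statistics of $(X_1^n,X_2^n)$ mimic those of $\calN(\bzero,\bSigma(\rho))$ for the representative $\rho$ of the bucket, so Proposition \ref{prop:con_mac} with the same Gaussian auxiliaries, evaluated via Corollary \ref{corollary:multidimensional-berry-esseen}, delivers the outer region with $\overg(\rho,\eps,n)=O((\log n)/n)$.

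The hard part will be the uniformity in \eqref{eqn:thirdorder2} when $\rho_n\to\pm 1$. There $\bV(\rho)$ degenerates: at $\rho=1$ the two coordinates of $\bj(x_1,x_2,y)$ become perfectly correlated, and at $\rho=-1$ the first coordinate collapses, so the factor $\lambda_{\min}(\bV(\rho))^{-3/2}$ in Corollary \ref{corollary:multidimensional-berry-esseen} blows up and the $O(1/\sqrt{n})$ Berry-Esseen remainder is no longer uniform. I would handle this boundary regime with a separate argument in which the bivariate normal approximation along the degenerating direction is replaced by a one-dimensional Berry-Esseen for the surviving active constraint (the sum-rate when $\rho_n\to 1$, the $R_1$-constraint when $\rho_n\to -1$) together with a Chebyshev-type bound (Proposition \ref{prop:ds_ch}) on the coordinate that is becoming singular; this yields a slower $o(1/\sqrt{n})$ third-order term precisely in the rate that $\lambda_{\min}(\bV(\rho_n))\to 0$. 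Patching the interior estimate ($O((\log n)/n)$ for $\rho$ bounded away from $\pm 1$) with this boundary estimate ($o(1/\sqrt{n})$ along $\rho_n\to\pm 1$) defines admissible $\underg$ and $\overg$ satisfying both \eqref{eqn:thirdorder1} and \eqref{eqn:thirdorder2}, and the containments in \eqref{eqn:unions} follow by taking unions of the per-$\rho$ regions.
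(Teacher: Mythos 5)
Your achievability route (the MolavianJazi--Laneman lifting) is a valid alternative to the paper's simpler one of conditioning on $(\bx_1,\bx_2)\in\calD_n(\rho)$ and applying Corollary~\ref{corollary:multidimensional-berry-esseen} directly, but your outer bound has a genuine quantitative gap. You quantize the empirical correlation coefficient to step $1/\sqrt n$. Within such a bucket, individual codeword pairs have correlations deviating from the representative $\hrho$ by up to $\Theta(1/\sqrt n)$, so by Lipschitz continuity of $\bI(\cdot)$ away from $\pm 1$ their per-letter information-density means deviate from $\bI(\hrho)$ by $\Theta(1/\sqrt n)$ as well. After the $\sqrt{n}$ rescaling in the Berry--Esseen step this contributes a $\Theta(1)$ shift in the argument of $\Psi(\cdot;\bzero,\bV(\hrho))$, the same order as $\Psi^{-1}(\bV(\hrho),\eps)$ itself, so at best $\overg(\hrho,\eps,n)=\Theta(1/\sqrt n)$ rather than $O((\log n)/n)$. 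This is not merely a weaker remainder: the local converse (Theorem~\ref{thm:local}, Step~2) crucially uses $\overg=o(1/\sqrt n)$ to isolate $\calL(\eps;R_1^*,R_2^*)$, and with your bound that characterization collapses. The fix is the paper's correlation type classes \eqref{eqn:corr_type}: quantize at step $1/n$, giving only $2n+1$ classes---still polynomially many, so the expurgation loss stays $O((\log n)/n)$---while the intra-class deviation in $\bI(\cdot)$ drops to $O(1/n)$ (the $\xi_1/n$ of \eqref{eqn:expectation_j}), which is safely inside the third-order remainder.

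Two further gaps in the converse: you skip the reduction from average to maximal error probability (the paper's Step~2). Without it, restricting to a $\Theta(1/\sqrt n)$ fraction of codeword pairs can inflate the subcode's average error by a factor of $\Theta(\sqrt n)$, so the error-probability constraint no longer transfers to the expurgated code. Relatedly, the largest bucket is an arbitrary subset of $\calM_1\times\calM_2$ and need not carry a rectangular message structure; the expurgation must preserve this structure, as in the argument of~\cite[Lem.~16.2]{Csi97}, rather than simply keeping the most populous bucket.
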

Lemma~\ref{lem:global} serves as a stepping stone to establish the local behavior of first-order optimal codes near a boundary point. A proof  sketch of the lemma is provided in Section~\ref{sec:prof_sk_gl}.  

We remark that even though the union for the outer bound in \eqref{eqn:unions} is taken over $\rho\in [-1,1]$, only the values $\rho\in[0,1]$ will play a role in establishing the local asymptotics in Section~\ref{sec:local}, since negative values of $\rho$ are not even first-order optimal, i.e., they fail to achieve a point on the boundary of the capacity region.

We do not claim that the remainder terms in \eqref{eqn:thirdorder1}--\eqref{eqn:thirdorder2} are uniform
in the limiting value $\rho$ of $\{\rho_n\}_{n\in\bbN}$; such uniformity will not be required in establishing our main
local result below.  On the other hand, it is crucial that values of $\rho$ varying with $n$ are 
handled.



 
\subsection{Local Second-Order Asymptotics} \label{sec:local}
To characterize $\calL(\eps; R_1^*, R_2^*)$, we need yet another definition, which is a feature we have not encountered thus far in this monograph. Define 
\begin{align}
    \bD(\rho)=\begin{bmatrix}
    D_1(\rho) \\ D_{12}(\rho) 
    \end{bmatrix} :=\frac{\partial}{\partial\rho}\begin{bmatrix}
    I_1(\rho) \\ I_{12}(\rho) 
    \end{bmatrix},  \label{eqn:derI} 
\end{align}
to be the derivative of the mutual information vector with respect to $\rho$  where the individual derivatives are given by
\begin{align}
\frac{\partial I_1(\rho)}{\partial\rho}  & =  \frac{-S_1\rho}{1+S_1(1-\rho^2)}, \quad\mbox{and}\label{eqn:dvalues0} \\* 
 \frac{\partial I_{12}(\rho)}{\partial\rho}  & = \frac{\sqrt{S_1 S_2}}{1+S_1 + S_2 + 2\rho\sqrt{S_1S_2}} . \label{eqn:dvalues}
\end{align}
Note that $\rho\in (0,1]$ represents the strictly concave part of the boundary (the part of the boundary where $R_2>0.2$ in Fig.~\ref{fig:cr}), and in this interval we have
$D_1(\rho) < 0$ and $D_{12}(\rho)  >  0$.

\begin{figure}
\centering
    \includegraphics[width=0.87\columnwidth]{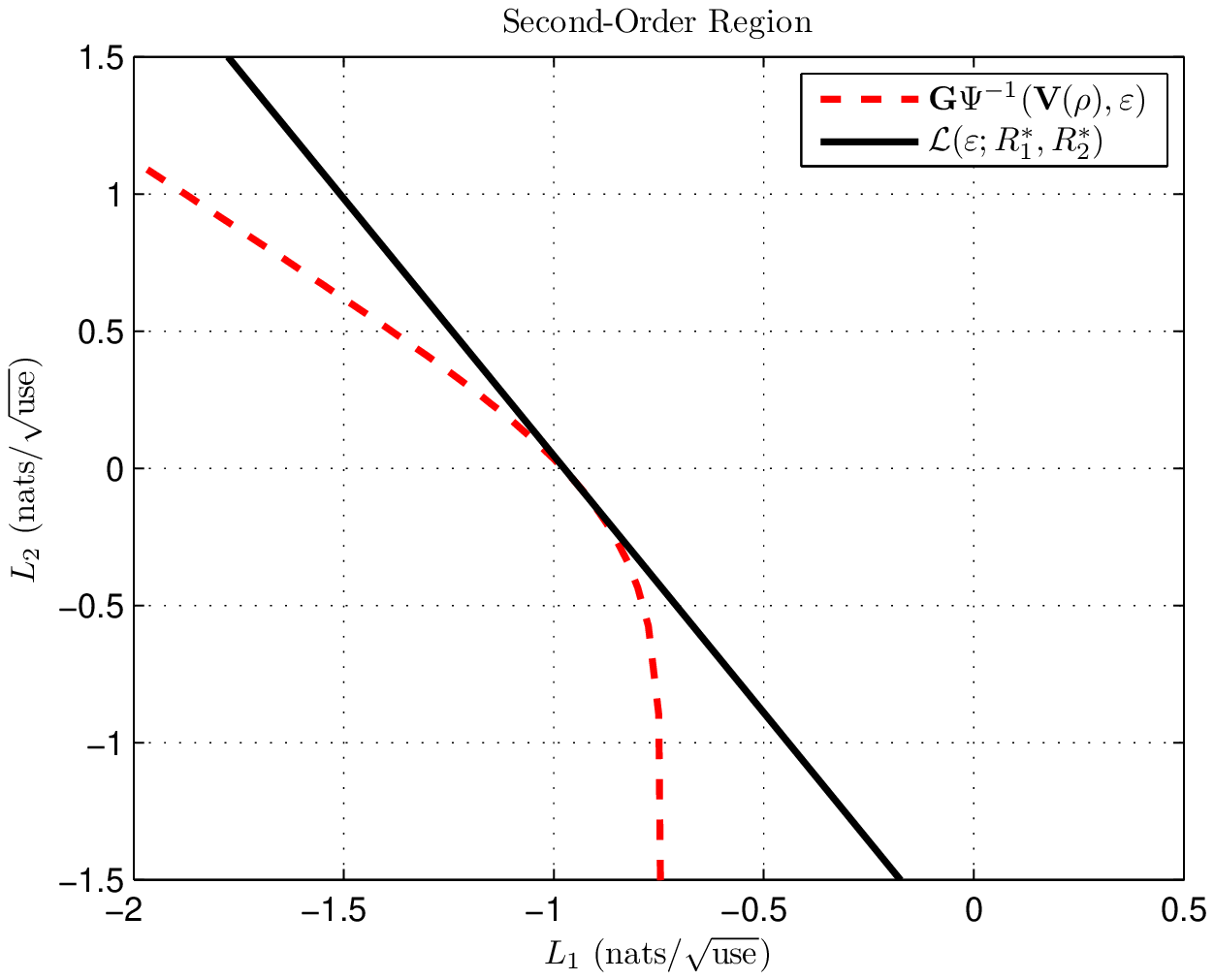}
    \caption{Illustration of the set  $\calL(\eps;R_1^*,R_2^*)$ in \eqref{eqn:second2}
         with $S_1=S_2=1$, $\rho=\frac{1}{2}$ and $\eps = 0.1$.  The set corresponding to $\beta=0$ in \eqref{eqn:second2} is denoted as $\bG\Psi^{-1}(\bV(\rho),\eps)$.  Regions are to the bottom-left of the boundaries. \label{fig:local_region}}
\end{figure}

Furthermore, for a vector $\bv=(v_1, v_2)\in\bbR^2$, we define the {\em down-set} of $\bv$ as
\begin{equation}
\bv^- := \{(w_1,  w_2) \in \bbR^2 : w_1 \le v_1, w_2\le v_2\}  \label{eqn:minus_notation} .
\end{equation}
We are now in a position to state our main result whose proof is sketched in Section \ref{sec:prf_sk_l}.

\begin{theorem}[Local Second-Order Rates] \label{thm:local}
Depending on $(R_1^*, R_2^*)$ (see Fig.~\ref{fig:cr}), we have the following three cases:  \\
Case (i):     $R_1^*=I_1(0)$ and  $R_1^*+R_2^* \le I_{12}(0)$ (vertical segment of the boundary corresponding to  $\rho=0$), 
\begin{align}
\calL(\eps;R_1^*,R_2^*) = \left\{ (L_1, L_2)  : L_1 \le  \sqrt{V_1(0)} \Phi^{-1}(\eps)\right\} \label{eqn:second1} .
\end{align}
Case (ii):    $R_1^*=I_1(\rho)$ and $R_1^*+R_2^*=I_{12}(\rho)$ (curved segment of the boundary corresponding to  $0 <\rho <1$),  
\begin{align}
\calL(\eps;R_1^*,R_2^*)\! =\!    \Bigg\{ (L_1, L_2)  : \begin{bmatrix}
L_1 \\ L_1\!   +\! L_2 
\end{bmatrix} \!\in\!  \bigcup_{\beta\in \bbR}\Big\{  \beta\, \bD(\rho)\! +\!\Psi^{-1}(\bV(\rho),\eps)  \Big\}\Bigg\}. \label{eqn:second2}
\end{align}
Case (iii):       $R_1^*=0$ and  $R_1^*+R_2^* = I_{12}(1)$ (point on the vertical axis corresponding to  $\rho=1$),  
\begin{align}
    &\calL(\eps;R_1^*,   R_2^*) \nn\\*
    &=   \Bigg\{(L_1,L_2) :
     \begin{bmatrix}
L_1 \\ L_1 + L_2 
\end{bmatrix} \!\in\!  \bigcup_{\beta\le0}   \bigg\{ \beta\, \bD(1)  +\begin{bmatrix}
0 \\ \sqrt{ V_{12}(1)}\Phi^{-1}(\eps)
\end{bmatrix}^-\bigg\}    \Bigg\} .\label{eqn:second3}
\end{align}
\end{theorem}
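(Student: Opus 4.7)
The strategy is to derive Theorem from Lemma~\ref{lem:global} by taking intersections of the global inner/outer bounds with the local neighborhood of $(R_1^*,R_2^*)$, and exploiting the smoothness of $\rho\mapsto\bI(\rho)$ and the continuity of $\rho\mapsto\bV(\rho)$ (and hence of $\rho\mapsto\Psi^{-1}(\bV(\rho),\eps)$ at nondegenerate $\rho$). For the direct part of Case (ii), I would fix $\beta\in\bbR$ and choose the sequence of input distributions $\calN(\bzero,\bSigma(\rho_n))$ with $\rho_n:=\rho+\beta/\sqrt{n}$ (this is in $[0,1]$ for $n$ large since $\rho\in(0,1)$). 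Substituting into the inner region $\underline{\calR}(n,\eps;\rho_n)$ of Lemma~\ref{lem:global} and Taylor-expanding $\bI(\rho_n)=\bI(\rho)+\frac{\beta}{\sqrt{n}}\bD(\rho)+O(1/n)$ yields
\begin{align}
\bR\in \bI(\rho)+\frac{1}{\sqrt{n}}\Big[\beta\,\bD(\rho)+\Psi^{-1}(\bV(\rho_n),\eps)\Big]+o\big(n^{-1/2}\big)\bone.
\end{align}
Because $\bV(\rho)$ is positive definite for $\rho\in(0,1)$, one can show $\Psi^{-1}(\bV(\rho_n),\eps)\to\Psi^{-1}(\bV(\rho),\eps)$ in a Hausdorff sense, so any $(L_1,L_2)$ on the right-hand side of \eqref{eqn:second2} is achievable. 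Ranging over $\beta\in\bbR$ produces the full union.

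For the converse in Case (ii), I would start with a sequence of length-$n$ codes attaining $(L_1,L_2)$, which means $\bR_n:=[R_{1,n},R_{1,n}+R_{2,n}]'=\bI(\rho)+[L_1,L_1+L_2]'/\sqrt{n}+o(n^{-1/2})$, and apply the global outer bound of Lemma~\ref{lem:global}: there exists $\rho_n^*\in[-1,1]$ with $\bR_n\in\overline{\calR}(n,\eps;\rho_n^*)$. Rearranging,
\begin{align}
\sqrt{n}\big(\bR_n-\bI(\rho_n^*)\big)\in \Psi^{-1}(\bV(\rho_n^*),\eps)+o(1).
\end{align}
The next (and most important) step is to argue that $\rho_n^*\to\rho$ and, more strongly, that $\beta_n:=\sqrt{n}(\rho_n^*-\rho)$ stays bounded; otherwise the mismatch $\sqrt{n}(\bI(\rho_n^*)-\bI(\rho))$ would blow up along the direction $\bD(\rho)$, violating boundedness of the left-hand side. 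Assuming this, a second-order Taylor expansion $\sqrt{n}(\bI(\rho_n^*)-\bI(\rho))=\beta_n\bD(\rho)+O(\beta_n^2/\sqrt{n})$ combined with the continuity of $\rho\mapsto\Psi^{-1}(\bV(\rho),\eps)$ at $\rho$ lets us pass to a convergent subsequence $\beta_n\to\beta\in\bbR$ and recover $[L_1,L_1+L_2]'\in\beta\bD(\rho)+\Psi^{-1}(\bV(\rho),\eps)$, which is exactly~\eqref{eqn:second2}.

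Cases (i) and (iii) are handled as degenerate variants of the same template. In Case (i), $\rho=0$ lies on the vertical segment where the sum-rate constraint is strictly slack; along any feasible sequence, the component of $\Psi^{-1}(\bV(0),\eps)$ corresponding to the sum-rate coordinate is inactive, and the bivariate condition collapses to the univariate $\Phi^{-1}$ condition on $L_1$, using $V_1(0)=\rvV(S_1)$; the achievability follows by specializing the direct bound to $\rho=0$ with an arbitrary $\rho_n\downarrow 0$, and the converse uses the fact that $\rho_n^*\ge 0$ cannot drift into the region where the second constraint becomes active, by the $o(n^{-1/2})$ slack in the sum rate. In Case (iii), $\rho=1$ forces $X_1$ to be a deterministic function of $X_2$, so $V_1(1)=0$ and $R_1^*=0$; the constraint $\rho_n^*\le 1$ restricts the Taylor parameter to $\beta\le 0$, and degenerating the first coordinate of $\Psi^{-1}(\bV(\rho),\eps)$ as $\rho\uparrow 1$ produces the down-set form in~\eqref{eqn:second3}. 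Both boundary cases require a small additional argument that the remainder estimates $\underg,\overg$ remain $o(n^{-1/2})$ along the chosen sequences, which is supplied by part~(ii) of Lemma~\ref{lem:global}.

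The main obstacle I anticipate is the converse in Case (ii): establishing that $\beta_n=\sqrt{n}(\rho_n^*-\rho)$ is necessarily bounded, rather than merely $\rho_n^*\to\rho$. Because $\bD(\rho)$ has both a negative and a positive component by~\eqref{eqn:dvalues0}--\eqref{eqn:dvalues}, an unbounded $\beta_n$ would push $\bI(\rho_n^*)$ off the boundary in a direction that cannot be compensated by the $\Theta(1)$ set $\Psi^{-1}(\bV(\rho_n^*),\eps)$, but making this precise requires a uniform-in-$n$ compactness argument together with the continuity of $\bV(\cdot)$ on a closed subinterval of $(0,1)$. A secondary technical point is uniformizing the $O(\log n/n)$ remainders in~\eqref{eqn:thirdorder1} over sequences $\rho_n\to\rho$ bounded away from $\pm 1$, which is exactly the content assured by Lemma~\ref{lem:global}(i) and is the reason that lemma is stated with the sequential-limit formulation rather than pointwise in $\rho$.
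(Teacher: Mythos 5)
Your proposal is correct and takes essentially the same route as the paper's proof sketch: achievability by random coding on spheres with a drifting correlation parameter $\rho_n=\rho+\beta/\sqrt{n}$ followed by Taylor expansion of $\bI(\cdot)$, and the converse by applying the global outer bound (Lemma~\ref{lem:global}), passing to a convergent subsequence of $\rho_n$, establishing $\rho_n\to\rho$ from uniqueness of the boundary parametrization, and then controlling the rate $\sqrt{n}(\rho_n-\rho)$ before a Bolzano--Weierstrass extraction. The one place you depart slightly from the paper is in how boundedness of $\beta_n=\sqrt{n}(\rho_n-\rho)$ is handled: you argue directly that an unbounded $\beta_n$ would force one coordinate of $\sqrt{n}(\bR_n-\bI(\rho_n))$ to $+\infty$, which cannot lie in the (down-)set $\Psi^{-1}(\bV(\rho_n),\eps)$ (your ``$\Theta(1)$ set'' phrasing is imprecise since the set is unbounded below, but what matters is that its coordinates are bounded above, and that is what your argument actually uses). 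The paper instead observes that the $\omega(1/\sqrt{n})$ case only produces degenerate pairs $(L_1,L_1+L_2)\in\{(-\infty,+\infty),(+\infty,-\infty)\}$ that are already ``captured'' in a limiting sense, and then restricts attention to the $O(1/\sqrt{n})$ case; given that $\calL(\eps;R_1^*,R_2^*)\subset\bbR^2$, your contradiction argument is arguably cleaner, but both lead to the same conclusion.
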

See Fig.~\ref{fig:local_region} for an illustration of  $\calL(\eps;R_1^*,R_2^*)$ in \eqref{eqn:second2} and the set of $(L_1, L_2)$ such that $(L_1,L_1+L_2)$ belongs to $\Psi^{-1}(\bV(\rho),\eps)$, i.e., $\bG\Psi^{-1}(\bV(\rho),\eps)$, where  $\bG = [1,0;-1 ,1]$ is the invertible matrix that
transforms the coordinate system from $[L_1, L_1+L_2]'$ to $[L_1, L_2]'$. In other words,  $\bG\Psi^{-1}(\bV(\rho),\eps)$ is the same set as that in \eqref{eqn:second2} neglecting the union  and setting    $\beta=0$.  It can be seen that $\bG\Psi^{-1}(\bV(\rho),\eps)$ is a strict subset of $\calL(\eps;R_1^*,R_2^*)$. In fact, $\calL(\eps;R_1^*,R_2^*)$ is a half-space  in $\bbR^2$ for any $(R_1^*, R_2^*)$ on the boundary of the capacity region corresponding to $\rho<1$. So  $\calL(\eps;R_1^*,R_2^*)$ in \eqref{eqn:second2} can be alternatively written     as 
\begin{equation}
\calL(\eps;R_1^*,R_2^*) = \big\{ (L_1,L_2) : L_2\le a_\rho L_1 + b_{\rho,\eps}\big\}
\end{equation}
where the     slope  and intercept are respectively defined as
\begin{align}
a_\rho &:= \frac{D_{12}(\rho) - D_1(\rho)}{D_1(\rho)},\quad\mbox{and}\\ 
b_{\rho,\eps} & :=  \inf\big\{ b  \in\bbR:  \exists\,  L_1\in\bbR \mbox{ s.t. }  \nn \\*
&\qquad\qquad (L_1, (a_\rho + 1) L_1 + b)  \in \bG\Psi^{-1}(\bV(\rho),\eps)   \big\}.
\end{align}
\subsection{Discussion of the Main Result}
Observe that in Case (i), the second-order region is simply characterized by a scalar dispersion term $V_1(0)$ and the inverse of the Gaussian cdf $\Phi^{-1}$.  In this part of the boundary, there is effectively only a single rate constraint in terms of $R_1$, since we are operating ``far away'' from the sum rate constraint. This results in a large deviations-type event for the sum rate constraint which has no bearing on second-order asymptotics. This is similar to observations made in Chapters~\ref{ch:sw} and \ref{ch:ic}.

Cases (ii)--(iii) are more interesting, and their proofs do not follow from standard techniques. As in Case (iii) for Theorem~\ref{thm:disp_sw}, the second-order asymptotics for Case (ii) depend on the dispersion matrix $\bV(\rho)$ and the  bivariate Gaussian cdf,  since {\em both} rate constraints are active at a point on the boundary parametrized by $\rho \in (0,1)$. However,   the expression containing $\Psi^{-1}$ alone (i.e., the expression obtained by setting $\beta=0$ in \eqref{eqn:second2}) corresponds to only considering the unique   input distribution $\calN(\bzero,\bSigma(\rho))$ achieving the point $(R_1^*,R_2^*)=(I_1(\rho), I_{12}(\rho)-I_1(\rho))$. From Fig.~\ref{fig:cr}, this is  {\em not} sufficient to achieve all second-order coding rates, since there are non-empty regions within the capacity region that are not contained in the trapezoid  of rate pairs achievable using a single Gaussian $\calN(\bzero,\bSigma(\rho))$. 

Thus, to achieve all $(L_1, L_2)$ pairs in $\calL(\eps;R_1^*, R_2^*)$,  we must allow the sequence of input distributions to vary with the blocklength $n$. This is manifested in the $\beta\, \bD(\rho)$ term. Roughly speaking, our proof strategy of the direct part involves random coding with a sequence of  input distributions that are uniform on two  spheres with correlation coefficient $\rho_n = \rho+O\big(\frac{1}{\sqrt{n}}\big)$ between them.   By a Taylor expansion, the resulting mutual information vector 
\begin{equation}
 \bI(\rho_n)\approx\bI(\rho) + (\rho_n-\rho) \bD(\rho).
\end{equation}
  Since $\rho_n - \rho= O\big(\frac{1}{\sqrt{n}}\big)$, the gradient term $(\rho_n-\rho) \bD(\rho)$ also contributes to the second-order behavior, together with the traditional Gaussian approximation term $\Psi^{-1}(\bV(\rho),\eps)$. 

For the converse, we consider an arbitrary sequence of codes with rate pairs $\{(R_{1n}, R_{2n})\}_{n\in\bbN}$ converging to $(R_1^*,R_2^*)=(I_1(\rho), I_{12}(\rho)-I_1(\rho))$ with second-order behavior given by \eqref{eqn:2nd_mac}. From the global result, we know  $[R_{1n}, R_{1n}+R_{2n}]^T \in \overline{\calR } (n,\eps;\rho_n)$ for some sequence $\{\rho_n\}_{n\in\bbN}$. We then establish, using the definition of the second-order coding rates in~\eqref{eqn:2nd_mac}, that   $\rho_n=\rho+O\big(\frac{1}{\sqrt{n}}\big)$. Finally, by the Bolzano-Weierstrass theorem, we may  pass to a  subsequence of $\rho_n$ (if necessary), thus establishing the converse.

A similar discussion holds true for Case (iii); the main differences are that the covariance matrix is singular, and that the union in \eqref{eqn:second3} is taken over $\beta\le0$ only, since $\rho_n$ can only approach one from below.

\section{Proof Sketches of the Main Results}  \label{sec:prof_sk}

\subsection{Proof Sketch of the Global Bound (Lemma \ref{lem:global})} \label{sec:prof_sk_gl}

\begin{proof} Because the proof is rather lengthy, we only focus on the case where $\rho_n\to\rho \in (-1,1)$. 
The main ideas are already present here. The case where $\rho_n \to \pm 1$ is omitted, and the reader is referred to \cite{ScarlettTan} for the details.  

The converse proof is split into several steps for clarity. In the  first three steps, we perform a series of reductions to simplify the problem. We do so to simplify the evaluation of the probability in the non-asymptotic converse bound  in Proposition~\ref{prop:con_mac}.

Step 1: (Reduction from Maximal to Equal Power Constraints) As usual, by the Yaglom map trick~\cite[Ch.~9, Thm.~6]{conway},  it suffices to consider codes such that the inequalities in 
\eqref{eqn:2nd_mac} hold with equality.  See the argument for the proof of the converse for the asymptotic expansion of the AWGN channel (Theorem~\ref{thm:awgn_asy}). 
 
Step 2: (Reduction from Average to Maximal Error Probability)  Using similar arguments to \cite[Sec.~3.4.4]{Pol10}, it suffices
to prove the converse for maximal (rather than average) error
probability.\footnote{This argument is not valid for the standard MAC, but is possible here due to the partial cooperation (i.e., user 1 knowing both messages).  It is well known that the capacity regions for the MAC under the average and maximum error probability criteria are different, an observation first made by Dueck~\cite{Dueck}. } 
This is shown by starting with an average-error code, and then 
constructing a maximal-error code as follows: (i) Keep only
the fraction $\frac{1}{\sqrt n}$ of user 2's messages with
the smallest error probabilities (averaged over user 1's message);
(ii) For each of user 2's messages, keep only the fraction
$\frac{1}{\sqrt n}$ of user 1's messages with the smallest
error probabilities. 

Step 3: (Correlation Type Classes) Define $\calI_0 := \{0\}$ and  $\calI_k := (\frac{k-1}{n},\frac{k}{n}], k = 1,\ldots, n$, 
and let $\calI_{-k} := -\calI_k$. Consider the {\em correlation 
type classes} (or simply {\em type classes})
\begin{align} 
\calT_n(k)  := \left\{ (\bx_1, \bx_2)  : \frac{\langle\bx_1,\bx_2\rangle}{\|\bx_1\|_2 \|\bx_2\|_2} \in\calI_k \right\}   \label{eqn:corr_type}
\end{align}
where $k = -n,\ldots, n$.  The total number of type classes is 
$2n+1$, which is polynomial in $n$ analogously to the finite alphabet case (cf.~the type counting lemma).  
Using a similar argument to that for the asymmetric broadcast channel in~\cite[Lem.~16.2]{Csi97}, and the fact that
we are considering the maximal error probability so all message pairs $(m_1, m_2)$ have error probabilities not exceeding $\eps$ (cf.~Step 2), 
it suffices to consider codes for which all pairs $(\bx_1,\bx_2)$  that 
are in a {\em single} type class, say indexed by $k$.  This results in a rate loss of $R_1$ and $R_2$ of only $O(\frac{\log n}{n})$. 
We define $\hrho := \frac{k}{n}$ according to the type class indexed by $k$ in \eqref{eqn:corr_type}.

Step 4: (Approximation of Empirical Moments with True Moments) The value of $\rho$ used in the single-letter information densities in \eqref{eqn:info_dens}   is arbitrary, and is chosen to be $\hrho$.

Using the definition of $\calT_n(k)$ and the information densities in \eqref{eqn:info_dens},
we can show that the first and second moments of $\sum_{i=1}^{n}\bj(x_{1i},x_{2i},Y_i)$
are approximately given by $\bI(\hrho)$ and $\bV(\hrho)$ respectively, i.e.,
\begin{align}
\left\|\bbE\left[\frac{1}{n}\sum_{i=1}^n \bj(x_{1i},x_{2i}, Y_i)\right] - \bI(\hrho )\right\|_\infty &\le \frac{\xi_1}{n} ,\quad \mbox{and}   \label{eqn:expectation_j} \\*
\left\|\cov\left[ \frac{1}{\sqrt{n}}\sum_{i=1}^n \bj(x_{1i},x_{2i}, Y_i)\right]  -  \bV(\hrho )\right\|_{\infty} &\le \frac{\xi_2}{n} \label{eqn:Vcalc}
\end{align}
for some $\xi_1>0$ and $\xi_2>0$ not depending on $\hrho$.  The expectation and covariance above are taken with respect to  $W^n(\cdot|\bx_1,\bx_2)$. Roughly speaking, the reason for \eqref{eqn:expectation_j} and \eqref{eqn:Vcalc} is because all pairs of vectors in $\calT_n(k)$ have approximately the same empirical correlation coefficient so the expectation and covariance of appropriately normalized information density vectors are also close to   a representative mutual information vector and dispersion matrix respectively. 

Step 5: (Evaluation of the Non-Asymptotic Converse Bound  in Proposition~\ref{prop:con_mac}) 
Let $\bR_{n} := [R_{1 n}  , R_{1 n} + R_{2  n}  ]'$ (where $R_{jn}=\frac{1}{n}\log M_{jn}$) be the rate vector consisting of the non-asymptotic rates $(R_{1n}, R_{2n})$. Additionally,   let 
\begin{align}
\calF &:=\bigg\{  \log\frac{W^n(Y^n | X_1^n , X_2^n) }{Q_{Y^n|X_2^n} (Y^n|X_2^n) }  \le \log M_{1n}  -  n\gamma \bigg\} \\*
\calG &:=\bigg\{  \log\frac{W^n(Y^n | X_1^n , X_2^n) }{Q_{Y^n } (Y^n ) }  \le \log (M_{1n}M_{2n})   -  n\gamma \bigg\} 
\end{align}
be the two ``error'' events within the probability in~\eqref{eqn:con_mac}.  We then have 
\begin{align}
\Pr(\calF\cup\calG)&=1-\Pr(\calF^c\cap\calG^c) \\
&=1-\bbE_{ X_1^n, X_2^n} \big[ \Pr(\calF^c\cap\calG^c | X_1^n,X_2^n) \big] .\label{eqn:complement}
\end{align}
In particular, using the definition of $\bj(x_1,x_2,y)$ in \eqref{eqn:info_dens} and the fact that $Q_{Y^n|X_2^n}$ and $Q_{Y^n}$ are product distributions, the conditional probability in \eqref{eqn:complement} can be bounded as
\begin{align}
&\Pr(\calF^c\cap\calG^c | X_1^n=\bx_1,X_2^n=\bx_2)\nn\\*
& = \Pr\left( \frac{1}{n}\sum_{i=1}^n \bj(x_{1i},x_{2i}, Y_i) > \bR_{n}  -\gamma\bone \right) \label{eqn:AB}\\
&\le \Pr\Bigg(  \frac{1}{n}\sum_{i=1}^n \Big( \bj(x_{1i},x_{2i}, Y_i) -  \bbE[\bj(x_{1i},x_{2i}, Y_i)] \Big) \nn\\*
 &\hspace{1in}> \bR_{n} - \bI(\hrho )- \gamma\bone - \frac{\xi_1}{n}\bone  \Bigg), \label{eqn:use_norm_less} 
\end{align}
where \eqref{eqn:use_norm_less} follows from the approximation of the empirical expectation in~\eqref{eqn:expectation_j}. In the rest of this global converse proof, $\gamma$ is set to $\frac{\log n}{2n}$ so $\exp(-n\gamma)=\frac{1}{\sqrt{n}}$ in the non-asymptotic converse bound in~\eqref{eqn:con_mac}.


Applying the multivariate Berry-Esseen theorem (Corollary~\ref{corollary:multidimensional-berry-esseen}) to~\eqref{eqn:use_norm_less} yields
\begin{align}
&\Pr(\calF^c\cap\calG^c | X_1^n=\bx_1,X_2^n=\bx_2)  \nn\\*
&\le    \Psi \Bigg(\begin{bmatrix}
\sqrt{n}  \big( I_1(\hrho )  + \gamma + {\xi_1}/{n} -  R_{1 n}  \big) \\   \sqrt{n} \big(  I_{12}(\hrho ) + \gamma + {\xi_1}/{n}-  (R_{1 n}   + R_{2 n} ) \big) 
\end{bmatrix} ;  \nn\\*
&\hspace{1in}  \bzero,\cov\left[ \frac{1}{\sqrt{n}}\sum_{i=1}^n \bj(x_{1i},x_{2i}, Y_i)\right]   \Bigg)  +\frac{\psi(\hrho ) }{\sqrt{n}}, \label{eqn:berry}
\end{align}
where $\psi(\hrho)$ is  a constant.  By Taylor expanding the continuously differentiable function $(z_1,z_2,\bV)\mapsto\Psi(z_1,z_2;\bzero,\bV)$, and  using the approximation of the empirical covariance in~\eqref{eqn:Vcalc} together with the fact that $\det(\bV(\hrho))>0$ for $\hrho\in(-1,1)$, we obtain
\begin{align}
&\Pr(\calF^c\cap\calG^c | X_1^n=\bx_1,X_2^n=\bx_2)  \nn\\* 
&\le    \Psi \Bigg(\begin{bmatrix}
\sqrt{n}  \big( I_1(\hrho )  + \gamma + {\xi_1}/{n} -  R_{1 n}  \big) \\   \sqrt{n} \big(  I_{12}(\hrho ) + \gamma + {\xi_1}/{n}-  (R_{1 n}   + R_{2 n} ) \big) 
\end{bmatrix} ;  \bzero, \bV(\hrho) \Bigg)
 + \frac{\eta(\hrho )\log n}{\sqrt{n}} \label{eqn:taylor}
\end{align}
where $\eta(\hrho)$ is a constant.  It should be noted that  $\psi(\hrho ),\eta(\hrho )\to \infty$ as $\hrho \to  \pm 1$, since $\bV(\hrho )$ becomes singular as $\hrho \to \pm 1$.  Despite this non-uniformity, we conclude from \eqref{eqn:con_mac}, \eqref{eqn:complement} and \eqref{eqn:taylor} that any $(n,\eps)$-code with codewords $(\bx_1,\bx_2)$  all belonging to   $\calT_n(k)$ must have rates $(R_{1n}, R_{2n})$ that satisfy 
\begin{align}
\begin{bmatrix}
R_{1 n}  \\ R_{1 n}  + R_{2 n} 
\end{bmatrix}& \in \bI(\hrho ) + \frac{\Psi^{-1}\Big(\bV(\hrho ),\eps + \frac{2}{\sqrt{n}} \!+ \!\frac{\eta(\hrho )\log n}{\sqrt{n}}\Big)}{\sqrt n} \label{eqn:constant_type_bd0}.
\end{align}
We immediately obtain the global converse bound on the $(n,\eps)$-capacity region  (outer bound in~\eqref{eqn:unions} of Lemma \ref{lem:global}) by employing the approximation
\begin{equation}
\Psi^{-1}\bigg( \bV(\hrho),\eps+\frac{c\log n}{\sqrt{n}}\bigg)\subset\Psi^{-1}\big( \bV(\hrho),\eps )+\frac{h(   \bV(\hrho),\eps,c) \,\log n}{\sqrt{n}}\, \bone, \label{eqn:expand_Psi}
\end{equation}
where $c>0$ is an arbitrary finite constant and $h(   \bV(\hrho),\eps,c)$ is finite for $\hrho\ne \pm 1$.   The details of the approximation in \eqref{eqn:expand_Psi} are omitted, and can be found in \cite{ScarlettTan}. \\


We now provide a proof sketch of the achievability part of Lemma~\ref{lem:global} (inner bound in~\eqref{eqn:unions}). At a high level, we will adopt the strategy of drawing random codewords on   appropriate power spheres, similar to the coding strategy for AWGN channels (Section~\ref{sec:awgn}) and the Gaussian IC with SVSI (Chapter~\ref{ch:ic}). We then analyze the ensemble behavior of this random code.

Step 1: (Random-Coding Ensemble) Let $\rho\in[0,1]$ be a fixed correlation coefficient. The ensemble will be defined  
in such a way that, with probability one, each codeword pair falls into the set
\begin{equation}
    \calD_n(\rho) :=  \Big\{\big(\bx_1,\bx_2\big) :  \|\bx_{1}\|_2^2 = nS_1,  \|\bx_{2}\|_2^2  =  nS_2,  \langle\bx_1,\bx_2\rangle =  n\rho\sqrt{S_1S_2}  \Big\}.\label{eqn:setD}
\end{equation}
This means that the power constraints  in \eqref{eqn:snr_mac1}--\eqref{eqn:snr_mac2} are satisfied with equality and the 
empirical correlation between each codeword pair is also exactly $\rho$.    We use superposition coding,
in which the codewords are generated according to
\begin{align}
    &\bigg\{\Big(X_2^n (m_2) ,\{X_1^n(m_1,m_2)\}_{m_1=1}^{M_{1 }}\Big)\bigg\}_{m_2=1}^{M_{2 }} \nn\\*
    &\sim\prod_{m_2=1}^{M_{2 }}\bigg( P_{X_2^n}(\bx_2(m_2))\prod_{m_1=1}^{M_{1 }}P_{X_1^n|X_2^n}(\bx_1(m_1,m_2)|\bx_2(m_2))  \bigg) \label{eqn:super}
\end{align}
for codeword distributions $P_{X_2^n}$ and $P_{X_1^n|X_2^n}$. We choose the codeword distributions to be 
\begin{align}
P_{X_2^n}(\bx_2) & \propto \delta\big\{\|\bx_2\|_2^2 = nS_2\big\},\qquad\mbox{and} \label{sc_px2}\\*
P_{X_1^n|X_2^n}(\bx_1|\bx_2) & \propto \delta\big\{\|\bx_1\|_2^2 = nS_1, \langle\bx_1,\bx_2\rangle=n\rho\sqrt{S_1S_2}\big\} ,\label{sc_px1} 
\end{align}
where $\delta\{\cdot\}$ is the Dirac $\delta$-function, and 
$P_{X^n}(\bx)\propto\delta\{\bx\in\calA\}$ means that $P_{X^n}(\bx)=\frac{\delta\{\bx\in\calA\}}{c}$, 
with the normalization constant $c>0$ chosen such that $\int_{\calA} P_{X^n}(\bx)\, \rmd \bx = 1$.  
In other words, each $\bx_2(m_2)$ is drawn uniformly from an $(n-1)$-sphere  with radius $\sqrt{nS_2}$ and for each $m_2$, each $\bx_1(m_1, m_2)$ is drawn 
uniformly from the set of all $\bx_1$ satisfying the power and correlation coefficient constraints  with equality.  
  These distributions clearly
ensure that the codeword pairs belong to $\calD_n(\rho)$ with probability one. 

Step 2: (Evaluation of the Non-Asymptotic Achievability Bound in Proposition~\ref{prop:ach_mac}) We now need to identify typical sets of $(\bx_2,\by)$ and $\by$ such that the maximum values of the ratios of the densities $\zeta_1$ and $\zeta_{12}$, defined in \eqref{eqn:K1_mac}, are uniformly  bounded on these sets. For this purpose, we leverage the following lemma.  

\begin{lemma} \label{lem:rn_bd}
    Consider the setup of Proposition~\ref{prop:ach_mac}, where the output distributions 
    are chosen to be $Q_{Y^n|X_2^n}:=(P_{X_1|X_2} W)^n$ and $Q_{Y^n }:=(P_{X_1  X_2}  W)^n$ with
    $P_{X_1 X_2} :=\calN(\bzero,\bSigma(\rho))$, and the
    input joint distribution $P_{X_1^n X_2^n}$ is described by~\eqref{sc_px2}--\eqref{sc_px1}.  
    There exist sets $\calA_{1} \subset\calX_2^n\times\calY^n$ and $\calA_{12}\subset \calY^n$ (depending on $n$ and $\rho$) such that the following
    \begin{align}
        \max_{\rho\in[0,1]} \max\{ \zeta_{1} ,\zeta_{12}  \}& \le \Lambda \label{eqn:Lambda_plus} \\*
        \max_{\rho\in[0,1]} \max \big\{ \Pr\big((X_2^n,Y^n) \notin \calA_1 \big), \Pr\big(Y^n \notin \calA_{12} \big)\big\} &\le \exp(- n\xi ), \label{eqn:atypical}
    \end{align}
    hold for all $n>n_0$, where 
    where $\Lambda <\infty$, $\xi >0$ and $n_0 \in\bbN$ are constants not depending on $\rho$.
\end{lemma}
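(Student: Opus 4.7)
The plan is to reduce both density ratios to the analogous single-user AWGN density ratio established by Polyanskiy-Poor-Verd\'u~\cite[Lem.~61, Step~1]{PPV10} (invoked as~\eqref{eqn:change_meas} in Chapter~\ref{ch:cc}) via an orthogonal decomposition. The key observation is the following: conditioned on $X_2^n=\bx_2$ with $\|\bx_2\|_2^2=nS_2$, the conditional distribution $P_{X_1^n|X_2^n}$ in~\eqref{sc_px1} can be written as $X_1^n = \rho\sqrt{S_1/S_2}\,\bx_2 + U$, where $U$ is distributed uniformly on the sphere of radius $\sqrt{nS_1(1-\rho^2)}$ in the $(n-1)$-dimensional subspace orthogonal to $\bx_2$. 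Hence if we set $\bv := \by - \bx_2\bigl(1+\rho\sqrt{S_1/S_2}\bigr)$ and decompose $\bv = v_\parallel\hat{\bx}_2 + \bv_\perp$ with $\hat{\bx}_2 := \bx_2/\|\bx_2\|_2$, then under $P_{X_1^n|X_2^n}W^n(\cdot|\bx_2)$ the parallel component $v_\parallel$ is $\calN(0,1)$ (pure noise), while $\bv_\perp$ is the convolution of $U$ with an isotropic Gaussian $\calN(\bzero,\bI_{(n-1)\times(n-1)})$. Under the reference $Q_{Y^n|X_2^n}(\cdot|\bx_2) = (P_{X_1|X_2}W)^n(\cdot|\bx_2)$, the same decomposition yields $v_\parallel \sim \calN(0,1)$ and $\bv_\perp \sim \calN(\bzero, (1+S_1(1-\rho^2))\bI_{(n-1)\times(n-1)})$.

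The upshot is that the ratio inside $\zeta_1$ is a function of $\bv_\perp$ alone, and coincides exactly with the ratio of the induced output density of an $(n-1)$-dimensional AWGN channel with input uniform on a sphere of signal-to-noise ratio $S_1(1-\rho^2) \cdot \tfrac{n}{n-1}$ to its capacity-achieving product Gaussian output density. Applying~\cite[Lem.~61]{PPV10} in this orthogonal subspace, we obtain a finite constant $\Lambda_1$ and a typical set $\calF_\perp$, namely $\{\bv_\perp : \|\bv_\perp\|_2^2 \in (n-1)(1+S_1(1-\rho^2))[1-\delta_n,1+\delta_n]\}$ with $\delta_n = o(1)$, such that the ratio is bounded by $\Lambda_1$ on $\calF_\perp$. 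Setting $\calA_1 := \{(\bx_2,\by) : \bv_\perp(\bx_2,\by) \in \calF_\perp\}$ and controlling $\Pr((X_2^n,Y^n) \notin \calA_1)$ by standard $\chi^2$ concentration  in the orthogonal subspace (which decays as $\exp(-n\xi_1)$ uniformly in $\rho$) gives the bound for $\zeta_1$ and the first half of~\eqref{eqn:atypical}.

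For $\zeta_{12}$, we use the same orthogonal decomposition relative to the (random) $\hat{X}_2^n$ and additionally average over $\bx_2$ drawn uniformly from the sphere of radius $\sqrt{nS_2}$; after this random rotation, the problem again reduces to a single-user AWGN density ratio, this time with signal-to-noise ratio $S_1+S_2+2\rho\sqrt{S_1 S_2}$ and the reference $Q_{Y^n} = (P_{X_1 X_2}W)^n$. A second application of~\cite[Lem.~61]{PPV10}, together with a typical set $\calA_{12} := \{\by : \|\by\|_2^2 \in n(1+S_1+S_2+2\rho\sqrt{S_1 S_2})[1-\delta_n,1+\delta_n]\}$, yields a constant $\Lambda_{12}$ and an exponential decay of $\Pr(Y^n \notin \calA_{12})$ with some rate $\xi_{12}>0$. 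Taking $\Lambda := \max\{\Lambda_1,\Lambda_{12}\}$ and $\xi := \min\{\xi_1,\xi_{12}\}$ then completes the bound in~\eqref{eqn:Lambda_plus}--\eqref{eqn:atypical}.

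The principal obstacle is securing the \emph{uniformity} of $\Lambda$, $\xi$, and $n_0$ in $\rho \in [0,1]$, specifically at the degenerate endpoint $\rho=1$, where the orthogonal codeword sphere for $X_1^n$ collapses (its radius $\sqrt{nS_1(1-\rho^2)}$ tends to zero) and the conditional Gaussian $P_{X_1|X_2}$ becomes a Dirac mass. To handle this, we revisit the PPV bound and verify that the multiplicative constant in~\cite[Lem.~61]{PPV10} depends on the signal-to-noise ratio only through continuous functions that remain bounded as the SNR tends to zero---indeed, in the zero-SNR limit the two densities being compared collapse to the \emph{same} Gaussian and the ratio is trivially $1$. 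The same continuity argument applies to the concentration exponent $\xi$. For $\rho$ bounded away from $1$, uniformity is immediate by continuity of all quantities on the compact interval $[0,1-\eta]$. Patching these regimes yields constants $\Lambda<\infty$, $\xi>0$ and a threshold $n_0 \in \bbN$ valid for all $\rho \in [0,1]$, establishing the lemma.
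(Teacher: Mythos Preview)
The paper does not actually prove this lemma; it refers to~\cite{ScarlettTan} and indicates only that the argument ``extends and refines ideas in Polyanskiy-Poor-Verd\'u's proof of the dispersion of AWGN channels~\cite[Thm.~54 \& Lem.~61]{PPV10}.'' Your reduction to the single-user AWGN density-ratio bound via an orthogonal decomposition is therefore squarely in the spirit of what the paper intends, and the overall plan is sound.

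There is, however, a concrete slip in your decomposition for $\zeta_1$. Under the reference $Q_{Y^n|X_2^n}(\cdot|\bx_2)$, the shifted output $\bv = \by - \bx_2(1+\rho\sqrt{S_1/S_2})$ is \emph{isotropic} Gaussian with per-coordinate variance $\sigma^2 := 1+S_1(1-\rho^2)$, so $v_\parallel \sim \calN(0,\sigma^2)$, not $\calN(0,1)$. Consequently the density ratio is \emph{not} a function of $\bv_\perp$ alone; it carries an additional factor
\[
\frac{\calN(v_\parallel;0,1)}{\calN(v_\parallel;0,\sigma^2)} = \sigma\,\exp\!\Big(-\frac{v_\parallel^2(\sigma^2-1)}{2\sigma^2}\Big) \le \sigma \le \sqrt{1+S_1},
\]
which is uniformly bounded in $\rho\in[0,1]$. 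With this correction (absorbing the factor $\sqrt{1+S_1}$ into $\Lambda$) the remainder of your argument for $\zeta_1$ goes through.

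For $\zeta_{12}$ a cleaner route than the one you sketch is available: because the joint law $P_{X_1^n X_2^n}$ in~\eqref{sc_px2}--\eqref{sc_px1} is invariant under simultaneous orthogonal transformations of both coordinates, so is the law of $X_1^n+X_2^n$; combined with the deterministic norm $\|X_1^n+X_2^n\|_2^2 = n(S_1+S_2+2\rho\sqrt{S_1 S_2})$, this shows that $X_1^n+X_2^n$ is \emph{exactly} uniform on its sphere. The comparison with $Q_{Y^n}$ is then a direct application of~\cite[Lem.~61]{PPV10} in dimension $n$ with $\snr = S_1+S_2+2\rho\sqrt{S_1 S_2}$---no orthogonal decomposition or averaging over $\bx_2$ is needed for this half, and the uniformity in $\rho$ is immediate since this $\snr$ ranges over the compact interval $[S_1+S_2,\,(\sqrt{S_1}+\sqrt{S_2})^2]$ bounded away from zero.
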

The proof of this technical lemma is omitted and can be found in~\cite{ScarlettTan}. It extends and refines ideas in Polyanskiy-Poor-Verd\'u's proof of the dispersion of AWGN channels \cite[Thm.~54 \& Lem.~61]{PPV10}. 

Note that the uniformity of the bounds $\Lambda$ and $\exp(-n\xi )$ in~\eqref{eqn:Lambda_plus}--\eqref{eqn:atypical} in $\rho$ is
crucial for handling $\rho$ varying with $n$, as is required in Lemma~\ref{lem:global}.

Equipped with Lemma \ref{lem:rn_bd}, we now apply the multivariate Berry-Esseen theorem  (Corollary~\ref{corollary:multidimensional-berry-esseen}) to estimate the probability in the non-asymptotic achievability bound in Proposition~\ref{prop:ach_mac}. This computation is similar  to that sketched in the converse proof with $\xi_1=\xi_2=0$. This  concludes the achievability proof of Lemma~\ref{lem:global}. 
\end{proof}

\subsection{Proof Sketch of the Local Result (Theorem~\ref{thm:local})} \label{sec:prf_sk_l}
\begin{proof} 
We begin with the converse proof. We only prove the result in Case (ii), because Case (i) is standard (follows from the single-user case in Theorem~\ref{thm:awgn_asy}) and Case (iii) similar to Case (ii).

Step 1: (Passage to a Convergent Subsequence) 
Fix  a correlation coefficient $\rho\in(0,1]$, and consider any sequence of 
$(n,M_{1n},M_{2 n},S_1, S_2,\eps_n)$-codes satisfying~\eqref{eqn:2nd_mac}. 
Let us consider the associated rates $\{ (R_{1 n}, R_{2 n} ) \}_{n\in\bbN}$. As required by the definition of second-order rate pairs $(L_1,L_2)\in\calL(\eps;R_1^*, R_2^*)$, these codes must satisfy 
\begin{align}
\liminf_{n\to\infty}R_{j n} &\ge R_j^*,    \label{eqn:first_order_opt} \\
\liminf_{n\to\infty} {\sqrt{n}}\big( R_{j n} -  R_j^*\big )&\ge L_j,\quad j = 1,2,  \label{eqn:sec_order_opt} \\*
\limsup_{n\to\infty}\eps_n &\le\eps \label{eqn:error_prob}
\end{align}
for some $(R_1^*,R_2^*)$ on the boundary parametrized by $\rho$, i.e., $R_1^*=I_1(\rho)$ and 
$R_1^*+R_2^*=I_{12}(\rho)$. The first-order optimality condition in \eqref{eqn:first_order_opt} 
is not  explicitly required by \eqref{eqn:2nd_mac}, but it is implied by \eqref{eqn:sec_order_opt}.
Letting $\bR_n := [ R_{1 n},  R_{1 n}+ R_{2 n}]'$ be the non-asymptotic rate vector, we have, from the global converse bound in~\eqref{eqn:unions},  
that there exists   a (possibly non-unique) sequence $\{\rho_n\}_{n\in\bbN}\subset [-1,1]$ such that 
\begin{equation}
    \bR_n \in \bI(\rho_n)+\frac{\Psi^{-1}(\bV(\rho_n),\eps)}{\sqrt{n}} +  \overg(\rho_n,\eps,n) \bone. \label{eqn:useoverg}
\end{equation}
Since we used the $\liminf$ for the rates and $\limsup$ for the error probability in the conditions in~\eqref{eqn:first_order_opt}--\eqref{eqn:error_prob}, we may pass to a \emph{convergent} (but otherwise arbitrary) subsequence 
of $\{\rho_n\}_{n\in\bbN}$, say indexed by $\{n_l\}_{l\in\bbN }$.  Recalling that the $\liminf$ 
(resp.\ $\limsup$)  is the infimum (resp.\ supremum) of all subsequential limits, any 
converse result associated with this subsequence also applies to the original sequence.
Note that at least one convergent subsequence is guaranteed to exist, since $[-1,1]$ is compact.  

For the sake of clarity, we avoid explicitly writing the subscript $l$. However, 
it should be understood that asymptotic notations such as $O(\cdot)$ and $(\cdot)_n \to (\cdot)$ 
are taken with respect to the convergent subsequence.

Step 2: (Establishing The Convergence of $\rho_n$ to $\rho$)  
Although $\overg(\rho_n,\eps,n)$ in \eqref{eqn:useoverg} depends on $\rho_n$, we know from the global bounds on the $(n,\eps)$-capacity region (Lemma~\ref{lem:global}) that it is $o\big(\frac{1}{\sqrt n}\big)$ for both
$\rho_n\to\pm1$ and $\rho_n\to\rho\in(-1,1)$.  Hence,
\begin{equation}
    \bR_n \in \bI(\rho_n)+\frac{\Psi^{-1}(\bV(\rho_n),\eps)}{\sqrt{n}} +  o\left(\frac{1}{\sqrt n}\right)\bone. \label{eqn:uniformconverse}
\end{equation}
We claim that the result in \eqref{eqn:uniformconverse} allows us to conclude that every   sequence $\{\rho_n\}_{n\in\bbN}$  that serves to parametrize an outer bound of the non-asymptotic rates in  \eqref{eqn:useoverg}  converges to $\rho$. Indeed, since the boundary of the capacity region is curved and uniquely 
parametrized by $\rho$ for $\rho\in(0,1]$, $\rho_n\not\to\rho$ implies for some $\eta>0$ 
and for all sufficiently large $n$ that either $I_1(\rho_n) \le I_1(\rho)-\eta$ 
or $I_{12}(\rho_n) \le I_{12}(\rho) - \eta$.
Combining this with~\eqref{eqn:uniformconverse}, we deduce that
\begin{equation}
 R_{1n}\le I_1(\rho)-\frac{\eta}{2},\quad\mbox{or}\quad R_{1n}+R_{2n}\le I_{12}(\rho )-\frac{\eta}{2} 
\end{equation}
for 
sufficiently large $n$. This, in turn, contradicts the convergence  
of $(R_{1n}, R_{2n})$ to $(R_1^*,R_2^*)$ implied by \eqref{eqn:2nd_mac}.

Step 3: (Establishing The Convergence Rate of $\rho_n$ to $\rho$) Because each entry of  $\bI(\rho)$ is twice continuously differentiable, a Taylor expansion yields 
\begin{align}
\bI(\rho_n) = \bI(\rho) + \bD(\rho) (\rho_n-\rho) + O\big( (\rho_n - \rho)^2 \big) \bone. \label{eqn:taylor_expand_I}
\end{align}
In the case that $\rho_n-\rho=\omega\big(\frac{1}{\sqrt n}\big)$, it is 
not difficult to show that \eqref{eqn:uniformconverse} and \eqref{eqn:taylor_expand_I} imply
\begin{align}
\bR_n  \le\bI(\rho) + \bD(\rho) (\rho_n-\rho) + o( \rho_n-\rho )\bone. \label{eqn:omega_term}
\end{align}
Since the first  entry of $\bD(\rho)$ is negative and the second entry is 
positive, \eqref{eqn:omega_term}  states that $L_1=+\infty$ (i.e., a large 
addition to $R_1^*$) only  if $L_1+L_2=-\infty$ (i.e., a large backoff 
from $R_1^* + R_2^*$),  and $L_1+L_2=+\infty$ only if $L_1=-\infty$. This is due the fact that we only consider second-order deviations from the boundary of the capacity region of the order $\Theta\big(\frac{1}{\sqrt n}\big)$. Neglecting these degenerate cases as they are already captured by Theorem \ref{thm:local} (cf.~Fig.~\ref{fig:local_region}), in the remainder, we focus on 
case where $\rho_n-\rho=O\big(\frac{1}{\sqrt n}\big)$.

Step 4: (Completion of the Proof)
Assuming now that $\rho_n-\rho=O\big(\frac{1}{\sqrt n}\big)$, we
can use the Bolzano-Weierstrass theorem 
to conclude that there exists a (further) subsequence indexed by  $\{n_k\}_{k\in\bbN}$  (say) 
such that $\sqrt{n_k}(\rho_{n_k}-\rho)\to\beta$
for some $\beta\in\bbR$.  Then, for the blocklengths indexed by $n_k$,   by combining \eqref{eqn:uniformconverse} and \eqref{eqn:taylor_expand_I}, we have 
\begin{align}
\sqrt{n_k}\big(\bR_{n_k}-\bI(\rho)\big)\in\beta\, \bD(\rho) + \Psi^{-1}(\bV(\rho ) ,\eps)  +o(1) \, \bone \label{eqn:subseq} .
\end{align}
Here we have also used the fact that the set-valued function $\rho\mapsto \Psi^{-1}(\bV(\rho ) ,\eps) $ is ``continuous'' to approximate $ \Psi^{-1}(\bV(\rho_n ) ,\eps) $ with $ \Psi^{-1}(\bV(\rho ) ,\eps) $. The details of this technical step are omitted, and can be found in \cite{ScarlettTan}. 

By referring to the second-order optimality condition in \eqref{eqn:sec_order_opt}, and applying the definition of the limit inferior, we know that  {\em every} convergent subsequence of $\{R_{jn}\}_{n\in\bbN}$ has a subsequential limit that satisfies $\lim_{k\to\infty} \sqrt{n_k} \big(R_{jn_k}-R_j^*)\ge L_j$ for $j = 1,2$. In other words,  for all $\gamma>0$, there exists an integer $K_j$ such that $\sqrt{n_k} \big(R_{j n_k}-R_j^*)\ge L_j-\gamma$ for all $k\ge K_j$.  Thus, for all $k\ge \max\{K_1, K_2\}$, we may lower bound  each component in the vector on the left of  \eqref{eqn:subseq} with $L_1-\gamma$ and $L_1 + L_2-2\gamma$. There also exists an integer $K_3$ such that the $o(1)$  terms are upper bounded by $\gamma$ for all $k\ge K_3$. We conclude that any pair of $(\eps,R_1^*, R_2^*)$-achievable second-order coding rates $(L_1, L_2)$ must satisfy
\begin{align}
\begin{bmatrix}
L_1-2\gamma \\ L_1 + L_2 -3\gamma
\end{bmatrix} \in \bigcup_{\beta\in\bbR}\left\{  \beta\, \bD(\rho)+ \Psi^{-1}(\bV(\rho ) ,\eps) \right\}.
\end{align}
Finally, since $\gamma>0$ is arbitrary, we can take $\gamma \to 0$, thus completing the converse proof for Case (ii).  \\

The achievability part is similar to the converse part, yet simpler.  
Specifically, we can simply choose 
\begin{equation}
\rho_n := \rho + \frac{\beta}{\sqrt{n}},
\end{equation}
and apply the above arguments based on Taylor expansions.
\end{proof}

\section{Difficulties in the Fixed Error   Analysis for the MAC}\label{sec:mac_gen}

We conclude our discussion  by discussing the difficulties in performing fixed error  probability analysis for the discrete memoryless or Gaussian MACs (with non-degraded message sets). 

First, it is known that the capacity region of the MAC depends on whether one is adopting the average or maximal error probability criterion. The capacity regions are, in general, different \cite{Dueck}. In Step 2 of the converse proof, we performed an important reduction from  the average to the maximal error probability criterion. This is one obstacle for any (global or local) converse proof for fixed error analysis of the MAC. 

Second, in the characterization of the capacity region of the  discrete memoryless MAC, one needs to involve an auxiliary time-sharing random variable $Q$ \cite[Sec.~4.5]{elgamal}. At the time of writing, there does not appear to be a principled and unified way to introduce such a variable in  strong  converse proofs (unlike weak converse proofs \cite{elgamal}). 

Finally, for the discrete memoryless MAC, one needs to take the convex closure of the union over input distributions $P_{X_1|Q}, P_{X_2|Q}$ for a given time-sharing distribution  $P_Q$ \cite[Sec.~4.5]{elgamal}.  In the absence of the degraded message sets (or asymmetry)  assumption, one needs to develop a converse technique, possibly related to the wringing technique of Ahlswede \cite{Ahl82},  to assert that the given codewords pairs are almost
independent (or almost orthogonal for the Gaussian case). By leveraging the degraded message sets assumption, we circumvented this requirement  in this chapter but for the MAC, it is not clear whether the wringing technique yields a redundancy term that matches the best known inner bound to the second-order region \cite{Mol13,Scarlett13b}. 
 
\chapter{Summary, Other Results, Open Problems}  \label{ch:con}
\section{Summary and Other Results}
In this monograph, we compiled a list of conclusive  fixed error   results in information  theory. We began our discussion with a    review of  binary hypothesis testing and used the asymptotic expansions of the $\eps$-information spectrum divergence and $\eps$-hypothesis testing divergence for product measures to derive similar asymptotic expansions for the minimum code size in lossless data  compression. Lossy data compression and channel coding were discussed in detail  next. These subjects  culminated in our derivation of an asymptotic expansion for the source-channel coding rate. We then analyzed various channel models whose behaviors are governed by random states.

In this monograph, we also discussed  a small collection of problems in multi-user information theory~\cite{elgamal}, where  we were interested in quantifying the optimum speed of rate pairs converging towards  a  fixed point on the boundary of the (first-order) capacity region in the channel coding case, or optimal rate region in the source coding case. We saw three examples of problems in network information theory where conclusive results can be obtained in the second-order sense.  These included the   distributed lossless source coding (Slepian-Wolf) problem,  as well as some special classes of Gaussian multiple-access and interference channels. 

We conclude our treatment of  fixed error asymptotics in information theory  by mentioning   related works in the literature. 
\subsection{Channel Coding}
Early works on  fixed error asymptotics  in channel coding by Dobrushin~\cite{Dobrushin}, Kemperman~\cite{Kemperman} and Strassen~\cite{Strassen} were discussed in Chapter~\ref{ch:cc}. The interest in asymptotic expansions  was revived in recent years  by the works of Hayashi~\cite{Hayashi09} and Polyanskiy-Poor-Verd\'u~\cite{PPV10}. Before these prominent works  came to the fore,  Baron-Khojastepour-Baraniuk~\cite{Baron04b}   considered the rate of convergence to channel capacity for simple channel models such as the binary symmetric channel. 

In this monograph, we  did not discuss channels with feedback or variable-length terminations, both of which are important in practical communication systems. Polyanskiy-Poor-Verd\'u~\cite{PPV11b} studied various incremental redundancy schemes and derived  several asymptotic expansions. Generally, the $\Theta(\sqrt{n})$ dispersion term is not present, showing that channels with feedback  perform much better than without the feedback, an observation that is also corroborated by a more traditional error exponent analysis~\cite{Burnashev, Har77}.   Williamson-Chen-Wesel~\cite{Will13}    showed   that their proposed reliability-based decoding schemes for variable-length coding with feedback  can achieve higher rates than~\cite{PPV11b}. Altu\u{g}-Wagner~\cite{AW14} showed that full output feedback improves the second-order term in the asymptotic expansion for channel coding if $V_{\min}<V_{\max}$.    Tan-Moulin~\cite{TanMou14}   considered the second-order  asymptotics of erasure and list decoding. This analysis is the fixed error probability  analogue of Forney's analysis of erasure and list decoding from the error exponents perspective~\cite{Forney68}. Erasure decoding is intimately connected to decision feedback or automatic retransmission request (ARQ) schemes as the declaration of an erasure event at the decoder can inform the encoder   to resend the erased information bits.

Shkel-Tan-Draper~\cite{Shkel13,   Shkel14b} considered the unequal error protection of message classes and related the  asymptotic expansions for this problem to lossless joint source-channel coding \cite{Shkel}. Moulin \cite{Moulin13} studied the   asymptotics for the channel coding problem up to the fourth-order term using strong large deviation techniques~\cite[Thm.~3.7.4]{Dembo}. Matthews~\cite{Matt13} made an interesting observation  concerning the relation of the non-asymptotic channel coding converse (Proposition~\ref{prop:converse}) to so-called non-signaling codes in quantum information.  He demonstrated efficient linear programming-based algorithms to evaluate the converse for DMCs.

Other (rather more unconventional) works on  fixed error asymptotics  for point-to-point communication include Riedl-Coleman-Singer's   analysis of queuing channels~\cite{riedl2011b}, Polyanskiy-Poor-Verd\'u's analysis of the minimum energy for sending $k$ bits  for Gaussian channels with and without feedback~\cite{PPV11c}, and    Ingber-Zamir-Feder's analysis of   the infinite constellations problem \cite{ingber11b}.

\subsection{Random Number Generation, Intrinsic Randomness and Channel Resolvability}
The problem of \emph{intrinsic  randomness} is to approximate an arbitrary source with uniform bits while {\em random number generation} is the dual, i.e., that of generating uniform bits from a given source~\cite[Ch.~2]{Han10}~\cite{HV93}. These problems   were treated from the fixed {\em approximation error} (in terms of the variational distance) perspective by Hayashi~\cite{Hayashi08} and Nomura-Han~\cite{Nom13b}. An interesting observation made by Hayashi in~\cite{Hayashi08} is that the folklore theorem\footnote{The {\em folklore theorem} \cite{Han_folklore} of Han states that   ``the output from any source encoder
working at the optimal coding rate with asymptotically vanishing probability
of error looks   almost completely random.''}  posed by Han~\cite{Han_folklore}   does not hold for the variational distance criterion. This is  interesting, because the first-order fundamental limit for source coding and intrinsic randomness is the same, i.e., the entropy rate~\cite[Ch.~2]{Han10} (at least for sources that satisfy the Shannon-McMillan-Breiman  theorem). Thus, the violation of the folklore theorem for variational distance appears to be distillable only from the study of second- and not first-order asymptotics, demonstrating additional insight one can glean from studying higher-order  terms in asymptotic expansions. 

The {\em channel resolvability} problem consists in approximating the output statistics of an arbitrary channel given uniform bits at the input~\cite[Ch.~6]{Han10}~\cite{HV93}. It is particularly useful for the strong converse of the identification problem~\cite{AD89}. Watanabe and Hayashi~\cite{WH14} considered the channel resolvability problem, proving a second-order coding theorem  under an ``information radius'' condition not dissimilar to what is known for channel coding~\cite[Thm.~4.5.1]{gallagerIT}. 
\subsection{Channels with State}
For channels with random state, Watanabe-Kuzuoka-Tan~\cite{WKT13} and Yassaee-Aref-Gohari~\cite{YAG13b} derived the best non-asymptotic bounds for the Gel'fand-Pinsker problem, improving on those by Verd\'u~\cite{Ver12}. With these bounds, one can easily derive achievable second-order coding rates by appealing to various Berry-Esseen theorems.  The technique in \cite{WKT13} is based on channel resolvability \cite{HV93} and channel simulation \cite{Cuff12} while that in \cite{YAG13b} is based on an elegant coding scheme known as the stochastic likelihood decoder (also known as the ``pretty good measurement'' in quantum information), which is also applicable to other multi-terminal problems such as multiple-description coding and the Berger-Tung problem~\cite{elgamal}.  Scarlett~\cite{Scarlett14} also considered the second-order asymptotics for the discrete memoryless Gel'fand-Pinsker problem and used ideas in Section~\ref{sec:state_ed} to evaluate the best known achievable second-order coding rates based on constant composition codes.  

Polyanskiy~\cite{Pol13b} derived the second-order asymptotics  for the compound channel where the channel state is non-random  in contrast to the models studied in Chapter~\ref{ch:state}. Similar to the Gaussian MAC with degraded message sets, the second-order term depends on the variance of the channel information density {\em and} the derivatives of the mutual informations. Finally, Hoydis {\em et al.}~\cite{hoydis13,Hoydis}   considered block-fading MIMO channels. In contrast to Section~\ref{sec:quasi}, here the channel matrix is not quasi-static and so the analysis is somewhat more involved and requires the use of random matrix theory. 

\subsection{Multi-Terminal  Information Theory}
The advances in the second-order asymptotics for multi-terminal problems have  been modest. Early works include those by Sarvotham-Baron-Baranuik~\cite{Sar05, Sar05b}   and He  {\em et al.}~\cite{He09}  for the single-encoder Slepian-Wolf problem. However, unlike our treatment   in Chapter~\ref{ch:sw}, there is only one source to be compressed, and {\em full} side-information is available at the decoder.

Other authors~\cite{huang12,Mol12, Mol13,Scarlett13b} also considered inner bounds to the $(n,\eps$)-rate regions (also called {\em global} achievability regions) for the discrete memoryless and Gaussian MACs, but it appears  that conclusive results are much harder to derive without any further assumptions on the channel model. These are  multi-terminal channel coding analogues of the corresponding discussion for Slepian-Wolf coding in Section~\ref{sec:glob}. See further discussions in  Section~\ref{sec:open_multi}.

\subsection{Moderate Deviations, Exact Asymptotics and Saddlepoint Approximations}
The study of second-order coding rates is intimately related to moderate deviations analysis. In the former, the error probability is bounded above by a non-zero constant and   optimal rates converge  to the first-order fundamental limit with speed $\Theta(\frac{1}{\sqrt{n}})$. In the latter, the error probability  decays to zero sub-exponentially  and the optimal rates converge to the first-order fundamental limit slower than $\Theta(\frac{1}{\sqrt{n}})$. The dispersion   also appears in the solution of the moderate deviations analysis because the second derivative of the error exponent (reliability function) is inversely proportional to the dispersion. The study of moderate deviations in information theory started with the work by Altu\u{g}-Wagner~\cite{altug10} and Polyanskiy-Verd\'u~\cite{pol10e} on channel coding. Sason~\cite{Sas11},  Tan~\cite{Tan12}  and Tan-Watanabe-Hayashi~\cite{TWH14} considered the binary hypothesis testing, rate-distortion and lossless  joint source-channel coding counterparts respectively. 

In  Section \ref{sec:digress}, we mentioned efforts from Altu\u{g}-Wagner \cite{altug_refinement1,altug_refinement2} and Scarlett-Martinez-{Guill\'{e}n i F\`{a}bregas} \cite{Sca13} in deriving the {\em exact asymptotics} for channel coding. The authors were motivated to find the prefactors in the error exponents regime for various classes of DMCs. Scarlett-Martinez-{Guill\'{e}n i F\`{a}bregas}  \cite{Scarlett14a} recently demonstrated that results concerning second-order coding rates, moderate deviations, large deviations, and even exact asymptotics may be unified through the use of so-called {\em saddlepoint approximations}. 

\section{Open Problems and Challenges Ahead}
Clearly, there are many avenues of further research, some of which we mention here. We also highlight some challenges we foresee.
\subsection{Universal Codes}
In Section \ref{sec:universal_lossless}, we analyzed a partially universal source code that achieves the source dispersion (varentropy). The source code only requires the knowledge of the entropy and the varentropy.  The channel dispersion can also be achieved using partially universal channel codes as discussed in the paragraph above~\eqref{eqn:channel_con}. However, the third-order terms are much more difficult to quantify. It would be interesting to pursue research in the third-order asymptotics of source and channel coding for {\em fully} universal codes to understand the loss of performance due to universality.  Work along these lines for fixed-to-variable length lossless source coding has been carried out by Kosut and Sankar~\cite{KosutSankar,KosutSankar14}.
\subsection{Side-Information Problems}
Watanabe-Kuzuoka-Tan~\cite{WKT13} and Yassaee-Aref-Gohari~\cite{YAG13b}  derived the best known achievability bounds for side-information problems including the  Wyner-Ahlswede-K\"orner (WAK)  problem~\cite{Ahl75,Wyner75} and the Wyner-Ziv~\cite{wynerziv} problem. However, non-asymptotic converses are difficult to derive for such problems which involve auxiliary random variables. Even when they can be derived, the evaluation of such converses asymptotically  appears to be formidable. 

Because a second-order converse implies the strong converse, it is useful to first understand the techniques involved in obtaining a strong converse. To the best of the author's  knowledge, there are only three approaches that may be used to obtain strong
converses for network problems whose first-order (capacity) characterization involves auxiliary random
variables. The first is the information spectrum method \cite{Han10}. For example, Boucheron and Salamatian~\cite[Lem.~2]{bouch00}  provide a non-asymptotic
converse bound for the asymmetric broadcast channel. However, the bound is neither computable nor amenable to good approximations for large or even moderate blocklengths $n$ as one has to perform an exhaustive search over the space of all $n$-letter auxiliary random
variables. The second is the entropy and image size characterization
technique \cite{Ahls76} based on the blowing-up lemma  \cite{Ahls76,Marton86}.  (Also see the monograph \cite{RagSason} for a thorough description of this technique.) This has been used to prove the strong converse for the
WAK problem~\cite{Ahls76}, the asymmetric broadcast channel~\cite{Ahls76},   the Gel'fand-Pinsker problem \cite{tyagi} and the Gray-Wyner problem  \cite{GuEffros}. However, the use of the blowing-up approach to obtain second-order converse bounds is not straightforward. The third method involves a  change-of-measure argument, and
was used in the work of Kelly and Wagner \cite[Thm.~2]{Kel12} to prove an upper bound on the error exponent for WAK
coding. Again, it does not appear, at first glance, that this argument is amenable to second-order analysis. 

A problem similar  to side-information problems such as Gel'fand-Pinsker, Wyner-Ziv and WAK is the multi-terminal statistical inference problem studied by Han and Amari~\cite{HanAmari} among others. Asymptotic expansions with non-vanishing type-II error probability may be derivable  under some settings (using established techniques),  if the first-order  characterization is conclusively known, and there  are no auxiliary random variables, e.g., the problem of multiterminal detection with zero-rate compression \cite{Shalaby}.
\subsection{Multi-Terminal  Information Theory}\label{sec:open_multi}
The study of second-order asymptotics for  multi-terminal problems is at its infancy and the problems described in this monograph form only the tip of a large iceberg. The primary difficulty is our inability to deal, in a systematic and principled  way, with auxiliary random variables for the (strong)   converse part. Thus, genuinely new non-asymptotic converses need to  be developed, and these converses have to be amenable to asymptotic evaluations in the presence of auxiliary random variables. As an example, for the degraded broadcast channel, the usual non-intuitive identification of the auxiliary random variable by Gallager~\cite{Gal74} (see \cite[Thm.~5.2]{elgamal}) for proving the weak converse   does not suffice as the strong converse is implied by a second-order converse. Other possible techniques,   such as  information spectrum analysis \cite{bouch00} or the blowing-up lemma~\cite{Ahls76}, were highlighted in the previous section. Their limitations were also discussed. For the discrete memoryless MAC, a strong converse was proved by Ahlswede \cite{Ahl82} but his  wringing technique does not seem to be amenable to second-order refinements as discusseed in Section \ref{sec:mac_gen}. 

In contrast to the single-user setting, constant composition codes may be beneficial even in the absence of cost constraints for discrete memoryless multi-user problems. This is because there does not exist an analogue of the relation in \eqref{eqn:uequalsv}, where the unconditional information is equal to the conditional information variance for all CAIDs. Scarlett-Martinez-{Guill\'{e}n i F\`{a}bregas}~\cite{Scarlett13b} provided the best known inner bounds to the $(n,\eps)$-rate region for the discrete memoryless MAC. Tan-Kosut~\cite{TK14} also showed that conditionally constant composition codes also outperforms \iid codes for the asymmetric broadcast channel when the error probability is non-vanishing. It would be fruitful to continue pursuing research in the direction of constant composition codes for multi-user problems (e.g., the interference channel) to exploit their full potential.

\subsection{Information-Theoretic Security}
Finally, we mention that within the realm of information-theoretic security~\cite{Bloch_book, liang_book}, there are several partial results in the fixed error and leakage setting. Yassaee-Aref-Gohari~\cite[Thm.~4]{YAG13} used a general random binning procedure, called {\em output statistics of random binning}, to derive a second-order achievability bound for the wiretap channel~\cite{Wyn75}, improving on earlier work by Tan~\cite{Tan_ICCS}. The constraints pertain to the error probability of the legitimate receiver in decoding the message and the leakage rate to the eavesdropper  measured in terms of the variational distance. However, in \cite{YAG13}, there were no  converse results  even for the  less noisy (or even degraded) case where there are no auxiliary random variables.

The most conclusive work in  thus far in information-theoretic security  pertains to the secret key agreement model \cite{AC93},  where the second-order asymptotics were derived by Hayashi-Tyagi-Watanabe~\cite{HTW14}. Interestingly, the non-asymptotic converse bound relates the size of the key to the $\eps$-hypothesis testing divergence, similar to some point-to-point problems as discussed in this monograph.  The non-asymptotic direct bound is derived based on the {\em information spectrum slicing} technique (e.g., \cite[Thm.~1.9.1]{Han10}). The author believes that the fixed error and  fixed leakage analysis for the wiretap channel, leveraging   the secret key result, may lead to new insights into the design of secure communication systems at the physical layer. For converse theorems, the development of novel strong converse techniques for the wiretap channel appears to be necessary; there are recent results on this for degraded wiretap channels using the information spectrum method~\cite{TanBloch} and active hypothesis testing~\cite{HTW14b}. 


\chapter*{\centering Acknowledgements}
\addcontentsline{toc}{chapter}{Acknowledgements}  


Even though this monograph  bears only my name, many parts of it are works of other information theorists and the remaining parts germinated from my  collaborations with my   co-authors. I sincerely thank my co-authors  for educating me on information theory, ensuring I was  productive and, most importantly, making research fun. My first work on fixed error asymptotics was in collaboration with Oliver Kosut while we were both at MIT. We had a wonderful collaboration on the second-order asymptotics of the Slepian-Wolf problem, discussed in Chapter~\ref{ch:sw}. Upon my return to Singapore, I   had the tremendous pleasure of working with Marco Tomamichel on several projects that led to some of the key results in Chapters~\ref{ch:cc} and~\ref{ch:state}.  I thank Sy-Quoc  Le and Mehul Motani for the collaboration that led to results concerning    Gaussian interference channels in Chapter~\ref{ch:ic}.   Jonathan Scarlett and I had numerous   discussions on  various topics in information theory, including a thread that led to the results on   Gaussian MAC with degraded message sets  in Chapter~\ref{ch:mac}.  I have also had the distinguished honor of collaborating on the topic of fixed error asymptotics   with Stark Draper,   Masahito Hayashi, Shigeaki Kuzuoka, Pierre Moulin, Yanina Shkel, and Shun Watanabe.

In addition to my collaborators, I have had many   interactions with other colleagues on this exciting topic,  including Y\"ucel Altu\u{g}, Yuval Kochman,  Shaowei Lin, Alfonso Martinez,  Ebrahim MolavianJazi, Lalitha Sankar, and Da Wang. I thank them tremendously for sharing their insights on various problems.

I would like to express my deepest gratitude to  Jonathan Scarlett for reading through the first  draft  of this monograph, providing me with  constructive comments,   spotting typos, and helping to fix egregious errors. Special thanks also goes out to Stark Draper, Silas Fong, Ebrahim MolavianJazi, Mehul Motani, Mark Wilde and Lav Varshney for proofreading parts of  later versions of the monograph.

I am deeply indebted to my academic mentors   Professor  Alan Willsky, Professor  Stark Draper and Dr.\ C\'edric F\'evotte for teaching me how to write in a clear, concise and yet precise manner. Any parts of this monograph that violate these ideals are, of course, down to my personal  inadequacies.  

I am especially  grateful to  the National University of Singapore (NUS) for providing me with the ideal environment to pursue my academic dreams. This work is supported by    NUS startup grants   R-263-000-A98-750 (FoE) and    R-263-000-A98-133 (ODPRT).

I am very grateful to Editor Professor Yury Polyanskiy as well as the two anonymous reviewers for their extensive and constructive suggestions during the revision process. One reviewer, in particular,   suggested the  unambiguous and succinct  title of this  monograph.

Finally, this monograph, and my research that led to it, would not have been possible without the constant love and support of my family, especially my  wife Huili, and my son Oliver.



\bibliographystyle{plain}
\bibliography{isitbib}

\end{document}